\title{Bisimulations for Delimited-Control Operators}
\begin{document}

\title[Bisimulations for Delimited-Control Operators]{Bisimulations for
  Delimited-Control Operators}

\author[D.~Biernacki]{Dariusz Biernacki\rsuper{a}}
\author[S.~Lenglet]{Sergue\"{\i} Lenglet\rsuper{b}}
\author[P.~Polesiuk]{Piotr Polesiuk\rsuper{a}}

\address{\lsuper{a}University of Wroc\l{}aw}
\email{\{dabi,ppolesiuk\}@cs.uni.wroc.pl}

\address{\lsuper{b}Universit\'e de Lorraine}
\email{serguei.lenglet@univ-lorraine.fr}

\keywords{Delimited-control operators, behavioral equivalences, bisimulations}
\subjclass{D.3.3 Language Constructs and Features, F.3.1 Specifying and
  Reasoning about Programs}

\thanks{This work was supported by PHC Polonium and by National
  Science Centre, Poland, grant no. 2014/15/B/ST6/00619.}

\begin{abstract}
  We present a comprehensive study of the behavioral theory of an
  untyped $\lambda$-calculus extended with the delimited-control
  operators \shiftId{} and \resetId{}. To that end, we define a
  contextual equivalence for this calculus, that we then aim to
  characterize with coinductively defined relations, called
  \emph{bisimilarities}. We consider different styles of
  bisimilarities (namely applicative, normal-form, and environmental)
  within a unifying framework, and we give several examples to
  illustrate their respective strengths and weaknesses. We also
  discuss how to extend this work to other delimited-control
  operators.

\end{abstract}

\maketitle

\togglefalse{withoutApp}

\section{Introduction}%
\label{s:intro}
\subsubsection*{Delimited-control operators} Control operators for
delimited continuations enrich a programming language with the ability
to delimit the current continuation, to capture such a delimited
continuation, and to compose delimited continuations. Such operators
have been originally proposed independently by
Felleisen~\cite{Felleisen:POPL88} and by Danvy and
Filinski~\cite{Danvy-Filinski:LFP90}, with numerous variants designed
subsequently~\cite{Hieb-al:LaSC93,Moreau-Queinnec:PLILP94,Gunter-al:FPCA95,Dybvig-al:JFP06}. The
applications of delimited-control operators range from
non-deterministic
programming~\cite{Danvy-Filinski:LFP90,Kiselyov-al:ICFP05}, partial
evaluation~\cite{Lawall-Danvy:LFP94,Danvy:POPL96}, and normalization
by evaluation~\cite{Dybjer-Filinski:APPSEM00} to
concurrency~\cite{Hieb-al:LaSC93}, mobile code~\cite{Sumii:Scheme00},
linguistics~\cite{Shan:CW04}, operating
systems~\cite{Kiselyov-Shan:CONTEXT07}, and probabilistic
programming~\cite{Kiselyov-Shan:DSL09}. Several variants of
delimited-control operators are nowadays available in mainstream
functional languages such as Haskell~\cite{Dybvig-al:JFP06},
OCaml~\cite{Kiselyov:FLOPS10}, Racket~\cite{Flatt-al:ICFP07}, and
Scala~\cite{Rompf-al:ICFP09}.

The control operators \shiftId{} and
\resetId{}~\cite{Danvy-Filinski:LFP90} were designed to account for
the traditional model of non-deterministic programming based on
success and failure continuations~\cite{Wand-Vaillancourt:ICFP04}, and
their semantics as well as pragmatics take advantage of an extended
continuation-passing style (CPS), where the continuation of the
computation is represented by the current delimited continuation (the
success continuation) and a metacontinuation (the failure
continuation). The control delimiter \resetId{} resets the current
continuation, whereas the control operator \shiftId{} captures the
current continuation, which then can be either discarded (expressing
failure in a backtracking search) or duplicated (expressing a
backtracking point creation). When a captured continuation is resumed,
the then-current continuation is pushed on the metacontinuation
(representing a list of pending delimited continuations). For this
reason \shiftId{} and \resetId{} are known as static delimited-control
operators, as opposed, e.g., to \controlId{} and
\promptId{}~\cite{Felleisen:POPL88} that are dynamic, in that they
require an actual stack concatenation to compose continuations, and
for this reason go beyond the standard
CPS~\cite{Biernacki-al:SCP06}~\footnote{Expressing \controlId{} and
  \promptId{}, as well as other dynamic control operators, in terms of
  \shiftId{} and \resetId{} and, therefore, in CPS is possible, but it
  requires an involved continuation answer type relying on
  recursion~\cite{DBLP:journals/toplas/BiernackiDM15,Kiselyov:TR05,Shan:HOSC07}. }.

The static delimited-control operators have been surrounded by an
array of CPS-based semantic artifacts that greatly support programming
and reasoning about code, by making it possible to interpret effectful
programs in a purely functional language. As a matter of fact, most of
the applications of delimited-control listed above have been presented
using \shiftId{} and \resetId{}. But the connection with CPS is even
more intimate---in his seminal article~\cite{Filinski:POPL94},
Filinski showed that because the continuation monad can express any
other monad, \shiftId{} and \resetId{} can express any monadic effect
(such as exceptions or non-determinism) in direct style. Furthermore,
iterating the CPS transformation for a language with \shiftId{} and
\resetId{} leads to a CPS hierarchy~\cite{Danvy-Filinski:LFP90} which
in turn allows one to express layered computational effects in direct
style~\cite{Filinski:POPL99}. These results establish a special
position of \shiftId{} and \resetId{} among all the delimited-control
operators considered in the literature, even though from an
operational standpoint, \shiftId{} and \resetId{} can be easily
expressed in terms of dynamic control
operators~\cite{Biernacki-Danvy:JFP06}. Interestingly, the abortive
control operator \callccId{} known from Scheme and SML of New Jersey
requires the presence of mutable state to obtain the expressive power
of \shiftId{} and \resetId{}~\cite{Filinski:POPL94} (and of other
delimited-control operators).


Relying on the CPS translation to a pure language is helpful and
inspiring when programming with \shiftId{} and \resetId{}, but it is
arguably more convenient to reason directly about the code with
control operators. To facilitate such reasoning, Kameyama et
al.\ devised direct-style axiomatizations for a number of
delimited-control
calculi~\cite{Kameyama-Hasegawa:ICFP03,Kameyama:HOSC07,Kameyama-Tanaka:PPDP10}
that are sound and complete with respect to the corresponding CPS
translations. Numerous other results concerning equational reasoning
in various calculi for delimited
continuations~\cite{Sabry:TR96,Ariola-al:HOSC07,Herbelin-Ghilezan:POPL08,
  Materzok:CSL13} show that it has been a topic of active research.

While the CPS-based equational theories are a natural consequence of
the denotational or translational semantics of control operators such
as \shiftId{} and \resetId{}, they are not strong enough to verify the
equivalences of programs that have unrelated images through the CPS
translation, but that operationally cannot be distinguished (e.g.,
take two different fixed-point combinators). In order to build a
stronger theory of program equivalence for delimited control, we turn
to the operational foundations of \shiftId{} and \resetId{}, and
consider operationally-phrased criteria for program equivalence. The
original operational semantics of \shiftId{} and \resetId{}, in the
form of an abstract machine and a corresponding context-sensitive
reduction semantics, has been derived through
defunctionalization~\cite{Biernacka-al:LMCS05}. Here the concepts of
the delimited continuation and the metacontinuation are materialized
as a stack and a metastack of the machine, and as a context and a
metacontext in the reduction semantics. A direct consequence of this
semantics is that no ``missing reset'' error can occur in the course
of program evaluation---a reset guarding the current delimited
continuation is always present. A \emph{relaxed} version of the
semantics, where a delimiter surrounding the context is not statically
ensured has also been considered in the
literature~\cite{Kameyama-Hasegawa:ICFP03} and in some
implementations~\cite{Filinski:POPL94}. This less-structured approach
sacrifices the direct correspondence with CPS for flexibility and it
scales better to other delimited-control operators.

\subsubsection*{Behavioral equivalences} Because of the complex nature of control
effects, it can be difficult to determine if two programs that use
\shiftId{} and \resetId{} are equivalent (i.e., behave in the same
way) or not. \emph{Contextual equivalence}~\cite{JHMorris:PhD} is
widely considered as the most natural equivalence on terms in
languages based on the $\lambda$-calculus. The intuition behind this
relation is that two programs are equivalent if replacing one by the
other in a bigger program does not change the behavior of this bigger
program. The behavior of a program has to be made formal by defining
the \emph{observable actions} we want to take into account for the
calculus we consider. It can be, \eg inputs and outputs for
communicating systems~\cite{Sangiorgi-Walker:01}, memory reads and
writes, etc. For the plain
$\lambda$-calculus~\cite{Abramsky-Ong:IaC93}, it is usually whether
the term terminates or not. The ``bigger program'' can be seen as a
\emph{context} (a term with a hole) and, therefore, two terms
$\tmzero$ and $\tmone$ are contextually equivalent if we cannot tell
them apart when evaluated within any context $\cctx$, \ie if $\inctx
\cctx \tmzero$ and $\inctx \cctx \tmone$ produce the same observable
actions.

The latter quantification over contexts $\cctx$ makes contextual
equivalence hard to use in practice to prove that two given terms are
equivalent. As a result, one usually looks for more tractable
alternatives to contextual equivalence, such as logical relations
(see, \eg~\cite{Pitts:MSCS00}), axiomatizations (see, \eg~\cite{Lanese-AL:IC11}), or \emph{bisimulations}. A bisimulation
relates two terms $\tmzero$ and $\tmone$ by asking them to mimic each
other in a coinductive way, \eg if $\tmzero$ reduces to a term
$\tmzero'$, then $\tmone$ has to reduce to a term $\tmone'$ so that
$\tmzero'$ and $\tmone'$ are still in the bisimulation, and conversely
for the reductions of $\tmone$. An equivalence on terms, called
\emph{bisimilarity} can be derived from a notion of bisimulation: two
terms are bisimilar if there exists a bisimulation which relates
them. Finding an appropriate notion of bisimulation consists in
finding the conditions on which two terms are related, so that the
resulting notion of bisimilarity is \emph{sound} and \emph{complete}
{\wrt}contextual equivalence, (\ie it is included in and it contains
contextual equivalence, respectively).

Different styles of bisimulations have been proposed for calculi
similar to the $\lambda$-calculus. For example, \emph{applicative}
bisimilarity~\cite{Abramsky-Ong:IaC93} relates terms by reducing them
to values (if possible), and the resulting values have to be
themselves applicative bisimilar when applied to an arbitrary
argument. As we can see, applicative bisimilarity still contains some
quantification over arguments to compare values, but is nevertheless
easier to use than contextual equivalence because of its coinductive
nature---bisimulation relations are constructed incrementally,
following a step-by-step analysis of the possible interactions of the
program with its environment, and also because we do not have to
consider all forms of contexts. When sound, applicative bisimilarity
is usually also complete {\wrt}contextual equivalence, at least for
deterministic languages such as the plain
$\lambda$-calculus~\cite{Abramsky-Ong:IaC93}.

\emph{Environmental
  bisimilarity}~\cite{Sangiorgi-al:LICS07,Sangiorgi-al:TOPLAS11} is
quite similar to applicative bisimilarity, as it compares terms by
reducing them to values, and then requires the resulting values to be
bisimilar when applied to some arguments. However, the arguments are
no longer arbitrary, but built using an \emph{environment}, which
represents the knowledge accumulated so far by an outside observer on
the tested terms. Like applicative bisimilarity, environmental
bisimilarity is usually sound and complete, but it also allows for
\emph{up-to techniques} to simplify its equivalence proofs. The idea
behind up-to techniques is to define relations that are not exactly
bisimulations but are included in bisimulations. Finding an up-to
relation equating two given terms is usually simpler than finding a
regular bisimulation relating these terms. Unlike for environmental
bisimilarity, the definition of useful up-to techniques for
applicative bisimilarity remains an open problem.

In contrast to applicative and environmental bisimilarity,
\emph{normal-form} bisimilarity~\cite{Lassen:LICS05} (also called
\emph{open} bisimilarity in~\cite{Sangiorgi:LICS92}) does not contain
any quantification over arguments or contexts in its definition. The
principle is to reduce the compared terms to normal forms (if
possible), and then to decompose the resulting normal forms into
sub-components that have to be themselves bisimilar. Unlike
applicative or environmental bisimilarity, normal-form bisimilarity is
usually not complete, \ie there exist contextually equivalent terms
that are not normal-form bisimilar. But because of the lack of
quantification over contexts, proving that two terms are normal-form
bisimilar is usually quite simple, and the proofs can be further
simplified with the help of up-to techniques (like with environmental
bisimilarity).

\subsubsection*{This work} In this article, we present a comprehensive study of
the behavioral theory of a $\lambda$-calculus extended with the operators
\shiftId{} and \resetId{}, called $\lamshift$. In previous works, we defined
ap\-plic\-at\-ive~\cite{Biernacki-Lenglet:FOSSACS12},
normal-form~\cite{Biernacki-Lenglet:FLOPS12,Biernacki-al:MFPS17}, and
environmental~\cite{Biernacki-Lenglet:APLAS13,Aristizabal-al:LMCS17}
bisimilarities for this calculus. Here we present these results in a systematic
and uniform way, with examples allowing for comparisons between the different
styles of bisimulation. In particular, we compare bisimilarities to Kameyama and
Hasegawa's direct style axiomatization of
$\lamshift$~\cite{Kameyama-Hasegawa:ICFP03}, and we use these axioms as examples
throughout the paper. We consider two semantics for~$\lamshift$, one that is
faithful to its defining CPS translation, where terms are evaluated within an
outermost \resetId{} (we call it the ``original semantics''), and another one
where this requirement is lifted (we call it the ``relaxed
semantics''). Finally, we discuss how this work can be extended to other
delimited-control operators.

\subsubsection*{Structure of the article} Section~\ref{s:calc} presents the
syntax and semantics of the calculus $\lamshift$ with \shiftId{} and
\resetId{} that we use in this paper. We also recall the definition of
CPS equivalence, a CPS-based equivalence between terms, and its
axiomatization. Section~\ref{s:ctx-eqv} discusses the definition of a
contextual equivalence for $\lamshift$, and its relationship with CPS
equivalence. We look for (at least sound) alternatives of this
contextual equivalence by considering several styles of
bisimilarities: applicative in Section~\ref{s:app}, environmental in
Section~\ref{s:env}, and normal-form in
Section~\ref{s:nf}. Section~\ref{s:extensions} discusses the possible
extensions of our work to other semantics and other calculi with
delimited control, and Section~\ref{s:conclusion} concludes this
paper. In particular, we summarize in Figure~\ref{f:conclusion} the
relationships between all the behavioral equivalences defined in this
paper. We discuss related work---in particular, our own previous
work---in the relevant sections, \eg related work on applicative
bisimilarities for control operators is discussed at the beginning of
Section~\ref{s:app}.

\subsubsection*{Notations and basic definitions} We use the following notations
frequently throughout the paper. We write $\is$ for a defining
equality, \ie $m \is e$ means that $m$ is defined as the expression
$e$. Given a metavariable $m$, we write $\vect m$ for a sequence of
entities denoted by $m$. Given a binary relation $\rel$, we write $m
\rel m'$ for $(m, m') \mathop{\in} \rel$, $\inv\rel$ for its inverse,
defined as $\inv\rel {\mathop{\is}} \{(m', m) \mmid m \rel m'\}$, and
$\rel^*$ for its transitive and reflexive closure, defined as
\[\rel^* {\mathop{\is}} \{ (m, m') \mmid \exists k, m_1,\dots,m_k, k \geq 0 \conjun m
= m_0 \conjun m_k = m' \conjun \forall 0 \leq i < k, m_i \rel
m_{i+1}\}.\] Further, given two binary relations $\rel$ and $\relS$ we
use juxtaposition $\rel\relS$ for their composition, defined as
$\rel\relS {\mathop{\is}} \{(m, m') \mmid \exists m'', m \rel m'' \conjun m'' \relS
m'\}$. Finally, a relation $\rel$ is \emph{compatible} if it is preserved by all
the operators of the language, \eg $\tmzero \rel \tmone$ implies $\lam \varx
\tmzero \rel \lam \varx \tmone$; a relation is a \emph{congruence} if it is a
compatible equivalence relation.


\section{The Calculus}%
\label{s:calc}
In this section, we present the syntax, reduction semantics, and CPS
equivalence for the language $\lamshift$ studied throughout this
article. The operators \textshift and \textreset have been originally
defined and have then been usually studied and implemented with a
call-by-value semantics; e.g., almost all the references we give in
Section~\ref{s:intro} use such a semantics. We therefore choose to
work with call by value in the main devolpments of this article, and
only briefly discuss call by name in Section~\ref{ss:cbn}.

\subsection{Syntax}

The language $\lamshift$ extends the call-by-value $\lambda$-calculus
with the delimited-control operators \shiftId{} and
\resetId{}~\cite{Danvy-Filinski:LFP90}. We assume we have a set of
term variables, ranged over by $\varx$, $y$, $z$, and $\vark$.
We use the metavariable $k$ for \shiftId{}-bound variables representing a
continuation, while $x$, $y$, and $z$ stand for the usual lambda-bound
variables representing any values; we believe such a distinction helps
to understand examples and reduction rules.

The syntax of terms ($\terms$) and values ($\values$) is given by the
following grammars:
\begin{grammar}
  &\textrm{Terms:} \quad & \tm  &\bnfdef  \val \bnfor \app \tm \tm \bnfor
  \shift \vark \tm \bnfor \reset \tm  \\
  &\textrm{Values:} \quad & \val &\bnfdef \varx \bnfor \lam \varx \tm
\end{grammar}
The operator \shiftId{} ($\shift \vark \tm$) is a capture operator,
the extent of which is determined by the delimiter \resetId{}
($\rawreset$). A $\lambda$-abstraction $\lam \varx \tm$ binds $\varx$
in~$\tm$ and a \shiftId{} construct $\shift \vark \tm$ binds $\vark$
in~$\tm$; terms are equated up to $\alpha$-conversion of their bound
variables. The set of free variables of $\tm$ is written $\fv \tm$; a
term $\tm$ is \emph{closed} if $\fv \tm = \emptyset$. The set of
closed terms (values) is noted~$\cterms$ ($\cvalues$, respectively).

We distinguish several kinds of contexts, represented outside-in, as
follows:
\begin{grammar}
  &\textrm{Pure contexts:} \quad & \ctx & \bnfdef \mtctx \bnfor \vctx \val \ctx \bnfor \apctx \ctx
  \tm \\
  &\textrm{Evaluation contexts:} \quad & \rctx & \bnfdef \mtctx \bnfor \vctx \val \rctx \bnfor \apctx \rctx
  \tm \bnfor \resetctx \rctx \\
  &\textrm{Contexts:} & \cctx & \bnfdef \mtctx \bnfor \lam \varx \cctx
  \bnfor \vctx \tm \cctx \bnfor \apctx \cctx \tm \bnfor \shift \vark \cctx \bnfor \resetctx
  \cctx
\end{grammar}
Regular contexts are ranged over by $\cctx$. The pure evaluation contexts
($\pevctxts$) (abbreviated as pure contexts),\footnote{This terminology comes
  from Kameyama (\eg in~\cite{Kameyama-Hasegawa:ICFP03}); note that we use the
  metavariables of~\cite{Biernacka-al:LMCS05} for evaluation contexts, which are
  reversed compared to~\cite{Kameyama-Hasegawa:ICFP03}.} ranged over by $\ctx$,
represent delimited continuations and can be captured by the \shiftId{}
operator. The call-by-value evaluation contexts, ranged over by $\rctx$,
represent arbitrary continuations and encode the chosen reduction
strategy. Filling a context $\cctx$ ($\ctx$,~$\rctx$) with a term $\tm$ produces
a term, written $\inctx \cctx \tm$ ($\inctx \ctx \tm$, $\inctx \rctx \tm$,
respectively); the free variables of $\tm$ may be captured in the process. We
extend the notion of free variables to contexts (with $\fv \mtctx=\emptyset$),
and we say a context $\cctx$ ($\ctx$, $\rctx$) is \emph{closed} if
$\fv \cctx = \emptyset$ ($\fv \ctx = \emptyset$, $\fv \rctx = \emptyset$,
respectively). The set of closed pure contexts is noted $\cpevctxts$.  In any
definitions or proofs, we say a variable is \emph{fresh} if it does not occur
free in the terms or contexts under consideration.

\subsection{Reduction Semantics}%
\label{ss:reduction}

The reduction semantics of $\lamshift$ is defined by the following
rules, where $\subst \tm \varx \val$ is the usual capture-avoiding
substitution of $\val$ for $\varx$ in~$\tm$:
\[
\begin{array}{rlll}
  \inctx \rctx {\app {\lamp \varx \tm} \val} & \redcbv & \inctx
  \rctx {\subst \tm \varx \val} & \quad \RRbeta\\[2mm]
\inctx \rctx {\reset{\inctx \ctx {\shift \vark \tm}}} &
  \redcbv & \inctx \rctx{\reset{\subst \tm \vark
      {\lam \varx {\reset {\inctx \ctx \varx}}}}} \mbox{ with } \varx \notin \fv{\ctx}
  & \quad \RRshift
 \\[2mm]
 \inctx \rctx {\reset \val} & \redcbv & \inctx \rctx \val & \quad \RRreset
\end{array}
\]
The term $\app {\lamp \varx \tm} \val$ is the usual call-by-value
redex for $\beta$-reduction (rule $\RRbeta$). The operator $\shift
\vark \tm$ captures its surrounding context $\ctx$ up to the
dynamically nearest enclosing \resetId{}, and substitutes $\lam \varx
{\reset {\inctx \ctx \varx}}$ for $\vark$ in $\tm$ (rule
$\RRshift$). If a \resetId{} is enclosing a value, then it has no purpose
as a delimiter for a potential capture, and it can be safely removed
(rule $\RRreset$). All these reductions may occur within a metalevel
context $\rctx$. The chosen call-by-value evaluation strategy is
encoded in the grammar of the evaluation contexts. Furthermore, the
reduction relation $\redcbv$ is compatible with evaluation contexts
$\rctx$, i.e., $\inctx \rctx \tm \redcbv \inctx \rctx {\tm'}$ whenever
$\tm \redcbv {\tm'}$. We write $\tm \redcbv$ when there is a $\tm'$
such that $\tm \redcbv {\tm'}$ and we write $\tm \not \redcbv$ when
no such~$\tm'$ exists.

All along the article, we use the terms $i \is \lam \varx \varx$,
$\omega \is \lam \varx {\app \varx \varx}$, and $\Omega \is \omega \iapp \omega$
to build examples, starting with the next one.

\begin{exa}%
  \label{e:reduction}
  We present the sequence of reductions initiated by
  $\reset {\app{\appp {(\shift {\vark_1}{\app i {\appp {\vark_1} i}})}{\shift
        {\vark_2} \omega}} \Omega}$. The term
  $\shift {\vark_1}{\app i {\appp {\vark_1} i}}$ is within the pure context
  $\ctx \is \apctx {(\apctx \mtctx {\shift {\vark_2} \omega})} {\Omega}$,
  enclosed in a delimiter $\rawreset$, so $\ctx$ is captured according to rule
  $\RRshift$:
  \[
    \reset {\app{\appp {(\shift{\vark_1}{\app i {\appp {\vark_1}
            i}})}{\shift {\vark_2} \omega}}{\Omega}}
  \redcbv \reset{\app i {\appp {\lamp \varx {\reset {\app{\appp \varx
              {\shift {\vark_2} \omega}}{\Omega}}}} i}}
  \]
  The role of reset in $\lam \varx {\reset{\inctx \ctx \varx}}$ is
  more clear after reduction of the $\beta_v$-redex $\app {\lamp \varx
    {\reset{\inctx \ctx \varx}}} i$:
  \[
    \reset{\app i {\appp {\lamp \varx {\reset {\app{\appp \varx
              {\shift {\vark_2} \omega}}{\Omega}}}} i}} \redcbv \reset
  {\app i {\reset{\app {\appp i {\shift {\vark_2}
            \omega}}{\Omega}}}}
  \]
  When the captured context $\ctx$ is
  reactivated, it is not simply \emph{concatenated} with the context
  $\vctx i \mtctx$, but \emph{composed} thanks to the reset enclosing
  $\ctx$. (This operation corresponds to continuation composition in
  the CPS semantics of \shiftId{} and \resetId{}, and it is crucially
  different from context concatenation~\cite{Biernacki-al:SCP06}.) As
  a result, the capture triggered by $\shift {\vark_2} \omega$ leaves
  the term $i$ outside the first enclosing reset intact:
  \[
    \reset {\app i {\reset{\app {\appp i {\shift {\vark_2}
            \omega}}{\Omega}}}} \redcbv \reset {\app i {\reset \omega}}
  \]
  Because $\vark_2$ does not occur in $\omega$, the context
  $\apctx {(\vctx i \mtctx)} {\Omega}$ is discarded when captured by
  $\shift {\vark_2} \omega$. Finally, we remove the useless delimiter
  $\reset {\app i {\reset \omega}} \redcbv \reset{\app i \omega}$ with rule
  $\RRreset$, and we then $\beta_v$-reduce and remove the last delimiter
  $\reset{\app i \omega} \redcbv \reset{\omega} \redcbv \omega$. Note that while
  the reduction strategy is call-by-value, some function arguments are not
  evaluated, like the non-terminating term $\Omega$ in this example.
\end{exa}

\begin{exa}[fixed-point combinators]%
  \label{e:combinators}

  We recall the definition of Turing's and Curry's fixed-point
  combinators. Let $\theta \is \lam {\varx y}{\app y {\lamp z {\app
        {\app {\app \varx \varx} y} z}}}$ and $\delta_\varx \is \lam y
  {\app \varx {\lamp z {\app {\app y y} z}}}$; then $\Turing \is \app
  \theta \theta$ is Turing's call-by-value fixed-point combinator, and
  $\Curry \is \lam \varx {\app{\delta_\varx}{\delta_\varx}}$ is
  Curry's call-by-value fixed-point combinator. In~\cite{Danvy-Filinski:DIKU89}, the authors propose variants of these
  combinators using \shiftId{} and \resetId{}. They write Turing's combinator as
  $\reset {\app \theta {\shift \vark {\app \vark \vark}}}$ and Curry's
  combinator as $\lam \varx {\reset {\app {\delta_\varx}{\shift \vark
        {\app \vark \vark}}}}$. For an example, the following
  reduction sequence demonstrates the behavior of the former:
  \[
  \reset{\app \theta {\shift \vark {\app \vark \vark}}}
  \redcbv
  \reset{\app{\lamp{x}{\reset{\app{\theta}{x}}}}{\lamp{x}{\reset{\app{\theta}{x}}}}}
  \clocbv
  \lam{y}{\app{y}{\lamp{z}
      {\app{\app{\app{\lamp{x}{\reset{\app{\theta}{x}}}}{\lamp{x}{\reset{\app{\theta}{x}}}}} y} z}}}
  \]
  We use the combinators and their delimited-control variants as examples for
  the equivalence proof techniques we define throughout the paper.
\end{exa}

\begin{rem}
  The context capture can also be written using local reduction
  rules~\cite{Felleisen:POPL88}, where the context is consumed piece
  by piece. We discuss these reduction rules and their consequences on
  the results of this article in Section~\ref{ss:local-rules}.
\end{rem}

There exist terms which are not values and which cannot be reduced any
further; these are called \emph{stuck terms}.
\begin{defi}
  A term $\tm$ is stuck if $\tm$ is not a value and $\tm \not
  \redcbv$.
\end{defi}
\noindent For example, the term $\inctx \ctx {\shift \vark \tm}$ is stuck
because there is no enclosing \resetId{}; the capture of $\ctx$ by the
\shiftId{} operator cannot be triggered. In fact, stuck terms are easy
to characterize.
\begin{prop}%
  \label{l:stuck}
  A term $\tm$ is stuck iff
  \begin{itemize}
  \item $\tm = \inctx \ctx {\shift \vark {\tm'}}$
    for some $\ctx$, $k$, and $\tm'$, or
  \item $\tm = \inctx \rctx {\app \varx \val}$ for some $\rctx$, $\varx$, and
    $\val$.
  \end{itemize}
\end{prop}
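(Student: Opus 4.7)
The proof proceeds by establishing the two implications separately. For the right-to-left direction, I first check that neither of the two forms can be a value: by inspection of the pure-context grammar, $\inctx \ctx {\shift \vark \tm'}$ is either $\shift \vark \tm'$ itself (when $\ctx = \mtctx$) or has an application at the head, and $\inctx \rctx {\app \varx \val}$ is either $\app \varx \val$ itself or has an application or \resetId{} at the head; none of these are values. To see that no reduction applies, a short induction on $\ctx$ (resp.\ $\rctx$) shows that every way of writing the term as $\inctx{\rctx''} r$ for a redex $r$ forces the hole of $\rctx''$ to sit directly at the $\shift \vark \tm'$ (resp.\ at the application $\app \varx \val$). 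The only candidate redexes are then either a $\beta_v$-redex, which demands a $\lambda$-abstraction where we only have a variable or a \shiftId{}, or the $\RRshift$ or $\RRreset$ pattern, which demands a \resetId{} immediately above, which is absent since pure contexts contain no \resetId{}.

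For the left-to-right direction, I induct on $\tm$, assuming $\tm$ is neither a value nor reducible. The case $\tm = \val$ contradicts the hypothesis, and $\tm = \shift \vark \tm'$ matches the first form with $\ctx = \mtctx$. For $\tm = \app{\tm_1}{\tm_2}$, if $\tm_1 \redcbv$ then so does $\tm$ (by compatibility with $\apctx \mtctx {\tm_2}$), so by the induction hypothesis $\tm_1$ is either a value or of one of the two stuck forms; in the stuck cases, padding the context with $\apctx \mtctx {\tm_2}$ immediately yields a stuck form for $\tm$. If $\tm_1$ is a value $\val_1$, the same analysis on $\tm_2$ either produces a stuck form extended by $\vctx {\val_1} \mtctx$, or leaves $\tm = \app {\val_1}{\val_2}$; this last term reduces by $\RRbeta$ unless $\val_1$ is a variable, in which case $\tm$ is of the second form with $\rctx = \mtctx$.

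The delicate case is $\tm = \reset \tm'$. If $\tm'$ is a value, $\RRreset$ applies; if $\tm' \redcbv$, then $\tm \redcbv$ by compatibility with $\resetctx \mtctx$; otherwise the induction hypothesis places $\tm'$ in one of the two stuck forms. The subcase $\tm' = \inctx \ctx {\shift \vark \tm''}$ must be ruled out, because $\tm = \reset{\inctx \ctx {\shift \vark \tm''}}$ is exactly an instance of the $\RRshift$ pattern with outer context $\mtctx$, contradicting $\tm \not\redcbv$. The remaining subcase $\tm' = \inctx{\rctx'}{\app \varx \val}$ gives $\tm = \inctx{\resetctx {\rctx'}}{\app \varx \val}$, which is of the second form.

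The main subtlety is precisely this last case: the argument crucially uses that pure contexts contain no \resetId{}, so a \shiftId{} that has bubbled to the top of a pure context is always captured as soon as a \resetId{} appears above it, whereas the open-application stuck form is preserved under enclosing \resetId{}s. This asymmetry between the two clauses of the proposition is exactly what makes the characterization exhaustive, and it is the only non-routine point in the induction.
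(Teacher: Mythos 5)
Your proof is correct. The paper states this proposition without proof, treating it as a routine consequence of the grammars of contexts and redexes, and your structural induction—with the key observation that the control-stuck form cannot survive an enclosing \resetId{} (the $\RRshift$ rule fires), whereas the open-stuck form can—is exactly the expected argument.
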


\begin{proof}[Sketch]
  The ``if'' part is straightforward. The ``only if'' part is by induction on
  $\tm$; we detail the application case $\tmzero \iapp \tmone$. If
  $\tmzero$ is stuck, we can conclude with the induction hypothesis. Otherwise,
  $\tmzero$ is a value. If $\tmone$ is stuck, then we can conclude with the
  induction hypothesis again. If $\tmone$ is a value $\valone$, then $\tmzero$
  is not a $\lambda$-abstraction, so we have an open-stuck term $\app \varx
  \valone$ for some $\varx$.
\end{proof}

\noindent We call \emph{control-stuck terms} the terms of the form $\inctx \ctx
{\shift \vark \tm}$ and \emph{open-stuck terms} the terms of the form
$\inctx \rctx {\app \varx \val}$.
\begin{defi}
  A term $\tm$ is a normal form, if $\tm$ is a value or a stuck term.
\end{defi}

We call \emph{redexes} (ranged over by $\redex$) terms of the form
$\app{\lamp \varx \tm} \val$, $\reset {\inctx \ctx {\shift \vark
    \tm}}$, and $\reset \val$. Thanks to the following
unique-decomposition property, the reduction relation $\redcbv$ is
deterministic.
\begin{prop}%
  \label{p:unique-decomp}
  For all terms $\tm$, either $\tm$ is a normal form, or there exist a
  unique redex~$\redex$ and a unique context $\rctx$ such that $\tm =
  \inctx \rctx \redex$.
\end{prop}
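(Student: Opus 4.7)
The plan is to proceed by structural induction on the term $\tm$, simultaneously proving existence and uniqueness. The induction hypothesis I will apply to each subterm is exactly the statement of the proposition (either a normal form, or a unique decomposition as $\inctx{\rctx'}{\redex'}$).

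For \emph{existence}, the base cases are trivial: if $\tm$ is a value, it is a normal form; if $\tm = \shift \vark {\tm'}$, it is the control-stuck form $\inctx \mtctx {\shift \vark \tm'}$ (hence a normal form by Proposition~\ref{l:stuck}). For $\tm = \reset {\tm_0}$, I use the IH on $\tm_0$: if $\tm_0$ is a value, $\tm$ itself is the $\RRreset$-redex in the empty context; if $\tm_0$ is a control-stuck term $\inctx \ctx {\shift \vark \tm''}$, then $\tm$ is the $\RRshift$-redex in the empty context; if $\tm_0$ is open-stuck $\inctx{\rctx'}{\app \varx \val}$, then $\tm = \inctx{\resetctx{\rctx'}}{\app \varx \val}$, again an evaluation-context decomposition of a redex-free term --- so I carry the open-stuck subterm through as $\inctx{\resetctx{\rctx'}}{\app \varx \val}$, which is still an open-stuck form; if $\tm_0 = \inctx{\rctx'}{\redex'}$, then $\tm = \inctx{\resetctx{\rctx'}}{\redex'}$. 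For $\tm = \app{\tm_1}{\tm_2}$, I first apply the IH to $\tm_1$: if $\tm_1$ is not a value, it must (by Proposition~\ref{l:stuck}) either decompose as $\inctx{\rctx'}{\redex'}$ (giving $\tm = \inctx{\apctx{\rctx'}{\tm_2}}{\redex'}$), be open-stuck (giving an open-stuck $\tm$), or be control-stuck $\inctx \ctx{\shift \vark \tm''}$ (giving the control-stuck form $\inctx{\apctx \ctx {\tm_2}}{\shift \vark \tm''}$, using that pure contexts are closed under the $\apctx \ctx \tm$ production). If $\tm_1$ is a value $\val_1$, I apply the IH to $\tm_2$ and perform the analogous case split, using the $\vctx{\val_1}{-}$ production of pure and evaluation contexts; the only new case is when both $\tm_1, \tm_2$ are values, where either $\val_1$ is a $\lambda$-abstraction (yielding a $\RRbeta$-redex) or $\val_1$ is a variable (yielding an open-stuck form).

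For \emph{uniqueness}, the crucial observation is that the grammar of $\rctx$ is unambiguous and reads the term outside-in in a forced way. Concretely, I show by induction on $\rctx$ that if $\inctx \rctx \redex = \inctx {\rctx''}{\redex''}$ with $\redex, \redex''$ redexes, then $\rctx = \rctx''$ and $\redex = \redex''$. The head constructor of $\inctx \rctx \redex$ determines which production of $\rctx$ is used: a variable or $\lambda$ can only arise from $\rctx = \mtctx$ (since redexes are not variables or $\lambda$-abstractions), an application forces $\rctx$ to start with $\vctx \val -$ or $\apctx - \tm$ depending on whether the left subterm is a value, and $\rawreset$ forces $\rctx = \resetctx{\rctx'}$ unless the redex is $\reset \val$ or $\reset{\inctx \ctx {\shift \vark \tm}}$ at the outermost position. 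One then recurses on the smaller context. Uniqueness of the decomposition $\inctx \ctx {\shift \vark \tm}$ inside a $\RRshift$-redex is handled the same way, using that the pure-context grammar is likewise unambiguous.

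The main obstacle is bookkeeping in the uniqueness argument: when we see $\reset{\tm_0}$, we must distinguish between the case where the $\rawreset$ is the outermost constructor of the redex itself (so $\rctx$ stops here) and the case where the redex lies strictly inside, so $\rctx$ continues as $\resetctx{\rctx'}$. The discriminator is whether $\tm_0$ is already a value or a control-stuck term, in which case the enclosing $\rawreset$ \emph{is} the redex; otherwise, by the IH, $\tm_0$ decomposes uniquely and the $\rawreset$ must be absorbed into the context. Once this $\rawreset$ case is handled cleanly, the remaining cases are mechanical, and the existence construction above is essentially already an algorithm that witnesses the unique decomposition.
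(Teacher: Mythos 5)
Your proof is correct, and it is the standard argument; the paper actually states Proposition~\ref{p:unique-decomp} without proof, so there is nothing to diverge from. Your structural induction, with the induction hypothesis carrying both the normal-form classification and the uniqueness of the decomposition through subterms, is exactly how one establishes this. Two small points of bookkeeping you should make explicit when writing it up. First, for the induction to deliver uniqueness you need the four cases (value, control-stuck, open-stuck, decomposable as $\inctx \rctx \redex$) to be \emph{mutually exclusive}, so the hypothesis you carry should be the exclusive disjunction; in particular you need that a value, a control-stuck term, and an open-stuck term admit no decomposition $\inctx \rctx \redex$ at all, which is what rules out a second decomposition of $\reset{\inctx \ctx {\shift \vark \tm''}}$ as $\inctx{\resetctx{\rctx'}}{\redex'}$. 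Second, the top-level-redex ambiguity you resolve carefully for $\rawreset$ (is the enclosing delimiter part of the redex, or absorbed into the context?) arises identically for application: $\app{\lamp \varx \tm}{\val}$ is itself a $\RRbeta$-redex with $\rctx = \mtctx$, and you must check it cannot also be written as $\inctx{\apctx{\rctx'}{\val}}{\redex'}$ or $\inctx{\vctx{\lamp \varx \tm}{\rctx'}}{\redex'}$ --- which follows from the same mutual-exclusion fact, since both subterms are values. With those two remarks in place the argument is complete.
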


\begin{proof}[Sketch]
  By induction on $\tm$. For example, in the case $\tm = \reset {\tm'}$, either
  $\tm'$ itself is reducing, and we can conclude with the induction hypothesis,
  or $\tm'$ is a value $\val$, and we have a redex $\reset\val$ in the empty
  context.
\end{proof}

\noindent Finally,
we define the evaluation relation of $\lamshift$ as
follows.
\begin{defi}
  We write $\tm \evalcbv \tm'$ if $\tm \clocbv \tm'$ and $\tm'$ is  a
  normal form.
\end{defi}
\noindent If a term $\tm$ admits an infinite reduction sequence, like $\Omega$,
we say it \emph{diverges}, written $\tm \divcbv$.

In the rest of the paper, we use the following results on the reduction (or
evaluation) of terms: a control stuck term cannot be obtained from a term
of the form $\reset\tm$, and reduction is preserved by substitution.

\begin{prop}%
  \label{p:eval-reset}
  If $\reset{\tm} \evalcbv \tm'$ then $\tm'$ is a value or an open stuck term of
  the form $\reset{\inctx \rctx {\app \varx \val}}$. (If $\tm$ is closed then
  $\tm'$ can only be a closed value.)
\end{prop}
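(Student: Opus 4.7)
The plan is to proceed by induction on the length $n$ of the reduction sequence $\reset{\tm} \clocbv \tm'$, maintaining the invariant that every intermediate term is either a value or of the shape $\reset{\tm''}$ for some $\tm''$.

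For the base case $n = 0$, we have $\tm' = \reset{\tm}$ and this must already be a normal form. Since $\reset{\tm}$ is not a value, it is stuck, so by Proposition~\ref{l:stuck} it is either control-stuck or open-stuck. If $\tm$ were a value, rule $\RRreset$ would apply; if $\tm$ were control-stuck, say $\tm = \inctx \ctx {\shift \vark {\tm''}}$, rule $\RRshift$ would apply; and if $\tm$ reduced to some $\tm_2$, then $\reset{\tm}$ would reduce by compatibility with the evaluation context $\resetctx \mtctx$. So $\tm$ must itself be a normal form that is neither a value nor control-stuck, and by Proposition~\ref{l:stuck} it must be open-stuck, giving $\tm' = \reset{\inctx \rctx {\app \varx \val}}$ as required.

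For the inductive step, consider the first reduction $\reset{\tm} \redcbv \tm_1$. By unique decomposition (Proposition~\ref{p:unique-decomp}), $\reset{\tm} = \inctx \rctx \redex$ for a unique $\rctx$ and $\redex$. Inspecting the grammar of evaluation contexts, the outer shape of $\reset{\tm}$ forces either $\rctx = \mtctx$ (so $\reset{\tm}$ is itself the redex), or $\rctx = \resetctx \rctx'$ (so the redex lies under the reset). In the first subcase, the redex is either $\reset{\val}$, giving $\tm_1 = \val$ a value (and the evaluation terminates there), or $\reset{\inctx \ctx {\shift \vark {\tm''}}}$, giving $\tm_1 = \reset{\subst{\tm''}{\vark}{\lam \varx {\reset {\inctx \ctx \varx}}}}$, which is of the form $\reset{\tm_2}$. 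In the second subcase, $\tm_1 = \reset{\inctx{\rctx'}{\redex'}}$ where $\redex \redcbv \redex'$, which is again of the form $\reset{\tm_2}$. In either non-terminating case we apply the induction hypothesis to the shorter sequence $\reset{\tm_2} \clocbv \tm'$, and in the terminating case $\tm' = \tm_1$ is already a value.

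The parenthetical refinement for closed $\tm$ follows by a separate, immediate observation: reduction preserves the set of free variables (straightforward by inspection of $\RRbeta$, $\RRshift$, $\RRreset$), so every term in the reduction sequence is closed. But a term of the form $\reset{\inctx \rctx {\app \varx \val}}$ contains the free variable $\varx$, hence is not closed; therefore $\tm'$ must be a value, and it is necessarily closed. The only substantive aspect of the argument is the case analysis on $\rctx$ in the inductive step, which must rule out shapes such as $\vctx \val \rctx'$ or $\apctx \rctx' \tm''$ on the basis that the outermost constructor of $\reset{\tm}$ is a reset; once this is observed, the preservation of the $\reset{\cdot}$ wrapper through one reduction step is routine.
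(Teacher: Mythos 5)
Your proof is correct; the paper states Proposition~\ref{p:eval-reset} without an explicit proof, and your induction on the length of the reduction sequence, with the invariant that the outermost \resetId{} persists until a value is produced, is precisely the standard argument it leaves implicit. One cosmetic point: reduction does not \emph{preserve} the set of free variables (substitution can discard the free variables of the argument), it only guarantees $\fv{\tm'} \subseteq \fv{\tm}$ when $\tm \redcbv \tm'$ --- but this weaker inclusion is all the closed case needs, so the argument stands.
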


\begin{proof}[Sketch]
  By case analysis on the reduction rules, $\reset \tm \redcbv \tm'$ implies
  $\tm'$ is a value or $\tm' = \reset{\tm''}$ for some $\tm''$. So if
  $\reset{\tm} \evalcbv \tm'$ then $\tm'$ is a value, or a normal form
  $\reset{\tm''}$ for some $\tm''$. By Proposition~\ref{l:stuck}, if $\tm'$ is
  not a value, it is either control-stuck or open-stuck, but a control-stuck
  term cannot have an outermost reset, so $\tm'$ is necessarily open-stuck.
\end{proof}

\begin{prop}%
  \label{l:redcbv-subst}
  If $\tm \redcbv \tm'$, then $\subst \tm \varx \val \redcbv \subst
  {\tm'} \varx \val$.
\end{prop}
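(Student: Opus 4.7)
The plan is to prove the statement by case analysis on which of the three reduction rules ($\RRbeta$, $\RRshift$, or $\RRreset$) justifies $\tm \redcbv \tm'$. In every case, the goal is to show that after applying $\subst{\cdot}{\varx}{\val}$ on both sides, the left-hand side is still a redex (of the same shape) whose contractum is precisely $\subst{\tm'}{\varx}{\val}$.

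First, I would record two routine facts about substitution that will be used throughout. (i) Substitution extends compatibly to pure and evaluation contexts, and commutes with context filling:
\[
\subst{\inctx \rctx \tm}{\varx}{\val} \;=\; \inctx{\subst{\rctx}{\varx}{\val}}{\subst{\tm}{\varx}{\val}},
\qquad
\subst{\inctx \ctx \tm}{\varx}{\val} \;=\; \inctx{\subst{\ctx}{\varx}{\val}}{\subst{\tm}{\varx}{\val}}.
\]
(ii) The standard substitution lemma for the $\lambda$-calculus: whenever $y \neq \varx$ and $y \notin \fv(\val)$,
\[
\subst{\subst{\tm_1}{y}{\tm_2}}{\varx}{\val} \;=\; \subst{\subst{\tm_1}{\varx}{\val}}{y}{\subst{\tm_2}{\varx}{\val}}.
\]
Both are proved by straightforward induction on the structure of terms/contexts. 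Since bound variables in the redex can always be $\alpha$-renamed, I assume freely that all binders involved ($y$ in a $\RRbeta$-redex, $\vark$ and the fresh $\varx$ in a $\RRshift$-redex) are distinct from $\varx$ and do not occur free in $\val$.

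For the $\RRbeta$ case, write $\tm = \inctx \rctx {\app{\lamp y \tm_1}{\val_1}}$ and $\tm' = \inctx \rctx {\subst{\tm_1}{y}{\val_1}}$. Pushing the substitution inside using (i) yields $\subst{\tm}{\varx}{\val} = \inctx{\subst{\rctx}{\varx}{\val}}{\app{\lamp y {\subst{\tm_1}{\varx}{\val}}}{\subst{\val_1}{\varx}{\val}}}$, which is again a $\RRbeta$-redex (note that $\subst{\val_1}{\varx}{\val}$ remains a value, since substitution preserves the value form). Applying $\RRbeta$ and then using (ii) to swap the two substitutions gives exactly $\subst{\tm'}{\varx}{\val}$. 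The $\RRreset$ case is analogous and even simpler: $\subst{\reset{\val_1}}{\varx}{\val} = \reset{\subst{\val_1}{\varx}{\val}}$ is a $\RRreset$-redex because $\subst{\val_1}{\varx}{\val}$ is a value.

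The $\RRshift$ case is the main bookkeeping step. Here $\tm = \inctx \rctx {\reset{\inctx \ctx {\shift \vark \tm_1}}}$ reduces to $\tm' = \inctx \rctx {\reset{\subst{\tm_1}{\vark}{\lam y {\reset{\inctx \ctx y}}}}}$ with $y$ fresh. Using (i) repeatedly, $\subst{\tm}{\varx}{\val}$ takes the form $\inctx{\subst{\rctx}{\varx}{\val}}{\reset{\inctx{\subst{\ctx}{\varx}{\val}}{\shift \vark {\subst{\tm_1}{\varx}{\val}}}}}$, which is a $\RRshift$-redex and contracts to
\[
\inctx{\subst{\rctx}{\varx}{\val}}{\reset{\subst{\subst{\tm_1}{\varx}{\val}}{\vark}{\lam y {\reset{\inctx{\subst{\ctx}{\varx}{\val}}{y}}}}}}.
\]
The remaining task is to identify this with $\subst{\tm'}{\varx}{\val}$, which reduces to checking that
$\subst{\subst{\tm_1}{\vark}{\lam y {\reset{\inctx \ctx y}}}}{\varx}{\val} = \subst{\subst{\tm_1}{\varx}{\val}}{\vark}{\lam y {\reset{\inctx{\subst{\ctx}{\varx}{\val}}{y}}}}$. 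This is exactly lemma (ii) applied to $\tm_1$ with $\tm_2 = \lam y {\reset{\inctx \ctx y}}$, together with the observation that $\subst{(\lam y {\reset{\inctx \ctx y}})}{\varx}{\val} = \lam y {\reset{\inctx{\subst{\ctx}{\varx}{\val}}{y}}}$ by freshness of $y$. The only real obstacle is ensuring, via $\alpha$-conversion, that the binders involved do not clash with $\varx$ or free variables of $\val$; once that convention is fixed, the argument is mechanical.
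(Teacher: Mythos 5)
The paper states this proposition without proof, treating it as a routine fact about the reduction semantics, so there is no authorial argument to compare against. Your case analysis on the three reduction rules, together with the commutation of substitution with context filling and the standard substitution lemma under the usual $\alpha$-conversion conventions on binders, is exactly the argument the authors leave implicit, and it is correct — including the checks that substitution preserves the value, pure-context, and evaluation-context forms needed for each rule to still fire, and that the freshness side condition on the captured continuation's bound variable survives the substitution.
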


\begin{proof}[Sketch]
  By case analysis on the reduction rules.
\end{proof}

\subsection{The original reduction semantics}%
\label{ss:orig-sem}

Let us notice that the reduction semantics we have introduced does not require
terms to be evaluated within a top-level \resetId{}---a requirement that is
commonly relaxed in practical implementations of \shiftId{} and
\resetId{}~\cite{Dybvig-al:JFP06,Filinski:POPL94}, but also in some other
studies of these operators~\cite{Asai-Kameyama:APLAS07,Kameyama:HOSC07}. This is
in contrast to the original reduction semantics for \shiftId{} and
\resetId{}~\cite{Biernacka-al:LMCS05} that has been obtained from the 2-layered
continuation-passing-style (CPS) semantics~\cite{Danvy-Filinski:LFP90},
discussed in Section~\ref{ss:cps-equivalence}. A consequence of the
correspondence with the CPS-based semantics is that terms in the original
reduction semantics are treated as complete programs and are decomposed into
triples consisting of a subterm (a value or a redex), a delimited context, and a
meta-context (a list of delimited contexts), resembling abstract machine
configurations. Such a decomposition imposes the existence of an implicit
top-level \resetId{}, hard-wired in the decomposition, surrounding any term to
be evaluated.

While the relaxed semantics scales better to calculi with multiple
prompts~\cite{Dybvig-al:JFP06}, the original one lends itself to a
generalization to a hierarchy of delimited-control
operators~\cite{Biernacka-al:LMCS05}; see Section~\ref{ss:others} for
more details about these extensions. The two semantics differ in that
the original semantics does not allow for control-stuck
terms. However, it can be easily seen that operationally the
difference is not essential---they are equivalent when it comes to
terms of the form $\reset{\tm}$. In the rest of the article we call
such terms \emph{delimited terms} and we use the relaxed semantics
when analyzing their behavior.

The top-level \resetId{} requirement, imposed by the original semantics, does
not lend itself naturally to the notion of applicative bisimulation that we
propose for the relaxed semantics in Section~\ref{s:app}. We show, however, that
the requirement can be successfully treated in the framework of environmental
and normal-form bisimulations, presented in Sections~\ref{ss:env-LTS-original}
and~\ref{ss:nf-original}.

\subsection{CPS Equivalence}%
\label{ss:cps-equivalence}

The operators \shiftId{} and \resetId{} have been originally defined
by a translation into continuation-passing
style~\cite{Danvy-Filinski:LFP90} that we present in
Figure~\ref{f:cps-translation}. Translated terms expect two
continuations: the delimited continuation representing the rest of the
computation up to the dynamically nearest enclosing delimiter, and the
metacontinuation representing the rest of the computation beyond this
delimiter. In the first three equations the metacontinuation $k_2$
could be $\eta$-reduced, yielding Plotkin's familiar CBV CPS
translation~\cite{Plotkin:TCS75}. In the equation for \resetId{}, the
current delimited continuation $k_1$ is moved to the metacontinuation
and the delimited term receives the initial delimited continuation. In
the equation for \shiftId{}, the current continuation is captured (and
reinitialized) as a lambda abstraction that when applied pushes the
then-current delimited continuation on the metacontinuation, and
applies the captured continuation to the argument. A CPS-transformed
program is run with the initial delimited continuation $\kinit$ and
the identity metacontinuation.

\begin{figure}
  \[
  \begin{array}{rcl}
    \cps{\varx} &=& \lam {\vark_1 \vark_2}{\app {\app {\vark_1} \varx} \vark_2}
    \\[1mm]
    \cps{\lam \varx \tm} &=& \lam {\vark_1 \vark_2}{\app{\app {\vark_1}{\lamp
          \varx {\cps \tm}}}{\vark_2}}
          \\[1mm]
    \cps{\app \tmzero \tmone} &=& \lam {\vark_1 \vark_2}{\app{\app {\cps \tmzero}{\lamp
          {\varx_0 \vark'_2}{\app{\app {\cps \tmone}{\lamp{\varx_1
                  \vark''_2}{\app{\app{\app
                    {\varx_0}{\varx_1}}{\vark_1}}{\vark''_2}}}}{\vark'_2}}}}{\vark_2}} \\[1mm]
    \cps{\reset \tm} &=& \lam {\vark_1 \vark_2}{\app{\app {\cps \tm}
        \kinit}{\lamp \varx {\app {\app {\vark_1} \varx}{\vark_2}}}} \\[1mm]
    \cps{\shift \vark \tm} &=& \lam {\vark_1 \vark_2}{\app{\app {\subst {\cps
            \tm} \vark {\lamp{\varx_1 \vark'_1 \vark'_2}{\app{\app
                {\vark_1}{\varx_1}}{\lamp {\varx_2}{\app{\app
                    {\vark'_1}{\varx_2}}{\vark'_2}}}}}} \kinit}{\vark_2}}\\[1mm]
    & & \mbox{with } \kinit = \lam {\varx \vark_2}{\app {\vark_2} \varx}
  \end{array}
  \]
\caption{Definitional CPS translation of $\lamshift$}%
\label{f:cps-translation}
\end{figure}

\begin{figure}
  \[
  \begin{array}{rcll}
    \app{\lamp \varx \tm} \val &=& \subst \tm \varx \val & \quad \AXbeta
    \\[1mm]
    \app {\lamp \varx {\inctx \ctx \varx}} \tm &=& \inctx \ctx \tm \mbox{ if } x
    \notin \fv \ctx & \quad \AXbetaomega
    \\[1mm]
    \reset {\inctx \ctx {\shift \vark \tm}} &=& \reset {\subst \tm \vark {\lam
        \varx {\reset {\inctx \ctx \varx}}}} \mbox{ if } x
    \notin \fv \ctx & \quad \AXresetshift
    \\[1mm]
    \reset {\app {\lamp \varx \tmzero}{\reset \tmone}} &=& \app {\lamp \varx
      {\reset \tmzero}}{\reset \tmone} & \quad \AXresetlift
    \\[1mm]
    \reset \val &=& \val & \quad \AXresetval
    \\[1mm]
    \shift \vark {\reset \tm} &=& \shift \vark \tm & \quad \AXshiftreset
    \\[1mm]
    \lam \varx {\app \val \varx} &=& \val \mbox{ if } \varx \notin \fv \val &
    \quad \AXetav
    \\[1mm]
    \shift \vark {\app \vark \tm} &=& \tm \mbox{ if } \vark \notin \fv
    \tm & \quad \AXshiftelim
  \end{array}
  \]
  \caption{Kameyama and Hasegawa's axiomatization of $\lamshift$}%
  \label{f:axioms}
\end{figure}


The CPS translation for $\lamshift$ induces the following notion
of equivalence on $\lamshift$ terms.

\begin{defi}
  Two terms $\tm$ and $\tm'$ are CPS equivalent, written $\tm \cpsequiv \tm'$, if
  their CPS translations are $\beta\eta$-convertible, where
  $\beta\eta$-convertibility is the smallest congruence containing the relations
  $\red{\beta}$ and $\red{\eta}$:
  \[
  \begin{array}{rcll}
    \app{(\lam{\varx}{\tm})}{\tm'} & \red{\beta} & \subst{\tm}{\varx}{\tm'} &
    \\
    \lam{\varx}{\app{\tm}{\varx}} & \red{\eta} & \tm & \mbox{ if } \varx \notin \fv{\tm}
  \end{array}
  \]
\end{defi}
For example, the reduction rules $\tm \redcbv \tm'$ given in Section~\ref{ss:reduction} are sound {\wrt}CPS
because CPS translating $\tm$
and $\tm'$ yields $\beta \eta$-convertible terms in the
$\lambda$-calculus.
The CPS equivalence has been characterized in terms of direct-style
equations by Kameyama and Hasegawa, who developed a sound and complete
axiomatization of \shiftId{} and
\resetId{}~\cite{Kameyama-Hasegawa:ICFP03}: two terms are
CPS equivalent iff one can derive their equality using the equations
of Figure~\ref{f:axioms}.

The axiomatization is a source of examples for the bisimulation techniques that
we study in Sections~\ref{s:app},~\ref{s:env}, and~\ref{s:nf}, and it allows us
to relate the notion of CPS equivalence to the notions of contextual equivalence
that we introduce in Section~\ref{s:ctx-eqv}. In particular, we show that all
but one axiom are validated by the bisimilarities for the relaxed semantics, and
that all the axioms are validated by the equivalences of the original
semantics. The discriminating axiom that confirms the discrepancy between the
two semantics is $\AXshiftelim$---the only equation that hinges on the existence
of the top-level \resetId{}.

It might be possible to consider alternative CPS translations for
\shiftId{} and \resetId{}, e.g., as given
in~\cite{Materzok-Biernacki:ICFP11}, that correspond to the relaxed
semantics. Such CPS translations require a recursive structure of
continuations, which makes it hard to reason about the image of the
translations, and, moreover, the operational correspondence between
the relaxed semantics and such CPS translations is not as tight as
between the original semantics and the original CPS translation
considered in this section. Devising a respective axiomatization to be
validated by the bisimilarity theories presented in this work is a
research path beyond the scope of the present article.


\section{Contextual Equivalence}%
\label{s:ctx-eqv}
Studying the behavioral theory of a calculus usually starts by the definition of
a Morris-style contextual equivalence~\cite{JHMorris:PhD}. As usual, the idea is
to express that two terms are equivalent if and only if they cannot be
distinguished when put in an arbitrary context. The question is then which
behaviors to observe in $\lamshift$ for each of the two semantics considered in
this paper.

\subsection{Definition for the Relaxed Semantics}%
\label{ss:ctx-relax}

We first discuss the definition of contextual equivalence for closed terms,
before extending it to open terms. As in the regular $\lambda$-calculus, we
could observe only if a term reduces to a value or not, leading to the following
relation.

\begin{defi}
  Let $\tmzero$, $\tmone$ be closed terms. We write $\tmzero \ctxequivone
  \tmone$ if for all closed~$\cctx$, $\inctx \cctx \tmzero \evalcbv \valzero$
  for some $\valzero$ implies $\inctx \cctx \tmone \evalcbv \valone$ for some
  $\valone$, and conversely for $\inctx \cctx \tmone$.
\end{defi}

But in $\lamshift$, the evaluation of closed terms generates not only
values, but also control stuck terms. Taking this into account, a more
fine-grained definition of contextual equivalence would be as follows.

\begin{defi}%
  \label{d:ctxequivtwo}
  Let $\tmzero$, $\tmone$ be closed terms. We write $\tmzero \ctxequivtwo
  \tmone$ if for all closed~$\cctx$,
  \begin{itemize}
  \item $\inctx \cctx \tmzero \evalcbv \valzero$ for some $\valzero$ iff $\inctx
    \cctx \tmone \evalcbv \valone$ for some $\valone$;
  \item $\inctx \cctx \tmzero \evalcbv \tmzero'$ for some control stuck term
    $\tmzero'$ iff $\inctx \cctx \tmone \evalcbv \tmone'$ for some control stuck
    term $\tmone'$.
  \end{itemize}
\end{defi}

\noindent This definition can actually be formulated in a simpler way, where we do not
distinguish cases based on the possible normal forms.

\begin{prop}%
  \label{p:ctx-usual-def}
  We have $\tmzero \ctxequivtwo \tmone$ iff for all closed~$\cctx$, $\inctx
  \cctx \tmzero \evalcbv$ iff $\inctx \cctx \tmone \evalcbv$.
\end{prop}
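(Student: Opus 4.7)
The forward implication is immediate: if $\tmzero \ctxequivtwo \tmone$ then for every closed $\cctx$, $\inctx{\cctx}{\tmzero} \evalcbv$ means that $\inctx{\cctx}{\tmzero}$ converges either to a value or to a control-stuck term, and in either case the corresponding bullet of Definition~\ref{d:ctxequivtwo} yields the matching convergence of $\inctx{\cctx}{\tmone}$; the converse is symmetric.

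For the backward implication, I would assume the uniform convergence condition and, for every closed $\cctx$, exhibit a single discriminating closed context $\cctx_1$ whose convergence detects precisely the control-stuck outcome. A natural choice is $\cctx_1 \is \app{\lamp{y}{\Omega}}{\cctx}$ with $y$ fresh. Since the call-by-value strategy encoded in the evaluation contexts forces the argument $\inctx{\cctx}{\tm}$ to be evaluated first, I would verify by case analysis: if $\inctx{\cctx}{\tm}$ diverges then so does $\inctx{\cctx_1}{\tm}$; if $\inctx{\cctx}{\tm} \evalcbv \val$ for some value $\val$ then $\inctx{\cctx_1}{\tm}$ reduces to $\Omega$ and diverges; and if $\inctx{\cctx}{\tm} \evalcbv \inctx{\ctx}{\shift{\vark}{\tm'}}$ then $\inctx{\cctx_1}{\tm}$ reduces to $\app{\lamp{y}{\Omega}}{\inctx{\ctx}{\shift{\vark}{\tm'}}}$, which is still control-stuck because $\app{\lamp{y}{\Omega}}{\ctx}$ is itself a pure context (it instantiates the $\vctx{\val}{\ctx}$ production). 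Hence $\inctx{\cctx_1}{\tm} \evalcbv$ iff $\inctx{\cctx}{\tm}$ evaluates to a control-stuck term, uniformly in $\tm$.

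Applying the hypothesis to $\cctx_1$ immediately gives the second bullet of Definition~\ref{d:ctxequivtwo}. For the first bullet I would combine this with the hypothesis instantiated at $\cctx$ itself, together with determinism (Proposition~\ref{p:unique-decomp}) and the observation that a closed normal form must be a value or a control-stuck term (an open-stuck term has a free variable): ``evaluates to a value'' is exactly the difference between ``converges'' and ``evaluates to a control-stuck term'', and this difference is preserved between $\tmzero$ and $\tmone$.

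The main subtlety, modest as it is, lies in verifying that $\cctx_1$ preserves the control-stuck case, which rests on the pure-context grammar admitting a value in the function position; once this is in place, the remainder of the argument is routine bookkeeping over the three possible behaviors of a closed term---divergence, convergence to a value, and convergence to a control-stuck term.
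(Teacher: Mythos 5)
Your proof is correct and takes essentially the same route as the paper: both hinge on a single closed context whose termination separates the two kinds of normal forms---converging when the plugged term evaluates to a control-stuck term (since application keeps it inside a pure context) and diverging when it evaluates to a value---the paper using $\app \cctx \Omega$ where you use $\app {\lamp y \Omega}{\cctx}$, with the same bookkeeping to recover the two bullets of Definition~\ref{d:ctxequivtwo}. The only cosmetic difference is that the paper phrases the argument as a contradiction while you argue directly.
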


\begin{proof}
  Suppose that $\inctx \cctx \tmzero \evalcbv$ iff $\inctx \cctx \tmone
  \evalcbv$ holds. We prove that we have $\tmzero \ctxequivtwo \tmone$ (the
  reverse implication is immediate). Assume there exists $\cctx$ such that
  $\inctx \cctx \tmzero \evalcbv \tmzero'$ with $\tmzero'$ control stuck, and
  $\inctx \cctx \tmone \evalcbv \valone$. Then $\app {\inctx \cctx \tmzero}
  \Omega \evalcbv \app {\tmzero'} \Omega$ ($\app {\tmzero'} \Omega$ is control
  stuck), and $\app{\inctx \cctx \tmone} \Omega \clocbv \app \valone \Omega
  \divcbv$. The context $\app \cctx \Omega$ distinguishes $\tmzero$ and
  $\tmone$, hence a contradiction. Therefore, if $\inctx \cctx \tmzero$
  evaluates to a control stuck term, then so does $\inctx \cctx \tmone$, and
  similarly for evaluation to values.
\end{proof}

By the definitions, it is clear that $\ctxequivtwo \mathop{\subseteq}
\ctxequivone$. The inclusion is strict, because of terms such as
$\shift \vark \Omega$, which are control-stuck terms but diverge when
unstuck. Indeed, we have $\shift \vark \Omega \not\ctxequivtwo
\Omega$, because $\shift \vark \Omega$ is a stuck term, but not
$\Omega$ and, therefore, the second item of Definition~\ref{d:ctxequivtwo} is violated. However, they are related by
$\ctxequivone$.

\begin{prop}
  We have $\shift \vark \Omega \ctxequivone \Omega$.
\end{prop}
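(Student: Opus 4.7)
My plan is to unfold the definition of $\ctxequivone$ and show that, for every closed context $\cctx$, $\inctx \cctx {\shift \vark \Omega} \evalcbv \valzero$ for some $\valzero$ iff $\inctx \cctx \Omega \evalcbv \valone$ for some $\valone$. The intuition is that both $\Omega$ and $\shift \vark \Omega$ are \emph{inert when frozen} (inside an abstraction body or under a shift binder) and \emph{fatal when exposed} in an evaluation context: $\Omega$ diverges immediately since $\Omega \redcbv \Omega$, and $\shift \vark \Omega$ is either control-stuck (no enclosing reset) or, using $\vark \notin \fv \Omega$, captures a trivial continuation and reduces to $\reset \Omega$, which again diverges.

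To couple the two reduction sequences, I would introduce a relation $\mathrel{\sim}$ on terms as the compatible closure of the singleton $\{(\shift \vark \Omega, \Omega)\}$, so that $\tmzero \sim \tmone$ holds precisely when $\tmone$ is obtained from $\tmzero$ by replacing any number of occurrences of $\shift \vark \Omega$ by $\Omega$. By construction, $\inctx \cctx {\shift \vark \Omega} \sim \inctx \cctx \Omega$. The key intermediate result is then a coupling lemma: whenever $\tmzero \sim \tmone$, either both terms reach values (and those values are again $\sim$-related), or neither reaches a value (each either diverges or becomes control-stuck). I would establish it by case analysis on the next redex of $\tmzero$ using Proposition~\ref{p:unique-decomp}: an ``ordinary'' redex not involving a distinguished occurrence is matched step-for-step on both sides, with $\sim$ preserved through substitution of $\sim$-related values; a redex firing a distinguished $\shift \vark \Omega$ in $\tmzero$ sits in an evaluation context of the form $\inctx \rctx {\reset {\inctx \ctx \mtctx}}$, so the step yields $\inctx \rctx {\reset \Omega}$ and hence diverges, while the matching $\tmone$ has $\Omega$ in a corresponding evaluation position and likewise diverges. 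The symmetric case, where a step of $\tmone$ reduces a distinguished $\Omega$, is handled dually (possibly pairing with a stuck $\tmzero$).

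The main obstacle is the bookkeeping of distinguished occurrences across $\beta$-reduction, which may duplicate or discard them when a value containing them is substituted for a variable. Defining $\sim$ as a compatible closure is precisely what makes this robust: any number of copies stays $\sim$-related because the base pair propagates through every term constructor, and a routine induction shows that $\sim$ is closed under substitution of $\sim$-related values. Once the coupling lemma is in place, the proposition follows by contrapositive: $\inctx \cctx {\shift \vark \Omega} \evalcbv \valzero$ rules out the divergent/stuck alternative, so $\inctx \cctx \Omega$ must likewise evaluate to some $\valone$, and conversely.
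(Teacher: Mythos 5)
Your proposal is correct and follows essentially the same route as the paper: the compatible closure of $\{(\shift \vark \Omega, \Omega)\}$ is exactly the paper's relation ``$\inctx \hctx {\shift \vark \Omega}$ versus $\inctx \hctx \Omega$ for a multi-hole context $\hctx$'', and both arguments establish a lock-step simulation for redexes not touching a distinguished occurrence, together with the observation that activating such an occurrence forces divergence (or control-stuckness) on both sides. Your explicit treatment of duplication under substitution is a point the paper leaves implicit in the multi-hole-context formulation, but it is the same idea.
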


\begin{proof}
  Let $\cctx$ be such that
  $\inctx \cctx {\shift \vark \Omega} \evalcbv \valzero$ for some
  $\valzero$. Then we prove that $\inctx \cctx \Omega$ reduces to a value as
  well; in fact, $\cctx$ does not evaluate the term that fills its hole. We
  define multi-holes contexts $\hctx$ by the following grammar
  \begin{align*}
  \hctx & \bnfdef \mtctx \bnfor \varx \bnfor  \lam \varx \hctx
  \bnfor \app \hctx \hctx \bnfor \shift \vark \hctx \bnfor \resetctx
  \hctx
  \end{align*}
  and we write $\inctx \hctx \tm$ for the plugging of $\tm$ in all the holes of
  $\hctx$. We show that $(*)$ for all $\hctx$ and $\val$,
  $\inctx \hctx {\shift \vark \Omega} \evalcbv \val$ implies that there exists
  $\hctx'$ such that $\val = \lam \varx {\inctx {\hctx'}{\shift \vark \Omega}}$
  and $\inctx \hctx {\Omega} \evalcbv \lam \varx {\inctx {\hctx'} \Omega}$.

  We proceed by induction on the number of steps $n$ in the evaluation; the
  proof is straightforward if $n=0$. Suppose $n > 0$; then
  $\inctx \hctx {\shift \vark \Omega} \redcbv t' \evalcbv \val$ for some
  $t'$. Suppose a copy of $\shift \vark \Omega$ is in an evaluation context $F$
  in $\inctx \hctx {\shift \vark \Omega}$. The context $F$ cannot be pure,
  because $\inctx \hctx {\shift \vark \Omega}$ reduces, so
  $F = \inctx {F'}{\reset E}$, which implies $t' = \inctx {F'}{\reset \Omega}$;
  this contradicts $t' \evalcbv \val$ (the calculus is
  deterministic). Consequently, the copies of $\shift \vark \Omega$ are not in
  an evaluation context in $\inctx \hctx {\shift \vark \Omega}$, and the
  reduction $\inctx \hctx {\shift \vark \Omega} \redcbv t'$ can be written
  $\inctx \hctx {\shift \vark \Omega} \redcbv \inctx {\hctx''}{\shift \vark
    \Omega}$ for some $\hctx''$, and we also have
  $\inctx \hctx \Omega \redcbv \inctx {\hctx''} \Omega$. We can then conclude by
  applying the induction hypothesis on
  $\inctx {\hctx''}{\shift \vark \Omega} \evalcbv \val$.

  Applying the property $(*)$ with $\hctx = \cctx$, we get that
  $\inctx \cctx {\shift \vark \Omega} \evalcbv \valzero$ implies
  $\inctx \cctx \Omega \evalcbv \val$ for some $\val$. Conversely, if
  $\inctx \cctx {\Omega} \evalcbv \valone$, we can prove that
  $\inctx \cctx {\shift \vark \Omega} \evalcbv \val$ for some $\val$ using the
  same reasoning. Therefore, we have $\shift \vark \Omega \ctxequivone \Omega$.
\end{proof}

We work with $\ctxequivtwo$ as the main contextual equivalence for the relaxed
semantics, since it corresponds to the usual definition of contextual
equivalence in languages similar to the $\lambda$-calculus, where we simply
observe termination~\cite{Abramsky-Ong:IaC93} (see
Proposition~\ref{p:ctx-usual-def}). Henceforth, we simply write $\ctxequiv$ for
$\ctxequivtwo$.

We extend $\ctxequiv$ to open terms using closing substitutions: we say $\subs$
closes $\tm$ if it maps the free variables of $\tm$ to closed values. We define
the \emph{open extension} of a relation as follows.
\begin{defi}%
  \label{d:open-extension}
  Let $\rel$ be a relation on closed terms, and $\tmzero$ and $\tmone$ be open
  terms. We write $\tmzero \open\rel \tmone$ if for every substitution $\subs$
  which closes $\tmzero$ and $\tmone$, $\tmzero \subs \rel \tmone \subs$ holds.
\end{defi}

\begin{rem}
  Contextual equivalence can be defined directly on open terms by requiring that
  the context $\cctx$ binds the free variables of the related terms. We prove
  the resulting relation is equal to $\open\ctxequiv$ in
  Section~\ref{ss:soundness-app}.
\end{rem}

To prove completeness of bisimilarities, we use a variant of
$\ctxequiv$ which takes into account only evaluation contexts to
compare terms.
\begin{defi}%
  \label{d:rctxequiv}
  Let $\tmzero$, $\tmone$ be closed terms. We write $\tmzero \rctxequiv
  \tmone$ if for all closed $\rctx$,
  \begin{itemize}
  \item $\inctx \rctx \tmzero \evalcbv \valzero$ for some $\valzero$ iff $\inctx \rctx \tmone
    \evalcbv \valone$ for some $\valone$;
  \item $\inctx \rctx \tmzero \evalcbv \tmzero'$ for some control stuck term
    $\tmzero'$ iff $\inctx \rctx \tmone \evalcbv \tmone'$ for some control stuck
    term $\tmone'$.
  \end{itemize}
\end{defi}

\noindent The definitions imply $\ctxequiv \mathop{\subseteq} \rctxequiv$.
While proving completeness of applicative bisimilarity in
Section~\ref{s:app}, we also prove $\rctxequiv \mathop{=} \ctxequiv$,
which means that testing with evaluation contexts is as discriminating
as testing with any contexts. Such a simplification result is similar
to Milner's context lemma~\cite{Milner:TCS77}.

The relations $\ctxequivone$ and $\ctxequivtwo$ are not suitable for
the original semantics, because they distinguish terms that should be
equated according to Kameyama and Hasegawa's axiomatization. Indeed,
according to these relations, $\shift \vark {\app \vark \val}$ (where
$\vark \notin \fv \val$) cannot be related to $\val$ (axiom~\AXshiftelim in Figure~\ref{f:axioms}), because a stuck term cannot be
related to a value. In the next section, we discuss a definition of
contextual equivalence for the original semantics.

\subsection{Definition for the Original Semantics}

Terms are evaluated in the original semantics within an enclosing \resetId{}, so
the corresponding contextual equivalence should test terms in contexts of the
form $\reset \cctx$ only. Because delimited terms cannot reduce to stuck terms
(Proposition~\ref{p:eval-reset}), the only possible observable action is
evaluation to values. We, therefore, define contextual equivalence for the
original semantics as follows.
\begin{defi}%
  \label{d:context-p}
  Let $\tmzero$, $\tmone$ be closed terms. We write $\tmzero \ctxequivp \tmone$
  if for all closed $\cctx$, $\reset{\inctx \cctx \tmzero} \evalcbv \valzero$
  for some $\valzero$ iff $\reset{\inctx \cctx \tmone} \evalcbv \valone$ for
  some $\valone$.
\end{defi}
The relation $\ctxequivp$ is defined on all (closed) terms, not just
delimited ones. The resulting relation is less discriminating than $\ctxequiv$,
because $\ctxequivp$ uses contexts of a particular form, while~$\ctxequiv$ tests
with all contexts.
\begin{prop}%
  \label{p:gen-inc-prog}
  We have $\mathord{\ctxequiv} \subsetneq \mathord{\ctxequivp}$.
\end{prop}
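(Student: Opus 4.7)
The plan is straightforward: unfold the two definitions and observe that every context $\cctx$ admissible in the test for $\ctxequivp$ can be realised as a context of the form $\reset \cctx$ in the test for $\ctxequiv$. So, assuming $\tmzero \ctxequiv \tmone$, I would fix an arbitrary closed $\cctx$ witnessing the $\ctxequivp$ condition and instantiate the definition of $\ctxequiv$ at the closed context $\reset \cctx$ (which is indeed a closed context according to the grammar for $\cctx$).

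The first step is to note that $\inctx {\resetctx \cctx} \tmzero = \reset{\inctx \cctx \tmzero}$ and likewise for $\tmone$, so the value clause of $\ctxequiv$ at this instantiated context is exactly the condition required by $\ctxequivp$. Thus the ``value-to-value'' direction of $\ctxequivp$ is immediate.

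The second step handles the control-stuck clause of $\ctxequiv$: since $\tmzero$ and $\tmone$ are closed and $\cctx$ is closed, the term $\inctx \cctx \tmzero$ is a closed term, and therefore $\reset{\inctx \cctx \tmzero}$ has the form $\reset \tm$ with $\tm$ closed. By Proposition~\ref{p:eval-reset}, any evaluation of $\reset \tm$ (with $\tm$ closed) can only yield a closed value, never a control-stuck term. Hence the control-stuck clause of $\ctxequiv$ instantiated at $\reset \cctx$ is vacuously satisfied on both sides, so nothing further is needed from it to deduce $\tmzero \ctxequivp \tmone$.

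I don't anticipate any real obstacle here: the only subtlety to spell out is the appeal to Proposition~\ref{p:eval-reset} guaranteeing that a delimited closed term cannot evaluate to a control-stuck term, which is what makes the richer observable of $\ctxequiv$ collapse to the single observable of $\ctxequivp$ once we restrict to contexts enclosed by $\rawreset$. Everything else is a direct unfolding of definitions.
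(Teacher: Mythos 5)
Your proof is correct and follows the same (essentially definitional) argument the paper has in mind: contexts of the form $\reset\cctx$ are a special case of arbitrary closed contexts, so instantiating the value clause of $\ctxequiv$ at $\reset\cctx$ yields exactly the single observable tested by $\ctxequivp$. The appeal to Proposition~\ref{p:eval-reset} is harmless but not actually needed, since you only have to establish the value clause of $\ctxequivp$ and may simply discard the control-stuck clause of $\ctxequiv$ as unused extra information.
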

As a result, any equivalence between terms we prove for the relaxed semantics
also holds in the original semantics, and any bisimilarity sound
{\wrt}$\ctxequiv$ (like the bisimilarities we define in Sections~\ref{s:app},~\ref{ss:env-LTS}, and~\ref{s:nf}) is also sound {\wrt}$\ctxequivp$. However, to
reach completeness, we have to design a bisimilarity suitable for delimited
terms (see Section~\ref{ss:env-LTS-original}). As for the relaxed semantics, we
extend $\ctxequivp$ to open terms using Definition~\ref{d:open-extension}.

The inclusion of Proposition~\ref{p:gen-inc-prog} is strict because, e.g.,
$\ctxequivp$ verifies the axiom \AXshiftelim, while~$\ctxequiv$ does not. In
fact, we prove in Section~\ref{ss:axioms-env} that $\ctxequivp$
contains the CPS equivalence $\cpsequiv$. The reverse inclusion holds neither
for $\ctxequivp$ nor~$\ctxequiv$: there exist contextually equivalent terms that
are not CPS equivalent.

\begin{prop}
\begin{enumerate}
\item\label{p:ce-cps-one}
We have $\Omega \ctxequivp \app \Omega \Omega$ (respectively $\Omega \ctxequiv
\app \Omega \Omega$), but $\Omega \not\cpsequiv \app \Omega \Omega$.
\item\label{p:ce-cps-two}
We have $\Turing \ctxequivp \Curry$
(respectively $\Turing \ctxequiv \Curry$), but $\Turing \not\cpsequiv
\Curry$.
\end{enumerate}
\end{prop}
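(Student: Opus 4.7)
The plan is to prove each of the two contextual-equivalence claims in the stronger form $\ctxequiv$, from which $\ctxequivp$ follows by Proposition~\ref{p:gen-inc-prog}, and then to separately rule out $\cpsequiv$.

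For $\Omega \ctxequiv \app \Omega \Omega$, I would adapt the multi-hole context technique already used to prove $\shift \vark \Omega \ctxequivone \Omega$, keeping the same grammar of contexts $\hctx$. The crucial observation is that under the CBV left-to-right strategy each of $\Omega$ and $\app \Omega \Omega$ reduces only to itself, and never to a value or a stuck term. Given a closed context $\cctx$, I would show by induction on the length of the reduction that whenever $\inctx \cctx \Omega \clocbv \tm$ there is a multi-hole context $\hctx$ with $\tm = \inctx \hctx \Omega$ and, simultaneously, $\inctx \cctx {\app \Omega \Omega} \clocbv \inctx \hctx {\app \Omega \Omega}$. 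A one-step case analysis on $\redcbv$ is enough: each step either takes place in the non-hole portion of the current multi-hole context and is matched verbatim on the other side, or is a self-reduction $\Omega \redcbv \Omega$ at an occurrence of $\Omega$ standing in evaluation position, matched in one step by $\app \Omega \Omega \redcbv \app \Omega \Omega$ at the corresponding occurrence. The two reductions thus stay synchronized, so either both terms diverge or both reach normal forms sharing the same multi-hole shape — and are therefore simultaneously values or simultaneously control-stuck terms.

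For $\Theta \ctxequiv \Delta$ no such reduction argument suffices, because the two combinators compute genuine fixed points rather than merely diverging. My plan is to defer the proof to the environmental bisimilarity developed in Section~\ref{s:env}, whose soundness yields the equivalence from a finite (up-to) bisimulation between the two combinators; different fixed-point combinators are one of the standard benchmarks for such techniques.

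For the two $\not\cpsequiv$ claims I would invoke the completeness of Kameyama and Hasegawa's axiomatization (Figure~\ref{f:axioms}), which reduces $\cpsequiv$ to derivability in a direct-style equational theory — equivalently, to $\beta\eta$-convertibility of the CPS images. The main obstacle is that this theory is symmetric and can take arbitrary detours, so a naive inspection of which axioms can fire at the top of $\Omega$, $\app \Omega \Omega$, $\Theta$, or $\Delta$ is not sufficient. The cleanest argument is semantic: the CPS images are abstractions whose bodies differ in the number and nesting of application nodes generated by the CBV CPS transform, and these differences persist under all $\beta\eta$-reductions — a fact established either via a denotational separation in a standard model of the pure $\lambda$-calculus, or via a Böhm-tree-style comparison. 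This CPS-level separation, classical but technical, is where the bulk of the work for the second half of the proposition lies.
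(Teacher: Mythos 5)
Your treatment of the two contextual-equivalence claims is essentially fine: the paper itself offers no proof of this proposition beyond the remark that $\ctxequiv$ and $\ctxequivp$ put all diverging terms in one equivalence class, and your multi-hole-context argument for $\Omega \ctxequiv \app \Omega \Omega$ is a correct instantiation of the technique the paper uses for $\shift \vark \Omega \ctxequivone \Omega$ (indeed simpler here, since neither term can trigger a capture, so the case $\cctx = \inctx \rctx {\reset \ctx}$ needs no special handling). Deferring $\Theta \ctxequiv \Delta$ to the sound bisimilarities of Sections~\ref{s:env} or~\ref{s:nf} is legitimate, but note that the paper never carries out that particular example --- its fixed-point examples all compare a combinator with its \shiftId{}/\resetId{} variant --- so the (up-to-context) bisimulation relating $\Theta$ and $\Delta$ still has to be exhibited; normal-form bisimilarity gives the shortest such proof.

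The genuine gap is in your plan for the two $\not\cpsequiv$ claims. You propose to separate the CPS images ``via a denotational separation in a standard model of the pure $\lambda$-calculus, or via a B\"ohm-tree-style comparison'', but both tools provably fail on exactly these pairs. For the first pair, $\cps \Omega$ and $\cps{\app \Omega \Omega}$ both reduce to abstractions $\lam{\vark_1 \vark_2}{M}$ whose bodies have no head normal form, so both have B\"ohm tree $\lam{\vark_1 \vark_2}{\bot}$ and are identified by every sensible model ($D_\infty$, $P_\omega$, B\"ohm trees); a semantics validating $\beta\eta$ that still separates them must be non-sensible, i.e., precisely not ``standard''. For the second pair the situation parallels the classical fact that Curry's and Turing's combinators in call-by-name share the B\"ohm tree $\lam{f}{f(f(\cdots))}$ yet are not $\beta\eta$-convertible: the CPS images of $\Theta$ and $\Delta$ unfold to trees of the same shape, so the comparison again cannot distinguish them. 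Moreover your intended invariant --- the number and nesting of application nodes --- is not preserved by $\beta$-reduction, since substitution duplicates and erases subterms. What actually works is the Church--Rosser theorem together with an analysis of the $\beta\eta$-reduction graphs: for $\cps \Omega$ versus $\cps{\app \Omega \Omega}$ these graphs are finite (every reduct re-enters a small loop) and disjoint, and for $\cps \Theta$ versus $\cps \Delta$ one must exhibit a syntactic property closed under reduction that holds of all reducts of one term and of no reduct of the other. Appealing instead to Kameyama and Hasegawa's completeness only relocates the problem to non-derivability in the direct-style theory, which again calls for a confluence argument or a separating (non-standard) model.
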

\noindent The contextual equivalences $\ctxequiv$ and $\ctxequivp$ put all diverging terms
in one equivalence class, while CPS equivalence is more
discriminating. Furthermore, as is usual with equational theories for
$\lambda$-calculi, CPS equivalence is not strong enough to equate Turing's and
Curry's (call-by-value) fixed-point combinators.

\medskip

As explained in the introduction, contextual equivalence is difficult to prove
in practice for two given terms because of the quantification over contexts. We
look for a suitable replacement (that is, an equivalence that is at least sound
{\wrt}$\ctxequiv$ or $\ctxequivp$) by studying different styles of bisimulation
in the next sections.


\section{Applicative Bisimilarity}%
\label{s:app}
Applicative bisimilarity has been originally defined for the lazy
$\lambda$-calculus~\cite{Abramsky-Ong:IaC93}. The main idea is to reduce
(closed) terms to values, and then compare the resulting $\lambda$-abstractions
by applying them to an arbitrary argument. When sound, applicative bisimilarity
for deterministic languages is usually also complete (see,
\eg~\cite{Gordon-Rees:POPL96,Tiuryn-Wand:CAAP96,Gordon:TCS99}), and soundness is
proved thanks to a systematic technique called Howe's method~\cite{Howe:IaC96,Gordon:TCS99}. However, defining and proving sound the most
powerful up-to techniques, such as bisimulation up to context, remain an open
issue for applicative bisimilarity.

Very few works study applicative bisimilarity in a calculus with
control. Merro~\cite{Merro:AI10} defines an applicative bisimilarity which
characterizes contextual equivalence in the \emph{CPS
  calculus}~\cite{Thielecke:PHD}, a minimal calculus which models the control
features of functional languages with imperative jumps. In the
$\lambda\mu$-calculus, Lassen~\cite{Lassen:99} proposes a sound but not complete
applicative bisimilarity in call-by-name. We improve this
result~\cite{Biernacki-Lenglet:MFPS14} by defining sound and complete
applicative bisimilarities in both call-by-name and call-by-value.

In this section, we define a sound and complete applicative
bisimilarity for the relaxed semantics of $\lamshift$. Our definition
of applicative bisimilarity relies on a labeled transition system,
introduced first (Section~\ref{ss:LTS-app}). We then prove its
soundness and completeness in Section~\ref{ss:soundness-app}, before
showing how it can be used on the $\lamshift$ axiomatization
(Section~\ref{ss:axioms-app}). We cover results that have been
originally presented in~\cite{Biernacki-Lenglet:FOSSACS12}.

\subsection{Applicative Bisimilarity}%
\label{ss:LTS-app}

One possible way to define an applicative bisimilarity is to rely on a
labeled transition system (LTS), where the possible interactions of a
term with its environment are encoded in the labels (see, \eg~\cite{Gordon-Rees:POPL96,Gordon:TCS99}). Using a LTS simplifies the
definition of the bisimilarity and makes it easier to use some
techniques in proofs, such as diagram chasing. In Figure~\ref{f:lts},
we define a LTS $\tmzero \lts\act \tmone$ with three kinds of
transitions, where we assume all the terms to be closed. An
\emph{internal action} $\tm \lts\tau \tm'$ is an evolution from~$\tm$
to~$\tm'$ without any help from the surrounding context; it
corresponds to a reduction step from~$\tm$ to~$\tm'$. The transition
$\valzero \lts \valone \tm$ expresses the fact that $\valzero$ needs
to be applied to another value~$\valone$ to evolve, reducing to
$\tm$. Finally, the transition $\tm \lts \ctx \tm'$ means that $\tm$
is control stuck, and when $\tm$ is put in a context $\ctx$ enclosed
in a \resetId{}, the capture can be triggered, the result of which being
$\tm'$.

\begin{figure}
\begin{mathpar}
  \inferrule*[right=\LTSbeta]
            { }
            {\app {\lamp \varx \tm} \val \ltstau \subst \tm \varx \val}
  \and
  \inferrule*[right=\LTScompl]{\tmzero \ltstau \tmzero'}
             {\app \tmzero \tmone \ltstau \app {\tmzero'} \tmone}
  \and
  \inferrule*[right=\LTScompr]{\tm \ltstau \tm'}
  {\app \val \tm \ltstau \app \val {\tm'}}
  \and
  \inferrule*[right=\LTSreset]{ }
  {\reset \val \ltstau \val}
  \and
  \inferrule*[right=\LTScompreset]{\tm \ltstau \tm'}
  {\reset \tm \ltstau \reset {\tm'}}
  \and
  \inferrule*[right=\LTScaptreset]{\tm \lts\mtctx \tm'}
  {\reset \tm \ltstau \tm'}
  \and
  \inferrule*[right=\LTSval]
             { }
             {\lam \varx \tm \lts\val \subst \tm \varx \val}
  \and
  \inferrule*[right=\LTSshift]
             {\varx \notin \fv \ctx}
             {\shift \vark \tm \lts\ctx \reset {\subst \tm \vark {\lam \varx {\reset
                     {\inctx \ctx \varx}}}}}
             \and
  \inferrule*[right=\LTScaptl]
             {\tmzero \lts{\inctx \ctx {\app \mtctx \tmone}} \tmzero'}
             {\app \tmzero \tmone \lts\ctx \tmzero'}
  \and
  \inferrule*[right=\LTScaptr]
             {\tm \lts{\inctx \ctx{\app \val \mtctx}} \tm'}
             {\app \val \tm \lts\ctx \tm'}
\end{mathpar}
\caption{Labeled Transition System}%
\label{f:lts}
\end{figure}

Most rules for internal actions (Figure~\ref{f:lts}) are straightforward; the
rules $\LTSbeta$ and $\LTSreset$ mimic the corresponding reduction rules, and
the compositional rules $\LTScompr$, $\LTScompl$, and~$\LTScompreset$ allow
internal actions to happen within any evaluation context. The rule
$\LTScaptreset$ for context capture is explained later. Rule $\LTSval$ defines
the only possible transition for values. While both rules $\LTSbeta$ and
$\LTSval$ encode $\beta$-reduction, they are quite different in nature; in the
former, the term $\app{\lamp \varx \tm} \val$ can evolve by itself, without any
help from the surrounding context, while the latter expresses the possibility
for $\lam \varx \tm$ to evolve only if a value $\val$ is provided by the
environment.

The rules for context capture are built following the principles of
complementary semantics developed in~\cite{Lenglet-al:CONCUR09}. The
label of the transition $\tm \lts\ctx \tm'$ contains what the
environment needs to provide (a context $\ctx$, but also an enclosing
\resetId{}, left implicit) for the control stuck term $\tm$ to reduce to
$\tm'$. Hence, the transition $\tm \lts\ctx \tm'$ means that we have
$\reset {\inctx \ctx \tm} \lts\tau \tm'$ by context capture. For
example, in the rule $\LTSshift$, the result of the capture of $\ctx$
by $\shift \vark \tm$ is $\reset{\subst \tm \vark {\lam \varx {\reset
      {\inctx \ctx \varx}}}}$.

In rule $\LTScaptl$, we want to know the result of the capture of
$\ctx$ by the term $\app \tmzero \tmone$, assuming~$\tmzero$ contains
a \shiftId{} ready to perform the capture. Under this hypothesis, the
capture of $\ctx$ by~$\app \tmzero \tmone$ comes from the capture of
$\inctx {\ctx}{\apctx \mtctx \tmone}$ by $\tmzero$. Therefore, as a
premise of the rule $\LTScaptl$, we check that $\tmzero$ is able to
capture $\inctx \ctx {\app \mtctx \tmone}$, and the result $\tmzero'$
of this transition is exactly the result we want for the capture of
$\ctx$ by $\app \tmzero \tmone$. The rule $\LTScaptr$ follows the same
pattern. Finally, a control stuck term $\tm$ enclosed in a \resetId{} is
able to perform an internal action (rule $\LTScaptreset$); we obtain
the result $\tm'$ of the transition $\reset \tm \lts\tau \tm'$ by
letting $\tm$ capture the empty context, \ie by considering the
transition $\tm \lts\mtctx \tm'$.

\begin{exa}
  We illustrate how the LTS handles capture by considering the transition from
  $\reset{\app {\appp i {\shift \vark \omega}} \Omega}$.
  \begin{mathpar}
    \inferrule*[Right=\LTScaptreset]{
      \inferrule*[Right=\LTScaptl]{
        \inferrule*[Right=\LTScaptr]{
          \inferrule*[Right=\LTSshift]{ }
          {\shift \vark \omega \lts{\apctx {(\vctx i \mtctx )} \Omega} \reset \omega}}
        {
          \app i {\shift \vark \omega} \lts{\apctx \mtctx \Omega} \reset \omega}}
      {
        \app{\appp i {\shift \vark \omega}} \Omega \lts \mtctx
        \reset \omega}}
    {\reset{\app {\appp i {\shift \vark \omega}} \Omega} \lts\tau
      \reset \omega}
  \end{mathpar}
  Reading the tree from bottom to top, we see that the rules
  $\LTScaptreset$, $\LTScaptl$, and $\LTScaptr$ build the captured
  context in the label by deconstructing the initial term. Indeed, the
  rule $\LTScaptreset$ removes the outermost \resetId{} and initiates the
  context in the label with $\mtctx$. The rules $\LTScaptl$ and
  $\LTScaptr$ then successively remove the outermost application and
  store it in the context. The process continues until a \shiftId{}
  operator is found; then we know the captured context is completed,
  and the rule $\LTSshift$ computes the result of the capture. This
  result is then simply propagated from top to bottom by the other
  rules.
\end{exa}

The LTS corresponds to the reduction semantics $\redcbv$ and exhibits the
observable terms (values and control stuck terms) of the language. The only
difficulty is in the treatment of control stuck terms. The next lemma makes the
correspondence between $\lts\ctx$ and control stuck terms explicit.

\begin{lem}%
  \label{l:decompose-shift}
  If $\tm \lts\ctx \tm'$, then there exist $\ctx'$, $\vark$, and $\tms$
  such that $\tm = \inctx {\ctx'}{\shift \vark \tms}$ and $\tm' =
  \reset{\subst \tms \vark {\lam \varx {\reset {\inctx {\ctx}{\inctx
            {\ctx'} \varx}}}}}$.
\end{lem}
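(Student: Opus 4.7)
The plan is to proceed by induction on the derivation of $\tm \lts\ctx \tm'$. Only three rules of the LTS in Figure~\ref{f:lts} produce a transition of the form $\lts\ctx$: the base rule $\LTSshift$, and the two inductive rules $\LTScaptl$ and $\LTScaptr$. So the induction has one base case and two step cases, and the required form of the conclusion is strong enough to be preserved by the inductive hypothesis.

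For the base case $\LTSshift$, the term $\tm$ is literally $\shift\vark\tms$, so we can take $\ctx' \is \mtctx$ and the conclusion follows directly from the form of the rule, since $\inctx \mtctx \varx = \varx$.

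For $\LTScaptl$, we have $\tm = \app\tmzero\tmone$ with premise $\tmzero \lts{\inctx \ctx{\app\mtctx\tmone}} \tmzero'$ and $\tm' = \tmzero'$. By the induction hypothesis applied to this premise (with the ambient captured context being $\inctx \ctx{\app\mtctx\tmone}$), there exist $\ctx''$, $\vark$, $\tms$ with $\tmzero = \inctx{\ctx''}{\shift\vark\tms}$ and
\[
\tmzero' = \reset{\subst \tms \vark {\lam \varx {\reset{\inctx {\inctx \ctx{\app\mtctx\tmone}}{\inctx{\ctx''}\varx}}}}}.
\]
Setting $\ctx' \is \apctx{\ctx''}{\tmone}$, we get $\tm = \inctx{\ctx'}{\shift\vark\tms}$, and the nested plugging simplifies as $\inctx{\inctx \ctx{\app\mtctx\tmone}}{\inctx{\ctx''}\varx} = \inctx \ctx{\inctx{\ctx'}\varx}$, which yields exactly the required shape for $\tm'$. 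The case $\LTScaptr$ is symmetric, choosing $\ctx' \is \vctx \val{\ctx''}$.

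The proof is essentially bookkeeping; the only thing to be careful about is the associativity of context plugging used in the two inductive cases (the observation that $\inctx{\inctx \ctx{\app\mtctx\tmone}}{\inctx{\ctx''}\varx}$ rearranges into $\inctx \ctx{\inctx{\apctx{\ctx''}{\tmone}}{\varx}}$), which is what makes the invariant in the statement of the lemma match up after each inductive step. No delicate argument is needed and no side condition obstructs the induction.
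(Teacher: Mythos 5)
Your proof is correct and is exactly the argument the paper intends: the paper states only that the proof is by induction on the derivation of $\tm \lts\ctx \tm'$, and your three cases ($\LTSshift$ as base, $\LTScaptl$/$\LTScaptr$ accumulating the captured context, with the associativity-of-plugging observation) fill in that induction as expected.
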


\noindent The proof is by induction on $\tm \lts\ctx \tm'$. From this
lemma, we can deduce the correspondence between $\lts\tau$ and $\redcbv$, and
between $\lts\act$ (for $\act \neq \tau$) and the observable actions of the
language.

\begin{prop}%
  \label{p:lts-redcbv-main}
  The following hold:
  \begin{itemize}
  \item We have $\ltstau \mathop{=} \redcbv$.
  \item If $\tm \lts\ctx \tm'$, then $\tm$ is a stuck term, and $\reset{\inctx
      \ctx \tm} \ltstau \tm'$.
  \item If $\tm \lts\val \tm'$, then $\tm$ is a value, and $\app \tm \val
    \ltstau \tm'$.
  \end{itemize}
\end{prop}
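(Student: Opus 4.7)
The plan is to prove the three parts in sequence, with essentially all the technical content concentrated in part~(1); parts~(2) and (3) then follow quickly by combining part~(1) with Lemma~\ref{l:decompose-shift} and simple inspection of the LTS rules.

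For the equality $\ltstau = \redcbv$ in part~(1), I would establish both inclusions. The inclusion $\ltstau \mathop{\subseteq} \redcbv$ goes by induction on the derivation of $\tm \ltstau \tm'$. The axiomatic cases $\LTSbeta$ and $\LTSreset$ are direct applications of $\RRbeta$ and $\RRreset$ at the empty evaluation context; the compositional rules $\LTScompl$, $\LTScompr$, and $\LTScompreset$ follow from the inductive hypothesis and the fact that $\redcbv$ is compatible with evaluation contexts $\rctx$. The one delicate case is $\LTScaptreset$, where $\tm \lts\mtctx \tm'$: here Lemma~\ref{l:decompose-shift} applied with $\ctx = \mtctx$ yields $\tm = \inctx{\ctx'}{\shift \vark \tms}$ and $\tm' = \reset{\subst{\tms}{\vark}{\lam \varx {\reset{\inctx{\ctx'}{\varx}}}}}$, so that $\reset{\tm} \redcbv \tm'$ is precisely an instance of $\RRshift$.

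For the reverse inclusion $\redcbv \mathop{\subseteq} \ltstau$, I would do a case analysis on the redex contracted in $\tm \redcbv \tm'$. The outer evaluation context $\rctx$ is handled by a straightforward induction on $\rctx$ using $\LTScompl$, $\LTScompr$, and $\LTScompreset$, so one can assume $\rctx = \mtctx$. The $\RRbeta$ and $\RRreset$ redexes are then dispatched by $\LTSbeta$ and $\LTSreset$. The $\RRshift$ case requires the key auxiliary fact, proved by induction on $\ctx$, that $\inctx{\ctx}{\shift \vark \tms} \lts\mtctx \reset{\subst{\tms}{\vark}{\lam \varx {\reset{\inctx{\ctx}{\varx}}}}}$: the base case $\ctx = \mtctx$ is $\LTSshift$, and the inductive cases $\ctx = \apctx{\ctx''}{\tm''}$ and $\ctx = \vctx{\val''}{\ctx''}$ are obtained by applying $\LTScaptl$ and $\LTScaptr$ respectively. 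Concluding with $\LTScaptreset$ then yields the desired $\tau$-step.

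Parts~(2) and (3) are short. For~(2), Lemma~\ref{l:decompose-shift} forces $\tm = \inctx{\ctx'}{\shift \vark \tms}$, which is control-stuck, hence stuck by Proposition~\ref{l:stuck}; the same lemma fixes $\tm' = \reset{\subst{\tms}{\vark}{\lam \varx{\reset{\inctx{\ctx}{\inctx{\ctx'}{\varx}}}}}}$, which is exactly the $\RRshift$-reduct of $\reset{\inctx{\ctx}{\tm}}$, so $\reset{\inctx{\ctx}{\tm}} \redcbv \tm'$ and part~(1) upgrades this to a $\tau$-transition. For~(3), inspection of Figure~\ref{f:lts} shows that $\LTSval$ is the only rule producing a $\val$-labeled transition, so $\tm$ must be a value $\lam \varx \tms$ and $\tm' = \subst{\tms}{\varx}{\val}$; then $\app{\tm}{\val} \redcbv \tm'$ by $\RRbeta$, and part~(1) gives $\app{\tm}{\val} \ltstau \tm'$.

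The main obstacle lies in the direction $\redcbv \mathop{\subseteq} \ltstau$ for the $\RRshift$ redex: a single reduction step has to be matched by a chain of LTS inferences in which $\LTScaptl$ and $\LTScaptr$ gradually peel off the pure context $\ctx$, piece by piece, into the label of the transition, culminating in an application of $\LTSshift$ and a closing step via $\LTScaptreset$. Keeping the label in sync with the shrinking context during the induction on $\ctx$ is where one has to be careful, but the statement of Lemma~\ref{l:decompose-shift} is already shaped so as to make this bookkeeping transparent.
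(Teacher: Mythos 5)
Your proof is correct and follows essentially the same route as the paper, which obtains this proposition as a consequence of Lemma~\ref{l:decompose-shift} (with the rule-induction details relegated to the companion report). The one point to make explicit is that the auxiliary fact you use for the $\RRshift$ case must be stated for an arbitrary label $\ctx$ rather than just $\mtctx$ --- i.e., as the exact converse of Lemma~\ref{l:decompose-shift} --- for the induction on the pure context to go through, a generalization you already hint at in your closing paragraph.
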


\noindent
We write $\cloltstau$ for the reflexive and transitive closure of $\ltstau$. We
define the weak delay transition\footnote{A transition where internal steps are
  allowed before, but not after a visible action.} $\ltswk\act$ as $\cloltstau$
if $\act=\tau$ and as $\cloltstau \lts\act$ otherwise. The definition of (weak
delay) bisimilarity is then straightforward.
\begin{defi}%
  \label{d:app-bisim}
  A relation $\rel$ on closed terms is an applicative simulation if
  $\tmzero \rel \tmone$ implies that for all $\tmzero \lts\act
  \tmzero'$, there exists $\tmone'$ such that $\tmone \ltswk\act
  \tmone'$ and $\tmzero' \rel \tmone'$.  A relation $\rel$ on closed
  terms is an applicative bisimulation if $\rel$ and $\inv\rel$ are
  applicative simulations. Applicative bisimilarity $\appbisim$ is the
  largest applicative bisimulation.
\end{defi}
\noindent In words, two terms are equivalent if any transition from one is
matched by a weak transition with the same label from the other. Because the
calculus is deterministic, it is not mandatory to test the internal steps when
proving that two terms are bisimilar.
\begin{prop}%
  \label{p:bisim-eval}
  If $\tm \ltstau \tm'$ (respectively $\tm \evalcbv \tm'$) then $\tm \appbisim
  \tm'$.
\end{prop}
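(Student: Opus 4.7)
The plan is to exhibit a single applicative bisimulation covering both cases at once. Define $\rel \is {\cloltstau} \cup \inv{\cloltstau}$; since $\rel = \inv\rel$ by construction, it suffices to show that $\rel$ is an applicative simulation. Both hypotheses of the proposition place $(\tm, \tm')$ in $\rel$: a one-step $\tau$-transition trivially gives $\tm \cloltstau \tm'$, and $\tm \evalcbv \tm'$ unfolds to $\tm \cloltstau \tm'$ by definition.

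For the simulation clause in the forward direction, assume $\tmzero \cloltstau \tmone$ and $\tmzero \lts\act \tmzero'$. Two facts do the work. First, by Proposition \ref{p:lts-redcbv-main}, the relation $\ltstau$ coincides with $\redcbv$, which is deterministic by Proposition \ref{p:unique-decomp}. Second, Lemma \ref{l:decompose-shift} shows that every $\ctx$-transition originates from a control-stuck term, and by inspection of rule $\LTSval$ every $\val$-transition originates from a value---in either case the source is a normal form and so admits no $\tau$-transition. Consequently, if $\act \neq \tau$, then $\tmzero$ cannot $\tau$-reduce, so the sequence $\tmzero \cloltstau \tmone$ must have length zero, forcing $\tmzero = \tmone$; the transition $\tmone \ltswk\act \tmzero'$ is then immediate, with the match $\tmzero' \rel \tmzero'$ by reflexivity of $\cloltstau$. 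If $\act = \tau$, then either $\tmzero = \tmone$ (and $\tmone \lts\tau \tmzero'$ matches trivially), or $\tmzero \ltstau \tmzero'' \cloltstau \tmone$ for some $\tmzero''$; determinism forces $\tmzero'' = \tmzero'$, so $\tmzero' \cloltstau \tmone$, and the match is $\tmone \ltswk\tau \tmone$ (zero steps) with $\tmzero' \rel \tmone$ via $\cloltstau$.

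The backward direction is easier: if $\tmone \cloltstau \tmzero$ and $\tmzero \lts\act \tmzero'$, concatenation yields $\tmone \ltswk\act \tmzero'$ directly, matched by $\tmzero' \rel \tmzero'$. The only genuine subtlety in the argument is the non-interference between visible and internal transitions exploited in the forward direction; once one knows (via Lemma \ref{l:decompose-shift} and the shape of $\LTSval$) that a source of a $\ctx$- or $\val$-transition cannot simultaneously $\tau$-reduce, the case analysis closes thanks to determinism of $\redcbv$ and reflexivity of $\cloltstau$.
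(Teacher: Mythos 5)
Your proof is correct and follows exactly the route the paper relies on (the proposition is stated without proof in the text, with details deferred to \cite{Biernacki-Lenglet:arXiv-FOSSACS12}): the relation $\cloltstau \cup \inv{\cloltstau}$ is a bisimulation because $\ltstau$ coincides with the deterministic $\redcbv$ and sources of $\val$- and $\ctx$-transitions are normal forms, hence admit no internal step. Nothing to add.
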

\noindent As a result, applicative bisimulation can be defined in terms of
big-step transitions.

\begin{defi}
  A relation $\rel$ on closed terms is a big-step applicative simulation if
  $\tmzero \rel \tmone$ implies that for all $\tmzero \ltswk\act \tmzero'$ with
  $\act \neq \tau$, there exists $\tmone'$ such that $\tmone \ltswk\act \tmone'$
  and $\tmzero' \rel \tmone'$. A relation $\rel$ on closed terms is a big-step
  applicative bisimulation if $\rel$ and $\inv\rel$ are big-step applicative
  simulations.
\end{defi}

\begin{prop}
  If $\rel$ is a big-step applicative bisimulation, then $\rel \mathop{\subseteq}
  \appbisim$.
\end{prop}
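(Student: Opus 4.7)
The plan is to exhibit a small-step applicative bisimulation $\rel'$ with $\rel \subseteq \rel'$; since $\appbisim$ is the largest applicative bisimulation, this gives $\rel \subseteq \appbisim$. The natural candidate is the closure of $\rel$ under $\tau$-reductions on both sides,
\[
\rel' \is \bigl\{(\tmzero', \tmone') \mmid \exists\, \tmzero, \tmone,\ \tmzero \rel \tmone \conjun \tmzero \cloltstau \tmzero' \conjun \tmone \cloltstau \tmone'\bigr\},
\]
which satisfies $\rel \subseteq \rel'$ by taking zero-step reductions. It remains to verify that $\rel'$ is an applicative bisimulation.

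Fix $(\tmzero', \tmone') \in \rel'$ witnessed by some $\tmzero \rel \tmone$ with $\tmzero \cloltstau \tmzero'$ and $\tmone \cloltstau \tmone'$, and consider a transition $\tmzero' \lts\act \tmzero''$. The $\tau$-case is immediate: reply with zero steps from $\tmone'$, and $(\tmzero'', \tmone')$ is still witnessed by $(\tmzero, \tmone)$ via the extended chain $\tmzero \cloltstau \tmzero''$. For the visible case ($\act \neq \tau$), I first observe that the source of any visible transition is necessarily a normal form: by inspection of the LTS rules and Lemma~\ref{l:decompose-shift}, $\val$-transitions issue from values and $\ctx$-transitions from control-stuck terms, and by Proposition~\ref{p:unique-decomp} neither admits a $\tau$-step. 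Hence $\tmzero \cloltstau \tmzero' \lts\act \tmzero''$ already yields $\tmzero \ltswk\act \tmzero''$, and the big-step condition on $\rel$ produces $\tmone'''$ and $\tmone''$ with $\tmone \cloltstau \tmone''' \lts\act \tmone''$ and $\tmzero'' \rel \tmone''$. Since $\ltstau$ is deterministic, the $\tau$-chain from $\tmone$ is linear; both $\tmone'$ and $\tmone'''$ lie on this chain, and $\tmone'''$ has no $\tau$-successor, so $\tmone'''$ cannot strictly precede $\tmone'$. Therefore $\tmone' \cloltstau \tmone''' \lts\act \tmone''$, i.e.\ $\tmone' \ltswk\act \tmone''$, and $(\tmzero'', \tmone'') \in \rel \subseteq \rel'$ closes the case. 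Matching of transitions issued from $\tmone'$ is entirely symmetric (apply the same argument to $\inv\rel$, which is itself a big-step simulation).

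The main obstacle is the visible case: the big-step hypothesis only produces a match starting from $\tmone$, whereas we need one starting from $\tmone'$, which may already sit further along the $\tau$-chain. The determinism of $\ltstau$, combined with the fact that the sources of visible transitions are necessarily normal forms, is what allows us to slide the big-step match downward so that it starts at $\tmone'$.
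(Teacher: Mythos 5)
Your proof is correct, and it is essentially the standard argument one expects here (the paper itself defers this proof to its companion report): close $\rel$ under $\tau$-reduction on both components, handle $\tau$-steps by staying put, and use the fact that sources of visible transitions are normal forms together with the determinism of $\ltstau$ (Propositions~\ref{p:unique-decomp} and~\ref{p:lts-redcbv-main}) to slide the big-step match so that it starts from $\tmone'$. All the steps check out, including the symmetric direction via $\inv{(\rel')} = (\inv\rel)'$.
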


\begin{proof}[Sketch]
  By showing that
  $\{(\tmzero, \tmone) \bnfor (\tmzero,\tmone) \in
  \exists \tmzero', \tmone', \tmzero' \clocbv \tmzero \wedge
  \tmone' \clocbv \tmone \wedge \tmzero' \rel \tmone' \}$ is an applicative
  bisimulation.
\end{proof}

\noindent In this section, we drop the adjective ``applicative'' and refer to
the two kinds of relations simply as ``bisimulation'' and ``big-step
bisimulation'' where it does not cause confusion.

\begin{exa}[double \textreset]%
  \label{ex:reset-reset-app}
  For all closed terms $\tm$, we show that
  $\reset{\reset \tm} \appbisim \reset\tm$ holds by proving that
  $\rel {\mathord{\is}} \{ (\reset\tm, \reset{\reset \tm}) \mid \tm \in \cterms
  \} \cup \{ (\tm, \tm) \mid \tm \in \cterms \}$ is an applicative
  bisimulation. First, $\reset \tm$ cannot be a value or a control-stuck term,
  so we only have to consider $\lts\tau$-transition. By case analysis on the
  reduction rules, we can see that $\reset \tm \lts\tau \tm'$ iff
  $\tm' = \reset {\tm''}$ for some $\tm''$, or $\tm'$ is a value $\val$.

  If $\reset \tm \lts\tau \reset{\tm''}$, then $\reset {\reset \tm} \lts\tau \reset
  {\reset {\tm''}}$, and the resulting terms are in $\rel$. Otherwise, if
  $\reset \tm \lts\tau \val$, then $\reset {\reset \tm} \lts\tau \lts\tau \val$
  and we get identical terms. Conversely, if $\reset{\reset \tm} \lts\tau \tm'$,
  then we can show that either $\tm' = \reset{\reset {\tm''}}$ for some $\tm''$ and
  $\reset \tm \lts \tau \reset{\tm''}$, or $\tm' = \reset\val$ for some $\val$
  and $\reset \tm = \reset \val$. This concludes the proof for the terms
  in the first set of $\rel$, and checking the bisimulation game for identical
  terms (the second set of~$\rel$) is straightforward.
\end{exa}

\begin{exa}[Turing's combinator]%
  \label{ex:fixed-point-app}
  We study here the relationships between Turing's and Curry's fixed-point
  combinator and their respective variants with delimited
  control~\cite{Danvy-Filinski:DIKU89} (see Example~\ref{e:combinators} for the
  definitions). We start with Turing's combinator $\Turing$ and its variant
  $\Turingshift \is \reset {\app \theta {\shift \vark {\app \vark
        \vark}}}$. The two terms can perform the following transitions:
  \begin{align*}
    \Turing & \ltswk\val \app \val {\lamp z {\app {\app {\app \theta
              \theta} \val} z}} \\
    \Turingshift & \ltswk\val \app \val  {\lamp z {\app {\app {\app {\lamp
                \varx {\reset {\app \theta \varx}}}{\lamp \varx {\reset {\app
                    \theta \varx}}}} \val} z}}
  \end{align*}
  Taking $\val = \lam \varx \tm$, we have to study
  $\subst \tm \varx {\lamp z {\app {\app {\app \theta \theta} \val} z}}$, and
  $\subst \tm \varx {\lamp z {\app {\app {\app {\lamp \varx {\reset {\app \theta
                \varx}}}{\lamp \varx {\reset {\app \theta \varx}}}} \val}
      z}}$. A way to proceed is by case analysis on $\tm$, the interesting case
  being $\tm = \inctx \rctx {\app \varx {\val'}}$. If it is possible to conclude
  using applicative bisimulation, the needed candidate relation is much more
  complex than with environmental (Example~\ref{ex:fixed-point-env-utc}) or
  normal-form (Example~\ref{ex:fixed-point}) bisimulations, so we refer to these
  examples for a complete proof.

  In contrast, Curry's combinator $\Curry$ is not bisimilar to its
  delimited-control variant
  $\Curryshift \is \lam \varx {\reset {\app {\delta_\varx}{\shift \vark
        {\app \vark \vark}}}}$. Indeed, after applying these values to an
  argument $\val$, we obtain respectively
  $\app \val {\lamp z {\app {\app {\delta_\val}{\delta_\val}} z}}$ and
  $\reset {\reset {\app \val {\lamp z {\app {\app {\lamp y {\reset{\app
                  {\delta_\val} y}}}{\lamp y {\reset{\app {\delta_\val} y}}}}
          z}}}}$, and these terms are not bisimilar if
  $\val = \lam \varx {\shift \vark \Omega}$, as the first one reduces to a
  control-stuck term while the second one diverges.
\end{exa}

\begin{rem}
  Applicative simulation can be formulated in a more classic, but equivalent,
  way (without labeled transitions), as follows. A relation $\rel$ on closed
  terms is an applicative simulation if $\tmzero \rel \tmone$ implies:
  \begin{itemize}
  \item if $\tmzero \redcbv \tmzero'$, then there exists $\tmone'$ such that
    $\tmone \clocbv \tmone'$ and $\tmzero' \rel \tmone'$;
  \item if $\tmzero$ is a value $\lam \varx {\tmzero'}$, then there exists
    $\tmone'$ such that $\tmone \clocbv \lam \varx {\tmone'}$, and for all
    closed $\val$, we have $\subst {\tmzero'} \varx \val \rel \subst {\tmone'}
    \varx \val$;
  \item if $\tmzero$ is a stuck term $\inctx \ctxzero {\shift \vark
      {\tmzero'}}$, then there exist $\tmone'$ and $\ctxone$ such that $\tmone
    \clocbv \inctx \ctxone {\shift \vark {\tmone'}}$, and for all closed $\ctx$,
    we have $\reset{\subst {\tmzero'} \vark {\lam \varx {\reset{\inctx \ctx
            {\inctx {\ctxzero} \varx}}}}} \rel \reset{\subst {\tmone'} \vark
      {\lam \varx {\reset{\inctx \ctx {\inctx {\ctxone} \varx}}}}}$.
  \end{itemize}
  The correspondence between this formulation and Definition~\ref{d:app-bisim}
  is a direct consequence of Proposition~\ref{p:lts-redcbv-main}.
\end{rem}

\subsection{Soundness and Completeness}%
\label{ss:soundness-app}

To prove the soundness of $\appbisim$ {\wrt}the contextual equivalence
$\ctxequiv$, we show that $\appbisim$ is a congruence using \emph{Howe's
  method}, a well-known congruence proof method initially developed for the
$\lambda$-calculus~\cite{Howe:IaC96,Gordon:TCS99}. The idea of the method is as
follows: first, define the \emph{Howe's closure} of $\appbisim$, written
$\clohbisim$, a relation which contains $\appbisim$ and is compatible by
construction. Then, prove a simulation-like property for $\clohbisim$; from this
result, prove that $\clohbisim$ and $\appbisim$ coincide on closed
terms. Because $\clohbisim$ is compatible, it shows that $\appbisim$ is
compatible as well, and therefore a congruence.

The definition of $\clohbisim$ relies on the notion of
\emph{compatible refinement}; given a relation $\rel$ on open terms,
the compatible refinement $\comp\rel$ relates two terms iff they have
the same outermost operator and their immediate subterms are related
by $\rel$. Formally, it is inductively defined by the following rules:
\begin{mathpar}
  \inferrule{ }
  {\varx \comp\rel \varx}
  \and
  \hspace{-0.1em}\inferrule{\tmzero \rel \tmone}
  {\lam \varx \tmzero \comp\rel \lam \varx \tmone}
  \and
  \inferrule{\tmzero \rel \tmone \quad\quad \tmzero' \rel \tmone'}
  {\app \tmzero {\tmzero'} \comp\rel \app \tmone {\tmone'}}
  \and
  \inferrule{\tmzero \rel \tmone}
  {\shift \vark \tmzero \comp\rel \shift \vark \tmone}
  \and
  \inferrule{\tmzero \rel \tmone}
  {\reset \tmzero \comp\rel \reset\tmone}
\end{mathpar}
Howe's closure $\clohbisim$ is inductively defined as the smallest compatible
relation containing $\open\appbisim$ and closed under right composition with
$\open\appbisim$.

\begin{defi}
  Howe's closure $\clohbisim$ is the smallest relation satisfying:
  \begin{mathpar}
    \inferrule{\tmzero \open\appbisim \tmone}
    {\tmzero \clohbisim \tmone}
    \and
    \inferrule{\tmzero \clohbisim\open\appbisim \tmone}
    {\tmzero \clohbisim \tmone}
    \and
    \inferrule{\tmzero \comp\clohbisim \tmone}
    {\tmzero \clohbisim \tmone}
  \end{mathpar}
\end{defi}
\noindent By construction, $\clohbisim$ is compatible (by the third rule of the
definition), and composing on the right with $\open\appbisim$ gives some
transitivity properties to $\clohbisim$. In particular, we can prove
that~$\clohbisim$ is \emph{substitutive}: if $\tmzero \clohbisim \tmone$ and
$\valzero \clohbisim \valone$, then
$\subst \tmzero \varx {\valzero} \clohbisim \subst \tmone \varx {\valone}$.

Let $\clohbisimc$ be the restriction of $\clohbisim$ to closed
terms. We cannot prove directly that $\clohbisimc$ is a bisimulation,
so we prove a stronger result. Suppose we have $\tmzero \clohbisimc
\tmone$; instead of simply requiring $\tmzero \lts\act \tmzero'$ to be
matched by $\tmone$ with the same label $\act$, we ask $\tmone$ to be
able to respond for any label $\act'$ related to $\act$ by
$\clohbisimc$. We, therefore, extend $\clohbisim$ to all labels, by
adding the relation $\tau \clohbisim \tau$, and by defining $\ctx
\clohbisim \ctx'$ as follows:
\begin{mathpar}
  \inferrule{ }{\mtctx \clohbisim \mtctx} \and
  \inferrule{\ctxzero \clohbisim \ctxone \\ \tmzero \clohbisim \tmone}
  {\apctx \ctxzero \tmzero \clohbisim \apctx \ctxone \tmone}
  \and
  \inferrule{\ctxzero \clohbisim \ctxone \\ \valzero \clohbisim \valone}
  {\vctx \valzero \ctxzero  \clohbisim \vctx \valone \ctxone}
\end{mathpar}

\begin{lem}[Simulation-like property]%
  \label{l:sim-property-main}
  If $\tmzero \clohbisimc \tmone$ and $\tmzero \lts\act \tmzero'$, then for all
  $\act \clohbisimc \act'$, there exists $\tmone'$ such that $\tmone
  \ltswk{\act'} \tmone'$ and $\tmzero' \clohbisimc \tmone'$.
\end{lem}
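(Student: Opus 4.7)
The plan is to proceed by induction on the derivation of $\tmzero \lts\act \tmzero'$, performing a case analysis on the last rule used. At each step I would first invert the definition of $\clohbisimc$ applied to $\tmzero$: by the definition, $\tmzero \clohbisim \tmone$ either comes directly from the compatible refinement rule (so $\tmzero$ and $\tmone$ have matching outermost constructors with $\clohbisim$-related immediate subterms) or is obtained by composing such a relation on the right with $\open\appbisim$. Since $\open\appbisim$ itself is a bisimulation on closed terms, the plan is to handle the compatible-refinement part first and then ``absorb'' the trailing $\open\appbisim$ using the fact that $\appbisim$ matches any $\act$-transition by a weak $\act$-transition (with the $\open\appbisim$-postcondition absorbed into the final $\clohbisim$ via right composition).

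For the internal-action rules, the treatment is routine: in case $\LTSbeta$, invert to get $\tmone = \app {\lamp \varx {\tms_1}} {\val_1}$ with bodies and arguments $\clohbisim$-related to those in $\tmzero$; conclude using the substitutivity lemma for $\clohbisim$ (which itself rests on $\clohbisim$ being closed under substitution of $\clohbisim$-related values for free variables). Cases $\LTScompl$, $\LTScompr$, $\LTScompreset$, and $\LTSreset$ follow from the induction hypothesis, possibly combined with Proposition~\ref{p:lts-redcbv-main} to lift single $\ltstau$ steps into weak transitions on the right-hand term.

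For the visible-action rules I must also handle the condition $\act \clohbisim \act'$. The $\LTSval$ case reduces, after inversion, to substitutivity of $\clohbisim$ with $\val \clohbisim \val'$. The $\LTSshift$ case is similar but more delicate: after inverting $\clohbisim$ and the label relation $\ctx \clohbisim \ctx'$, I need that
\[
\reset{\subst{\tmzero_1}{\vark}{\lam \varx {\reset{\inctx \ctx \varx}}}}
\clohbisim
\reset{\subst{\tmone_1}{\vark}{\lam \varx {\reset{\inctx {\ctx'} \varx}}}},
\]
which follows from compatibility of $\clohbisim$ (applied to \resetId, $\lambda$, and application constructors), the $\clohbisim$-congruence of context plugging, and substitutivity. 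The rules $\LTScaptl$ and $\LTScaptr$ are handled by the inductive hypothesis, since from $\tmzero \comp\clohbisim \tmone$ at the outermost application I can form the enlarged context labels $\inctx \ctx {\apctx \mtctx {\tms_1}} \clohbisim \inctx {\ctx'}{\apctx \mtctx {\tms_1'}}$ and invoke IH on the subderivation.

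The main obstacle I expect is the interaction of the ``right composition with $\open\appbisim$'' clause in the definition of $\clohbisim$ with the context-capture rules: when $\tmzero \clohbisim \tms \open\appbisim \tmone$, I must first use the IH to get $\tms \ltswk{\act'} \tms''$ with $\tmzero' \clohbisim \tms''$, and then use the fact that $\tms \open\appbisim \tmone$ to match the entire weak transition (in particular, the visible label $\act'$), which is nontrivial because $\open\appbisim$ is defined on closed terms and the label $\act'$ here is a context that may depend on earlier inversion. To make this work cleanly I would first prove auxiliary monotonicity lemmas stating that $\clohbisim$ is preserved by plugging into contexts that are themselves $\clohbisim$-related, and that weak $\act'$-transitions compose with $\open\appbisim$ on the right so that any responding term $\tmone$ admits $\tmone \ltswk{\act'} \tmone'$ with $\tms'' \open\appbisim \tmone'$, yielding $\tmzero' \clohbisim \tmone'$ by the second defining clause of $\clohbisim$.
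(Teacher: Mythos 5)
Your plan follows the same overall strategy as the paper---Howe's method, reduced to this simulation-like property plus substitutivity of $\clohbisim$---but it organizes the induction dually. The paper's proof is by induction on the derivation of $\tmzero \clohbisimc \tmone$ with an inner case analysis on the transition $\tmzero \lts\act \tmzero'$; you induct on the transition derivation and invert the Howe closure. Both organizations are standard and both can be made to work, but two points in your version are under-specified in ways that matter. First, the inversion you use (``either compatible refinement, or compatible refinement followed by $\open\appbisim$'') is not the literal inversion of the three defining rules of $\clohbisim$: it is the canonical-form lemma $\clohbisim \mathop\subseteq \comp\clohbisim\open\appbisim$, which must be proved separately (using reflexivity and transitivity of $\open\appbisim$). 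Without it, the right-composition case $\tmzero \clohbisim \tms \open\appbisim \tmone$ leaves the transition derivation unchanged, so the induction hypothesis of a pure transition induction does not apply to the pair $(\tmzero,\tms)$ and you would need a lexicographic measure; the paper's choice of inducting on the Howe derivation makes exactly this case immediate. Second, in the axiom cases ($\LTSbeta$, $\LTSval$, $\LTSshift$) you cannot literally ``invert to get'' that $\tmone$ is an application of a $\lambda$-abstraction to a value: a term Howe-related to a $\lambda$-abstraction or to a control-stuck term need not have that syntactic shape---only the $\comp\clohbisim$ component does, while the trailing $\open\appbisim$ component only guarantees weak evaluation to a normal form of the matching kind. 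These cases therefore need small ``shape'' lemmas extracted from the bisimulation game for $\appbisim$, in addition to substitutivity. With the canonical-form lemma, those shape lemmas, and the observation you already make that weak $\act'$-transitions compose on the right with $\open\appbisim$ and are then absorbed by the second defining clause of $\clohbisim$, your proof goes through; what the paper's organization buys is that these patches come for free, while yours localizes every appeal to the bisimulation property of $\appbisim$ in a single absorption step at the end of each case.
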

The main difficulty when applying Howe's method is to prove this simulation-like
property. The proof~\cite{Biernacki-Lenglet:FOSSACS12} is by induction on
$\tmzero \clohbisimc \tmone$, and then by case analysis on the transition
$\tmzero \lts\act \tmzero'$.  Lemma~\ref{l:sim-property-main} allows us to prove
that $\clohbisimc$ is a simulation, by choosing $\act'=\act$. We cannot directly
deduce that $\clohbisimc$ is a bisimulation, however we can prove that its
transitive and reflexive closure $\rtclo{(\clohbisimc)}$ is a bisimulation,
because of the following classical property of the Howe's
closure~\cite{Gordon:TCS99}.
\begin{lem}%
  \label{l:rth-is-sym}
  The relation $\rtclo{(\clohbisim)}$ is symmetric.
\end{lem}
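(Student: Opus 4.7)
The plan is to prove symmetry of $\rtclo{(\clohbisim)}$ by induction on the derivation of $\tmzero \clohbisim \tmone$, establishing in every case that $\tmone \rtclo{(\clohbisim)} \tmzero$.

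Before starting the induction, I would collect two preliminary facts on which the argument hinges. First, that $\open\appbisim$ is symmetric: $\appbisim$, being the largest applicative bisimulation, coincides with its own inverse, and the open extension of Definition~\ref{d:open-extension} preserves symmetry of its underlying relation. Second, that $\clohbisim$ is reflexive: the identity relation on closed terms is trivially a bisimulation, so $\open\appbisim$ is reflexive, and reflexivity transfers to $\clohbisim$ through the first defining clause.

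With these in hand, the induction proceeds case by case on the rules defining $\clohbisim$. In the base case $\tmzero \open\appbisim \tmone$, symmetry of $\open\appbisim$ immediately gives $\tmone \open\appbisim \tmzero$ and hence $\tmone \clohbisim \tmzero$. In the right-composition case, where $\tmzero \clohbisim \tm''$ and $\tm'' \open\appbisim \tmone$ for some $\tm''$, the induction hypothesis supplies $\tm'' \rtclo{(\clohbisim)} \tmzero$, while symmetry of $\open\appbisim$ yields the additional step $\tmone \open\appbisim \tm''$, which is promoted to $\tmone \clohbisim \tm''$ by the base clause; concatenating the two pieces produces $\tmone \rtclo{(\clohbisim)} \tmzero$.

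The main obstacle is the compatible-refinement case, especially for binary constructors. If $\tmzero = \app{\tmzero_1}{\tmzero_2}$ and $\tmone = \app{\tmone_1}{\tmone_2}$ with $\tmzero_i \clohbisim \tmone_i$, the induction hypothesis delivers two chains $\tmone_i \rtclo{(\clohbisim)} \tmzero_i$ whose lengths need not match, so they must be interleaved carefully. I would walk the chain on the first coordinate first, replaying each step $\tmone_1 \clohbisim \tm'$ together with reflexivity $\tmone_2 \clohbisim \tmone_2$ to obtain $\app{\tmone_1}{\tmone_2} \comp\clohbisim \app{\tm'}{\tmone_2}$ and hence $\app{\tmone_1}{\tmone_2} \clohbisim \app{\tm'}{\tmone_2}$ via the third clause, then walk the chain on the second coordinate in the same way. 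The remaining compatibility cases ($\lambda$-abstraction, \shiftId, and \resetId) are structurally identical but simpler, as only a single chain need be lifted through the enclosing constructor, and no subtlety about binders arises since compatibility is built into the definition of~$\clohbisim$.
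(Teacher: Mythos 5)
The paper gives no proof of this lemma, citing it as a classical property of Howe's closure due to Gordon; your argument is precisely that standard proof --- establishing $\inv{(\clohbisim)} \mathop\subseteq \rtclo{(\clohbisim)}$ by rule induction, using symmetry of $\open\appbisim$, reflexivity of $\clohbisim$, and the compatible-refinement rule to interleave the per-coordinate chains. The only step left implicit is the routine final observation that a relation whose inverse is contained in its reflexive-transitive closure has a symmetric reflexive-transitive closure.
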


\begin{proof}[Sketch]
  The proof is by induction on the definition of the reflexive and transitive
  closure. The inductive case is straightforward. For the base case, we show
  that $\tmzero \clohbisim \tmone$ implies $\tmone \rtclo{(\clohbisim)} \tmzero$
  by induction on the definition of Howe's closure. Most cases are
  straightforward using the induction hypothesis. The interesting case is when
  $\tmzero \clohbisim \tm \open\appbisim \tmone$ for some $\tm$. By the
  induction hypothesis, we have $\tm \rtclo{(\clohbisim)} \tmzero$. Because
  $\appbisim$ itself is symmetric, we also have $\tmone \open\appbisim \tm$,
  which implies $\tmone \clohbisim \tm$, which when combined with
  $\tm \rtclo{(\clohbisim)} \tmzero$ gives the required result.
\end{proof}

The fact that $\rtclo{(\clohbisimc)}$ is a bisimulation implies that
$\rtclo{(\clohbisimc)} \mathop{\subseteq} \appbisim$. Because $\appbisim
\mathop{\subseteq} \clohbisimc \mathop{\subseteq}
\mathord{\rtclo{(\clohbisimc)}}$ holds by construction, we can deduce $\appbisim
\mathop{=} \clohbisimc$. Since $\clohbisimc$ is compatible, and we can easily
show that $\appbisim$ is transitive and reflexive, we have the following
result.
\begin{thm}%
  \label{t:app-congruence}
  The relation $\appbisim$ is a congruence.
\end{thm}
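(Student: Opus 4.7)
The plan is to assemble the congruence from the ingredients introduced in the paragraph preceding the statement. Compatibility of $\appbisim$ will come from Howe's closure, and an equivalence relation that is compatible is by definition a congruence, so I need compatibility together with reflexivity, symmetry, and transitivity of $\appbisim$.

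First, I would specialize Lemma~\ref{l:sim-property-main} to the case $\act' = \act$ (using that $\clohbisim$ is reflexive on labels, which is immediate from its defining compatible-refinement rule) to read off that $\clohbisimc$ is an applicative simulation in the usual sense. Next, I would upgrade this to: the transitive-reflexive closure $\rtclo{(\clohbisimc)}$ is also a simulation. This is a standard diagram chase: a chain $\tmzero \clohbisimc \tmzero_1 \clohbisimc \ldots \clohbisimc \tmzero_k = \tmzero$, together with a transition $\tmzero \lts\act \tmzero'$, is matched step by step using that weak delay transitions compose, giving $\tmone \ltswk\act \tmone'$ with $\tmzero' \rtclo{(\clohbisimc)} \tmone'$. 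Combined with Lemma~\ref{l:rth-is-sym}, which states that $\rtclo{(\clohbisim)}$ (and hence its restriction $\rtclo{(\clohbisimc)}$) is symmetric, this shows $\rtclo{(\clohbisimc)}$ is an applicative bisimulation.

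Now the maximality of $\appbisim$ among applicative bisimulations gives $\rtclo{(\clohbisimc)} \subseteq \appbisim$. The reverse chain of inclusions $\appbisim \subseteq \clohbisimc \subseteq \rtclo{(\clohbisimc)}$ is by construction: on closed terms $\appbisim$ coincides with $\open\appbisim$, which is contained in $\clohbisim$ by the first defining rule of Howe's closure, and $\clohbisimc \subseteq \rtclo{(\clohbisimc)}$ is trivial. Hence these three relations coincide on closed terms; in particular $\appbisim = \clohbisimc$. Since $\clohbisim$ is compatible by the third rule of its definition, its restriction $\clohbisimc$ to closed terms is compatible on closed terms, and therefore so is $\appbisim$.

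It remains to verify that $\appbisim$ is an equivalence. Reflexivity holds because the identity relation on closed terms is trivially an applicative bisimulation. Symmetry is immediate from the symmetric formulation of the definition, since $\rel$ is an applicative bisimulation iff $\inv\rel$ is. Transitivity follows from the standard fact that the composition of two applicative bisimulations is again an applicative bisimulation; this uses the compositionality of weak delay transitions. Combining the equivalence properties with the compatibility obtained above yields that $\appbisim$ is a congruence. The only real difficulty in this route has already been absorbed by Lemma~\ref{l:sim-property-main}, which is the technically hard step of Howe's method; the present theorem is the routine bookkeeping that turns that key lemma into congruence.
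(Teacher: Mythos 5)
Your proposal is correct and follows essentially the same route as the paper: specialize the simulation-like property of Howe's closure to $\act'=\act$, use symmetry of $\rtclo{(\clohbisim)}$ to conclude that $\rtclo{(\clohbisimc)}$ is a bisimulation, sandwich $\appbisim = \clohbisimc$ between the inclusions, and inherit compatibility from Howe's closure, with equivalence-relation properties checked separately. The only blemish is a typo in your chain ($\tmzero_k$ should equal $\tmone$, not $\tmzero$); the mathematics is the paper's argument.
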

\noindent Combined with the fact that labels correspond to observable actions
(Proposition~\ref{p:lts-redcbv-main}), Theorem~\ref{t:app-congruence} entails
that $\appbisim$ is sound {\wrt}contextual equivalence.
\begin{cor}%
  \label{t:soundness-main}
  We have $\appbisim \mathop{\subseteq} \ctxequiv$.
\end{cor}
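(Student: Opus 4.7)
The plan is to derive the corollary from the congruence of $\appbisim$ (Theorem \ref{t:app-congruence}) together with the operational correspondence between the LTS and $\redcbv$ provided by Proposition \ref{p:lts-redcbv-main}. Assume $\tmzero \appbisim \tmone$ and fix an arbitrary closed context $\cctx$. By congruence, $\inctx \cctx \tmzero \appbisim \inctx \cctx \tmone$, so it suffices to show that any two bisimilar closed terms agree on the two observations defining $\ctxequiv = \ctxequivtwo$: reduction to a value, and reduction to a control-stuck term.

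For the value observation, suppose $\inctx \cctx \tmzero \evalcbv \valzero$. By the first item of Proposition \ref{p:lts-redcbv-main}, $\redcbv$ coincides with $\ltstau$, hence $\inctx \cctx \tmzero \cloltstau \valzero$. Iterating the simulation clause along this chain of $\tau$-transitions yields a term $\tm$ with $\inctx \cctx \tmone \cloltstau \tm$ and $\valzero \appbisim \tm$. Since $\valzero = \lam \varx \tmzero'$ is a value, rule $\LTSval$ provides, for any closed $\val$, the transition $\valzero \lts\val \subst {\tmzero'} \varx \val$; the bisimulation game then forces $\tm \ltswk\val \tm'$, that is $\tm \cloltstau \val_1 \lts\val \tm'$ for some $\val_1$. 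The third item of Proposition \ref{p:lts-redcbv-main} ensures that only values can emit a value-labelled transition, so $\val_1$ is itself a value, and determinism of $\redcbv$ yields $\inctx \cctx \tmone \evalcbv \val_1$. The reverse direction is symmetric.

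For the control-stuck observation, suppose $\inctx \cctx \tmzero \evalcbv \tmzero'$ with $\tmzero'$ control-stuck. Again $\inctx \cctx \tmzero \cloltstau \tmzero'$, and the simulation clause yields $\inctx \cctx \tmone \cloltstau \tm$ with $\tmzero' \appbisim \tm$. By rules $\LTSshift$, $\LTScaptl$, and $\LTScaptr$, the control-stuck term $\tmzero'$ admits a $\lts\mtctx$-transition, so the bisimulation game forces $\tm \cloltstau \tm_1$ for some $\tm_1$ emitting a $\lts\mtctx$-transition. Lemma \ref{l:decompose-shift} then guarantees that $\tm_1$ is itself control-stuck, and determinism of $\redcbv$ gives $\inctx \cctx \tmone \evalcbv \tm_1$. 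The reverse direction is identical.

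The only delicate point I anticipate is checking that the matching weak transition actually lands on the intended kind of normal form rather than on an arbitrary intermediate configuration; this is exactly what Proposition \ref{p:lts-redcbv-main} and Lemma \ref{l:decompose-shift} deliver. Beyond the hard work already invested in establishing congruence via Howe's method, the proof of the corollary reduces to a routine diagram-chase in the LTS.
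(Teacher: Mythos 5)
Your proposal is correct and follows essentially the same route as the paper, which derives the corollary from Theorem~\ref{t:app-congruence} together with the correspondence between labels and observable actions (Proposition~\ref{p:lts-redcbv-main}); you have merely spelled out the diagram chase that the paper leaves implicit. The use of the $\lts\val$ and $\lts\mtctx$ transitions to force the matching side to converge to a normal form of the right kind, justified by the third item of Proposition~\ref{p:lts-redcbv-main} and Lemma~\ref{l:decompose-shift}, is exactly the intended argument.
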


\subsubsection*{Completeness and context lemma}

For the reverse inclusion, we use $\rctxequiv$, the contextual equivalence which
tests with contexts $\rctx$ only (see Definition~\ref{d:rctxequiv}). We can prove
that $\appbisim$ is complete {\wrt}$\rctxequiv$, by showing that $\rctxequiv$
is an applicative bisimulation~\cite{Biernacki-Lenglet:FOSSACS12}.
\begin{thm}%
  \label{t:completeness-main}
  We have $\rctxequiv \mathop\subseteq \appbisim$.
\end{thm}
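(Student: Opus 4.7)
The plan is to show that $\rctxequiv$ is itself an applicative bisimulation, so that $\rctxequiv \subseteq \appbisim$ follows by maximality of $\appbisim$. Since $\rctxequiv$ is symmetric, it suffices to verify the simulation condition in one direction; the other direction follows symmetrically.

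Before the main case analysis I would establish a few auxiliary facts about $\rctxequiv$. First, evaluation contexts are closed under plugging, by a straightforward induction on the grammar: if $\rctx_1$ and $\rctx_2$ are both evaluation contexts, so is the context obtained by filling the hole of $\rctx_1$ with $\rctx_2$. Second, $\rctxequiv$ is compatible with evaluation contexts, i.e., $\tmzero \rctxequiv \tmone$ implies $\inctx{\rctx}{\tmzero} \rctxequiv \inctx{\rctx}{\tmone}$ for every closed $\rctx$, because any test context placed around $\inctx{\rctx}{\tmzero}$ composes with $\rctx$ into another evaluation context. Third, $\redcbv$ preserves $\rctxequiv$: if $\tm \redcbv \tm'$ then for every $\rctx$ the terms $\inctx{\rctx}{\tm}$ and $\inctx{\rctx}{\tm'}$ reduce to the same normal form (using compatibility of $\redcbv$ with evaluation contexts), so $\tm \rctxequiv \tm'$. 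The relation $\rctxequiv$ is moreover an equivalence by inspection.

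Now suppose $\tmzero \rctxequiv \tmone$ and reason by cases on the label of $\tmzero \lts\act \tmzero'$. If $\act = \tau$, Proposition \ref{p:lts-redcbv-main} gives $\tmzero \redcbv \tmzero'$, whence $\tmzero \rctxequiv \tmzero'$ by preservation; taking $\tmone' = \tmone$ (the zero-step weak transition) closes the case by transitivity. If $\act = \val$, then $\tmzero = \lam{\varx}{\tm}$ and $\tmzero' = \subst{\tm}{\varx}{\val}$; since $\tmzero$ is a value that evaluates to itself and is not control-stuck, the empty test context forces $\tmone \evalcbv \lam{\varx}{\tm'}$ for some $\tm'$. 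Compatibility under the closed evaluation context $\apctx{\mtctx}{\val}$, composed with preservation under the $\beta_v$-steps on each side, delivers $\subst{\tm}{\varx}{\val} \rctxequiv \subst{\tm'}{\varx}{\val}$, matching the required $\tmone \ltswk\val \subst{\tm'}{\varx}{\val}$. If $\act = \ctx$, Lemma \ref{l:decompose-shift} decomposes $\tmzero$ as $\inctx{\ctx_0}{\shift{\vark}{\tm_0}}$, a control-stuck term, so by $\rctxequiv$ with the empty context $\tmone \evalcbv \inctx{\ctx_1}{\shift{\vark}{\tm_1}}$ for some $\ctx_1, \tm_1$ (up to $\alpha$-renaming of $\vark$). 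Compatibility under the evaluation context $\resetctx{\ctx}$ (which lies in the grammar of $\rctx$ because pure contexts are a subgrammar) plus preservation under the $\RRshift$-step on each side yields $\tmzero' \rctxequiv \tmone'$, while the rules $\LTScaptl$, $\LTScaptr$, and $\LTSshift$ provide the matching weak transition $\tmone \ltswk\ctx \tmone'$.

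The main obstacle lies in the context-capture case: one must extract from the evaluation of $\tmone$ a pure context $\ctx_1$ of the shape required by $\LTSshift$, verify that $\resetctx{\ctx}$ is indeed an evaluation context, and align compatibility with two $\RRshift$-reductions so that the captured context $\inctx{\ctx}{\inctx{\ctx_i}{\varx}}$ appearing on each side coincides with what the LTS rules produce. The $\val$ case requires only the minor check that $\apctx{\mtctx}{\val}$ be a closed evaluation context, which is immediate since the LTS label $\val$ is a closed value.
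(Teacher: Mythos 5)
Your proposal is correct and follows exactly the route the paper takes: the paper proves this theorem by showing that $\rctxequiv$ is an applicative bisimulation (deferring the details to the accompanying report), which is precisely your strategy of establishing compatibility of $\rctxequiv$ with evaluation contexts and preservation under reduction, then discharging the three transition cases. The case analysis you give, including the use of $\apctx{\mtctx}{\val}$ for the value case and $\resetctx{\ctx}$ for the context-capture case, matches the intended argument.
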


\begin{proof}[Sketch]
  We show that $\rctxequiv$ is an applicative bisimulation. Let
  $\tmzero \rctxequiv \tmone$. If $\tmzero \lts\tau \tmzero'$, it is easy to
  check that we still have $\tmzero' \rctxequiv \tmone$. If
  $\tmzero \lts\val \tmzero'$, then by Proposition~\ref{p:lts-redcbv-main},
  $\tmzero$ is a value and $\tmzero \iapp \val \redcbv \tmzero'$. Because
  $\tmzero \rctxequiv \tmone$, there exists $\valone$ such that
  $\tmone \clocbv \valone$, therefore $\tmone \ltswk \val \tmone'$ for $\tmone'$
  such that $\valone \iapp \val \redcbv \tmone'$. What is left to prove is that
  $\tmzero' \rctxequiv \tmone'$, i.e., for all $\rctx$,
  $\inctx \rctx {\tmzero'}$ behaves like $\inctx \rctx{\tmone'}$ (i.e., one
  evaluates to respectively a value or stuck term iff the other do so as well).
  But from $\tmzero \rctxequiv \tmone$, we get that $\inctx {\rctx'} \tmzero$
  behaves like $\inctx {\rctx'} \tmone$ for all $\rctx'$, so in particular for
  $\rctx' = \inctx \rctx {\app \mtctx \val}$. In the end,
  $\inctx \rctx {\app \tmzero \val}$ behaves like
  $\inctx \rctx {\app \tmone \val}$, but these terms reduces to respectively
  $\inctx \rctx {\tmzero'}$ and $\inctx \rctx {\tmone'}$, so we can conclude
  from there. The reasoning is the same for $\tmzero \lts\ctx \tmzero'$.
\end{proof}

\noindent As a result, the relations $\ctxequiv$, $\rctxequiv$, and $\appbisim$
coincide, which means that $\appbisim$ is complete {\wrt}$\ctxequiv$

\begin{cor}%
  \label{c:context-lemma}
  We have $\ctxequiv \mathop= \rctxequiv \mathop= \appbisim$.
\end{cor}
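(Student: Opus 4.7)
The plan is to prove the corollary by a short circular inclusion argument, chaining together the three results that have just been established. First I would recall the inclusions already available: from the definitions of $\ctxequiv$ and $\rctxequiv$ (Definitions \ref{d:ctxequivtwo} and \ref{d:rctxequiv}), since every evaluation context $\rctx$ is in particular a context $\cctx$, the bigger quantification is stronger, giving $\ctxequiv \mathop\subseteq \rctxequiv$ directly. Theorem~\ref{t:completeness-main} provides $\rctxequiv \mathop\subseteq \appbisim$, and Corollary~\ref{t:soundness-main} provides $\appbisim \mathop\subseteq \ctxequiv$.

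Then I would simply chain the three inclusions:
\[
\rctxequiv \mathop\subseteq \appbisim \mathop\subseteq \ctxequiv \mathop\subseteq \rctxequiv,
\]
and conclude that all three relations coincide. Nothing new needs to be proved: the intermediate inclusion $\ctxequiv \mathop\subseteq \rctxequiv$ is immediate from the definitions, and the other two are restatements of the previously established soundness and completeness results.

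There is essentially no obstacle here, since the real work has been done in Theorem~\ref{t:app-congruence} (which yields soundness via Howe's method) and in Theorem~\ref{t:completeness-main} (which shows that $\rctxequiv$ itself is an applicative bisimulation). The only thing worth emphasizing is the conceptual content: the equality $\ctxequiv \mathop= \rctxequiv$ is a context lemma in Milner's sense, stating that arbitrary contexts are no more discriminating than evaluation contexts, and it comes out as a free by-product of the coincidence with applicative bisimilarity rather than being proved directly.
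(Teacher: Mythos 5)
Your proposal is correct and matches the paper's own argument exactly: the paper justifies the corollary by chaining $\rctxequiv \mathop\subseteq \appbisim$ (Theorem~\ref{t:completeness-main}), $\appbisim \mathop\subseteq \ctxequiv$ (Corollary~\ref{t:soundness-main}), and $\ctxequiv \mathop\subseteq \rctxequiv$ (by definition). Nothing further is needed.
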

\noindent Indeed, we have $\rctxequiv \mathop\subseteq \appbisim$ (Theorem~\ref{t:completeness-main}), $\appbisim \mathop\subseteq \ctxequiv$ (Corollary~\ref{t:soundness-main}), and $\ctxequiv \mathop\subseteq \rctxequiv$ (by
definition).

This equality also allows us to prove that we can formulate the open extension
of $\ctxequiv$ using capturing contexts.

\begin{prop}
  We have $\tmzero \open\ctxequiv \tmone$ iff for all $\cctx$ capturing the
  variables of $\tmzero$ and $\tmone$, the following holds:
  \begin{itemize}
  \item $\inctx \cctx \tmzero \evalcbv \valzero$ iff $\inctx
    \cctx \tmone \evalcbv \valone$;
  \item $\inctx \cctx \tmzero \evalcbv \tmzero'$, where $\tmzero'$ is control
    stuck, iff $\inctx \cctx \tmone \evalcbv \tmone'$, with $\tmone'$
    control stuck as well.
  \end{itemize}
\end{prop}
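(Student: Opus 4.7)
The plan is to leverage Corollary~\ref{c:context-lemma} ($\ctxequiv \mathop{=} \appbisim$) together with the congruence of $\appbisim$ (Theorem~\ref{t:app-congruence}), and to prove the two inclusions between $\open\ctxequiv$ and the capturing-context relation separately.

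For the forward direction, assume $\tmzero \open\ctxequiv \tmone$ and fix a context $\cctx$ binding all free variables of both terms, so that $\inctx\cctx\tmzero$ and $\inctx\cctx\tmone$ are closed. The first step is to promote compatibility of $\appbisim$ on closed terms to compatibility of $\open\appbisim$ on open terms. This is where Howe's closure re-enters: $\clohbisim$ is compatible and substitutive by construction, and combining substitutivity with $\clohbisimc \mathop{=} \appbisim$ yields $\clohbisim \mathop{=} \open\appbisim$. Indeed, if $\tmzero \clohbisim \tmone$ then iterated substitutivity gives $\tmzero\subs \clohbisim \tmone\subs$ for any closing $\subs$, which since both sides are closed collapses to $\tmzero\subs \appbisim \tmone\subs$; the reverse inclusion $\open\appbisim \mathop{\subseteq} \clohbisim$ is the first defining rule of $\clohbisim$. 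Hence $\open\appbisim$ is compatible, and a straightforward induction on the structure of $\cctx$ gives $\inctx\cctx\tmzero \open\appbisim \inctx\cctx\tmone$. Since both sides are closed, this amounts to $\inctx\cctx\tmzero \appbisim \inctx\cctx\tmone$, which Corollary~\ref{t:soundness-main} upgrades to $\inctx\cctx\tmzero \ctxequiv \inctx\cctx\tmone$; instantiating the latter with the empty surrounding context yields both items of the biconditional.

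For the reverse direction, assume the capturing condition holds, and take a closing substitution $\subs \is [\val_1/x_1, \ldots, \val_n/x_n]$ together with an arbitrary closed context $\cctx'$; the goal is to show $\inctx{\cctx'}{\tmzero\subs} \evalcbv$ (to a value, respectively to a control-stuck term) iff $\inctx{\cctx'}{\tmone\subs} \evalcbv$ (in the same way). The idea is to simulate $\subs$ by $\beta_v$-redexes at the hole. Let $\cctx$ be the context obtained by filling the hole of $\cctx'$ with the iterated application of $\lam{x_1}{\ldots\lam{x_n}{\mtctx}}$ to $\val_1, \ldots, \val_n$, with the $x_i$ $\alpha$-renamed so as not to be captured by any binder of $\cctx'$. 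Then $\inctx\cctx\tmzero$ reduces in $n$ $\beta_v$-steps, at the hole position, to $\inctx{\cctx'}{\tmzero\subs}$; iterating Proposition~\ref{p:bisim-eval} together with the congruence of $\appbisim$ yields $\inctx\cctx\tmzero \appbisim \inctx{\cctx'}{\tmzero\subs}$, and symmetrically for $\tmone$. Applying the capturing hypothesis to this $\cctx$ and transferring the resulting evaluation outcomes along $\appbisim \mathop{\subseteq} \ctxequiv$ yields both required biconditionals.

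The main technical point is the compatibility lifting in the forward direction; once $\clohbisim \mathop{=} \open\appbisim$ is established the remaining reasoning is routine induction on contexts. The only delicate point in the reverse direction is $\alpha$-renaming when simulating a substitution by $\beta_v$-redexes inside a capturing context; no new reasoning specific to \shiftId{} or \resetId{} is required.
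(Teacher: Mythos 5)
Your proof is correct and follows essentially the same route as the paper's: the forward direction rests on the congruence of $\open\appbisim$ (which the paper invokes directly and you justify in more detail via Howe's closure), and the reverse direction simulates the closing substitution by wrapping the hole of the given closed context with $\beta_v$-redexes $\app{\app {\lamp{\varx_1 \ldots \varx_n} \mtctx}{\val_1}}{\ldots \val_n}$, exactly as in the paper. One small caveat: the $\alpha$-renaming of the $\varx_i$ in the inserted prefix is not only unnecessary but, taken literally, would prevent the inserted $\lambda$-abstractions from binding the free occurrences of the $\varx_i$ in $\tmzero$ and $\tmone$ (so the context would no longer capture their variables); since the prefix sits innermost at the hole and the $\val_i$ are closed, no capture problem arises and the $\varx_i$ must be kept as they are.
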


\begin{proof}
  Suppose $\tmzero \open\ctxequiv \tmone$. Then $\tmzero
  \open\appbisim \tmone$, and because $\open\appbisim$ is a
  congruence, for all $\cctx$ capturing the variables of $\tmzero$ and
  $\tmone$, we have $\inctx \cctx \tmzero \appbisim \inctx \cctx
  \tmone$. We have $\inctx \cctx \tmzero \evalcbv \valzero$ iff
  $\inctx \cctx \tmone \evalcbv \valone$ by bisimilarity definition,
  and similarly with $\inctx \cctx \tmzero \evalcbv \tmzero'$, where
  $\tmzero'$ is control stuck.

  For the reverse implication, suppose that for all $\cctx$ capturing the
  variables of $\tmzero$ and~$\tmone$, the two items of the proposition
  hold. Let $\sigma = \{ \valone/\varx_1 \ldots \val_n/\varx_n \}$ be a
  substitution closing $\tmzero$ and~$\tmone$. Let $\cctx$ be a closed
  context. We want to prove that $\inctx \cctx {\tmzero \sigma} \clocbv \val$
  for some $\val$ iff $\inctx \cctx {\tmone \sigma} \clocbv \val'$ for some
  $\val'$, and similarly for control stuck terms. But the context
  $\cctx' \is \inctx \cctx {\app{\app {\lamp{\varx_1 \ldots \varx_n}
        \mtctx}{\val_1}}{\ldots \val_n}}$ is a context capturing the variables
  of $\tmzero$ and $\tmone$, and we have
  $\inctx {\cctx'} \tmzero \clocbv \inctx \cctx {\tmzero \sigma}$ and
  $\inctx {\cctx'} \tmone \clocbv \inctx \cctx {\tmone \sigma}$. Consequently,
  $\inctx \cctx {\tmzero \sigma} \clocbv \val$ iff
  $\inctx {\cctx'} \tmzero \clocbv \val$ iff
  $\inctx {\cctx'} \tmzero \clocbv \val'$ (first item of the proposition) iff
  $\inctx \cctx {\tmone \sigma} \clocbv \val'$ for some $\val$ and $\val'$. The
  reasoning is the same for control stuck terms.
\end{proof}

The next example is used as a counter-example to show that normal-form
bisimilarity is not complete (Proposition~\ref{p:cex-dupl}): the two terms below
are not normal-form bisimilar, but they can be proved applicative bisimilar
quite easily.

\begin{prop}%
  \label{p:cex-nf-completeness}
  We have
  $\reset {\app \varx i} \open\ctxequiv \lamp y {\reset{\app \varx i}} \iapp \reset {\app
    \varx i}$.
\end{prop}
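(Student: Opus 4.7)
By Corollary~\ref{c:context-lemma}, $\open\ctxequiv$ coincides with $\open\appbisim$, so the task reduces to proving $\reset{\app \varx i} \open\appbisim \lamp y {\reset{\app \varx i}} \iapp \reset{\app \varx i}$. The only free variable on either side is $\varx$, so it suffices to show, for every closed value $\val$, that $\tmzero \is \reset{\app \val i}$ and $\tmone \is \lamp y {\reset{\app \val i}} \iapp \reset{\app \val i}$ are applicative bisimilar. Proposition~\ref{p:eval-reset} tells us that the closed term $\tmzero$ either evaluates to a closed value or diverges, and I split on these two cases.

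If $\tmzero \clocbv \val'$ for some closed value $\val'$, then because the argument position of $\tmone$ is an evaluation context (its function is already a value), every reduction of $\tmzero$ lifts to a reduction of $\tmone$; hence $\tmone \clocbv \lamp y {\reset{\app \val i}} \iapp \val' \redcbv \subst{\reset{\app \val i}} y {\val'} = \reset{\app \val i} = \tmzero$, using $y \notin \fv{\reset{\app \val i}}$. Since $\ltstau \mathop= \redcbv$ by Proposition~\ref{p:lts-redcbv-main}, and $\ltstau \mathop\subseteq \appbisim$ by Proposition~\ref{p:bisim-eval}, and $\appbisim$ is transitive, we conclude $\tmzero \appbisim \tmone$ in this case.

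If instead $\tmzero \divcbv$, then by determinism of reduction (Proposition~\ref{p:unique-decomp}), $\tmone$ also diverges, because each of its reductions sits inside the argument position and mirrors one of $\tmzero$. It remains to observe that any two closed divergent terms are applicative bisimilar: the symmetric relation $\rel \is \{(\tm, \tm') \in \cterms \times \cterms \mid \tm \divcbv \text{ and } \tm' \divcbv\}$ is an applicative bisimulation. Indeed, by Lemma~\ref{l:decompose-shift} and the fact that $\LTSval$ only fires on values, a divergent term can perform neither a $\lts\val$- nor a $\lts\ctx$-transition; and every $\lts\tau$-transition it makes leads again to a divergent term. Hence $\tmzero \rel \tmone$ yields $\tmzero \appbisim \tmone$.

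The only step requiring genuine care is justifying that $\rel$ is a bisimulation in the divergent case; the rest is elementary reduction accounting on top of the soundness/completeness result of this section.
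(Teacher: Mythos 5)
Your proof is correct and follows essentially the same route as the paper's: both reduce the claim to applicative bisimilarity and pivot on Proposition~\ref{p:eval-reset}, which guarantees that the closed delimited term $\reset{\app \val i}$ either diverges or evaluates to a closed value. The paper packages this as a single explicit big-step bisimulation $\{(\reset \tm,\, \lamp y {\reset \tm} \iapp \reset \tm) \mid \tm \in \cterms\} \cup \{(\tm,\tm) \mid \tm \in \cterms\}$, whereas you split into a convergent case (handled by observing $\tmone \clocbv \tmzero$ and invoking Proposition~\ref{p:bisim-eval} with transitivity) and a divergent case (handled by a generic ``all closed divergent terms are bisimilar'' relation); the underlying content is the same.
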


\begin{proof}
  We prove that
  $\rel {\mathord{\is}} \{(\reset \tm, \lamp y {\reset \tm} \iapp \reset \tm)
  \mid \tm \in \cterms, y \notin \fv \tm \} \mathrel\cup \{(\tm, \tm) \mid \tm
  \in \cterms \}$ is a big-step bisimulation. The term $\reset \tm$ can either
  diverge or reduce to a value (according to Proposition~\ref{p:eval-reset}). If
  it diverges, then both $\reset \tm$ and
  $\lamp y {\reset \tm} \iapp \reset \tm$ diverge, otherwise, they both evaluate
  to the same value $\val$. For all $\val'$, we, therefore, have
  $\reset \tm \ltswk{\val'} \tm'$ iff
  $\lamp y {\reset \tm} \iapp \reset \tm \ltswk{\val'} \tm'$, and
  $\tm' \rel \tm'$ holds, as wished.
\end{proof}

\subsection{Proving the Axioms}%
\label{ss:axioms-app}

We show how to prove Kameyama and Hasegawa's axioms
(Section~\ref{ss:cps-equivalence}) except for $\AXshiftelim$ using applicative
bisimulation. In the following propositions, we assume the terms to be closed,
since the proofs for open terms can be deduced directly from the results for
closed terms. First, note that the \AXbeta, \AXresetshift, and \AXresetval
axioms are direct consequences of Proposition~\ref{p:bisim-eval}.

\begin{prop}[\AXetav axiom]
  If $\varx \notin \fv \val$, then $\lam \varx {\app \val \varx}
  \appbisim \val$.
\end{prop}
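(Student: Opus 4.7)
The plan is to exhibit a small candidate bisimulation and check the bisimulation game directly, using the fact that $\appbisim$ contains internal reduction.

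Since we are in the closed-term setting, $\val$ is a closed value, so $\val = \lam y \tm$ for some $y$ and $\tm$. Both $\lam \varx {\app \val \varx}$ and $\val$ are then values, and the only visible transitions from a value have the form $\lts{\valp}$ for some closed value $\valp$. Computing these transitions with rule $\LTSval$ gives
\[
  \lam \varx {\app \val \varx} \lts{\valp} \app \val \valp
  \qquad\text{and}\qquad
  \val = \lam y \tm \lts{\valp} \subst \tm y \valp.
\]
Moreover $\app \val \valp = \app{\lamp y \tm}\valp \lts\tau \subst \tm y \valp$, so by Proposition~\ref{p:bisim-eval} we have $\app \val \valp \appbisim \subst \tm y \valp$.

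With this in hand I would define
\[
  \rel \is \{(\lam \varx {\app \val \varx}, \val) \mid \val \in \cvalues,\ \varx \notin \fv \val\} \cup \mathord{\appbisim}
\]
and verify that $\rel$ is an applicative bisimulation. For pairs already in $\appbisim$, the game holds by definition. For a pair $(\lam \varx {\app \val \varx}, \val)$, the only transitions are the two displayed above, and we need to see that the resulting pair $(\app \val \valp, \subst \tm y \valp)$ lies in $\rel$: this is immediate because these two terms are related by $\appbisim$, which is included in $\rel$. The symmetric direction is handled identically.

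There is essentially no obstacle here: the argument reduces to unfolding the LTS rules for values and invoking Proposition~\ref{p:bisim-eval} to absorb a single $\beta_v$-step into bisimilarity. The only mild subtlety is remembering that since $\val$ is closed it must be a $\lambda$-abstraction (values cannot be free variables in the closed setting), which is what makes the transition $\val \lts{\valp} \subst \tm y \valp$ available.
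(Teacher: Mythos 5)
Your proof is correct and follows essentially the same route as the paper: the same candidate relation $\{(\lam \varx {\app \val \varx}, \val)\} \cup \mathord{\appbisim}$, the same observation that the only transitions are via $\LTSval$, and the same use of Proposition~\ref{p:bisim-eval} to absorb the single $\beta_v$-step relating $\app \val \valp$ to $\subst \tm y \valp$.
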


\begin{proof}
  We prove that $\rel \is \{(\lam \varx {\app {\lamp y \tm} \varx},
    \lam y \tm) \mmid \tm \in \terms, \fv \tm \subseteq \{ y\} \} \mathrel{\cup}
  \mathord{\appbisim}$ is a bisimulation. To this end, we have to check that
  $\lam \varx {\app {\lamp y \tm} \varx} \lts\valzero \app {\lamp y
    \tm} \valzero$ is matched by $\lam y \tm \lts\valzero \subst \tm y
  \valzero$, \ie that $\app {\lamp y \tm} \valzero \rel \subst \tm y
  \valzero$ holds for all $\valzero$. We have $\app {\lamp y \tm}
  \valzero \lts\tau \subst \tm y \valzero$, and because $\lts\tau
  \mathord{\subseteq} \appbisim \mathord{\subseteq} \rel$, we have the
  required result.
\end{proof}

\begin{prop}[\AXshiftreset axiom]%
  \label{p:reset-reset}
  We have $\shift \vark {\reset \tm} \appbisim \shift \vark \tm$.
\end{prop}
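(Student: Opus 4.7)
The plan is to exhibit an explicit applicative bisimulation that contains the pair $(\shift \vark {\reset \tm}, \shift \vark \tm)$, leveraging the double-\resetId{} equivalence already proved in Example~\ref{ex:reset-reset-app}. First, I would observe that both terms are control-stuck, so by inspection of Figure~\ref{f:lts} they admit no $\ltstau$ or $\lts\val$ transitions; the only transitions are produced by rule $\LTSshift$. For any closed pure context $\ctx$ with $\varx \notin \fv\ctx$, writing $\sigma \is \{\lamp \varx {\reset{\inctx \ctx \varx}} / \vark\}$, we have
\[
\shift \vark \tm \lts\ctx \reset{\tm\sigma} \quad\text{and}\quad \shift \vark {\reset \tm} \lts\ctx \reset{(\reset \tm)\sigma} = \reset{\reset{\tm\sigma}},
\]
where the last equality uses that substitution commutes with $\rawreset$ and that $\vark \notin \fv{\lamp \varx {\reset{\inctx \ctx \varx}}}$ up to $\alpha$-renaming.

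Next, I would define the candidate relation
\[
\rel \is \{(\shift \vark \tm, \shift \vark {\reset \tm}), (\shift \vark {\reset \tm}, \shift \vark \tm) \mid \fv \tm \subseteq \{\vark\}\} \cup \relS,
\]
where $\relS$ is the symmetric closure of the bisimulation from Example~\ref{ex:reset-reset-app}, \ie the relation containing all pairs $(\reset {\tm'}, \reset{\reset{\tm'}})$ and $(\reset{\reset{\tm'}}, \reset{\tm'})$ for $\tm' \in \cterms$, plus the identity on closed terms. The bisimulation game for the new pairs is immediate: both terms are stuck (so there are no $\ltstau$ or $\lts\val$ obligations), and the matching $\lts\ctx$ transitions lead to $\reset{\tm\sigma}$ and $\reset{\reset{\tm\sigma}}$, which belong to $\relS \subseteq \rel$. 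The game for pairs in $\relS$ has already been checked (up to symmetry) in Example~\ref{ex:reset-reset-app}, so $\rel$ is an applicative bisimulation and in particular $\shift \vark {\reset \tm} \appbisim \shift \vark \tm$.

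The only subtlety I anticipate is bookkeeping of closedness: the statement is about arbitrary (not necessarily closed) terms, so I would first argue the case where $\fv \tm \subseteq \{\vark\}$, and then lift the result to open $\tm$ either via the open extension (Definition~\ref{d:open-extension}) together with congruence of $\appbisim$ (Theorem~\ref{t:app-congruence}), or by noting that $\subst{(\shift \vark {\reset \tm})} {} \sigma_0 = \shift \vark {\reset{\tm \sigma_0}}$ for any closing substitution $\sigma_0$ (avoiding $\vark$), which reduces the open case to the closed one treated above. No step looks like a real obstacle; the main content is the observation that the transition target under $\LTSshift$ for $\shift \vark {\reset \tm}$ differs from that for $\shift \vark \tm$ by exactly one outer \resetId{}, which is precisely what Example~\ref{ex:reset-reset-app} absorbs.
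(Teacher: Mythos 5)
Your proposal is correct and follows essentially the same route as the paper: both terms are control-stuck, the matching $\lts\ctx$ transitions produce $\reset{\tm\sigma}$ and $\reset{\reset{\tm\sigma}}$, and the double-\resetId{} equivalence of Example~\ref{ex:reset-reset-app} closes the bisimulation game. The extra bookkeeping you add (explicit candidate relation, open-term lifting) is harmless but not needed, since the paper works with closed terms in this section and invokes the example directly.
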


\begin{proof}
  We have $\shift \vark {\reset \tm} \lts\ctx \reset {\reset {\subst \tm \vark
      {\lam \varx {\reset{\inctx \ctx \varx}}}}}$ and $\shift \vark \tm \lts\ctx
  \reset {\subst \tm \vark {\lam \varx {\reset{\inctx \ctx \varx}}}}$ for all
  $\ctx$. We obtain terms of the form $\reset{\reset {\tm'}}$ and
  $\reset{\tm'}$, and we have proved in Example~\ref{ex:reset-reset-app} that
  $\reset{\reset {\tm'}} \appbisim \reset{\tm'}$ holds for all $\tm'$.
\end{proof}

\begin{prop}[\AXresetlift axiom]%
  \label{p:resetlift-app}
  We have $\reset {\app {\lamp \varx \tmzero}{\reset \tmone}} \appbisim \app
  {\lamp \varx {\reset \tmzero}}{\reset \tmone}$.
\end{prop}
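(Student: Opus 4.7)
The plan is to case-split on the behavior of the closed term $\reset \tmone$ using Proposition~\ref{p:eval-reset}: either $\reset \tmone \evalcbv \valone$ for some closed value $\valone$, or $\reset \tmone$ diverges.

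In the first case, I would trace the reductions on each side. Since $\reset \tmone$ sits in the evaluation context $\reset{\app{\lamp \varx \tmzero}{\mtctx}}$ on the left and in the evaluation context $\app{\lamp \varx {\reset \tmzero}}{\mtctx}$ on the right, and $\redcbv$ is compatible with evaluation contexts, both sides $\cloltstau$-reduce to the common term $\reset{\subst \tmzero \varx \valone}$ (after one final $\beta_v$-step on each side). By Proposition~\ref{p:bisim-eval}, which gives $\tm \appbisim \tm'$ whenever $\tm \ltstau \tm'$, together with the reflexivity and transitivity of $\appbisim$, both sides are bisimilar to this common reduct, hence to each other.

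In the second case, both sides diverge: any reduction of either side must first evaluate $\reset \tmone$, which never reaches a normal form. Since the calculus is deterministic, neither side admits any weak non-$\tau$ transition, so the relation $\{(\tm, \tm') \mid \tm, \tm' \in \cterms,\ \tm \divcbv,\ \tm' \divcbv\}$ is trivially a big-step bisimulation (the bisimulation conditions hold vacuously), and the inclusion of big-step bisimulations in $\appbisim$ concludes. The two cases together yield the result.

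The main obstacle is minor: the proof is essentially a consequence of determinism together with Propositions~\ref{p:eval-reset} and~\ref{p:bisim-eval}. The only subtle point is checking that the reductions of $\reset \tmone$ inside the two different evaluation contexts indeed carry through to $\valone$ without interruption, which follows from compatibility of $\redcbv$ with $\rctx$-shaped contexts as built into the reduction semantics of Section~\ref{ss:reduction}. Note that this proof strategy, relying on convergence of $\reset \tmone$ to a value, is exactly where the closedness assumption is used, and it mirrors how the analogous axiom is handled in subsequent sections.
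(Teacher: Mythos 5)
Your proof is correct and follows essentially the same route as the paper's: both arguments hinge on Proposition~\ref{p:eval-reset} to rule out $\reset\tmone$ evaluating to a control-stuck term, observe that when $\reset\tmone$ converges to a value both sides reach the common reduct $\reset{\subst \tmzero \varx \valone}$, and dispose of the diverging case by noting that neither side then admits any non-$\tau$ weak transition. The only difference is cosmetic: you make the divergence case and the appeal to transitivity of $\appbisim$ explicit where the paper compresses them into "it is easy to conclude."
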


\begin{proof}
  A transition $\reset {\app {\lamp \varx \tmzero}{\reset \tmone}} \ltswk\act
  \tm'$ (with $\act \neq \tau$) is possible only if $\reset \tmone$ evaluates to
  some value $\val$ (evaluation to a control stuck terms is not possible
  according to Proposition~\ref{p:eval-reset}). In this case, we have $\reset
  {\app {\lamp \varx \tmzero}{\reset \tmone}} \ltswk\tau \reset {\app {\lamp
      \varx \tmzero} \val} \lts\tau \reset {\subst \tmzero \varx \val}$ and
  $\app {\lamp \varx {\reset \tmzero}}{\reset \tmone} \ltswk\tau \reset {\subst
    \tmzero \varx \val}$. Therefore, we have $\reset {\app {\lamp \varx
      \tmzero}{\reset \tmone}} \ltswk\act \tm'$ (with $\act \neq \tau$) iff
  $\app {\lamp \varx {\reset \tmzero}}{\reset \tmone} \ltswk\act \tm'$. From
  there, it is easy to conclude.
\end{proof}

\begin{prop}[\AXbetaomega axiom]%
  \label{p:omega-app}
  If $\varx \notin \fv \ctx$, then $\app {\lamp \varx {\inctx \ctx \varx}} \tm
  \appbisim \inctx \ctx \tm$.
\end{prop}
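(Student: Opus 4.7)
The plan is to exhibit an applicative bisimulation containing the pair $(\app {\lamp \varx {\inctx \ctx \varx}} \tm, \inctx \ctx \tm)$, in the spirit of Proposition~\ref{p:resetlift-app} and Example~\ref{ex:reset-reset-app}. The core observation is that the two sides follow the same evaluation trajectory of $\tm$: if $\tm \lts\tau \tm'$ then by rule $\LTScompr$ the left steps to $\app {\lamp \varx {\inctx \ctx \varx}} \tm'$, while a routine induction on the pure context $\ctx$ (using the compositional rules for $\lts\tau$) gives $\inctx \ctx \tm \lts\tau \inctx \ctx {\tm'}$, recovering a pair of the same shape. So the candidate only has to account for what happens when $\tm$ reaches a normal form.

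I would case-analyze on this normal form. If $\tm$ evaluates to a value $\val$, the left performs one additional $\LTSbeta$ step to $\subst{\inctx \ctx \varx}{\varx}{\val} = \inctx \ctx \val$ (using $\varx \notin \fv \ctx$), which the right already equals; subsequent transitions are then matched by the identity pairs that I also include in the candidate. If instead $\tm$ evaluates to a control-stuck term $\inctx{\ctx'}{\shift \vark \tm_0}$, the left becomes $\inctx{\vctx{(\lamp \varx {\inctx \ctx \varx})}{\ctx'}}{\shift \vark \tm_0}$ and the right becomes $\inctx{\inctx \ctx {\ctx'}}{\shift \vark \tm_0}$, both still stuck and awaiting a probe $\ctx_0$.

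The main obstacle is this control-stuck case. When probed with $\ctx_0$, the two captured continuations substituted for $\vark$ are
\[
\val_L = \lam y {\reset{\inctx{\ctx_0}{\app {\lamp \varx {\inctx \ctx \varx}}{\inctx{\ctx'}{y}}}}} \quad \text{and} \quad \val_R = \lam y {\reset{\inctx{\ctx_0}{\inctx \ctx {\inctx{\ctx'}{y}}}}},
\]
which differ by exactly the axiom we are trying to prove, now instantiated at $\tm := \inctx{\ctx'}{y}$. To close the bisimulation game, I would therefore enlarge the candidate to include all pairs of the form $(\reset{\subst{\tm_1}{\vark}{\val_L}}, \reset{\subst{\tm_1}{\vark}{\val_R}})$ obtained by substituting such paired continuations into an arbitrary delimited term $\reset{\tm_1}$, together with their $\tau$- and capture-descendants. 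Verifying bisimulation on this enlarged family is the delicate part: each transition on $\val_L$ (resp.\ $\val_R$) exposes exactly the same pattern one level deeper, so the family is stable under the game. Alternatively, since $\appbisim$ is already known to be a congruence by Howe's method (Theorem~\ref{t:app-congruence}), one can avoid this self-reference by noting that the inner pair $(\app {\lamp \varx {\inctx \ctx \varx}}{\inctx{\ctx'}{y}}, \inctx \ctx {\inctx{\ctx'}{y}})$ is itself an instance of the proposition and propagate the resulting bisimilarity outward by compatibility through the enclosing reset, the pure context $\ctx_0$, and the substitution into $\tm_0$.
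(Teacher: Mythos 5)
Your main route is the paper's route: reduce everything to the control-stuck case and exhibit an explicit candidate closed under captures. The gap is in the sentence ``each transition on $\val_L$ (resp.\ $\val_R$) exposes exactly the same pattern one level deeper, so the family is stable under the game.'' The family as you describe it --- pairs $(\reset{\subst{\tm_1}{\vark}{\val_L}}, \reset{\subst{\tm_1}{\vark}{\val_R}})$ for a \emph{single} continuation variable and a \emph{single} pair of continuations --- is not stable. When $\reset{\tm_1} = \inctx \rctx {\app \vark \val}$, the ensuing $\beta$-step leaves an extra outer evaluation context wrapped around the two sides, and when the continuation's argument is itself control-stuck, the next capture introduces a \emph{second} continuation variable whose two substituends are built from fresh contexts that may mention the previously introduced continuation variables. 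The descendants therefore accumulate both an unbounded stack of evaluation-context layers and an unbounded sequence of pairwise-distinct continuation substitutions; this is exactly why the paper's candidate is indexed by $\vark_1,\dots,\vark_n$ with substitution families $\sigma^{\vect E}_i$ and $\delta^{\vect E}_i$ subject to the scoping condition ($\star$), combined with the towers $s^{t,\vect{E},\vect{F}}_i$ and $u^{t,\vect{E},\vect{F}}_i$. Writing down that two-dimensional closure and checking it is the entire content of the proof, not a routine afterthought.

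The proposed shortcut via congruence is circular. The inner pair $(\app {\lamp \varx {\inctx \ctx \varx}}{\inctx{\ctx'}{y}},\, \inctx \ctx {\inctx{\ctx'}{y}})$ is an instance of the very proposition being proved, at an argument that is in no sense smaller --- one more capture again produces an arbitrary control-stuck argument --- so ``propagating the resulting bisimilarity outward by compatibility'' presupposes the conclusion. Theorem~\ref{t:app-congruence} lets you push an \emph{already established} bisimilarity through contexts and substitutions; it does not discharge a fresh instance of the goal, and there is no well-founded measure on which to induct. The legitimate way to get this self-referential closure for free is to change proof technique, which is what the paper does: see Propositions~\ref{p:omega-env} and~\ref{p:omega-nf}, where bisimulation up to (related) contexts absorbs precisely this pattern.
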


\begin{proof}[Sketch] We first give some intuitions on why the proof of this
  result is hard with applicative bisimulation. The difficult case is when $\tm$
  in the initial terms $\app {\lamp \varx {\inctx \ctx \varx}} \tm$ and
  $\inctx \ctx \tm$ is a control stuck term
  $\inctx \ctxzero {\shift \vark {\tm'}}$. Then we have the following
  transitions:
  \begin{align*}
    \app {\lamp \varx {\inctx \ctx \varx}} \tm & \lts\ctxone \reset{\subst
      {\tm'} \vark {\lam y {\reset{\inctx{\ctxone}{\app {\lamp \varx {\inctx \ctx
            \varx}}{\inctx {\ctxzero} y}}}}}} \\
    \inctx \ctx \tm & \lts\ctxone \reset{\subst{\tm'} \vark {\lam y
        {\reset{\inctx{\ctxone}{\inctx \ctx {\inctx {\ctxzero} y}}}}}}
  \end{align*}
  We obtain terms of the form $\reset{\tm'} \subs$ and $\reset{\tm'} \subs'$
  (where $\subs$ and $\subs'$ are the above substitutions). We now have to
  consider the transitions from these terms, and the interesting case is when
  $\reset{\tm'} = \inctx \rctx {\app \vark \val}$.
  \begin{align*}
    \reset{\tm'} \subs & \lts\tau \inctx {\rctx
      \subs}{\reset{\inctx{\ctxone}{\app {\lamp \varx {\inctx \ctx
              \varx}}{\inctx {\ctxzero}{\val \subs}}}}} \is \tmzero \\
    \reset{\tm'} \subs' & \lts\tau \inctx {\rctx
      \subs'}{\reset{\inctx{\ctxone}{\inctx \ctx {\inctx
            {\ctxzero}{\val \subs'}}}}} \is \tmone
  \end{align*}
  We obtain terms that are similar to the initial terms $\app {\lamp \varx
    {\inctx \ctx \varx}} \tm$ and $\inctx \ctx \tm$, except for the extra
  contexts $\rctx$ and $\ctxone$, and the substitutions $\subs$ and
  $\subs'$. Again, the interesting cases are when $\inctx \ctxzero \val$ is
  either a control stuck term, or a term of the form $\inctx {\rctx'}{\app \vark
    {\val'}}$. Looking at these cases, we see that the bisimulation we have to
  define has to relate terms similar to $\tmzero$ and $\tmone$, except with an
  arbitrary number of contexts $\rctx'$ and substitutions similar to $\subs$ and
  $\subs'$.

  Formally, given a sequence of (continuation) variables
  $k_1,\dots,k_n$ and a sequence $\vect {\ctx_k}$ of triples of contexts
  $\ctx_i, \ctx_i', \ctx_i''$ such that
  \begin{equation}
    \label{eqn:fv}
  \fv{\ctx_i}\cup\fv{\ctx_i'}\cup\fv{\ctx_i''}
  \subseteq
  \{k_1,\dots,k_{i-1}\}
  \quad\quad \text{for } 1 \leq i \leq n
  \tag{$\star$}
  \end{equation}
  we define two families of sequences of
  substitutions as follows:
  \[
  \begin{array}{rcl}
    \sigma^{\vect E}_i
    & = &
    \subst{}
          {k_i}
          {\lam{x}{\reset{\inctx{\ctx_i''}{\app{(\lam{y}{\inctx{\ctx_i}{y}})}{\inctx{\ctx_i'}{x}}}}}}
    \\
    \delta^{\vect E}_i
    & = &
    \subst{}
          {k_i}
          {\lam{x}{\reset{\inctx{\ctx_i''}{\inctx{\ctx_i}{\inctx{\ctx_i'}{x}}}}}}
  \end{array}
  \]

  Additionally, given a term $t$, a sequence of pure contexts
  $\vect{E} = E_1,\dots,E_m$ and a sequence of evaluation contexts
  $\vect{F} = F_1,\dots,F_m$, we inductively define two sequences of
  terms, $s_0,\dots,s_m$ and $u_0,\dots,u_m$, as follows:
  \[
  \begin{array}{ll}
    \begin{array}{rcl}
      s^{t,\vect{E},\vect{F}}_0 & = & t
      \\
      s^{t,\vect{E},\vect{F}}_{i+1} & = &
      \inctx{F_{i+1}}{\app{\lamp{x}{\inctx{E_{i+1}}{x}}}{s_i}}
      \\
    \end{array}
    & 
    \begin{array}{rcl}
      u^{t,\vect{E},\vect{F}}_0 & = & t
      \\
      u^{t,\vect{E},\vect{F}}_{i+1} & = &
      \inctx{F_{i+1}}{\inctx{E_{i+1}}{u_i}}
      \\
    \end{array}
  \end{array}
  \]
  Then the following relation $\rel$ is a bisimulation:
  \begin{eqnarray*}
    \rel &=&
    \{(s^{t,\vect{E},\vect{F}}_i\sigma^{\vect{E_k}}_n\dots\sigma^{\vect{E_k}}_1,
    u^{t,\vect{E},\vect{F}}_i\delta^{\vect{E_k}}_n\dots\delta^{\vect{E_k}}_1)
    \mid
    \\ & & \hspace{3cm} \vect{k}=k_1,\dots,k_n, n \geq 0,
    \\ & & \hspace{3cm} \vect{E_k} \text{ satisfies ($\star$)},
    \\ & & \hspace{3cm} \vect{E}=E_1,\dots,E_m,
    \vect{F}=F_1,\dots,F_m, m \geq 0,
    \\ & & \hspace{3cm}
    \fv{\tm}\cup\fv{\vect{E}}\cup\fv{\vect{F}}\subseteq\{\vark_1,\ldots,\vark_n\},
    \\ & & \hspace{3cm} 0 \leq i \leq m
    \}
  \end{eqnarray*}
  We omit the complete bisimulation proof, as we provide much simpler proofs of
  this result with environmental or normal-form bisimilarities (see
  Propositions~\ref{p:omega-env} and~\ref{p:omega-nf}).
\end{proof}

\subsection{Conclusion}

We define an applicative bisimilarity for the relaxed semantics of $\lamshift$
which extends the $\lambda$-calculus definition with a transition for
control-stuck terms. Soundness can be proved by adapting Howe's method to this
extra transition, and we can also show completeness {\wrt}$\ctxequivp$ as well
as a context lemma. However, we do not know how to extend these results to the
original semantics of $\lamshift$. While we can think of an applicative
bisimilarity for the original semantics by adapting the environmental
bisimilarity we define in Section~\ref{ss:env-LTS-original}, we do not know how
to prove it sound with Howe's technique. Roughly, Howe's technique fails because
it requires the semantics to be preserved by all evaluation contexts, while the
original semantics is preserved only by contexts with an outermost reset.

Another issue is that equivalence proofs with applicative bisimulation can be
difficult, as witnessed by Example~\ref{ex:fixed-point-app} or
Proposition~\ref{p:omega-app}. We believe it is due to the lack of powerful
up-to techniques, in particular the absence of bisimulation up to context, which
reveals to be problematic in a calculus where context capture and manipulation
is part of the semantics. As a result, applicative bisimulation seems suitable
only for simple examples, such as Proposition~\ref{p:cex-nf-completeness}.


\section{Environmental Bisimilarity}%
\label{s:env}
Like applicative bisimilarity, environmental bisimilarity reduces closed terms
to normal forms, which are then compared using some particular contexts (e.g.,
$\lambda$-abstractions are tested by passing them arguments). However, the
testing contexts are not arbitrary, but built from an environment, which
represents the knowledge acquired so far by an outside observer. The idea
originally comes from languages with strict isolation or data
abstraction~\cite{Sumii-Pierce:TCS07,Sumii-Pierce:JACM07,Koutavas-Wand:ESOP06,Koutavas-Wand:POPL06},
where environments are used to handle information hiding.
The term ``environmental bisimulation'' has then been introduced
in~\cite{Sangiorgi-al:LICS07,Sangiorgi-al:TOPLAS11}, and such a
bisimilarity has been since defined in various higher-order languages
(see, e.g.,~\cite{Sato-Sumii:APLAS09,Sumii:TCS10,Pierard-Sumii:LICS12}), including
the $\lambda$-calculus with first-class abortive
continuations~\cite{Yachi-Sumii:APLAS16}. Environmental bisimilarity
usually characterizes contextual equivalence, but is harder to use
than applicative bisimilarity to prove that two given terms are
equivalent. Nonetheless, one can define powerful up-to
techniques~\cite{Sangiorgi-al:TOPLAS11} to simplify the equivalence
proofs and deal with this extra difficulty. Besides, the authors
of~\cite{Koutavas-al:MFPS11} argue that the additional complexity is
necessary to handle more realistic features, like local state or
exceptions.

Recently, the notion of environmental bisimilarity has been cast in a framework
in which soundness proofs for the bisimilarity and its up-to techniques are
factorized~\cite{Madiot-al:CONCUR14,Madiot:PhD}. We extended that framework to
allow for more powerful up-to techniques that are better suited for
delimited-control operators~\cite{Aristizabal-al:LMCS17}. We informally explain
in Section~\ref{ss:env-informal} why we need such an extension in $\lamshift$,
before presenting the extended framework in Section~\ref{ss:diacritical} and the
definition of the bisimilarity itself, first for the relaxed semantics in
Section~\ref{ss:env-LTS} and then the original one in
Section~\ref{ss:env-LTS-original}. We improve the bisimilarities with up-to
techniques (Section~\ref{ss:env-upto}) that we apply to examples
(Section~\ref{ss:env-examples}), and in particular to the Kameyama and Hasegawa
axiomatization (Section~\ref{ss:axioms-env}).

An older work~\cite{Biernacki-Lenglet:APLAS13} gives definitions of
environmental bisimulations that are now completely obsolete. We revisit results
originally published in a previous article~\cite{Aristizabal-al:LMCS17}, where
the focus is more on a multi-prompted calculus. More precisely,
Section~\ref{ss:env-informal} is rewritten for $\lamshift$ from~\cite[Section
4.1]{Aristizabal-al:LMCS17} Section~\ref{ss:diacritical} covers~\cite[Section
4.3]{Aristizabal-al:LMCS17}, and
Sections~\ref{ss:env-LTS},~\ref{ss:env-LTS-original}, and~\ref{ss:env-upto}
provide more details that~\cite[Section 5.2]{Aristizabal-al:LMCS17}. The
examples of Sections~\ref{ss:env-examples} and~\ref{ss:axioms-env} are a
contribution of the present article.

\subsection{Informal Presentation}%
\label{ss:env-informal}

In the original formulation of environmental
bisimulation~\cite{Sangiorgi-al:TOPLAS11}, two terms~$\tmzero$ and
$\tmone$ are compared under some environment $\mathcal E$, which
represents the knowledge of an external observer about $\tmzero$ and
$\tmone$. The definition of the bisimulation enforces some conditions
on~$\tmzero$ and $\tmone$ as well as on $\mathcal E$. In Madiot et
al.'s framework~\cite{Madiot-al:CONCUR14,Madiot:PhD}, the conditions
on~$\tmzero$, $\tmone$, and $\mathcal E$ are expressed using a LTS
between \emph{states} of the form $\state \env \tmzero$ and $\state
\envd \tmone$ as well as between states of the form $\env$ and
$\envd$, where $\env$ and $\envd$ are finite sequences of values
corresponding to the first and second projection of the environment
$\mathcal E$, respectively.  Transitions from states of the form
$\state \env \tmzero$ express conditions on $\tmzero$, while
transitions from states of the form $\env$ explain how we compare
environments. Henceforth, if $m$ ranges over a sequence of entities,
we write $m_i$ for the~$i^{\text{ th}}$ element of the sequence.

For the relaxed semantics of $\lamshift$, one could think of extending
the LTS for the $\lambda$-calculus~\cite{Madiot:PhD} (the first three
rules below) with an extra transition for testing stuck terms.
\begin{mathpar}
  \inferrule{\tmzero \redcbv \tmone}
  {\state \env \tmzero \lts\tau \state \env \tmone}
  \and
  \inferrule{ }
  { \state \env \val \ltsval (\env, \val)}
  \and
   \inferrule{\env_i = \lam x \tm}
  {\env \ltslam i {\vmhc} \state \env {\subst \tm x
    {\inctxe{\vmhc}{\env}}}}
  \and
  \inferrule{\tmzero \mbox{ is control-stuck} \\ \reset {\inctx \emhc {\tmzero; \env}} \redcbv
    \tmone}
  {\state \env \tmzero \ltsstuck \emhc \state \env \tmone}
\end{mathpar}

\noindent We use \emph{multi-hole contexts} $\vmhc$ and $\emhc$ to
build respectively values and pure evaluation contexts from an
environment $\env$; such contexts contain numbered holes $\mtctx_i$ to
be filled with~$\env_i$. For example, $\inctx {(\lam \varx {(\mtctx_1
    \iapp \mtctx_3) \iapp (\varx \iapp \mtctx_3)})} \env = \lam \varx
{(\env_1 \iapp \env_3) \iapp (\varx \iapp \env_3)}$, assuming $\env$
is at least of size 3. Internal steps $\lts\tau$ correspond to
reduction steps. The transition $\ltsval$ turns a state $\state \env
\val$ into a sequence of values; when we are done evaluating a term,
we can add the newly acquired knowledge to the
environment. Environments are tested with the transition $\ltslam i
{\vmhc}$, which means that the $i^{\text{th}}$ element of $\env$ is
tested by applying it to an argument built using
$\vmhc$. Finally,~$\lts\emhc$ tests control-stuck terms by putting
them in a context built from $\emhc$ to trigger the capture, where the
notation $\inctx \emhc {\tmzero; \env}$ means that the hole in the
evaluation position in $\emhc$ is plugged with $\tmzero$ while the
numbered holes are plugged with $\env_i$.

The transitions $\lts\tau$, $\ltsstuck \emhc$, and $\ltslam i \vmhc$ correspond
to the transitions $\lts\tau$, $\lts\ctx$, and $\lts\val$ defining applicative
bisimulation, except the testing arguments are built from
the environment. As a result, plain environmental bisimulation proofs are harder
than applicative ones, as witnessed by the following example.

\begin{exa}[Turing's combinator]%
  \label{ex:fixed-point-env}
  Following Example~\ref{ex:fixed-point-app}, we want to prove that Turing's
  combinator $\Turing$ is bisimilar to its variant
  $\Turingshift \is \reset {\app \theta {\shift \vark {\app \vark \vark}}}$.
  We remind that
  \begin{align*}
    \Turing & \evalcbv \lam y {\app y {\lamp z {\app {\app {\app \theta
              \theta} y} z}}} \is \valzero \mbox{, and} \\
    \Turingshift & \evalcbv \lam y {\app y {\lamp z {\app {\app {\app {\lamp
                \varx {\reset {\app \theta \varx}}}{\lamp \varx {\reset {\app
                    \theta \varx}}}} y} z}}} \is \valone.
  \end{align*}
  Let $\env \is (\valzero)$ and $\envd \is (\valone)$; then
  \begin{align*}
    \env & \ltslam 1 \vmhc \state \env {\app {\inctx \vmhc \env}{\lamp z {\app
        {\app {\app \theta \theta} {\inctx \vmhc \env}} z}}} \mbox{ and} \\
    \envd & \ltslam 1 \vmhc \state \envd {\app {\inctx \vmhc \envd}{\lamp z {\app {\app {\app
            {\lamp \varx {\reset {\app \theta \varx}}}{\lamp \varx {\reset {\app
            \theta \varx}}}} {\inctx \vmhc \envd }} z}}}.
  \end{align*}
  Because we have different terms $\inctx \vmhc \env$ and $\inctx \vmhc \envd$
  and not a single value $\val$, the case analysis suggested in
  Example~\ref{ex:fixed-point-app} becomes much more complex, as we have to take
  into account how $\vmhc$ uses $\env$ or~$\envd$.
\end{exa}

Up-to techniques are what makes environmental bisimulation tractable,
in particular bisimulation up to context, which allows to factor out a
common context: when comparing states of the form $\state \env {\inctx
  \mhc \env}$ and $\state \envd {\inctx \mhc \envd}$, where $\mhc$ is
a multi-hole context, we can forget about $\mhc$ and focus on $\env$
and $\envd$. Similarly for $\state \env {\inctx \fmhc {\tmzero;
    \env}}$ and $\state \envd {\inctx \fmhc {\tmone; \envd}}$, where
$\fmhc$ is a multi-hole evaluation context, we can consider only
$\state \env \tmzero$ and $\state \envd \tmone$; the restriction to
evaluation contexts is necessary for the technique to be sound, as
pointed out by Madiot~\cite[page 111]{Madiot:PhD}.  Bisimulation up to
context is unfortunately not powerful enough to be useful
in~$\lamshift$. Suppose we want to prove a variant of the
$\beta_\Omega$ axiom, $\reset{\app {\lamp x {\inctx \ctx x}} \tm}$
equivalent to~$\reset{\inctx \ctx \tm}$ if $x \notin \fv \ctx$. If
$\tm = \inctx {\ctx'} {(\shift \vark {\app \vark \vark}) \iapp \val}$
for some $\ctx'$ and $\val$, then
\begin{align*}
  \state \emptyset {\reset{(\lam x {\inctx \ctx x}) \iapp \inctx {\ctx'}{(\shift
  \vark {\app \vark \vark}) \iapp \val}}} & \ltswk\tau \state \emptyset {\reset{\reset{\app{(\lam x
                                            {\inctx E x})}{\inctx {\ctx'}
                                            {\reset{\app{(\lam x {\inctx E
                                            x})}{\inctx {\ctx'} {\app v v}} }}}}}} \mbox
                                            { and}\\
  \state \emptyset {\reset{\inctx E {\inctx {\ctx'}{(\shift k {\app k k}) \iapp v}}}}
                                          & \ltswk\tau \state \emptyset
                                            {\reset{\reset{\inctx E {\inctx {\ctx'}
                                            {\reset{\inctx E {\inctx {\ctx'} {\app v v}} }}}}}}.
\end{align*}
The two resulting terms do not share a common evaluation context beyond
$\reset{\reset \mtctx}$, so bisimulation up to context cannot simplify the proof
from there.

Yet we can see that the two resulting terms have the same shape,
except for the contexts $\lamp x {\inctx E x} \iapp \mtctx$ and
$E$. Following this observation, in a previous
work~\cite{Aristizabal-al:LMCS17}, we proposed a more expressive
notion of bisimulation up to context where the common context $\fmhc$
can be built out of \emph{related evaluation contexts}. We do so by
adding to the syntax of multi-hole contexts the constructs
$\appcont{\holecont_i} \mhc$ and $\appcont {\holecont_i} \fmhc$, where
the hole $\holecont_i$ can be filled by an evaluation context $\rctx$
to produce respectively $\inctx \rctx \mhc$ and $\inctx \rctx
\fmhc$. We also include sequences of evaluation contexts $\enve$
or~$\envef$ in the LTS states $\stat \enve \env \tm$ and $\statt \enve
\env$. As a result, if $\enve \is (\lamp x {\inctx \ctx x} \iapp
\mtctx)$, $\envef \is (\ctx)$, and $\mhc \is \reset{\reset {\appcont
    {\holecont_1}{\inctx {\ctx'}{\appcont {\holecont_1}{\inctx
          {\ctx'}{\app \val \val}}}}}}$, then
\begin{align*}
  \inctx \mhc {\enve; \emptyset} & = \reset{\reset{\app{(\lam
                                         x
                                         {\inctx E x})}{\inctx {\ctx'}
                                         {\reset{\app{(\lam x {\inctx E
                                         x})}{\inctx {\ctx'} {\app v v}} }}}}} \mbox
                                         { and}\\
  \inctx \mhc {\envef; \emptyset} & = \reset{\reset{\inctx E {\inctx {\ctx'}
                                          {\reset{\inctx E {\inctx {\ctx'} {\app v v}} }}}}}
\end{align*}
so $\mhc$ can be factored out using our notion of bisimulation up to related
contexts.

Extending the state to include evaluation contexts means that these contexts
have to be tested, by plugging them with an argument built from the environment.
\begin{mathpar}
  \inferrule{ }
  {\statt \enve \env \ltsctxx j {\vmhc} \stat \enve \env {\inctx {\enve_j}
      {\inctxe{\vmhc}{\enve; \env}}}}
\end{mathpar}
However, such a transition is problematic in conjunction with our notion of
bisimulation up to related contexts. Indeed, for all $\rctxzero$ and $\rctxone$,
we have
\begin{align*}
  \statt \rctxzero \emptyset & \ltsctxx 1 \vmhc \stat \rctx \emptyset {\inctx
  \rctxzero {\inctx \vmhc {\rctxzero; \emptyset}}} \mbox{ and}\\
  \statt \rctxone \emptyset & \ltsctxx 1 \vmhc \stat {\rctx'} \emptyset {\inctx
  {\rctxone} {\inctx \vmhc {\rctxone; \emptyset}}}.
\end{align*}
But the two resulting states are bisimilar up to related contexts, since for all
$\rctx$,
$\inctx {\rctx} {\inctxe{\vmhc}{\rctx; \emptyset}} = \inctx{(\appcont
  {\holecont_1} \vmhc)}{\rctx; \emptyset}$. If bisimulation up to related
contexts is a valid up-to technique, it implies that
$\statt \rctxzero \emptyset$ and $\statt \rctxone \emptyset$ are bisimilar for
any $\rctxzero$ and $\rctxone$, which is obviously false (consider
$\rctxzero = \mtctx$ and $\rctxone = \mtctx \iapp \Omega$). To prevent this, we
distinguish \emph{passive} transitions (such as $\ltsctxx i \vmhc$) from the
other ones (called \emph{active}), so that only selected up-to techniques
(referred to as \emph{strong}) can be used after a passive transition. In
contrast, any up-to technique (including bisimulation up to related contexts)
can be used after an active transition. To formalize this idea, we extend Madiot
et al.'s framework to allow such distinctions between transitions and between
up-to techniques. We present the definitions in a general setting in
Section~\ref{ss:diacritical}, before illustrating them with environmental
bisimilarity for $\lamshift$.

\subsection{Diacritical Progress and Up-to Techniques}%
\label{ss:diacritical}

We recall the main definitions and results of the extended framework from our
previous work~\cite{Aristizabal-al:LMCS17}; see this paper for more details.

\subsubsection*{Diacritical progress} Let $\lts\act$ be a LTS defined on states
ranged over by $\sts$ or $\stt$, which contains an internal action labeled
$\tau$. Weak transitions $\ltswk\alpha$ are defined as
$\ltswk\tau \is \rtclo{\lts\tau}$ and
$\ltswk\act \is \ltswk\tau \lts\act \ltswk\tau$ if $\act \neq \tau$.  A (weak)
bisimulation over this LTS can be defined using a notion of \emph{progress}: a
relation $\rel$ progresses towards $\rels$, written $\rel \progress \rels$, if
$\sts \rel \stt$ implies that if $\sts \lts\act \sts'$, there exists $\stt'$
such that $\stt \ltswk \act \stt'$ and $\sts' \rels \stt'$, and conversely if
$\stt \lts\act \stt'$. A bisimulation is then defined as a relation $\rel$
verifying $\rel \progress \rel$, and bisimilarity is the largest bisimulation.

In our extended framework, we suppose that the transitions of the LTS
are partitioned into \emph{passive} and \emph{active} transitions, and
we define \emph{diacritical progress} as follows.

\begin{defi}
  A relation $\rel$ diacritically progresses to $\rels$, $\relt$ written $\rel
  \pprogress \rels, \relt$, if $\rel \mathop\subseteq \rels$, $\rels
  \mathop\subseteq \relt$,  and $\sts \rel \stt$ implies that
  \begin{itemize}
  \item if $\sts \lts\act \sts'$ and $\lts\act$ is passive, then
    there exists $\stt'$ such that $\stt \ltswk\act \stt'$ and $\sts' \rels \stt'$;
  \item if $\sts \lts\act \sts'$ and $\lts\act$ is active, then
    there exists $\stt'$ such that $\stt \ltswk\act \stt'$ and $\sts' \relt \stt'$;
  \item the converse of the above conditions on $\stt$.
  \end{itemize}
\end{defi}
\noindent A bisimulation is a relation $\rel$ such that
$\rel \pprogress \rel, \rel$, and bisimilarity $\bis$ is the largest
bisimulation. Since a bisimulation $\rel$ progresses towards $\rel$ after both
passive and active transitions, the two notions of progress $\progress$ and
$\pprogress$ in fact generate the same notions of bisimulation and bisimilarity;
the distinction between active and passive transitions is interesting only when
considering up-to techniques.

\subsubsection*{Up-to techniques} The goal of up-to techniques is to simplify
bisimulation proofs: instead of proving that a relation $\rel$ is a
bisimulation, we show that $\rel$ respects some looser constraints which still
imply bisimilarity $\bis$. In our setting, we distinguish the up-to techniques
which can be used after a passive transition (called \emph{strong} up-to
techniques), from the ones which cannot. An up-to technique (resp.\ strong up-to
technique) is a function $f$ such that $\rel \pprogress \rel, f(\rel)$ (resp.\
$\rel \pprogress f(\rel), f(\rel)$) implies $\rel \mathop\subseteq
\bis$. Proving that a given~$f$ is an up-to technique is difficult with this
definition, so following Madiot, Pous, and
Sangiorgi~\cite{Sangiorgi-Pous:11,Madiot-al:CONCUR14}, we rely on a notion
of~\emph{respectfulness}, which gives sufficient conditions for~$f$ to be an
up-to technique, and is easier to establish, as functions built out of
respectful functions using composition and union remain respectful.

We first need some auxiliary notions on notations on functions on relations,
ranged over by $f$, $g$, and $h$ in what follows. We define $f \subseteq g$ and
$f \cup g$ argument-wise, e.g., $(f \cup g)(\rel)=f(\rel) \cup g(\rel)$ for all
$\rel$. We define $f^\omega$ as $\bigcup_{n \in \mathbb N} f^n$. We write
$\rawid$ for the identity function on relations, and $\fid f$ for
$f \mathop\cup \rawid$. Given a set $\setF$ of functions, we also write~$\setF$
for the function defined as $\bigcup_{f \in \setF} f$. We say a function $f$ is
\emph{generated from $\setF$} if $f$ can be built from functions in $\setF$ and
$\rawid$ using union, composition, and $\cdot^\omega$. The largest function
generated from~$\setF$ is ${\fid \setF}^\omega$. A function~$f$ is monotone if
$\rel \mathop\subseteq \rels$ implies $f(\rel) \mathop\subseteq f(\rels)$.  We
write $\finpower\rel$ for the set of finite subsets of~$\rel$, and we say $f$ is
continuous if it can be defined by its image on these finite subsets, i.e., if
$f(\rel) \mathop\subseteq \bigcup_{\rels \in \finpower\rel}f(\rels)$. The up-to
techniques of the present paper are defined by inference rules with a finite
number of premises, so they are trivially
continuous. 



\begin{defi}
  \label{d:evolution}
  A function $f$ evolves to $g, h$, written $f \fevolve g, h$, if for
  all $\rel \pprogress \rel, \relt$, we have $f(\rel) \pprogress
  g(\rel), h(\relt)$. A function $f$ \emph{strongly} evolves to $g,
  h$, written $f \sevolve g, h$, if for all $\rel \pprogress \rels,
  \relt$, we have $f(\rel) \pprogress g(\rels), h(\relt)$.
\end{defi}
\noindent
Evolution can be seen as a notion of progress for functions on
relations. Note that strong evolution does not put any condition on
how $\rel$ progresses, while regular evolution is more restricted, as
it requires a relation $\rel$ such that $\rel \pprogress \rel, \relt$.

\begin{defi}
  \label{d:dia-comp}
  A set $\setF$ of continuous functions is \emph{diacritically
    respectful} if there exists $\setS$ such that $\setS \subseteq
  \setF$ and
  \begin{itemize}
  \item for all $f \in \setS$, we have $f \sevolve {\fid\setS}^\omega,
    {\fid\setF}^\omega$;
  \item for all $f \in \setF$, we have $f \fevolve {\fid\setS}^\omega
    \compo \fid\setF \compo {\fid\setS}^\omega, {\fid\setF}^\omega$.
  \end{itemize}
\end{defi}
\noindent
In words, a function is in a respectful set $\setF$ if it evolves
towards a combination of functions in $\setF$.  The (possibly empty)
subset $\setS$ intuitively represents the strong up-to techniques of
$\setF$. Any combination of functions can be used after an active
transition. After a passive one, only strong functions can be used, except in
the second case, where we progress from $f(\rel)$, with $f$ not
strong. In that case, it is expected to progress towards a combination
that includes $f$; it is safe to do so, as long as $f$ (or in fact,
any non-strong function in $\setF$) is used at most once. If $\setS_1$
and~$\setS_2$ are subsets of~$\setF$ which verify the conditions of
the definition, then $\setS_1 \cup \setS_2$ also does, so there exists
the largest subset of $\setF$ which satisfies the conditions, written
$\strong \setF$.

\begin{prop}
  \label{p:properties-compatibility-better}
  Let $\setF$ be a diacritically compatible set.
  \begin{itemize}
  \item If
    $\rel \pprogress {\widehat{\mathsf{strong}(\setF)}}^\omega(\rel),
    {\fid\setF}^\omega(\rel)$, then ${\fid\setF}^\omega(\rel)$ is a
    bisimulation.
  \item any function generated from $\setF$ is an up-to technique, and any
    function generated from $\strong\setF$ is a strong up-to technique.
  \item For all $f \in \setF$, we have $f(\bisim) \mathop\subseteq \bisim$.
  \end{itemize}
\end{prop}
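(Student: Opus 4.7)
The plan is to prove item 1 first, since items 2 and 3 will follow quickly from it. Setting $\mathcal{A} := {\fid\setF}^\omega$ and $\mathcal{B} := \widehat{\strong\setF}^\omega$, both are monotone idempotent closure operators with $\mathcal{B} \subseteq \mathcal{A}$. Under the hypothesis $\rel \pprogress \mathcal{B}(\rel), \mathcal{A}(\rel)$, the goal of item 1 is to show that $\mathcal{A}(\rel)$ is a bisimulation, i.e.\ $\mathcal{A}(\rel) \pprogress \mathcal{A}(\rel), \mathcal{A}(\rel)$.

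First I would establish the auxiliary lemma $\mathcal{B}(\rel) \pprogress \mathcal{B}(\rel), \mathcal{A}(\rel)$. By continuity, it suffices to prove the progression for each finite composition of functions in $\fid\strong\setF$ applied to $\rel$; the induction on the composition length is straightforward, using the strong evolution $f \sevolve \mathcal{B}, \mathcal{A}$ for $f \in \strong\setF$ together with the idempotences of $\mathcal{B}$ and $\mathcal{A}$. The base case is the hypothesis on $\rel$. The key point making this step routine is that strong evolution imposes no diagonal requirement on the source relation, so iteration poses no difficulty.

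The second step is the delicate one. I would define $\rho_0 := \mathcal{B}(\rel)$ and $\rho_{n+1} := \mathcal{B}(\fid\setF(\mathcal{B}(\rho_n)))$, and prove by induction on $n$ the invariant $\rho_n \pprogress \rho_n, \mathcal{A}(\rel)$, crucially with the same relation $\rho_n$ on both sides of $\pprogress$, as required for applying regular evolution. The base case is the auxiliary lemma. In the inductive step, for an element of $\rho_{n+1}$ entering through some non-strong $g \in \setF$ (i.e.\ lying in $\mathcal{B}(g(\rho_n))$), the regular evolution $g \fevolve \mathcal{B} \circ \fid\setF \circ \mathcal{B}, \mathcal{A}$ applies thanks to the invariant at level $n$, yielding $g(\rho_n) \pprogress \mathcal{B}(\fid\setF(\mathcal{B}(\rho_n))), \mathcal{A}(\rel) = \rho_{n+1}, \mathcal{A}(\rel)$. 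The outer $\mathcal{B}$-closure is then absorbed by iterating strong evolution, using that $\rho_{n+1}$ is $\mathcal{B}$-closed by construction, so $\mathcal{B}(g(\rho_n)) \pprogress \rho_{n+1}, \mathcal{A}(\rel)$. Elements of $\rho_n \subseteq \rho_{n+1}$ inherit their progression from the induction hypothesis. Setting $\rho_\infty := \bigcup_n \rho_n$, a direct check shows $\rho_\infty = \mathcal{A}(\rel)$ (inclusion in $\mathcal{A}(\rel)$ is immediate, and conversely each $\fid\setF^k(\rel) \subseteq \rho_k$), yielding the desired $\mathcal{A}(\rel) \pprogress \mathcal{A}(\rel), \mathcal{A}(\rel)$.

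The hard part will be choosing the right inductive invariant so that regular evolution remains applicable at each stage: the carrier $\rho_n$ must be both $\mathcal{B}$-closed and progress to itself on the left, and the construction $\rho_{n+1} = \mathcal{B}(\fid\setF(\mathcal{B}(\rho_n)))$ is tuned precisely to the shape of regular evolution's target $\mathcal{B} \circ \fid\setF \circ \mathcal{B}$. For the remaining items, item 2 follows by weakening $\rel \pprogress \rel, f(\rel)$ (or $\rel \pprogress f(\rel), f(\rel)$ for $f \in \strong\setF$) to $\rel \pprogress \mathcal{B}(\rel), \mathcal{A}(\rel)$ and invoking item 1, which gives $\rel \subseteq \mathcal{A}(\rel) \subseteq \bis$. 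Item 3 applies item 1 to $\bis$ itself: since $\bis \pprogress \bis, \bis$ weakens to $\bis \pprogress \mathcal{B}(\bis), \mathcal{A}(\bis)$, the relation $\mathcal{A}(\bis)$ is a bisimulation, and $f(\bis) \subseteq \mathcal{A}(\bis) \subseteq \bis$ for every $f \in \setF$.
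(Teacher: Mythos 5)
Your argument is correct, and it matches the standard soundness proof for (diacritically) respectful sets that the paper itself omits, deferring to the cited prior work: there too one establishes the diagonal invariant on the relations $({\fid{\strong\setF}}^\omega \compo \fid\setF)^n \compo {\fid{\strong\setF}}^\omega(\rel)$ — your $\rho_n$ — using strong evolution to absorb the outer closure and regular evolution for the $\fid\setF$ layer, and then takes the union. The only point to make explicit in a full write-up is that continuity (together with monotonicity and the two closure properties of $\pprogress$ from Remark~\ref{r:no-lts}) is what licenses decomposing $\mathcal{A}(\rel)$ and $\mathcal{B}(\rel)$ into finite compositions applied to $\rel$ so that the evolution hypotheses can be applied one function at a time.
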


\noindent The second point implies that combining functions from a respectful
set using union, composition, or $\cdot^\omega$ produces up-to techniques. In
particular, if $f \in \setF$, then $f$ is an up-to technique, and similarly, if
$f \in \strong\setF$, then $f$ is a strong up-to technique. In practice, proving
that $f$ is in a respectful set~$\setF$ is easier than proving it is an up-to
technique. The last item states that bisimilarity is preserved by respectful
functions, so proving that up to context is respectful implies that bisimilarity
is preserved by contexts.

The first item suggests a more flexible notion of up-to technique, as it shows
that given a respectful set $\setF$, a relation may progress towards different
functions $f$ and $g$, $\rel \pprogress f(\rel), g(\rel)$, and still be included
in the bisimilarity as long as $f$ is generated from $\strong\setF$ and $g$ is
generated from $\setF$. In what follows, we rely on that property in examples
and say that in that case, $\rel$ is a bisimulation up to $\setF$, or $\rel$ is
a bisimulation up to $f_1 \ldots f_n$ if $\setF = \{ f_1 \ldots f_n \}$.

\begin{rem}[respectful vs compatible functions]
  The literature distinguishes between respectful~\cite{Sangiorgi:MSCS98} and
  \emph{compatible}~\cite{Pous:APLAS07} functions: $f$ is respectful if
  $\rel \progress \rels$ and $\rel \mathord\subseteq \rels$ implies
  $f(\rel) \progress f(\rels)$, while $f$ is compatible if
  $\rel \progress \rels$ implies $f(\rel) \progress f(\rels)$. Some interesting
  up-to techniques are not compatible but are respectful thanks to the extra
  inclusion hypothesis. Mimicking~\cite{Madiot-al:CONCUR14,Madiot:PhD}, we use
  the term ``compatible'' instead of ``respectful'' in our previous
  work~\cite{Aristizabal-al:LMCS17} for the definition with
  the extra inclusion hypothesis. We use ``respectful'' in this paper to be
  faithful to the original definitions, and because we use ``compatible'' for
  relations preserved by the operators of the language. Pous~\cite{Pous:LICS16}
  argues that the difference between the two notions is of little importance
  anyway as they generate the same companion function.
\end{rem}

\begin{rem}%
  \label{r:no-lts}
  As a matter of fact, the theory we present in this section does not
  require an underlying notion of LTS\@. In particular,
  Definition~\ref{d:evolution} and~\ref{d:dia-comp} as well as the
  proof of Proposition~\ref{p:properties-compatibility-better} do not
  depend on the notion of diacritical progress being defined in terms
  of a LTS\@. As long as the notion of progress satisfies the following
  (simple) conditions:
  \begin{itemize}
  \item if $R \pprogress S, T$, $S \subseteq S'$, and $T \subseteq
    T'$, then $R \pprogress S', T'$;
  \item if $\forall i \in I. R_i \pprogress S, T$, then $\bigcup_{i
    \in I} R_i \pprogress S, T$,
  \end{itemize}
  the presented theory is valid for such a notion of progress. We
  exploit this fact in Section~\ref{ss:nf-def}, when defining a
  normal-form bisimilarity.
\end{rem}

\subsection{Bisimilarity for the Relaxed Semantics}%
\label{ss:env-LTS}

We define environmental bisimulation using a LTS between states of the
form $\stat \enve \env \tm$ (called \emph{term states}) or $\statt
\enve \env$ (called \emph{environment states}), where we denote by
$\env$ or $\envd$ a sequence of closed values, and by $\enve$ or
$\envef$ a sequence of closed evaluation contexts, and where $t$ is a
closed term.
As explained in Section~\ref{ss:env-informal}, the values are used to
build testing arguments to compare $\lambda$-abstractions, while we
store evaluation contexts to define bisimulation up to related
contexts. We build the testing entities out of $\enve$ and $\env$
using \emph{multi-hole contexts}, defined as follows.
\begin{grammar}
  & \textrm{Contexts:} & \mhc & \bnfdef \vmhc \bnfor \app \mhc \mhc \bnfor
  \reset \mhc \bnfor \shift \vark {\mhc} \bnfor \appcont {\holecont_j} \mhc \\
  & \textrm{Value contexts:} & \vmhc & \bnfdef \var{x} \bnfor \lam{x}{\mhc}
  \bnfor \mtctx_i \\
  & \textrm{Evaluation contexts:} \quad & \fmhc & \bnfdef \mtectx \bnfor
  \argectx{\fmhc}{\mhc} \bnfor \valectx{\vmhc}{\fmhc} \bnfor \reset{\fmhc}
  \bnfor \appcont{\holecont_j} \fmhc
\end{grammar}

We distinguish value holes $\mtctx_i$ from context holes $\holecont_j$. These
holes are indexed, unlike the special hole $\mtctx$ of an evaluation context
$\fmhc$, which is in evaluation position (that is, filling the other holes of
$\fmhc$ gives a regular evaluation context~$F$).  Filling the holes of $\mhc$
and $\vmhc$ with~$\enve$ and $\env$, written respectively
$\inctxmhc \mhc \enve \env$ and $\inctxmhc \vmhc \enve \env$, consists in
replacing any subterm of the form $\appcont{\holecont_j}{\mhc'}$ with
$\inctx{\enve_j}{\mhc'}$ and any occurrence of $\mtctx_i$ with $\env_i$,
assuming that $j$ is smaller or equal than the size of $\enve$ and similarly for
$i$ {\wrt}$\env$.  We write $\inctxfmhc \fmhc \tm \enve \env$ for the same
operation with evaluation contexts, where we assume that $\tm$ is put in
$\mtctx$. We extend the notion of free variables to multi-hole contexts as
expected, and a multi-hole context is said closed if it has no free variables.

\begin{figure}
\begin{mathpar}
  \inferrule{\tm \redcbv \tm'}
  {\stat \enve \env \tm \lts\tau \stat \enve \env {\tm'}}
  \and
  \inferrule{\env_i = \lam x \tm}
  {\statt \enve \env \ltslam i \vmhc \stat \enve \env {\subst \tm x
    {\inctxmhc \vmhc \enve \env}}}
  \and
  \inferrule{ }
  { \stat \enve \env \val \ltsval \statt \enve {\env, \val} }
  \and
  \inferrule{\tm \mbox{ is stuck} \\ \inctxfmhc \fmhc \tm \enve \env \redbis \tm'}
  { \stat \enve \env \tm \ltsstuck \fmhc \stat \enve \env {\tm'}}
  \and
  \inferrule{ }
  { \statt \enve \env \ltsctxx j \vmhc \stat \enve \env {\inctx {\enve_j} {\inctxmhc
      \vmhc \enve \env}}}
  \\
  \inferrule{\enve_j = \ctx}
  { \statt \enve \env \ltspure j \statt \enve \env}
  \and
  \inferrule{\enve_j = \inctx \rctx {\reset \ctx}}
  { \statt \enve \env \ltsnotpure j \statt {\enve, \inctx F {\reset \mtctx},
      \reset \ctx} \env}
\end{mathpar}

\caption{LTS for the relaxed semantics}%
\label{fig:lts-env}
\end{figure}

Figure~\ref{fig:lts-env} presents the LTS $\lts\act$ for the relaxed semantics of
$\lamshift$, where the relation $\tm \redbis \tm'$ is defined as follows: if
$\tm \redcbv \tm'$, then $\tm \redbis \tm'$, and if $\tm$ is a normal form, then
$\tm \redbis \tm$.\footnote{The relation $\redbis$ is not exactly the reflexive
  closure of $\redcbv$, since an expression which is not a normal form
  \emph{must} reduce.} The multi-hole contexts $\vmhc$ and $\fmhc$ used in the
transition $\ltsstuck \fmhc$, $\ltslam i \vmhc$, and $\ltsctxx j \vmhc$ are
supposed to be closed. The internal transition $\lts\tau$ corresponds to
reduction. The transition $\ltslam i \vmhc$ tests the
$\lambda$-abstraction~$\env_i$ by passing it an argument built with $\vmhc$. The
transition can be fired for any~$i$ smaller than the size of $\env$, ensuring
that all the values in $\env$ are tested. The transition~$\ltsval$ turns a term
state $\stat \enve \env \val$ into an environment state
$\statt \enve {\env, \val}$ since $\val$ cannot reduce further.

The transition $\ltsstuck \fmhc$ compares stuck terms by putting them in an
evaluation context $\fmhc$ to trigger the capture, like the corresponding
transition in applicative bisimulation. However, characterizing the
capture-triggering contexts is more difficult than in Section~\ref{s:app},
because of holes $\holecont_i$. Indeed, a context $\appcont {\holecont_i} \emhc$
may also provoke a capture if $\enve_i$ is an impure context: for example, we
have
$\stat {\reset \mtctx} \emptyset {\shift \vark \tm} \ltsstuck {\appcont
  {\holecont_1} \mtctx} \stat {\reset \mtctx} \emptyset {\reset{\subst \tm \vark
    {\lam x {\reset x}}}}$. Instead of looking for a precise characterization,
we simply test with all context $\fmhc$, and then trigger the capture using
$\redbis$ only when possible, i.e., when $\inctxmhc \fmhc \enve \env$ contains a
surrounding reset. A uninteresting transition
$\stat \enve \env \tm \ltsstuck \fmhc \stat \enve \env {\inctxfmhc \fmhc \tm
  \enve \env}$ where no capture happens will then be easily dealt with up-to
techniques (see Example~\ref{ex:stuck-utctx}).

The remaining transitions $\ltsctxx j {\vmhc}$, $\ltspure j$, and
$\ltsnotpure j$ deal with the evaluation contexts in~$\enve$, and are therefore
applied only in conjunction with bisimulation up to related contexts, as~$\enve$
is not empty only in that case. The transition $\ltsctxx j {\vmhc}$ tests the
evaluation context $\enve_j$ by passing it a value built from $\vmhc$, the same
way $\lambda$-abstractions are tested with $\ltslam i \vmhc$. Testing an
evaluation context with a value is simpler than with any term, however it does
not account for all the possible interactions of a term with an evaluation
context. Indeed, a stuck term is able to distinguish a pure context from an
impure one, and it can extract from $\inctx F {\reset E}$ the context up to the
first enclosing reset $\reset E$. We use $\ltspure j$ and $\ltsnotpure j$ to
mimic these behaviors. The transition $\ltspure j$ simply states that $\enve_j$
is pure; in a bisimulation $\rel$ such that
$\statt \enve \env \rel \statt \envef \envd$ and
$\statt \enve \env \ltspure j \statt \enve \env$, $\statt \envef \envd$ has to
match with the same transition, meaning that $\envef_j$ must also be pure. Similarly,
$\ltsnotpure j$ decomposes $\enve_j = \inctx F {\reset E}$ into
$\inctx F {\reset \mtctx}$ and~$\reset E$.  Because the transition leaves a
\textreset inside $F$, applying the same transition again to
$\inctx F {\reset \mtctx}$ does not decompose~$F$ further, but simply generates
$\inctx F {\reset \mtctx}$ again (and $\reset \mtctx$). The duplicated contexts
can then be ignored using up-to techniques.

To define environmental bisimulation using diacritical progress
(Section~\ref{ss:diacritical}), we distinguish the transitions
$\ltsctxx j \vmhc$ and $\ltsval$ as passive, while the remaining others are
active. We consider a transition as passive if it can be inverted by an up-to
technique, which is possible if no new information is generated between its
source and target states. For example,
$\stat \enve \env \val \ltsval \statt \enve {\env, \val}$ is passive because we
simply change the nature of the state (from term to environment). In contrast,
the transition $ \statt \enve \env \ltspure j \statt \enve \env$ is active, as
we gain some information: $\env_j$ is a pure context. The transition
$\statt \enve \env \ltsctxx j \vmhc \stat \enve \env \tm$ is passive at it
simply recombines existing information in $\env$ and $\enve$ to build $\env$,
without any reduction step taking place, and thus without generating new
information. Some extra knowledge is produced only when $\stat \enve \env \tm$
evolves (with active transitions), as it then tells us how the tested
context~$\env_j$ actually interacts with the value constructed from
$\vmhc$. Finally, $\ltslam i {\cval \mhc}$ and $\ltsstuck \emhc$ correspond to
reduction steps and are therefore active, and $\ltsnotpure j$ is also active as
it provides some information by telling us how to decompose a context.

\begin{defi}%
  \label{d:env-relaxed}
  A relation $\rel$ on states is an environmental bisimulation if
  $\rel \pprogress \rel, \rel$. Environmental bisimilarity $\envbisim$
  is the largest environmental bisimulation.
\end{defi}

\noindent We extend $\envbisim$ to open terms as follows: if
$\vect \varx=\fv \tmzero \cup \fv\tmone$, then we write
$\tmzero \open\envbisim \tmone$ if
$\statt \emptyset {\lam {\vect \varx} \tmzero} \envbisim \statt \emptyset
{\lam{\vect \varx} \tmone}$. We discuss the soudness and completeness of
$\envbisim$ in Section~\ref{ss:env-upto}, after giving the definition of the
bisimilarity for the original semantics.

\begin{figure}
\begin{mathpar}
  \inferrule{\prg \redcbv \prg'}
  {\stat \enve \env \prg \lts\tau \stat \enve \env {\prg'}}
  \and
  \inferrule{\env_i = \lam x \tm \\ \inctxmhc \fmhc \enve \env \mbox{ is
      delimited} }
  {\statt \enve \env \ltslamo i \vmhc \fmhc \stat \enve \env {\inctxfmhc
        \fmhc {\subst \tm x
    {\inctxmhc \vmhc \enve \env}} \enve \env}}
  \and
  \inferrule{ }
  { \stat \enve \env \val \ltsval \statt \enve {\env, \val} }
  \and
  \inferrule{ \inctxmhc \fmhc \enve \env \mbox{ is delimited}  }
  { \statt \enve \env \ltsctxxo j \vmhc \fmhc \stat \enve \env {
      \inctxfmhc \fmhc {\inctx {\enve_j} {\inctxmhc
        \vmhc \enve \env}} \enve \env}}
  \\
  \inferrule{\enve_j = \ctx}
  { \statt \enve \env \ltspure j \statt \enve \env}
  \and
  \inferrule{\enve_j = \inctx \rctx {\reset \ctx}}
  { \statt \enve \env \ltsnotpure j \statt {\enve, \inctx F {\reset \mtctx},
      \reset \ctx} \env}
\end{mathpar}

\caption{LTS for the original semantics}%
\label{fig:lts-env-original}
\end{figure}

\subsection{Definitions for the Original Semantics}%
\label{ss:env-LTS-original}

In the original semantics, terms are evaluated within a top-level
reset. To follow that principle, the LTS for the original semantics is
defined only on \emph{pure terms}, i.e., terms without effects,
defined as follows.
\begin{grammar}
  & \textrm{Pure terms:} \quad & \prg & \bnfdef \val \bnfor \reset \tm
\end{grammar}
We remind that terms of the form $\reset \tm$ are called delimited; we extend
this notion to contexts as well. A \emph{pure state} is of the form
$\stat \enve \env \prg$, and the LTS operates either on pure or environment
states. The problem is then how to build pure states out of terms that are not
pure. A simple idea would be to relate two impure terms $\tmzero$ and $\tmone$
by comparing $\reset\tmzero$ and~$\reset\tmone$. However, such a solution would
not be sound, as it would relate $\shift \vark {\app \vark y}$ and
$\shift \vark {\app {\lamp z z} y}$, terms that can be distinguished by the
context $\reset {\app \mtctx \Omega}$.

Instead, the transitions $\ltslamo i \vmhc \fmhc$ and $\ltsctxxo j \vmhc \fmhc$
of the LTS for the original semantics (Figure~\ref{fig:lts-env-original}) now
include an extra argument~$\fmhc$ to build pure terms in their resulting
state. Recall that we use any evaluation context and not a delimited context
$\reset \emhc$ as it is possible to build a context of that shape from a context
$\appcont{\holecont_i} \fmhc$ assuming $\enve_i$ contains an enclosing
\textreset. Besides, we can discard the non-interesting parts of a testing
context $\fmhc$ thanks to bisimulation up to context. The other main difference
between the LTS for the original and relaxed semantics is the lack of rule for
testing control-stuck terms, as pure terms cannot become stuck (see
Proposition~\ref{p:eval-reset}).

As in the relaxed semantics, the transitions $\ltsval$ and $\ltsctxxo j \vmhc
\fmhc$ are passive and the others are active; we write $\pprogressp$ for the
notion of progress based on the LTS of Figure~\ref{fig:lts-env-original}.
\begin{defi}%
  \label{d:env-original}
  A relation $\rel$ on states is a pure environmental bisimulation if
  $\rel \pprogressp \rel, \rel$. Pure environmental
  bisimilarity~$\envbisimp$ is the largest pure environmental
  bisimulation.
\end{defi}
\noindent We define $\open \envbisimp$ on open terms as in the relaxed case, and
we extend $\envbisimp$ to any terms as follows: we have
$\tmzero \envbisimp \tmone$ if for all $\ctx$, we have
$\stat \emptyset \emptyset {\reset {\inctx \ctx \tmzero}} \envbisimp \stat
\emptyset \emptyset {\reset {\inctx \ctx \tmone}}$. As a simple example
illustrating the differences between $\envbisim$ and $\envbisimp$, we have the
following result.
\begin{prop}%
  \label{p:omega-stuck}
  We have $\Omega \envbisimp \shift \vark \Omega$.
\end{prop}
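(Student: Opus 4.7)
The plan is to exhibit a very coarse pure environmental bisimulation containing all pairs of divergent pure states, and then argue that for every pure context $\ctx$, both $\reset{\inctx \ctx \Omega}$ and $\reset{\inctx \ctx {\shift \vark \Omega}}$ diverge, so the two desired states lie in this relation by the extended definition of $\envbisimp$.

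First I would set
\[
  \rel \is \{(\stat \emptyset \emptyset \prgzero, \stat \emptyset \emptyset \prgone) \mid \prgzero, \prgone \text{ pure, } \prgzero \divcbv, \prgone \divcbv\}
\]
and verify $\rel \pprogressp \rel, \rel$. A divergent pure term is of the shape $\reset \tm$ (never a value), so the transition $\ltsval$ never fires; since the environment and the evaluation-context sequence are empty, no environment-state transition applies either. Hence the only transitions out of $\stat \emptyset \emptyset \prgzero$ are internal ones. If $\prgzero \redcbv \prgzero'$, then by inspection of the reduction rules applied under an outermost $\rawreset$, $\prgzero'$ is again of the form $\reset{\tm'}$ (the only way to leave the shape is by $\RRreset$, which would produce a value and contradict divergence). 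Determinism of $\redcbv$ ensures that $\prgzero'$ still diverges, and the right-hand side matches by the empty $\tau$-sequence, staying at $\stat \emptyset \emptyset \prgone$; symmetrically for $\prgone$. Thus $\rel$ is a pure environmental bisimulation.

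It then remains to show that, for every pure context $\ctx$, the terms $\reset{\inctx \ctx \Omega}$ and $\reset{\inctx \ctx {\shift \vark \Omega}}$ both diverge. For the first term, the surrounding evaluation context of $\Omega$ is $\reset{\inctx \ctx \mtctx}$, so $\Omega \redcbv \Omega$ lifts to $\reset{\inctx \ctx \Omega} \redcbv \reset{\inctx \ctx \Omega}$, an infinite loop. For the second, rule $\RRshift$ fires, and since $\vark \notin \fv \Omega$ we get $\reset{\inctx \ctx {\shift \vark \Omega}} \redcbv \reset \Omega$, which then loops for the same reason. Therefore $\stat \emptyset \emptyset {\reset{\inctx \ctx \Omega}} \mathrel\rel \stat \emptyset \emptyset {\reset{\inctx \ctx {\shift \vark \Omega}}}$ for every pure $\ctx$, which, by the extension of $\envbisimp$ to arbitrary terms, gives $\Omega \envbisimp \shift \vark \Omega$.

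The main obstacle is not technical but conceptual: one must notice that no transition other than $\lts\tau$ is enabled on these states, because pure terms of the form $\reset \tm$ can never become stuck (Proposition~\ref{p:eval-reset}) and the environment stays empty throughout. Once this is observed, the divergence bisimulation requires no up-to technique, and the proof contrasts sharply with the relaxed semantics, where the stuck transition $\ltsstuck \fmhc$ would immediately distinguish the two terms.
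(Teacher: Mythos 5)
Your proof is correct and follows essentially the same route as the paper: the paper's own witness is the explicit two-pair relation $\{(\stat \emptyset \emptyset {\reset {\inctx \ctx \Omega}}, \stat \emptyset \emptyset {\reset {\inctx \ctx {\shift \vark \Omega}}}), (\stat \emptyset \emptyset {\reset {\inctx \ctx \Omega}}, \stat \emptyset \emptyset {\reset {\Omega}})\}$, which is just a concrete instance of your coarser ``all divergent pure states'' bisimulation, resting on the same two observations that $\reset{\inctx \ctx {\shift \vark \Omega}}$ steps to $\reset \Omega$ and that only $\lts\tau$ is ever enabled. Your generalization is a perfectly valid (and reusable) packaging of the same argument.
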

\noindent The relation
$\{(\stat \emptyset \emptyset {\reset {\inctx \ctx \Omega}}, \stat \emptyset
\emptyset {\reset {\inctx \ctx {\shift \vark \Omega}}}), (\stat \emptyset
\emptyset {\reset {\inctx \ctx \Omega}}, \stat \emptyset \emptyset {\reset
  {\Omega}})\}$ is a pure
bisimulation. Proposition~\ref{p:omega-stuck} does not hold with $\envbisim$
because $\Omega$ is not stuck.

\subsection{Up-to Techniques}%
\label{ss:env-upto}

\begin{figure}

  \textbf{Techniques for both semantics}
  \begin{mathpar}
    \inferrule{\tmzero \rtclo\redcbv \tmzero' \\ \tmone \rtclo\redcbv \tmone' \\ \stat \enve
      \env {\tmzero'} \rel \stat \envef \envd {\tmone'} }
    {\stat \enve \env \tmzero \utred\rel  \stat \envef \envd \tmone}
    \and
    \inferrule{\stat {\enve, \seq \rctx}{\env, \seq \val} \tmzero \rel \stat
      {\envef, \seq {\rctx'}}{\envd, \seq w} \tmone} {\stat \enve \env \tmzero
      \weak\rel \stat \envef \envd \tmone}
  \end{mathpar}

  \vspace{1em}
  \textbf{Techniques specific to the relaxed semantics}
  \begin{mathpar}
    \inferrule{\statt \enve \env \rel \statt \envef \envd}
    {\stat {\enve, \inctxmhc \fmhcs \enve \env}{\env, \inctxmhc{\mhcvs} \enve
        \env}{\inctxmhc \mhc \enve \env} \utrctxv\rel
      \stat {\envef, \inctxmhc \fmhcs \envef \envd}{\envd,\inctxmhc \mhcvs
        \envef \envd}{\inctxmhc \mhc \envef \envd}}
    \and
    \inferrule{\stat \enve \env \tmzero \rel \stat \envef \envd \tmone}
    {\stat {\enve, \inctxmhc \fmhcs \enve \env}{\env, \inctxmhc \mhcvs \enve
        \env}
      {\inctxfmhc \fmhc \tmzero \enve \env} \utrctx\rel
      \stat {\envef, \inctxmhc \fmhcs \envef \envd}{\envd, \inctxmhc \mhcvs
        \envef \envd} {\inctxfmhc \fmhc \tmone \envef \envd}}
  \end{mathpar}

  \vspace{1em}
  \textbf{Techniques specific to the original semantics}
  \begin{mathpar}
    \inferrule{\statt \enve \env \rel \statt \envef \envd \\ \inctxmhc \mhc \enve \env
        \mbox{ and } \inctxmhc \mhc \envef \envd \mbox{ are delimited} }
    {\stat {\enve, \inctxmhc \fmhcs \enve \env}{\env, \inctxmhc{\mhcvs} \enve
        \env}{\inctxmhc \mhc \enve \env} \utrctxv\rel
      \stat {\envef, \inctxmhc \fmhcs \envef \envd}{\envd,\inctxmhc \mhcvs
        \envef \envd}{\inctxmhc \mhc \envef \envd}}
    \and
    \inferrule{\stat \enve \env \prgzero \rel \stat \envef \envd \prgone
      \\ \inctxmhc \fmhc \enve \env
        \mbox{ and } \inctxmhc \fmhc \envef \envd \mbox{ are delimited}}
    {\stat {\enve, \inctxmhc \fmhcs \enve \env}{\env, \inctxmhc \mhcvs \enve
        \env}
      {\inctxfmhc \fmhc \prgzero \enve \env} \utrctx\rel
      \stat {\envef, \inctxmhc \fmhcs \envef \envd}{\envd, \inctxmhc \mhcvs
        \envef \envd} {\inctxfmhc \fmhc \prgone \envef \envd}}
  \end{mathpar}

  \caption{Up-to techniques in both semantics}%
  \label{fig:env-upto}
\end{figure}

Environmental bisimulation is meant to be used with up-to techniques, as doing
bisimulation proofs with Definition~\ref{d:env-relaxed} or~\ref{d:env-original}
alone is tedious in practice. Figure~\ref{fig:env-upto} lists the up-to
techniques we use for environmental bisimilarity in the two
semantics. Bisimulation up to reduction $\rawutred$ relates terms after some
reduction steps, thus allowing a big-step reasoning even with a small step
bisimulation. Bisimulation up to weakening $\rawweak$, also called bisimulation
up to environment in previous works~\cite{Sangiorgi-al:TOPLAS11}, removes values
and contexts from a state, thus diminishing its testing power, since less values
and contexts means less arguments to build from with multi-hole contexts.

We define two kinds of bisimulations up to related contexts, depending
whether we operate on environment states ($\rawutrctxv$) or on term or
pure states ($\rawutrctx$). As explained before, only evaluation
contexts are allowed for term and pure states, while any context is
valid with environment states. These up-to techniques differ from the
usual bisimulation up to context in the syntax of the multi-hole
contexts, which may include context holes $\holecont_i$. The
definitions for the original semantics differ from the ones for the
relaxed semantics in that only pure terms can be built in the case of
the original semantics.

The definitions of $\rawutrctxv$ and $\rawutrctx$ also allow the sequences of
values and contexts to be extended. This operation opposite to weakening, known
as \emph{strengthening}~\cite{Madiot-al:CONCUR14}, does not change the testing
power of the states, since the added values and contexts are built out of the
existing ones. We inline strengthening in the definitions of bisimulation up to
related contexts for technical reason: a separate notion of bisimulation up to
strengthening would be a regular up-to technique (not strong), like bisimulation
up to related contexts, which entails that these up-to techniques could not be
composed after a passive transition in a respectfulness proof.

The functions we define are indeed up-to techniques, are they form a respectful
set in both semantics~\cite{Aristizabal-al:LMCS17}.
\begin{lem}%
  \label{l:compatible}
  $\setF \is \{ \rawutred, \rawweak, \rawutrctxv, \rawutrctx\}$ is
  diacritically respectful, with $\strong \setF= \{ \rawutred, \rawweak\}$.
\end{lem}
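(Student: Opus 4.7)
The plan is to verify the two conditions of Definition~\ref{d:dia-comp} for $\setF = \{\rawutred, \rawweak, \rawutrctxv, \rawutrctx\}$ with $\setS = \{\rawutred, \rawweak\}$. For each $f$ I would fix a state on the left of $f(\rel)$, enumerate the possible transitions it can fire, and show that the target state can be matched by the corresponding state on the right, landing in the required closure of $\rel$, $\rels$, or $\relt$. Since all four functions are defined by inference rules with a single premise, continuity is immediate, so only evolution has to be established.

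For the strong techniques, I would first handle $\rawutred$. If $\stat \enve \env \tmzero \utred\rel \stat \envef \envd \tmone$ comes from witnesses $\tmzero \rtclo\redcbv \tmzero'$ and $\tmone \rtclo\redcbv \tmone'$ with $\stat \enve \env {\tmzero'} \rel \stat \envef \envd {\tmone'}$, then any outgoing transition from $\stat \enve \env \tmzero$ can first be aligned with the reductions (using that $\redcbv = \lts\tau$ by Proposition~\ref{p:lts-redcbv-main}) and then matched by $\rel$'s progress, yielding the appropriate $\setS$- or $\setF$-closure. For $\rawweak$, the key observation is that the LTS transitions from a state with a smaller environment form a subset of those from a state with a larger environment (every multi-hole context over a smaller environment is also one over the larger), so every transition on the left lifts to a transition from the larger state and is matched there. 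Both arguments work against arbitrary $\rels$ and $\relt$, giving strong evolution to ${\fid\setS}^\omega$ on passive actions and to ${\fid\setF}^\omega$ on active ones.

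The heart of the proof is evolution for $\rawutrctx$ (and, by a simpler adaptation, $\rawutrctxv$). Assume $\stat \enve \env \tmzero \rel \stat \envef \envd \tmone$ and consider a transition from $\stat {\enve, \inctxmhc{\fmhcs}\enve\env}{\env, \inctxmhc{\mhcvs}\enve\env}{\inctxfmhc{\fmhc}\tmzero\enve\env}$. I would proceed by case analysis on the label. An internal action $\lts\tau$ is either a reduction entirely inside $\fmhc$ or $\vmhc$ (matched identically on the right, remaining in $\rawutrctx(\rel)$), or a reduction of $\tmzero$ (matched by the bisimulation game on $\rel$, possibly after turning the transition into a step from $\stat \enve \env \tmzero$ via the composition of $\rawutrctx$ with $\rawutred$), or a capture of a stuck subterm by surrounding pieces of $\fmhc$. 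The last case is the delicate one: a $\shift$ inside $\tmzero$ may capture parts of $\fmhc$, and the relevant part could include context holes $\holecont_j$ filled by $\enve_j$. Here I would replay the bisimulation game by first firing $\tmzero \ltsstuck {\fmhc'}$ for an appropriate $\fmhc'$ reconstructed from $\fmhc$, $\vmhc$ and the captured prefix; matching gives a related pair that can be reassembled inside the enclosing context, landing in $\rawutrctx(\relt)$. For $\ltsval$, the term must already be a value, whose normal form shape forces the transition to expose the underlying $\rel$-related components combined with strengthening, handled by composing $\rawutrctxv$ with $\rawweak$. For $\ltslam i \vmhc'$, $\ltsctxx j \vmhc'$, $\ltspure j$, and $\ltsnotpure j$, I would observe that the tested entity either sits in the original environment (matched by firing the corresponding transition on $\rel$) or is a fresh entry $\inctxmhc{\fmhcs_k}\enve\env$/$\inctxmhc{\mhcvs_k}\enve\env$ built from the environment (matched by combining a reduction step with $\rawutrctxv$ on the right).

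The hard part is keeping track of exactly which combination of closures appears on the right. The bookkeeping must fit the pattern ${\fid\setS}^\omega \compo \fid\setF \compo {\fid\setS}^\omega$ for passive actions: the non-strong closure $\rawutrctx$ or $\rawutrctxv$ may appear at most once, sandwiched between $\rawutred/\rawweak$ applications that absorb the pre- and post-reduction steps and the strengthening adjustments. Checking this is mechanical but requires care when a stuck subterm triggers a capture involving context holes of $\fmhc$, because the reassembled context on the target side is no longer literally $\fmhc$ but a derived multi-hole context; verifying that this derived context is still closed and lies in the fragment accepted by $\rawutrctx$ is where most of the detailed work lies. Once this case analysis is done for the relaxed LTS of Figure~\ref{fig:lts-env}, the same argument adapted to Figure~\ref{fig:lts-env-original}, with the delimitedness side condition carried along, yields the pure version simultaneously.
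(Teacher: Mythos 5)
The paper itself gives no proof of this lemma---it is imported wholesale from the companion report~\cite{Aristizabal-al:HAL16}---so there is no in-text argument to compare against line by line. Your overall strategy is the intended one: verify the two clauses of Definition~\ref{d:dia-comp} directly with $\setS = \{\rawutred, \rawweak\}$, show the strong techniques evolve against arbitrary $\rels,\relt$, and show $\rawutrctx$ and $\rawutrctxv$ evolve to ${\fid\setS}^\omega \compo \fid\setF \compo {\fid\setS}^\omega$ on passive actions. You have also correctly located the two points that make the lemma non-trivial: the case where a \shiftId{} inside the plugged term captures part of $\fmhc$ (which is exactly why the $\ltsstuck\fmhc$ transition quantifies over arbitrary multi-hole evaluation contexts, context holes included), and the discipline that a non-strong technique may be used at most once after a passive transition.

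That said, what you have written is a plan rather than a proof: every step you describe as ``mechanical but requires care'' is precisely where the content of the lemma lives, and none of it is carried out. Two of the deferred points deserve explicit warning. First, in the capture case you must actually exhibit the reconstructed label $\fmhc'$ and check that it is a closed multi-hole \emph{evaluation} context of the grammar of Section~\ref{ss:env-LTS}; this is where the restriction of $\rawutrctx$ to evaluation contexts is genuinely used, and getting it wrong silently breaks soundness (compare the counterexample with $\rctxzero = \mtctx$ and $\rctxone = \mtctx \iapp \Omega$ in Section~\ref{ss:env-informal}, which is exactly what the passive/active distinction exists to exclude). Second, your arguments for $\rawweak$ and for the strengthening built into $\rawutrctxv$ and $\rawutrctx$ implicitly assume that added and removed environment entries can be re-ordered: after $\ltsnotpure j$ or $\ltsval$ the freshly appended entries occupy different index positions in the weakened and unweakened (or strengthened and unstrengthened) states, while the definitions in Figure~\ref{fig:env-upto} only add or remove suffixes and the labels $\ltslam i \vmhc$, $\ltsctxx j \vmhc$, $\ltspure j$ are index-sensitive. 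Either the matching must be done up to an explicit permutation argument or the bookkeeping must be spelled out; as it stands this step would fail as literally written. Neither point requires a new idea, but until they are discharged the proposal is an outline of the proof in~\cite{Aristizabal-al:HAL16}, not a proof.
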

\noindent This lemma and the third property of
Proposition~\ref{p:properties-compatibility-better} directly imply that
$\envbisim$ and $\envbisimp$ are compatible, from which we can deduce that they
are sound {\wrt}respectively $\ctxequiv$ and $\ctxequivp$. We can also prove that
they are complete~\cite{Aristizabal-al:LMCS17}.

\begin{thm}%
  \label{t:env-carac}
  $\tmzero \ctxequiv \tmone$ iff
  $\stat \emptyset \emptyset \tmzero \envbisim \stat \emptyset \emptyset \tmone$, and
  $\tmzero \ctxequivp \tmone$ iff
  $\stat \emptyset \emptyset \tmzero \envbisimp \stat \emptyset \emptyset \tmone$.
\end{thm}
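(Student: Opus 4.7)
The plan is to prove soundness and completeness separately in each semantics; I describe the relaxed case in detail, as the original case proceeds by the same pattern restricted to pure states.

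For the soundness direction $\envbisim \subseteq \ctxequiv$, the work is largely done by Lemma~\ref{l:compatible}: since $\rawutrctx$ and $\rawutrctxv$ lie in the diacritically respectful set $\setF$, the third clause of Proposition~\ref{p:properties-compatibility-better} gives $f(\envbisim) \subseteq \envbisim$ for every $f \in \setF$. Specialising multi-hole contexts to one-hole contexts built from a single language constructor yields that $\envbisim$, viewed on states of the form $\stat \emptyset \emptyset \tm$, is compatible. Combined with closure under reduction via $\rawutred$ and with the observation that the transitions out of such a state expose exactly the observables separated by $\ctxequiv$ (evaluation to a value is witnessed by $\ltsval$, and evaluation to a control-stuck term is witnessed by the availability of $\ltsstuck \fmhc$ transitions to reducible terms), this gives $\tmzero \ctxequiv \tmone$.

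For the completeness direction $\ctxequiv \subseteq \envbisim$, I would define a candidate relation $\rel$ on states by taking $\stat \enve \env \tmzero \rel \stat \envef \envd \tmone$ whenever, for every closed multi-hole evaluation context $\fmhc$ of appropriate arity, $\inctxfmhc \fmhc \tmzero \enve \env \ctxequiv \inctxfmhc \fmhc \tmone \envef \envd$, and analogously $\statt \enve \env \rel \statt \envef \envd$ whenever $\inctxmhc \mhc \enve \env \ctxequiv \inctxmhc \mhc \envef \envd$ for every closed multi-hole term context $\mhc$. The goal is to show $\rel \pprogress {\fid\setF}^\omega(\rel), {\fid\setF}^\omega(\rel)$ with $\setF$ as in Lemma~\ref{l:compatible}; by Proposition~\ref{p:properties-compatibility-better} this forces ${\fid\setF}^\omega(\rel) \subseteq \envbisim$, and restricting to empty environments extracts the desired inclusion for closed terms, from which the open case follows by Definition~\ref{d:open-extension}. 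The $\lts\tau$ case is matched using $\rawutred$ together with the fact that $\ctxequiv$ is preserved by reduction, while the passive transitions $\ltsval$ and $\ltsctxx j \vmhc$ merely recombine already-known data through multi-hole contexts, so the target states sit in $\fid{\rawutrctxv}(\rel)$ directly by the definition of $\rel$.

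The hard part will be matching the active transitions that extract non-trivial structural information, namely $\ltsstuck \fmhc$, $\ltspure j$, and $\ltsnotpure j$. For $\ltsstuck \fmhc$ I would appeal to Corollary~\ref{c:context-lemma} to reduce the contextual test to evaluation contexts, and to Proposition~\ref{p:unique-decomp} to argue that if the left side reaches a control-stuck redex inside $\inctxfmhc \fmhc \tmzero \enve \env$, the right side must reach the corresponding one; the post-capture terms remain $\ctxequiv$-related since the same $\fmhc$ triggers the capture uniformly on both sides. For $\ltspure j$ and $\ltsnotpure j$, the idea is to probe the stored context $\enve_j$ with a discriminating term such as $\shift \vark {\app \vark \Omega}$, which triggers a capture up to the nearest enclosing \resetId{} if one exists and diverges upward otherwise, thereby forcing $\envef_j$ to admit the same top-level decomposition into $\inctx \rctx {\reset \ctx}$ and justifying that the extended states still agree under all multi-hole contexts (up to $\rawweak$). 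For the original semantics, the same scheme applies to pure states, with $\ctxequivp$ in place of $\ctxequiv$: the $\ltsstuck$ clause simply disappears by Proposition~\ref{p:eval-reset}, and the delimitedness side-conditions in the original-semantics variants of $\rawutrctx$ and $\rawutrctxv$ (Figure~\ref{fig:env-upto}) fit naturally because every context tested by $\ctxequivp$ is already of the shape $\reset \cctx$.
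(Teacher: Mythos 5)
Your soundness half is exactly the paper's: Lemma~\ref{l:compatible} plus the third item of Proposition~\ref{p:properties-compatibility-better} give compatibility of $\envbisim$ and $\envbisimp$, and the correspondence between transitions and observables does the rest. For completeness (which the paper itself delegates to the companion report) your candidate relation has the right shape, but the sketch has a genuine gap at the passive transitions. You aim to establish $\rel \pprogress {\fid\setF}^\omega(\rel), {\fid\setF}^\omega(\rel)$ and to discharge $\ltsval$ and $\ltsctxx j \vmhc$ by landing in $\fid{\rawutrctxv}(\rel)$; but Proposition~\ref{p:properties-compatibility-better} requires the target of a \emph{passive} transition to lie in ${\widehat{\mathsf{strong}(\setF)}}^\omega(\rel)$, and $\rawutrctxv$ is not in $\strong\setF$ --- indeed Section~\ref{ss:env-informal} explains that using bisimulation up to related contexts after $\ltsctxx j \vmhc$ is unsound. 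The $\ltsctxx j \vmhc$ case is repairable without any up-to technique, since plugging the target term into a further evaluation context $\fmhc'$ yields a term of the form $\inctxmhc{\mhc'}\enve\env$ for a single multi-hole context $\mhc'$, so the target term states are in $\rel$ outright. The $\ltsval$ case is where the real work lives: from $\stat \enve \env {\valzero} \rel \stat \envef \envd \tmone$ you only know that $\valzero$ and the value $\valone$ of $\tmone$ agree under single-hole \emph{evaluation} contexts, whereas $\statt \enve {\env,\valzero} \rel \statt \envef {\envd,\valone}$ demands agreement under arbitrary multi-hole contexts that may duplicate the new value or place it under a binder. Bridging evaluation-context testing and arbitrary-context testing of values is the crux of every environmental completeness proof; it requires a context-lemma/CIU-style argument (Corollary~\ref{c:context-lemma}, suitably generalized to multi-hole contexts over environments), which you invoke only for $\ltsstuck\fmhc$ and never here.

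A second, smaller gap concerns $\ltsnotpure j$: the hypothesis relates $\enve_j = \inctx {\rctx_0}{\reset{\ctx_0}}$ and $\envef_j = \inctx {\rctx_1}{\reset{\ctx_1}}$ only as wholes through $\holecont_j$, while after the transition the pieces $\inctx {\rctx_i}{\reset\mtctx}$ and $\reset{\ctx_i}$ become independently usable holes. You must show that any multi-hole context using the pieces separately is simulated by one using the undecomposed contexts --- the point being that a term plugged into $\holecont_j$ can itself capture $\ctx_i$ and then continue in $\rctx_i$, so the decomposition adds no discriminating power. This is asserted in your sketch but not argued. (Your probe $\shift\vark{\app\vark\Omega}$ does correctly handle $\ltspure j$: a pure $\enve_j$ leaves it control-stuck while an impure one makes it diverge, and $\ctxequiv$ separates these outcomes.)
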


\begin{proof}[Sketch]
  Given two terms $\tmzero$, $\tmone$, we write $\tmzero \eqObs \tmone$ if
  \begin{itemize}
  \item $\tmzero \evalcbv \valzero$ for some $\valzero$ iff
    $\tmone \evalcbv \valone$ for some $\valone$, and
  \item $\tmzero \evalcbv \tmzero'$
    for some control-stuck term $\tmzero'$ iff $\tmzero \evalcbv \tmone'$ for
    some control-stuck term $\tmone'$.
  \end{itemize}
  For the relaxed semantics, we show that
  \begin{multline*}
    \rel \mathord{\is} \{ (\stat \enve \env \tmzero, \stat \envef \envd \tmone)
    \mid \forall \fmhc, \inctxfmhc \fmhc \tmzero \enve \env \eqObs \inctxfmhc
    \fmhc \tmone
    \envef \envd \} \\
    \cup \{ (\statt \enve \env, \statt \envef \envd) \mid \forall \mhc,
    \inctxmhc \mhc \enve \env \eqObs \inctxmhc \mhc \envef \envd \}
   \end{multline*}
   is an environmental bisimulation. We define a similar candidate relation or
   the original semantics with the extra requirement that the contexts $\fmhc$
   and $\mhc$ should be delimited. The proof is then by case analysis on the
   possible transitions. For example, take
   $\statt \enve \env \rel \statt \envef \envd$ such that
   $\statt \enve \env \ltslam i \vmhc \stat \enve \env \tmzero$. Then
   $\statt \envef \envd \ltslam i \vmhc \stat \envef \envd \tmone$ for some
   $\tmone$. Let $\fmhc$ such that $\inctxfmhc \fmhc \tmzero \enve \env$
   evaluates to a value or a control-stuck term. Consider
   $\mhc = \inctx \fmhc {\app {\mtctx_i} \vmhc}$; then
   $\inctxmhc \mhc \enve \env \redcbv \inctxfmhc \fmhc \tmzero \enve \env$ and
   $\inctxmhc \mhc \envef \envd \redcbv \inctxfmhc \fmhc \tmone \envef
   \envd$. Because
   $\inctxmhc \mhc \enve \env \eqObs \inctxmhc \mhc \envef \envd$, we deduce
   $\inctxfmhc \fmhc \tmzero \enve \env \eqObs \inctxfmhc \fmhc \tmone \envef
   \env$.
\end{proof}

\subsection{Examples}%
\label{ss:env-examples}

We illustrate the usefulness of bisimulation up to related contexts, first on
one of our running basic examples.

\begin{exa}[double reset]%
  \label{ex:reset-reset-env}
  The relation
  $\rel {\mathord{\is}} \{ \stat{\emptyset}{\emptyset}{\reset {\reset \tm}}, \stat{\emptyset}{\emptyset}{\reset \tm} \mid \tm \in \cterms \}$ is a bisimulation up to
  context. Indeed, we show in Example~\ref{ex:reset-reset-app} that either
  $\reset \tm \redcbv \reset{\tm'}$ for some $\tm'$ or
  $\reset \tm \redcbv \val$, and $\reset \tm \redcbv \reset{\tm'}$ iff
  $\reset{\reset \tm} \redcbv \reset{\reset {\tm'}}$ and
  $\reset \tm \redcbv \val$
  $\reset{\reset \tm} \redcbv \reset \val \redcbv \val$. After a $\lts\tau$
  step, we either stay in $\rel$, or we get identical terms, i.e., we are in
  $\utrctxv \rel$.
\end{exa}

\noindent We then show how it helps when checking the $\ltsstuck \fmhc$
transition.

\begin{exa}%
  \label{ex:stuck-utctx}
  Let $\sts \is \stat \enve \env \tmzero$ and
  $\stt \is \stat \envef \enve \tmone$ so that $\tmzero$ and $\tmone$ are stuck,
  and $\sts \rel \stt$. If
  $\sts \ltsstuck \fmhc \stat \enve \env {\inctxfmhc \fmhc \tmzero \enve \env}$
  and
  $\stt \ltsstuck \fmhc \stat \envef \envd {\inctxfmhc \fmhc \tmone \envef
    \envd}$, i.e., $\fmhc$ does not trigger the capture in $\tmzero$ and
  $\tmone$, then we can conclude directly since we have
  $\stat \enve \env {\inctxfmhc \fmhc \tmzero \enve \env} \utrctx\rel \stat
  \enve \env {\inctxfmhc \fmhc \tmone \envef \envd}$. Similarly, if
  $\fmhc = \inctx {\fmhc'}{\reset \emhc}$, then
  $\sts \ltsstuck \fmhc \stat \enve \env {\inctxfmhc {\fmhc'}{\reset{\tmzero'}}
    \enve \env}$ and
  $\stt \ltsstuck \fmhc \stat \envef \envd {\inctxfmhc {\fmhc'}
    {\reset{\tmone'}} \envef \envd}$ for some $\tmzero'$ and $\tmone'$, so
  $\rawutrctx$ allows us to forget about $\fmhc'$ and to focus on
  $\stat \enve \env {\tmzero'}$ and $\stat \envef \enve {\tmone'}$.

\end{exa}

\noindent The next example is specific to the original semantics and illustrates
the role of $\holecont_i$. It does not hold in the relaxed semantics, because
the term on the right is control-stuck, but the one on the left may not evaluate
to a control-stuck term if $\tmone$ does not terminate.

\begin{exa}%
  \label{ex:stail-env}
  If $k \notin \fv{\tmone}$, then
  $\stat \emptyset \emptyset {(\lam x {\shift k {\tmzero}}) \iapp \tmone}
  \envbisimp \stat \emptyset \emptyset {\shift k {((\lam x {\tmzero}) \iapp
      \tmone)}}$, as the relation
  \begin{align*}
    \rel \mathord{\is} & \{ (\stat \emptyset \emptyset {\reset {\inctx \ctx
                         {(\lam x {\shift k {\tmzero}}) \iapp \tmone}}},
                         \stat \emptyset \emptyset {\reset {\inctx \ctx {\shift
                         k {((\lam x {\tmzero}) \iapp \tmone)}}}}) \} \\
    \cup & \{ (\statt {\reset {\inctx E {(\lam x {\shift k {\tmzero}}) \iapp \mtctx}}}
           \emptyset, \statt {\reset{\lamp x {\subst {\tmzero} k {\lam y
           {\reset{\inctx E y}}}} \iapp
           \mtctx}} \emptyset)  \}
  \end{align*}
  is a bisimulation up to reduction and related contexts. We start by analyzing
  the behavior of the first pair $(\sts, \stt)$ in $\rel$. If $\tmone$ is a
  value $\valone$, then
  \begin{align*}
    \sts  \lts\tau\lts\tau & \stat \emptyset \emptyset {
    \reset {\subst {\subst \tmzero \varx \valone} \vark {\lam z {\reset {\inctx
              \ctx z}}}}} \mbox{ and} \\
    \stt \lts\tau\lts\tau & \stat \emptyset \emptyset
  {\reset {\subst {\subst \tmzero \vark {\lam z {\reset {\inctx
                \ctx z}}}} \varx \valone}},
  \end{align*}
  but because $\vark \notin \fv \valone$, the resulting states are in fact
  equal, and therefore in $\utrctxv\rel$. If $\tmone \redcbv \tmone'$, then
  \begin{align*}
    \sts  \lts \tau & \stat \emptyset \emptyset {\reset {\inctx \ctx
          {(\lam x {\shift k {\tmzero}}) \iapp \tmone'}}} \mbox{ and}\\
    \stt  \lts \tau \lts\tau & \stat \emptyset \emptyset {\reset{\lamp x
          {\subst {\tmzero} k {\lam y {\reset{\inctx E y}}}}} \iapp \tmone'};
  \end{align*}
  the resulting states are in $\utrctxv \rel$, by considering the common context
  $\appcont {\holecont_1}{\tmone'}$. If $\tmone$ is a control-stuck term
  $\inctx {\ctx'}{\shift {\vark'}{\tmone'}}$, then
  \begin{align*}
    \sts \lts\tau & \reset {\subst {\tmone'}{\vark'}{\lam z {\reset
                     {\inctx \ctx {(\lam x {\shift k {\tmzero}}) \iapp \inctx {\ctx'} z}}}
                     }} \mbox{ and} \\
    \stt \lts\tau\lts\tau & \reset {\subst {\tmone'}{\vark'}{\lam z
                              {\reset{\lamp x {\subst {\tmzero} k {\lam y {\reset{\inctx E y}}}} \iapp
                              \inctx {\ctx'} z}}} }.
  \end{align*}
  Again, the resulting states are in $\utrctxv \rel$, by considering the context
  $\reset {\subst {\tmone'}{\vark'}{\lam z {\reset{\appcont {\holecont_1}{\inctx
            {\ctx'} z}}}}}$. We have covered all the possible cases for
  $(\sts, \stt)$.

  For the second set, if
  \begin{mathpar}
  \statt \enve \emptyset \is \statt {\reset {\inctx E {(\lam x {\shift k
          {\tmzero}}) \iapp \mtctx}}} \emptyset
          \and\mbox{and}\and
  \statt \envef \emptyset \is \statt {\reset{\lam x {\subst {\tmzero} k {\lam y
          {\reset{\inctx E y}}}} \iapp \mtctx}} \emptyset,
  \end{mathpar}
    then
  \begin{align*}
    \statt \enve \emptyset \ltsctxxo 1 \vmhc \fmhc \lts\tau \lts\tau & \stat \enve \emptyset
                                                                       {\reset {\inctxfmhc \fmhc {\reset{\subst {\subst {\tmzero} x {\inctxmhc \vmhc \enve \emptyset}} k {\lam
                                                                       y {\reset {\inctx E y}}}}} \enve \emptyset}} \mbox{ and} \\
    \statt \envef \emptyset \ltsctxxo 1 \vmhc \fmhc \lts\tau & \stat \envef \emptyset
                                                              {\reset {\inctxfmhc \fmhc {\reset{\subst {\subst {\tmzero} x {\inctxmhc \vmhc \envef \emptyset}} k {\lam
                                                              y {\reset {\inctx E y}}}}} \envef \emptyset}} \\
  \end{align*}
  The resulting states are in $\utrctxv\rel$; note that we use the regular up-to
  technique $\rawutrctxv$ only after an active $\lts\tau$ transition, and not
  after the passive $\ltsctxxo 1 \vmhc \fmhc$ transition.
\end{exa}

\noindent The next example shows the limits of bisimulation up to
related contexts for environmental bisimilarity, as we have to define
an infinite candidate relation for a simple example. However, it is
still an improvement over the plain environmental
(Example~\ref{ex:fixed-point-env}) or applicative
(Example~\ref{ex:fixed-point-app}) proofs.

\begin{exa}[Turing's combinator]%
  \label{ex:fixed-point-env-utc}
  Let $\theta' \is \lam \varx {\reset{\app \theta \varx}}$,
  $\valzero \is \lam y {\app y {\lamp z {\app {\app {\app \theta \theta} y}
        z}}}$,
  $\valone \is \lam y {\app y {\lamp z {\app {\app {\app {\theta'}{\theta'}} y}
        z}}}$.  We define $\rel_0$ inductively as follows.
  \begin{mathpar}
    \inferrule{ }
    {\statt \emptyset \valzero \rel_0 \statt \emptyset \valone}
    \and
    \inferrule{\statt \emptyset \env \rel_0 \statt \emptyset \envd}
    {\statt \emptyset {\env, \lam z {\app {\app {\Turing}{\inctxmhc \vmhc
                              \emptyset \env}} z}} \rel_0 \statt \emptyset
                              {\valone, \lam z {\app {\app {\app
                              {\theta'}{\theta'}}{\inctxmhc \vmhc \emptyset \envd}} z}}}
  \end{mathpar}
  Then
  $ \rel {\mathord{\is}} \{ (\stat \emptyset \emptyset \Turing, \stat \emptyset
  \emptyset {\Turingshift}), (\stat \emptyset \env \Turing, \stat \emptyset
  \envd {\app {\theta'}{\theta'}}) \mid \statt \emptyset \env \rel_0 \statt
  \emptyset \envd \} \mathop\cup \rel_0 $ is a bisimulation up to related
  contexts. For the first
  two pairs, we have respectively
  $\stat \emptyset \emptyset \Turing \ltswkval \statt \emptyset \valzero$ and
  $\stat \emptyset \emptyset {\Turingshift} \ltswkval \statt \emptyset
  \valone$, and
  $\stat \emptyset \env \Turing \ltswkval \statt \emptyset {\env, \valzero}$ and
  $\stat \emptyset \envd {\theta' \iapp \theta'} \ltswkval \statt \emptyset
  {\envd, \valone}$ for some
  $\statt \emptyset \env \rel_0 \statt \emptyset \envd$, so we end up in
  $\rel_0$ in both cases.

  Let $\statt \emptyset \env \rel_0 \statt \emptyset \envd$, and suppose we want
  to test $\valzero$ and $\valone$, i.e.,
  \begin{align*}
    \statt \emptyset \env & \ltslam 1 \vmhc \stat \emptyset \env
                                {\inctxmhc \vmhc \emptyset \env \iapp {\lamp
                                z {\Turing \iapp \inctxmhc \vmhc
                                \emptyset \env \iapp z}}} \mbox{ and}\\
    \statt \emptyset \envd & \ltslam 1 \vmhc \stat \emptyset \envd
                                {\inctxmhc \vmhc \emptyset \envd \iapp {\lamp
                                z {\theta' \iapp \theta' \iapp \inctxmhc \vmhc
                                \emptyset \envd \iapp z}}}
  \end{align*}
  At that point, we would like to relate $\Turing$ and $\theta' \iapp
  \theta'$ and conclude using bisimulation up context, however these
  terms are not in an evaluation contexts in the above resulting
  states. Similarly, we cannot isolate $\Turing \iapp \mtctx$ and
  $(\theta' \iapp \theta') \iapp \mtctx$ using $\holecont$ and
  bisimulation up to related contexts, as these contexts are not
  evaluation contexts. Instead, $\rel_0$ has been defined so that the
  resulting states are in $\utrctxv {\rel_0}$.

  Finally, suppose $i > 1$ and
  $\env_i = \lam z {\app {\app {\Turing}{\inctxmhc \vmhc \emptyset {\env'}}} z}$,
  $\envd_i = \lam z {\app {\app {\theta' \iapp \theta'}{\inctxmhc \vmhc
        \emptyset {\envd'}}} z}$ for some~$\vmhc$ and
  $\statt \emptyset {\env'} \rel_0 \statt \emptyset {\envd'}$. Then
  $\statt \emptyset \env \ltslam i {\vmhc'} \stat \emptyset \env {\Turing \iapp
    \inctxmhc \vmhc \emptyset {\env'} \iapp \inctxmhc {\vmhc'} \emptyset \env}$
  and
  $\statt \emptyset \envd \ltslam i {\vmhc'} \stat \emptyset \envd {\theta'
    \iapp \theta' \iapp \inctxmhc \vmhc \emptyset {\envd'} \iapp \inctxmhc
    {\vmhc'} \emptyset \envd}$, and the resulting states are in $\utrctx\rel$.
\end{exa}

\subsection{Proving the Axioms}%
\label{ss:axioms-env}

We prove Kameyama and Hasegawa's axioms (Section~\ref{ss:cps-equivalence}) with
environmental bisimulation. The \AXbeta, \AXresetshift, and \AXresetval axioms
are direct consequences of a more general result.

\begin{prop}%
  \label{p:redcbv-in-envbisim}
  If $\tm \redcbv \tm'$, then $\stat \emptyset \emptyset \tm
  \envbisim \stat \emptyset \emptyset {\tm'}$.
\end{prop}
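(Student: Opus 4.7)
The plan is to exhibit an explicit relation containing $(\stat \emptyset \emptyset \tm, \stat \emptyset \emptyset {\tm'})$ and verify that it diacritically progresses to itself. Concretely, I would take the symmetric closure of
\[
\rel_0 \is \{(\stat \emptyset \emptyset \tmzero, \stat \emptyset \emptyset \tmone) \mmid \tmzero \rtclo\redcbv \tmone\}
\]
together with the diagonal $\{(\sts,\sts) \mmid \sts \text{ a state}\}$, and show that the resulting relation $\rel$ satisfies $\rel \pprogress \rel, \rel$. Since $\tm \redcbv \tm'$ entails $(\stat \emptyset \emptyset \tm, \stat \emptyset \emptyset {\tm'}) \in \rel$, the statement then follows from the fact that $\envbisim$ is the largest bisimulation.

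\textbf{Key steps.} The essential ingredient is determinism of the reduction semantics (Proposition~\ref{p:unique-decomp}), which gives the same determinism for $\lts\tau$ on states of the form $\stat \emptyset \emptyset \tmzero$ by inspection of the first rule of Figure~\ref{fig:lts-env}. For a pair $(\stat \emptyset \emptyset \tmzero, \stat \emptyset \emptyset \tmone)$ with $\tmzero \rtclo\redcbv \tmone$, I enumerate possible transitions: with an empty environment list and empty context list, the only enabled transitions on a term state are $\lts\tau$ (if the term reduces) and $\ltsval$ (if the term is a value); the transition $\ltsstuck \fmhc$ would require a stuck term, and the environment-state transitions $\ltsctxx j \vmhc$, $\ltspure j$, $\ltsnotpure j$ are vacuous. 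An internal step from the source is matched by the target doing zero steps, using determinism to pin down the resulting state so that the residual pair lies in the diagonal; a passive $\ltsval$ transition (or an internal step) from the target is matched by the source performing the reduction $\stat \emptyset \emptyset \tmzero \rtclo{\lts\tau} \stat \emptyset \emptyset \tmone$ first and then mimicking, again landing in the diagonal. The diagonal is trivially closed under the progress game.

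\textbf{Alternative via up-to reduction.} A more economical route exploits the machinery of Section~\ref{ss:env-upto}: the identity on states is trivially a bisimulation, and $\rawutred \in \strong\setF$ by Lemma~\ref{l:compatible}. Since $\stat \emptyset \emptyset \tm \rtclo\redcbv \stat \emptyset \emptyset {\tm'}$ and $\stat \emptyset \emptyset {\tm'} \rtclo\redcbv \stat \emptyset \emptyset {\tm'}$, the pair lies in $\utred{\text{id}}$, and Proposition~\ref{p:properties-compatibility-better} yields $(\stat \emptyset \emptyset \tm, \stat \emptyset \emptyset {\tm'}) \in \envbisim$. This formulation avoids the explicit case analysis entirely.

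\textbf{Main obstacle.} There is no serious obstacle here; the only point requiring attention is the diacritical distinction between passive and active transitions. The passive transition $\ltsval$ from the (possibly value) right-hand side produces an environment state on the diagonal, so no non-strong up-to technique is invoked after a passive transition and the progress condition $\rel \pprogress \rel, \rel$ holds without complication. Determinism of $\redcbv$ is the only substantive ingredient, already provided by Proposition~\ref{p:unique-decomp}.
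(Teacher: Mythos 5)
Your proposal is correct and matches the paper's proof in spirit: the paper simply observes that the singleton $\{(\stat \emptyset \emptyset \tm, \stat \emptyset \emptyset {\tm'})\}$ is a bisimulation up to context because, by determinism of $\redcbv$, after a transition both sides become identical; your alternative route via $\rawutred$ applied to the identity relation is an equally valid and arguably cleaner packaging of the same observation. The one inaccuracy is in your explicit-relation route: with empty environments the transition $\ltsstuck \fmhc$ is \emph{not} vacuous, since a closed term can be (or reduce to) a control-stuck term (e.g.\ $\app {\lamp \varx {\shift \vark \Omega}} i \redcbv \shift \vark \Omega$), so this case must appear in your enumeration---it is handled exactly like the value case (by determinism, once one side is stuck the other catches up to the same term), so nothing breaks, and your up-to-reduction route sidesteps the case analysis entirely.
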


\begin{proof}
  It is easy to see that $\{ (\stat \emptyset \emptyset \tm,  \stat \emptyset
  \emptyset {\tm'}) \}$ is a bisimulation up to context: after a transition, we
  get identical terms.
\end{proof}

\begin{prop}[\AXetav axiom]
  If $\varx \notin \fv \val$, then
  $\stat \emptyset \emptyset {\lam \varx {\app \val \varx}} \envbisim \stat
  \emptyset \emptyset \val$.
\end{prop}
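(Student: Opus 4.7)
The plan is to build an explicit bisimulation up to context. Since $\val$ sits inside a closed state, it must itself be closed, hence of the form $\val = \lam y \tm$ with $\fv \tm \subseteq \{y\}$. I will take as candidate
\[
\rel \;\is\; \{(\stat \emptyset \emptyset {\lam \varx {\app \val \varx}},\ \stat \emptyset \emptyset \val)\}\ \cup\ \{(\statt \emptyset {(\lam \varx {\app \val \varx})},\ \statt \emptyset {(\val)})\}
\]
and show $\rel \pprogress {\widehat{\mathsf{strong}(\setF)}}^\omega(\rel), {\fid\setF}^\omega(\rel)$ for the respectful set $\setF$ of Lemma~\ref{l:compatible}. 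Soundness then follows from Proposition~\ref{p:properties-compatibility-better}.

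For the first pair the only enabled transition is $\ltsval$ on either side, and both sides land exactly on the second pair of $\rel$; since $\ltsval$ is passive this match suffices without invoking any up-to technique. For the second pair the only enabled transitions are $\ltslam 1 \vmhc$ for each closed $\vmhc$. On the LHS we get the term state containing $\app \val {\inctxmhc \vmhc \emptyset {(\lam \varx {\app \val \varx})}}$, which internally $\beta$-reduces via $\lts\tau$ to $\subst \tm y {\inctxmhc \vmhc \emptyset {(\lam \varx {\app \val \varx})}}$. On the RHS the corresponding $\ltswk{\ltslam 1 \vmhc}$ transition lands directly on the term state with $\subst \tm y {\inctxmhc \vmhc \emptyset {(\val)}}$, with no internal step needed.

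The key observation is that these two resulting term states share a common multi-hole term context. Writing $\mhc' \is \subst \tm y \vmhc$ for the multi-hole term context obtained by substituting the multi-hole value context $\vmhc$ for $y$ in $\tm$, its holes are exactly those of $\vmhc$, and since the environments contain only closed values, substitution commutes with plugging, giving
\[
\inctxmhc {\mhc'} \emptyset {(\lam \varx {\app \val \varx})} \;=\; \subst \tm y {\inctxmhc \vmhc \emptyset {(\lam \varx {\app \val \varx})}}, \qquad \inctxmhc {\mhc'} \emptyset {(\val)} \;=\; \subst \tm y {\inctxmhc \vmhc \emptyset {(\val)}}.
\]
Since $\statt \emptyset {(\lam \varx {\app \val \varx})} \rel \statt \emptyset {(\val)}$, the two resulting term states are related by $\utrctxv \rel$ (taking the sequences $\fmhcs$ and $\mhcvs$ empty). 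The transition $\ltslam 1 \vmhc$ is active, so we are allowed to use the non-strong technique $\rawutrctxv$ after it.

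The only subtlety I foresee is keeping track of which transitions are passive versus active: the up-to-context step must happen after $\ltslam 1 \vmhc$, not after $\ltsval$, where only strong techniques (such as $\rawutred$ and $\rawweak$) may be used. Here the design of $\rel$ precisely arranges that no up-to closure is needed after the passive $\ltsval$ step, so the argument goes through cleanly.
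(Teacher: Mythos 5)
Your proof is correct and follows essentially the same route as the paper: the singleton environment-state relation, the $\ltslam 1 \vmhc$ transition followed by one $\beta$-step on the left, and the common context $\subst \tm y \vmhc$ closed under $\rawutrctxv$ after the active transition (the paper phrases the absorption of that $\lts\tau$ step as ``up to reduction''). The only difference is that you explicitly add the initial term-state pair and the passive $\ltsval$ step, which the paper leaves implicit.
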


\begin{proof}
  The relation
  $\rel {\mathord{\is}} \{(\statt \emptyset {\lam \varx {\app \val \varx}}, \statt \emptyset
  \val) \}$ is a bisimulation up to context up to reduction. If
  $\val = \lam{y}{\tm}$, then the $\ltslam 1 \vmhc$ transition produces
  $\stat \emptyset {\lam \varx {\app \val \varx}}{ \app \val {\inctxmhc \vmhc
      \emptyset {\lam \varx {\app \val \varx}}}}$ and
  $\stat \emptyset \val {\subst \tm y {\inctxmhc \vmhc \emptyset \val}}$. Then
  $\stat \emptyset {\lam \varx {\app \val \varx}}{ \app \val {\inctxmhc \vmhc
      \emptyset {\lam \varx {\app \val \varx}}}} \lts\tau \stat \emptyset {\lam
    \varx {\app \val \varx}} {\subst \tm y {\inctxmhc \vmhc \emptyset {\lam
        \varx {\app \val \varx}}}} \utrctxv\rel \stat \emptyset \val {\subst \tm
    y {\inctxmhc \vmhc \emptyset \val}}$, as wished.
\end{proof}

\begin{prop}[\AXshiftreset axiom]
  We have $\stat \emptyset \emptyset {\shift \vark {\reset \tm}} \envbisim \stat
  \emptyset \emptyset {\shift \vark \tm}$.
\end{prop}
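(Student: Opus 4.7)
The plan is to exhibit the candidate relation
\[
\rel \is \{(\stat \emptyset \emptyset {\shift \vark {\reset \tm}}, \stat \emptyset \emptyset {\shift \vark \tm}) \mid \tm \in \cterms\} \cup \{(\stat \emptyset \emptyset {\reset{\reset \tm}}, \stat \emptyset \emptyset {\reset \tm}) \mid \tm \in \cterms\}
\]
and to verify that it is an environmental bisimulation up to related contexts (and up to reduction). Since both sides of the target pair are control-stuck, there are no $\lts\tau$ or $\ltsval$ transitions to match, and the bisimulation game reduces to the $\ltsstuck \fmhc$ clause for closed multi-hole evaluation contexts $\fmhc$. Because the environments are empty, each such $\fmhc$ amounts to an ordinary closed evaluation context $\rctx$, and any such $\rctx$ is either itself pure or admits a unique decomposition $\rctx = \inctx {\rctx_1}{\reset{\inctx \ctx \mtctx}}$ with $\ctx$ the innermost pure context around the hole.

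For a pair from the first family, these two sub-cases on $\rctx$ correspond to the two possible behaviors of $\ltsstuck \fmhc$. When $\rctx$ is pure, both plugged terms are themselves control-stuck, so $\redbis$ leaves them unchanged, and the common evaluation context $\rctx$ places the resulting states $\stat \emptyset \emptyset {\inctx \rctx {\shift \vark {\reset \tm}}}$ and $\stat \emptyset \emptyset {\inctx \rctx {\shift \vark \tm}}$ into $\utrctx\rel$ via the first family of $\rel$ itself. When $\rctx = \inctx {\rctx_1}{\reset{\inctx \ctx \mtctx}}$, applying $\RRshift$ on both sides yields respectively $\inctx {\rctx_1}{\reset{\reset {\tm'}}}$ and $\inctx {\rctx_1}{\reset {\tm'}}$, where $\tm' \is \subst \tm \vark {\lam \varx {\reset{\inctx \ctx \varx}}}$, and the common evaluation context $\rctx_1$ places these resulting states into $\utrctx\rel$ via the second family.

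For pairs from the second family, the analysis reproduces Example~\ref{ex:reset-reset-env}: $\reset \tm$ either makes an internal step to some $\reset{\tm''}$ (matched lock-step by $\reset{\reset \tm} \lts\tau \reset{\reset{\tm''}}$ and staying in $\rel$) or reduces to a value (matched by $\reset{\reset \tm}$ in two steps, bridged by $\rawutred$). By the respectfulness of $\{\rawutred, \rawweak, \rawutrctxv, \rawutrctx\}$ (Lemma~\ref{l:compatible}), this yields $\rel \subseteq \envbisim$, which concludes. The only genuinely delicate point is the case split on $\rctx$ in the stuck-term clause—ensuring it is exhaustive and that the innermost enclosing \resetId{} around the hole is well-defined—after which the proof reduces to the observation that the extra \resetId{} exposed on the left-hand side after capture is absorbed into $\reset{\reset \cdot}$ and eliminated via Example~\ref{ex:reset-reset-env}.
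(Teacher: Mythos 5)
Your proof is correct and follows essentially the same route as the paper: the same two-family candidate relation, the same case split on whether the testing evaluation context contains a \resetId{} enclosing the hole, the same use of bisimulation up to related contexts to strip the common outer context after capture, and the same appeal to Example~\ref{ex:reset-reset-env} for the double-\resetId{} pairs.
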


\begin{proof}
  The relation
  $\rel {\mathord{\is}}\{(\stat \emptyset \emptyset {\shift \vark {\reset \tm}},
  \stat \emptyset \emptyset {\shift \vark \tm}), (\stat \emptyset \emptyset
  {\reset{\reset \tm}}, \stat \emptyset \emptyset {\reset \tm}) \mmid \tm \in
  \cterms \}$ is a bisimulation up to context. Indeed, if
  $\fmhc = \inctx \rctx {\reset \ctx}$ (the other case being trivial), then
  $\stat \emptyset \emptyset {\shift \vark {\reset \tm}} \ltsstuck \fmhc \stat
  \emptyset \emptyset {\inctx \rctx {\reset{\reset{\subst \tm \vark {\lam \varx
            {\reset{\inctx \ctx \varx}}}}}}}$ and
  $\stat \emptyset \emptyset {\shift \vark \tm} \ltsstuck \fmhc \stat \emptyset
  \emptyset {\inctx \rctx {\reset{\subst \tm \vark {\lam \varx {\reset{\inctx
              \ctx \varx}}}}}}$. We obtain states of the form
  $\stat \emptyset \emptyset {\inctx \rctx {\reset{\reset {\tm'}}}}$ and
  $\stat \emptyset \emptyset {\inctx \rctx {\reset{\tm'}}}$, which are in
  $\utrctx\rel$. Example~\ref{ex:reset-reset-env} concludes in the case of the pair
  $(\stat \emptyset \emptyset {\reset{\reset \tm}}, \stat \emptyset \emptyset
  {\reset \tm})$.
\end{proof}

\begin{prop}[\AXresetlift axiom]%
  \label{p:reset-beta-env}
  We have
  $\stat \emptyset \emptyset {\reset{\app {\lamp \varx \tmzero}{\reset \tmone}}}
  \envbisim \stat \emptyset \emptyset {\app{\lamp \varx {\reset \tmzero}}{\reset
      \tmone}}$.
\end{prop}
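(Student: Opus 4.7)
The plan is to transpose the reasoning from the applicative bisimulation proof of Proposition~\ref{p:resetlift-app} to the environmental setting, exploiting the up-to techniques from Figure~\ref{fig:env-upto} that Lemma~\ref{l:compatible} validates. I would take as candidate the family of pairs
\[
\rel \is \{(\stat \emptyset \emptyset {\reset{(\lam x \tmzero) \iapp (\reset \tmone)}}, \stat \emptyset \emptyset {(\lam x {\reset \tmzero}) \iapp (\reset \tmone)}) \mid \tmzero, \tmone \in \cterms\}
\]
and show that it is a bisimulation up to reduction and related contexts.

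Since neither state is a value nor a stuck term, only $\lts\tau$-transitions have to be matched. I would proceed by case analysis on the step performed. If the step occurs entirely inside $\tmone$ (either $\tmone \redcbv \tmone'$, or $\tmone$ is control-stuck and the capture fires through the inner \textreset), then both states take the corresponding step by compatibility of $\lts\tau$ with evaluation contexts, and the resulting pair has the same shape and is still in $\rel$. The crucial case is when $\tmone$ has reduced to a value $\val$ and $\reset \val \lts\tau \val$ fires on both sides: then one further $\lts\tau$-step on each side yields the common state $\stat \emptyset \emptyset {\reset{\subst \tmzero x \val}}$, and bisimulation up to reduction absorbs the slight desynchronization between the two sides.

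I do not expect serious obstacles. The key fact I would invoke is Proposition~\ref{p:eval-reset}, which guarantees that $\reset \tmone$ cannot evolve into a stuck term, making the above case analysis exhaustive; the inner \textreset also bounds any capture originating inside $\tmone$, so the two states remain in lockstep. The environmental proof ends up being substantially simpler than its applicative counterpart because the empty environment rules out the testing transitions $\ltslam i \vmhc$, $\ltsctxx j \vmhc$, $\ltspure j$, $\ltsnotpure j$, and because $\rawutred$ together with $\rawutrctxv$ trivially discards the common state that both sides reach after the $\beta_v$-redex fires.
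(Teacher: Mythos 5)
Your proof is correct and takes essentially the same approach as the paper: the paper's own proof exhibits the same candidate relation and declares it a bisimulation up to context ``with the same reasoning as in Example~\ref{ex:reset-reset-env}'', which is exactly the case analysis on the evolution of $\reset \tmone$ that you spell out. You merely make explicit what the paper leaves implicit, namely the appeal to Proposition~\ref{p:eval-reset} and the one-step desynchronization absorbed by up-to reduction once both sides reach $\reset{\subst \tmzero \varx \val}$.
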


\begin{proof}
  The relation
  $ \{ (\stat \emptyset \emptyset {\reset{\app {\lamp \varx \tm}{\reset
        {\tm'}}}}, \stat \emptyset \emptyset {\app{\lamp \varx {\reset
        \tm}}{\reset {\tm'}}}) \mmid \tm' \in \cterms \}$ is a bisimulation up
  to context, with the same reasoning as in Example~\ref{ex:reset-reset-env}.
\end{proof}

\begin{prop}[\AXbetaomega axiom]%
  \label{p:omega-env}
  If $\varx \notin \fv \ctx$, then $\stat \emptyset \emptyset {\app {\lamp \varx {\inctx \ctx
      \varx}} \tm} \envbisim \stat \emptyset \emptyset {\inctx \ctx \tm}$.
\end{prop}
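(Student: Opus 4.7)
The approach is to exhibit a candidate relation $\rel$ containing the pair
$(\stat \emptyset \emptyset {\app {\lamp \varx {\inctx \ctx \varx}} \tm}, \stat \emptyset \emptyset {\inctx \ctx \tm})$ for every closed pure $\ctx$ with $\varx \notin \fv \ctx$ and every closed $\tm$, and to prove it is a bisimulation up to reduction and up to related contexts (both justified by Lemma~\ref{l:compatible}). A case analysis on the shape of $\tm$ reveals three sub-cases: if $\tm$ is a value $\val$, then $\app {\lamp \varx {\inctx \ctx \varx}} \val \redcbv \inctx \ctx \val$ in one step, matching the right-hand side via $\utred$; if $\tm \redcbv \tm'$, the two sides reduce in lockstep and remain in $\rel$. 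The delicate sub-case is $\tm = \inctx {\ctxzero}{\shift \vark s}$, where both sides are stuck and one must match $\ltsstuck \fmhc$ for every closed evaluation context.

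When the filling of $\fmhc$ yields a closed $\rctx$ that supplies no enclosing \resetId{}, both sides remain stuck after plugging, and the resulting states lie in $\utrctx\rel$ via the common context $\rctx$ with the inner pair of axiom shape
$(\app {\lamp \varx {\inctx \ctx \varx}}{\inctx {\ctxzero}{\shift \vark s}}, \inctx \ctx {\inctx {\ctxzero}{\shift \vark s}})$. When $\rctx = \inctx {\rctxzero}{\reset {\ctxone}}$ triggers a capture, the continuations substituted for $\vark$ are
$\lamp z {\reset{\inctx {\ctxone}{\app {\lamp \varx {\inctx \ctx \varx}}{\inctx {\ctxzero} z}}}}$ and $\lamp z {\reset{\inctx {\ctxone}{\inctx \ctx {\inctx {\ctxzero} z}}}}$, differing once more by the very axiom under proof (now applied to the pure context $\inctx {\ctxone}{\mtctx \iapp \inctx {\ctxzero} z}$). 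The two resulting terms share the evaluation context $\inctx {\rctxzero}{\reset \mtctx}$; to close the game I extend $\rel$ with environment-state pairs carrying such axiom-related continuations in the environment, along with term-state pairs whose environments mirror this shape. The $\ltslam 1 \vmhc$ game on such a new environment pair then reduces, after a $\lts\tau$ step, to a pair of terms of axiom shape nested inside the common pure context $\reset{\inctx {\ctxone} \mtctx}$, which closes via $\utrctx$ back into $\rel$.

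The principal obstacle is the recursive nesting introduced by each capture: the substituted continuations differ exactly by the axiom, but once testing arguments built from multi-hole contexts $\vmhc$ referring to axiom-related environment values are plugged in, the resulting inner terms no longer align identically on the two sides. Accommodating this cleanly requires $\rel$ to be carefully parameterized over all compatible shapes of environments; the bisimulation up to related contexts with the context holes $\holecont_j$ introduced for $\lamshift$ in Section~\ref{ss:env-informal} is exactly what reduces these more complex states back to the primary axiom instance. This is precisely what avoids the explicit enumeration of nested context and substitution sequences needed in the applicative-bisimulation proof sketched for Proposition~\ref{p:omega-app}.
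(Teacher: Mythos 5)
You have the right tool in hand (bisimulation up to related contexts with the context holes $\holecont_j$), but the proof as structured does not close, and the gap sits exactly in the step you flag as delicate. After a capture in the stuck case you reach environment states whose value environments contain the two continuations $\valzero \is \lam z {\reset{\inctx {\ctxone}{\app {\lamp \varx {\inctx \ctx \varx}}{\inctx {\ctxzero} z}}}}$ and $\valone \is \lam z {\reset{\inctx {\ctxone}{\inctx \ctx {\inctx {\ctxzero} z}}}}$; testing these with $\ltslam 1 \vmhc$ plugs two \emph{different} values $\inctxmhc \vmhc \enve \env$ and $\inctxmhc \vmhc \envef \envd$ into the hole of $\ctxzero$. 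The resulting pair is not ``of axiom shape nested inside $\reset{\inctx {\ctxone} \mtctx}$'': the inner terms $\inctx {\ctxzero}{\inctxmhc \vmhc \enve \env}$ and $\inctx {\ctxzero}{\inctxmhc \vmhc \envef \envd}$ differ on the two sides and need not evolve in lockstep (if $\ctxzero \neq \mtctx$ there is no $\beta$-step to fire at the outer redex), so $\rawutrctx$ cannot send you back to a pair $(\app {\lamp \varx {\inctx \ctx \varx}} \tm, \inctx \ctx \tm)$ with a single common $\tm$ and empty environments --- weakening only deletes environment entries, it cannot reconcile them, and strengthening only adds entries built by a \emph{common} multi-hole context, which $\valzero$ and $\valone$ are not. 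To absorb the residual discrepancy $\lamp \varx {\inctx \ctx \varx} \iapp \mtctx$ versus $\ctx$ into a common context you must store these two contexts as related entries of $\enve$ and $\envef$ and reach them through $\holecont_1$; and once you do that, the entire case analysis on $\tm$ becomes unnecessary.

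That observation is the paper's whole proof: take $\enve \is (\lamp \varx {\inctx \ctx \varx} \iapp \mtctx)$ and $\envef \is (\ctx)$, and let $\rel$ consist of the environment-state pair $(\statt \enve \emptyset, \statt \envef \emptyset)$ together with the term-state pair it yields under $\ltsctxx 1 \vmhc$. The initial states are then in $\weak{\utrctxv\rel}$ via the common multi-hole context $\appcont{\holecont_1}\tm$, so the shape of $\tm$ is never inspected; the only transitions to check are $\ltsctxx 1 \vmhc$, where one $\beta$-step realigns the two sides into the common context $\inctx \ctx \vmhc$ over the related environment states (using $\rawutrctxv$ after the active $\lts\tau$ step, not after the passive $\ltsctxx 1 \vmhc$), and $\ltspure 1$, which is immediate. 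Stuck terms, captures, and continuation values never appear. I suggest you restructure the proof around that environment-state pair rather than around a case analysis on $\tm$.
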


\begin{proof}
  Define $\enve \is (\lamp \varx {\inctx \ctx \varx} \iapp
  \mtctx)$, $\envef \is (E)$, and
  \[ \rel \mathord{\is} \{ (\statt \enve \emptyset, \statt \envef \emptyset),
    (\stat \enve \emptyset {\app {\lamp \varx {\inctx \ctx \varx}}{\inctxmhc
        \vmhc \enve \emptyset}}, \stat \envef \emptyset {\inctx \ctx {\inctxmhc
        \vmhc \enve \emptyset}} \}.\]

  \noindent
  Then
  $\stat \emptyset \emptyset {\app{(\lam x {\inctx E x})} \tm}
  \weak{\utrctxv\rel} \stat \emptyset \emptyset {\inctx E \tm}$ and $\rel$ is a
  bisimulation up to context, since the sequence
  $\statt \enve \emptyset \ltsctxx 1 \vmhc \stat \enve \emptyset {\app{(\lam x
      {\inctx E x})}{\inctxmhc \vmhc \enve \emptyset}} \lts\tau \stat \enve
  \emptyset {\inctx E {\inctxmhc \vmhc \enve \emptyset}}$ fits
  $\statt \envef \emptyset \ltsctxx 1 \vmhc \stat \envef \emptyset {\inctx E
    {\inctxmhc \vmhc \envef \emptyset}} \ltswk\tau \stat \envef \emptyset
  {\inctx E {\inctxmhc \vmhc \envef \emptyset}}$, where the final states are in
  $\utrctxv \rel$. Notice we use~$\rawutrctxv$ after $\lts\tau$, and not after the
  passive $\ltsctxx 1 \vmhc$ transition. The transition $\ltspure 1$ is easy to
  check.
\end{proof}

\begin{prop}%
  \label{p:skkt-env}
  If $k \notin \fv \tm$, then $\stat \emptyset \emptyset {\shift k {\app k
    \tm}} \envbisimp \stat \emptyset \emptyset \tm$.
\end{prop}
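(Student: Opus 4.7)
The plan is to mimic the bisimulation-up-to proof of Proposition~\ref{p:omega-env} (our proof of the $\AXbetaomega$ axiom), adapted to absorb the extra $\reset$ that the $\shift$-reduction introduces. By the extension of $\envbisimp$ to impure terms, the statement reduces to showing, for every pure context $\ctx$, that $\stat \emptyset \emptyset {\reset{\inctx \ctx {\shift k {\app k \tm}}}} \envbisimp \stat \emptyset \emptyset {\reset{\inctx \ctx \tm}}$. Since $k \notin \fv \tm$, one $\shift$-reduction step rewrites the left-hand term to $\reset{\app{(\lam x {\reset{\inctx \ctx x}})}{\tm}}$, so by bisimulation up to reduction it is enough to find a pure bisimulation up to techniques relating $\stat \emptyset \emptyset {\reset{\app{(\lam x {\reset{\inctx \ctx x}})}{\tm}}}$ with $\stat \emptyset \emptyset {\reset{\inctx \ctx \tm}}$.

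I would then set $\enve \is (\app{(\lam x {\reset{\inctx \ctx x}})}{\mtctx})$ and $\envef \is (\inctx \ctx \mtctx)$; both are pure as evaluation contexts (the $\reset$ of $\enve_1$ is guarded by a $\lambda$, and $\ctx$ is pure by hypothesis). Using the delimited multi-hole evaluation context $\fmhc \is \reset{\appcont{\holecont_1}{\mtctx}}$, one has $\inctxfmhc \fmhc \tm \enve \emptyset = \reset{\app{(\lam x {\reset{\inctx \ctx x}})}{\tm}}$ and $\inctxfmhc \fmhc \tm \envef \emptyset = \reset{\inctx \ctx \tm}$, so bisimulation up to related contexts ($\rawutrctx$) reduces the task to producing a bisimulation up to techniques $\rel$ that contains the environment-state pair $(\statt \enve \emptyset, \statt \envef \emptyset)$.

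Starting from that environment pair the easy cases are immediate: $\ltspure 1$ matches since $\enve_1$ and $\envef_1$ are pure, and $\ltsnotpure 1$ fires on neither side. The critical transition is the passive $\ltsctxxo 1 \vmhc \fmhc'$ for any delimited $\fmhc'$ (the delimitation predicate is symmetric because of purity). With $v = \inctxmhc \vmhc \enve \emptyset$ and $w = \inctxmhc \vmhc \envef \emptyset$, it produces $\stat \enve \emptyset {\inctxfmhc {\fmhc'}{\app{(\lam x {\reset{\inctx \ctx x}})} v} \enve \emptyset}$ on the left and $\stat \envef \emptyset {\inctxfmhc {\fmhc'}{\inctx \ctx w} \envef \emptyset}$ on the right; these are of the shape $\stat \enve \emptyset {\inctxmhc {\mhc_*} \enve \emptyset}$ and $\stat \envef \emptyset {\inctxmhc {\mhc_*} \envef \emptyset}$ for the common multi-hole context $\mhc_* \is \inctxfmhc {\fmhc'}{\appcont{\holecont_1}{\vmhc}}$, so $\rawutrctx$ alone closes this passive step back to the initial environment pair.

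The main obstacle lies one step deeper: the bisimulation game on those term states requires a further $\lts\tau$ on the left, yielding $\stat \enve \emptyset {\inctxfmhc {\fmhc'}{\reset{\inctx \ctx v}} \enve \emptyset}$, which carries an outer $\reset$ around $\inctx \ctx v$ that is absent on the right. Inside the already-delimited $\fmhc'$ this is the double-reset phenomenon of Example~\ref{ex:reset-reset-env} and should be benign, but to close $\rel$ cleanly I would enlarge it with all pairs $(\stat \enve \env {\inctxfmhc {\fmhc'}{\reset{\inctx \ctx {\inctxmhc \vmhc \enve \env}}} \enve \env}, \stat \envef \envd {\inctxfmhc {\fmhc'}{\inctx \ctx {\inctxmhc \vmhc \envef \envd}} \envef \envd})$ ranging over delimited $\fmhc'$, value multi-hole contexts $\vmhc$, and parallel environment extensions $\env, \envd$ produced by subsequent $\ltsval$-transitions. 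Verifying that the enlarged $\rel$ progresses to itself up to $\rawutred$ and $\rawutrctx$ will come down to a structural analysis of the reductions of $\inctx \ctx {\inctxmhc \vmhc \enve \env}$ and $\inctx \ctx {\inctxmhc \vmhc \envef \envd}$, essentially the analysis of Proposition~\ref{p:omega-env} with one extra $\reset$-level to discharge at each step, after which Lemma~\ref{l:compatible} and Proposition~\ref{p:properties-compatibility-better} conclude.
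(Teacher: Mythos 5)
There is a genuine gap, and it lies in the choice of environment. By storing the \emph{pure} contexts $\enve \is (\app{(\lam x {\reset{\inctx \ctx x}})}{\mtctx})$ and $\envef \is (\ctx)$, you commit to proving that these two contexts are interchangeable inside \emph{every} delimited testing context $\fmhc'$ offered by the passive transition $\ltsctxxo 1 \vmhc {\fmhc'}$ --- and that statement is false. The left context re-installs a delimiter around $\ctx$ when applied to a value, the right one does not, so a control effect raised inside $\ctx$ is caught at different places on the two sides. Concretely, take $\ctx \is \vctx {(\lam z {\shift {k'} i})} \mtctx$ and $\fmhc' \is \reset{\app \mtctx \Omega}$. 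On the left, $\reset{\app{(\app{(\lam x {\reset{\inctx \ctx x}})}{v})}{\Omega}} \clocbv \reset{\app{(\reset{\inctx \ctx v})}{\Omega}} \clocbv \reset{\app i \Omega}$, which diverges because the inner \resetId{} confines the capture to the empty context and leaves $\app \mtctx \Omega$ in place; on the right, $\reset{\app{(\inctx \ctx w)}{\Omega}} \clocbv \reset{\app{(\shift {k'} i)}{\Omega}} \clocbv \reset{i}$, which converges, because the \shiftId{} reaches out and discards $\app \mtctx \Omega$. So the environment pair $(\statt \enve \emptyset, \statt \envef \emptyset)$ is \emph{not} in $\envbisimp$, and no enlargement of the candidate relation can repair this: the proposed relation is simply not a bisimulation up to anything sound. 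Your intuition that the mismatch is "the double-reset phenomenon" is exactly what goes wrong: Example~\ref{ex:reset-reset-env} applies only when both sides carry a \resetId{} at the same position, which your pure environment entries fail to guarantee.

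The paper's proof avoids this by storing the contexts \emph{with} their delimiters, $\enve \is (\reset{\lamp x {\reset{\inctx \ctx x}} \iapp \mtctx}, \reset{\reset \mtctx})$ versus $\envef \is (\reset \ctx, \reset \mtctx)$: the built-in \resetId{} seals off all control effects of the plugged value identically on both sides, and the auxiliary pair $(\reset{\reset \mtctx}, \reset \mtctx)$ absorbs, via $\rawutrctxv$, the extra delimiter that the left-hand $\beta_v$-step produces. With that choice the relation closes after one or two active $\lts\tau$ steps following each passive transition, with no structural analysis of $\ctx$. A secondary point: even setting the above aside, your claim that "$\rawutrctx$ alone closes this passive step" is not permitted, since $\rawutrctx$ is not in $\strong\setF$ (Lemma~\ref{l:compatible}) and the framework only allows strong techniques after a passive transition --- this is precisely the pitfall motivating the active/passive distinction in Section~\ref{ss:env-informal}; the up-to-related-contexts step must come after the subsequent active $\lts\tau$, as in Proposition~\ref{p:omega-env}.
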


\begin{proof}
  Let
  $\enve \is (\reset{\lam x {\reset {\inctx E x}} \iapp \mtctx}, \reset{\reset
    \mtctx})$, $\envef \is (\reset E, \reset \mtctx)$, and
  \[ \rel \mathord{\is} \{ (\stat \emptyset \emptyset {\reset {\inctx \ctx
        {\shift k {\app k \tm}}}}, \stat \emptyset \emptyset {\reset {\inctx
        \ctx \tm}}), (\statt \enve \emptyset, \statt \envef \emptyset) \mmid x
    \notin \fv E \}.\]
  For the first pair, we have $\stat \emptyset \emptyset
  {\reset {\inctx \ctx {\shift k {\app k \tm}}}} \lts\tau \stat \emptyset
  \emptyset {\reset {\lamp x {\reset {\inctx \ctx \varx}}} \iapp \tm}$ and $\stat
  \emptyset \emptyset {\reset {\inctx \ctx \tm}} \ltswk\tau \stat \emptyset
  \emptyset {\reset {\inctx \ctx \tm}}$ so that the resulting states are in
  $\utrctx \rel$, by considering the context $\appcont {\holecont_1} \tm$.

  Otherwise, the sequence
  $\statt \enve \emptyset \ltsctxxo 1 \vmhc \fmhc \lts\tau \stat \enve \emptyset
  {\reset {\inctxfmhc \fmhc {\reset{\reset{\inctx E {\inctxmhc \vmhc \enve
              \emptyset}}}} \enve \emptyset}}$ is matched by
  $\statt \enve \emptyset \ltswkctxxo 1 \vmhc \fmhc \stat \enve \emptyset
  {\reset {\inctxfmhc \fmhc {\reset{\inctx E {\inctxmhc \vmhc \envef
            \emptyset}}} \envef \emptyset}})$, since the resulting states are in
  $\utrctxv\rel$, and we use up to related contexts after a $\lts\tau$
  transition. Finally,
  $\statt \enve \emptyset \ltsctxxo 2 \vmhc \fmhc \lts\tau \lts\tau \stat \enve
  \emptyset {\reset {\inctxfmhc \fmhc {\inctxmhc \vmhc \enve \emptyset} \enve
      \emptyset}}$ is matched by
  $\statt \envef \emptyset \ltsctxxo 2 \vmhc \fmhc \lts\tau \stat \envef \emptyset
  {\reset {\inctxfmhc \fmhc {\inctxmhc \vmhc \envef \emptyset} \envef
      \emptyset}}$, and the context splitting transitions $\ltsnotpure i$ are
  easy to check for $i \in \{ 1, 2\}$.
\end{proof}

The bisimilarity $\envbisimp$ verifies all the axioms of $\cpsequiv$, it is
therefore complete {\wrt}this relation.

\begin{cor}
  We have $\mathord{\cpsequiv} \subseteq \mathord{\onfbisim}$.
\end{cor}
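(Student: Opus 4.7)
The plan is to use the fact that by Kameyama and Hasegawa's theorem (Section~\ref{ss:cps-equivalence}), $\cpsequiv$ is exactly the smallest congruence on $\lamshift$ terms containing the eight axioms of Figure~\ref{f:axioms}. Hence to establish the inclusion it suffices to check two ingredients: that the target bisimilarity satisfies every axiom instance, and that it is itself a congruence. Once both are available, a straightforward induction on the derivation of $\tmzero \cpsequiv \tmone$ gives $\tmzero \onfbisim \tmone$.

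For the first ingredient, almost everything has already been proven in the preceding propositions. The axioms \AXbeta, \AXresetshift, and \AXresetval are direct instances of a single reduction step, so they are absorbed by Proposition~\ref{p:redcbv-in-envbisim}. The axioms \AXetav, \AXshiftreset, \AXresetlift, and \AXbetaomega are validated by the four explicit propositions preceding this corollary, and the critical axiom \AXshiftelim (the only one that hinges on an enclosing \resetId{}) is handled by Proposition~\ref{p:skkt-env}, which is precisely where the discrepancy between the relaxed and the original semantics surfaces: that is why the conclusion is stated for the bisimilarity associated with the original semantics and not for its counterpart from Section~\ref{ss:env-LTS}. I would open the proof by listing the matching of each axiom to its corresponding proposition, extending the on-closed-terms results to the open extension via Definition~\ref{d:open-extension}.

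For the second ingredient, the bisimilarity is reflexive and symmetric by definition and transitive by a routine diagonal argument on the LTS of Figure~\ref{fig:lts-env-original}; compatibility is the consequence of Lemma~\ref{l:compatible} together with the third clause of Proposition~\ref{p:properties-compatibility-better}, which ensures that $\rawutrctx$ preserves bisimilarity. Concretely, given related subterms, plugging them into a single-constructor multi-hole context produces related pure states, and then up-to related contexts closes the argument. This yields congruence of the bisimilarity, and composing with the closing substitutions argument extends the result to the open extension used in $\cpsequiv$.

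The main obstacle is conceptual rather than technical: one has to be careful that the congruence generated by the axioms is really matched step-by-step by closure properties of the bisimilarity, in particular that the contextual closure used in Kameyama and Hasegawa's system is exactly the compatibility provided by Proposition~\ref{p:properties-compatibility-better}. A minor subtlety is that the validations above are stated for closed terms and specific state shapes, so I would need a short lemma lifting each axiom to its open, context-instantiated form before invoking the induction. Once this bookkeeping is done, the induction on the derivation of $\tmzero \cpsequiv \tmone$ is immediate.
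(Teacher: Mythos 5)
Your proposal is correct and follows exactly the paper's (one-sentence) argument: since Kameyama and Hasegawa's axiomatization is complete for $\cpsequiv$, it suffices that the bisimilarity for the original semantics is a congruence (Lemma~\ref{l:compatible} with Proposition~\ref{p:properties-compatibility-better}) and validates each axiom via the propositions of Section~\ref{ss:axioms-env}, with \AXshiftelim handled by Proposition~\ref{p:skkt-env}. The bookkeeping you flag (lifting the closed-term validations to open terms, and transferring the relaxed-semantics results to the original semantics via $\envbisim \subseteq \ctxequiv \subseteq \ctxequivp = \envbisimp$) is exactly what the paper leaves implicit.
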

\noindent As a result, we can use $\cpsequiv$ as a proof technique
for~$\envbisimp$ (and, therefore, for $\ctxequivp$). For instance,
Example~\ref{ex:stail-env} holds directly as it can be derived from the
axioms~\cite{Kameyama-Hasegawa:ICFP03}.

\subsection{Conclusion}

We define environmental bisimilarities that are sound and complete in the
relaxed and original semantics. Plain environmental bisimulation is harder to
use than applicative bisimulation, but it is supposed to be used in conjunction
with up-to techniques. In particular, bisimulation up to related contexts, which
allows to forget about a common context built out of values and evaluation
contexts in the environment, is what makes the proof technique tractable enough
to prove the \AXbetaomega axiom, an axiom which can hardly be proved with
applicative bisimilarity (see Example~\ref{p:omega-app}). However, some
equivalence proofs seem to be still unnecessary complex, as witnessed by
Example~\ref{ex:fixed-point-env-utc}.

Another issue is that the definition of $\envbisimp$ is only a small improvement
over the definition of $\ctxequivp$, as it contains quantifications over
evaluation contexts, either when extending the definition from any terms to pure
terms, or in the transitions $\ltslamo i \vmhc \fmhc$ and
$\ltsctxxo j \vmhc \fmhc$. In practice, these contexts are not too problematic
as many of them can be abstracted away in equivalence proofs thanks to up-to
techniques (see Example~\ref{ex:stail-env} or Proposition~\ref{p:skkt-env}), but
we wonder if it is possible to still have a complete bisimilarity and quantify
over less contexts or to restrict the class of terms on which such a
quantification over contexts is necessary.


\section{Normal-Form Bisimilarity}%
\label{s:nf}
Normal-form bisimilarity~\cite{Lassen:LICS05} (originally defined in~\cite{Sangiorgi:LICS92}, where it was called \emph{open bisimilarity}) equates
(open) terms by reducing them to normal form, and then requiring the sub-terms
of these normal forms to be bisimilar. Unlike applicative and environmental
bisimilarities, normal-form bisimilarity usually does not contain a universal
quantification over testing terms or contexts in its definition, and is
therefore easier to use than the former two. However, it is also usually not
complete w.r.t.\ contextual equivalence, meaning that there exist contextually
equivalent terms that are not normal-form bisimilar.

A notion of normal-form bisimulation has been defined in various calculi,
including the pure $\lambda$-calculus~\cite{Lassen:MFPS99,Lassen:LICS05}, the
$\lambda$-calculus with ambiguous choice~\cite{Lassen:MFPS05}, the
$\lambda\mu$-calculus~\cite{Lassen:LICS06}, and the $\lambda\mu\rho$-calculus~\cite{Stoevring-Lassen:POPL07}, a calculus with control and store, where
normal-form bisimilarity characterizes contextual equivalence. It has also been
defined for typed languages~\cite{Lassen-Levy:CSL07,Lassen-Levy:LICS08}. In a
recent work~\cite{Biernacki-al:MFPS17}, we recast normal-form bisimulation in
the framework of diacritical progress (Section~\ref{ss:diacritical}), to be able
to define up-to techniques which respect $\eta$-expansion; we refer to this work
for more details.

In Section~\ref{ss:nf-def}, we propose a first definition of normal-form
bisimilarity for the relaxed semantics, for which we define up-to techniques in
Section~\ref{ss:upto-nf}. We then refine the definition in
Section~\ref{ss:refined}, to relate more contextually equivalent terms. We turn
to the original semantics in Section~\ref{ss:nf-original}, and we prove the
axioms in Section~\ref{ss:axioms-nf}. The material of Sections~\ref{ss:nf-def}
and~\ref{ss:upto-nf} comes from~\cite{Biernacki-al:MFPS17}, where the proofs can
be found, and supersedes~\cite{Biernacki-Lenglet:FLOPS12}. Refined bisimilarity,
originally defined in~\cite{Biernacki-Lenglet:FLOPS12}, is adapted to the
framework of~\cite{Biernacki-al:MFPS17} in the present article.  Normal-form
bisimilarity for the original semantics is also a contribution of this
article. The proofs for Sections~\ref{ss:refined} and~\ref{ss:nf-original} can be
found in the appendix.

\subsection{Definition}%
\label{ss:nf-def}

The main idea behind the definition of normal-form bisimilarity is that two
terms $\tmzero$ and $\tmone$ are bisimilar if their evaluations lead to matching
normal forms (\eg if~$\tmzero$ evaluates to a control stuck term, then so does
$\tmone$) with bisimilar sub-components. In the
$\lambda$-calculus~\cite{Sangiorgi:LICS92,Lassen:LICS05}, the possible normal
forms are only values and open stuck terms. In the relaxed semantics
of~$\lamshift$, we need to relate also control-stuck terms; we propose here a
first way to deal with these terms, that will be refined in a later
subsection. Deconstructing normal forms leads to comparing contexts as well as
terms. Given a relation $\rel$ on terms, we define in Figure~\ref{f:ext-rnf}
the extensions of $\rel$ to respectively values $\nfv\rel$, other normal forms
$\nf\rel$, and contexts $\nfc\rel$.

\begin{figure}

\begin{mathpar}
  \inferrule{\inctx {\ctx_0} \varx \rel \inctx {\ctx_1} \varx
    \\ \varx \mbox{ fresh}}
  {\ctx_0 \nfc\rel \ctx_1} \and
  \inferrule{\reset{\inctx {\ctx_0} \varx} \rel \reset{\inctx {\ctx_1}
      \varx}
    \\ \inctx {\rctx_0} \varx \rel
    \inctx {\rctx_1} \varx \\ \varx \mbox{ fresh}}
  {\inctx {\rctx_0} {\reset{\ctx_0}}  \nfc\rel \inctx{\rctx_1}{\reset{\ctx_1}}}
  \and
  \inferrule{ \valzero \iapp \varx \rel \valone \iapp \varx \\ \varx \textrm{
      fresh}}
  {\valzero \nfv\rel \valone}
  \and
  \inferrule{\ctxzero \nfc\rel \ctxone \\ \reset \tmzero \rel \reset \tmone}
  {\inctx \ctxzero {\shift \vark \tmzero} \nf\rel \inctx \ctxone {\shift \vark
      \tmone}}
  \and
  \inferrule{\rctxzero \nfc\rel \rctxone \\ \valzero \nfv\rel \valone}
  {\inctx \rctxzero {\app \varx \valzero} \nf\rel \inctx \rctxone {\app \varx
      \valone}}
\end{mathpar}

\caption{Extension of a relation to contexts and normal forms}%
\label{f:ext-rnf}

\end{figure}

The relation $\nfv\rel$ treats uniformly the different kinds of values
by applying them to a fresh variable $\varx$. As originally pointed
out by Lassen~\cite{Lassen:LICS05}, this is necessary for the
bisimilarity to be sound w.r.t. $\eta$-expansion; otherwise it would
distinguish $\eta$-equivalent terms such as $\lam y {\app \varx y}$
and $\varx$. However, unlike Lassen, we do not use a special
application operator to get rid of administrative $\beta$-redexes when
possible, as it is not necessary in our framework. The definition of
$\nfv\rel$ also easily scales to some other kinds of values: for
example, we consider \textshift as a value in a previous
work~\cite{Biernacki-al:MFPS17} with the same definition.

A control-stuck term $\inctx \ctxzero {\shift \vark {\tmzero}}$ can be executed
if it is plugged into a pure evaluation context surrounded by a \textreset; by
doing so, we obtain a term of the form
$\reset{\subst {\tmzero} \vark {\lam \varx {\reset {\inctx {\ctxzero'}
        \varx}}}}$ for some context $\ctxzero'$. The resulting term is within a
\textreset; similarly, when~$\nf\rel$ compares
$\inctx \ctxzero {\shift \vark {\tmzero}}$ and
$\inctx \ctxone {\shift \vark {\tmone}}$, it relates the \textshift bodies
$\tmzero$ and $\tmone$ within an enclosing \textreset. The pure contexts
$\ctxzero$ and $\ctxone$ are also tested by simply plugging a fresh variable
into them. Comparing $\tmzero'$ and $\tmone'$ without a surrounding \textreset
would be too discriminating, as it would distinguish equivalent terms such as
$\shift \vark {\reset \tm}$ and $\shift \vark \tm$ (axiom
\AXshiftreset). Without \resetId{}, we would have to relate $\reset \tm$ and
$\tm$, which are not equivalent in general (take $\tm=\shift {\vark'} \val$ for
some $\val$), while our definition requires $\reset {\reset \tm}$ and
$\reset \tm$ to be related (which holds for all $\tm$; see
Example~\ref{ex:reset-reset-nf}).

Two open stuck terms $\inctx \rctxzero {\app \varx \valzero}$ and
$\inctx \rctxone {\app \varx \valone}$ are related by $\nf\rel$ if the values
$\valzero$ and~$\valone$ as well as the contexts $\rctxzero$ and $\rctxone$ are
related. We have to be careful when defining bisimilarity on (possibly non pure)
evaluation contexts. We cannot simply compare $\rctxzero$ and~$\rctxone$ by
executing $\inctx \rctxzero y$ and $\inctx \rctxone y$ for a fresh $y$. Such a
definition would equate the contexts~$\mtctx$ and $\reset \mtctx$, which in turn
would relate the terms $\app \varx \val$ and $\reset {\app \varx \val}$, which
are distinguished by the context
$\app {\lamp \varx \mtctx}{\lam y {\shift \vark \Omega}}$. A context containing
a \resetId{} enclosing the hole should be related only to contexts with the same
property. However, we do not want to precisely count the number of delimiters
around the hole; doing so would distinguish $\reset \mtctx$ and
$\reset {\reset \mtctx}$, and, therefore, it would discriminate the contextually
equivalent terms $\reset{\app \varx \val}$ and
$\reset{\reset {\app \varx \val}}$. Hence, we check with $\nfc\rel$
(Figure~\ref{f:ext-rnf}) that if one of the contexts contains a
\resetId{} surrounding the hole, then so does the other; then we compare the
contexts beyond the first enclosing delimiter by simply evaluating them using a
fresh variable.
As a result, it rightfully distinguishes $\mtctx$ and
$\reset \mtctx$, but it relates $\reset \mtctx$ and $\reset {\reset \mtctx}$.

With these auxiliary relations, we define normal-form bisimilarity
using the notion of diacritical progress of
Section~\ref{ss:diacritical}. However, here we do not introduce an
underlying LTS, but instead we refer directly to the reduction
semantics of the calculus, which we find advantageous when working
with open terms. One can check that the notion of progress defined
below satisfies the conditions mentioned in Remark~\ref{r:no-lts}, and
therefore we can still rely on the theory presented in
Section~\ref{ss:diacritical}.

\begin{defi}%
  \label{d:nf-bisim}
  A relation $\rel$ on open terms diacritically progresses to $\rels$, $\relt$
  written $\rel \pprogress \rels, \relt$, if $\rel \mathop\subseteq \rels$,
  $\rels \mathop\subseteq \relt$, and $\tmzero \rel \tmone$ implies:
  \begin{itemize}
  \item if $\tmzero \redcbv \tmzero'$, then there exists $\tmone'$ such that
    $\tmone \clocbv \tmone'$ and $\tmzero' \relt \tmone'$;
  \item if $\tmzero$ is a value, then there exists $\valone$ such that
    $\tmone \evalcbv \valone$, and $\tmzero \nfv\rels \valone$;
  \item if $\tmzero$ is a normal form but not a value, then there exist
    $\tmone'$ such that $\tmone \evalcbv \tmone'$ and
    $\tmzero \nf\relt \tmone'$;
  \item the converse of the above conditions on $\tmone$.
  \end{itemize}
  A normal-form bisimulation is a relation $\rel$ such that
  $\rel \pprogress \rel, \rel$. Normal-form bisimilarity~$\nfbisim$ is the
  largest normal-form bisimulation.
\end{defi}

Testing values is passive, as we want to prevent the use of
bisimulation up to context in that case (if $\valzero \rel \valone$,
then $\valzero \iapp \varx$ and $\valone \iapp \varx$ are
automatically in a bisimulation up to context). The remaining clauses
of the bisimulation are active.

We show how to prove equivalences with normal-form bisimulation with
our running examples.

\begin{exa}[double \textreset]%
  \label{ex:reset-reset-nf}
  We prove that $\reset \tm \nfbisim \reset{\reset \tm}$ by showing that
  $\rel \mathord{\is} \{ (\reset \tm, \reset{\reset \tm}) \mid \tm \in \terms \}
  \mathop\cup \nfbisim$ is a normal-form bisimulation. On top of reduction
  steps, for which we can conclude as in Example~\ref{ex:reset-reset-app}, we
  have to consider the case $\reset \tm = \inctx \rctx {\app \varx \val}$. Then
  by Proposition~\ref{p:eval-reset}, there exists $\rctx'$ such that
  $\rctx = \reset {\rctx'}$. Therefore, we have
  $\reset {\reset \tm} = \reset{\reset{\inctx {\rctx'}{\app \varx
        \val}}}$. We have $\val \nfv\nfbisim \val$, and we have to prove that
  $\reset {\rctx'} \nfc\rel \reset{\reset {\rctx'}}$ holds to conclude. If
  $\rctx'$ is a pure context $\ctx$, then we have to prove
  $\reset{\inctx \ctx y} \rel \reset{\inctx \ctx y}$ and $y \rel \reset y$ for a
  fresh $y$, which are both true because $\nfbisim \mathop{\subseteq} \rel$. If
  $\rctx' = \inctx {\rctx''}{\reset {\ctx}}$, then given a fresh $y$, we have to
  prove $\reset {\inctx {\rctx''} y} \rel \reset{\reset {\inctx {\rctx''} y}}$
  (clear by the definition of $\rel$), and
  $\reset{\inctx \ctx y} \rel \reset{\inctx \ctx y}$ (true because
  $\nfbisim \mathop{\subseteq} \rel$).

  Similarly, if $\reset{\reset \tm} = \inctx \rctx {\app \varx \val}$,
  then we can show that there exists $\rctx'$ such that
  $\rctx = \reset{\reset {\rctx'}}$ and
  $\reset \tm \evalcbv \reset{\inctx {\rctx'}{\app \varx \val}}$, and we can
  conclude as in the previous case. As we can see, the proof is longer that with
  applicative (Example~\ref{ex:reset-reset-app}) or environmental
  (Example~\ref{ex:reset-reset-env}) bisimilarities, just because we have to
  consider open-stuck terms.
\end{exa}

\begin{exa}[Turing's combinator]%
  \label{ex:fixed-point}
  We prove that Turing's combinator $\Turing$ is bisimilar to its variant
  $\Turingshift$ by building the candidate relation~$\rel$
  incrementally, starting from $(\Turing, \Turingshift)$. Evaluating these
  two terms, we obtain
  \begin{align*}
    \Turing & \evalcbv \lam y {\app y {\lamp z {\app {\app {\app \theta
              \theta} y} z}}} \is \valzero \mbox{, and} \\
    \Turingshift & \evalcbv \lam y {\app y {\lamp z {\app {\app {\app {\lamp
                \varx {\reset {\app \theta \varx}}}{\lamp \varx {\reset {\app
                    \theta \varx}}}} y} z}}} \is \valone.
  \end{align*}
  Evaluating $(\app \valzero y, \app \valone y)$ for a fresh $y$, we obtain two
  open-stuck terms, so we add their decomposition to $\rel$. Let
  $\valzero' \is \lam z {\app {\app {\app \theta \theta} y} z}$ and
  $\valone' \is \lam z {\app {\app {\app {\lamp \varx {\reset {\app \theta
              \varx}}}{\lamp \varx {\reset {\app \theta \varx}}}} y} z}$; then
  we add $(\app {\valzero'} z, \app {\valone'} z)$ and $(z, z)$ for a
  fresh~$z$ to $\rel$. Evaluating $\app{\valzero'} z$ and
  $\app{\valone'} z$, we obtain respectively $\app {\app y {\valzero'}} z$
  and $\app {\app y {\valone'}} z$; to relate these two open stuck terms, we
  just need to add $(\app \varx z, \app \varx z)$ (for a fresh $\varx$) to
  $\rel$, since we already have $\valzero' \nfv\rel \valone '$. The constructed
  relation $\rel$ we obtain is a normal-form bisimulation.

  As we can see, the proof is much simpler than with applicative
  (Example~\ref{ex:fixed-point-app}) or environmental
  (Example~\ref{ex:fixed-point-env-utc}) bisimulations, even using the plain
  definition of normal-form bisimulation. We can further simplify the definition
  of the candidate relation thanks to up-to techniques.
\end{exa}

\subsection{Up-to Techniques, soundness, and completeness}%
\label{ss:upto-nf} The already quite tractable equivalence proofs based on
normal-form bisimulation can be further simplified with up-to
techniques. Unlike with environmental bisimilarity, we define smaller
techniques in Figure~\ref{fig:nf-upto} which, when combined together,
correspond to the usual bisimulation up to related contexts. Such a
fine-grained approach allows for a finer classification between strong
and regular up-to techniques.\footnote{We do not do the same with
  environmental bisimilarity, because unlike normal-form bisimilarity,
  it is defined primarily on closed terms, and therefore we do not
  consider, e.g., a bisimulation up to $\lambda$-abstraction with
  environmental bisimilarity.}

\begin{figure}
  \begin{mathpar}
    \inferrule{ }{\tm \utrefl\rel \tm}
    \and
    \inferrule{\tmzero \rel \tmone}{\lam \varx \tmzero \utlam\rel \lam \varx
      \tmone}
    \and
    \inferrule{\tmzero \rel \tmone}{\shift \vark \tmzero \utshift\rel \shift \vark
      \tmone}
    \and
    \inferrule{\tmzero \rel \tmone \\  \valzero \nfv\rel \valone}
    {\subst \tmzero \varx \valzero \utsubst\rel \subst \tmone \varx \valone}
    \and
    \inferrule{\tmzero \clocbv \tmzero' \\  \tmone \clocbv \tmone' \\ \tmzero' \rel \tmone'}
    {\tmzero \utred\rel \tmone}
    \and
    \inferrule{\tmzero \rel \tmone \\ \ctxzero \nfc\rel \ctxone}
    {\inctx \ctxzero \tmzero \utpctx\rel \inctx \ctxone \tmone}
    \and
    \inferrule{\tmzero \rel \tmone \\ \reset\ctxzero \nfc\rel \reset\ctxone}
    {\reset{\inctx \ctxzero \tmzero} \utpctxrst\rel \reset{\inctx \ctxone \tmone}}
    \and
    \inferrule{\tmzero \rel \tmone  \\ \tmzero, \tmone \mbox{ pure} \\ \inctx
      \rctxzero \varx \rel \inctx \rctxone \varx \\ \varx \mbox{ fresh}}
    {\inctx \rctxzero \tmzero \utectxpure\rel \inctx \rctxone \tmone}
\end{mathpar}
\caption{Up-to techniques for normal-form bisimilarity}%
\label{fig:nf-upto}
\end{figure}

The technique $\rawutred$ is the usual bisimulation up-to reduction, $\rawlam$
and $\rawutshift$ allow compatibility {\wrt}$\lambda$-abstraction and
\textshift, while compatibility for variables is a consequence of $\rawrefl$, as
we have $\varx \utrefl\rel \varx$ for all $\varx$. Bisimulation up to
substitution~$\rawutsubst$ is not uncommon for normal-form
bisimilarity~\cite{Lassen:MFPS99,Lassen:LICS06}. The remaining techniques deal
with evaluation contexts and behave the same way as the bisimulation up to
related contexts of Section~\ref{ss:env-upto}: each of them factors out related
contexts, and not simply a common context. We compare contexts using~$\nfc\cdot$
except for $\rawectxpure$, which uses a more naive test, as this technique plugs
contexts with only pure terms (values or delimited terms), which cannot
decompose the contexts. The usual bisimulation up to related contexts can be
obtained by composing the three up-to techniques about contexts.

\begin{lem}
  If $\tmzero \rel \tmone$ and $\rctxzero \nfc\rel \rctxone$ then
  $\inctx \rctxzero \tmzero \mathrel{(\rawpctx \cup
    (\rawectxpure\compo\rawpctxrst))(\rel)} \inctx \rctxone \tmone$.
\end{lem}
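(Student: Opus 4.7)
The plan is to proceed by case analysis on the derivation of $\rctxzero \nfc\rel \rctxone$. The two inference rules defining $\nfc$ on evaluation contexts distinguish precisely between pure contexts and contexts whose outermost \resetId{} encloses the hole, and these two shapes align with the two disjuncts of the union $\rawpctx \cup (\rawectxpure \compo \rawpctxrst)$, so the case split produces the two components of the conclusion.

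In the first case, the derivation uses the first $\nfc$ rule, so $\rctxzero = \ctxzero$ and $\rctxone = \ctxone$ are pure with $\ctxzero \nfc\rel \ctxone$. Combined with the hypothesis $\tmzero \rel \tmone$, this matches the premises of the rule defining $\rawpctx$, yielding $\inctx \ctxzero \tmzero \mathrel{\rawpctx(\rel)} \inctx \ctxone \tmone$, which is in the left disjunct. In the second case, the derivation uses the other $\nfc$ rule, so $\rctxzero = \inctx {\rctxzero'}{\reset \ctxzero}$ and $\rctxone = \inctx {\rctxone'}{\reset \ctxone}$, with premises $\reset{\inctx \ctxzero x} \rel \reset{\inctx \ctxone x}$ and $\inctx {\rctxzero'} x \rel \inctx {\rctxone'} x$ for fresh $x$. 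Here I would decompose the target by taking the inner delimited terms $\reset{\inctx \ctxzero \tmzero}$ and $\reset{\inctx \ctxone \tmone}$ (which are pure) as the components related by $\rawpctxrst$, and the outer evaluation contexts $\rctxzero'$, $\rctxone'$ as those handled by the surrounding $\rawectxpure$. The inner application of $\rawpctxrst$ reduces to $\tmzero \rel \tmone$ (given) together with $\reset \ctxzero \nfc\rel \reset \ctxone$; the latter follows by re-invoking the second $\nfc$ rule with empty prefix $\mtctx$, whose premises $\reset{\inctx \ctxzero x} \rel \reset{\inctx \ctxone x}$ and $x \rel x$ on fresh $x$ come respectively from the hypothesis and from the standing reflexivity of $\rel$ on fresh variables inherent in any candidate normal-form bisimulation.

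The principal obstacle is the third premise of the outer $\rawectxpure$: it asks for $\inctx {\rctxzero'} x \mathrel{\rawpctxrst(\rel)} \inctx {\rctxone'} x$, whereas the $\nfc$ derivation only yields $\inctx {\rctxzero'} x \rel \inctx {\rctxone'} x$, and in general $\inctx {\rctxzero'} x$ need not itself have the delimited shape required by $\rawpctxrst$. The way to handle this cleanly is to perform a nested case analysis on the shape of $\rctxzero'$ that mirrors the two-case dichotomy of $\nfc$, using reflexivity on fresh variables to absorb base pairs and the empty-prefix instantiation of $\rawpctxrst$ above to bridge $\rel$ and $\rawpctxrst(\rel)$ on the relevant delimited subterms. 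Once this nested reasoning is in place, combining the two top-level cases yields the required membership in $(\rawpctx \cup (\rawectxpure \compo \rawpctxrst))(\rel)$.
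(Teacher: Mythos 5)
The paper states this lemma without proof, treating it as an immediate unfolding of the definitions, and your top-level structure is indeed the intended one: case analysis on the derivation of $\rctxzero \nfc\rel \rctxone$, with the pure case discharged by $\rawpctx$ and the delimited case by wrapping $\rawpctxrst$ inside $\rawectxpure$. Your first case is correct as written.

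The second case, however, contains a genuine gap, and your proposed repair does not close it. The third premise of $\rawectxpure$ instantiated at $\rawpctxrst(\rel)$ demands $\inctx {\rctxzero'} \varx \mathrel{\rawpctxrst(\rel)} \inctx {\rctxone'} \varx$, but $\rawpctxrst(\rel)$ by construction contains only pairs of delimited terms of the shape $\reset{\inctx \ctxzero \tmzero}$, so it cannot contain this pair when, say, $\rctxzero' = \mtctx$ and the left component is the bare variable $\varx$. A nested case analysis on the shape of $\rctxzero'$ cannot rescue this: the only link between $\rctxzero'$ and $\rctxone'$ is the hypothesis $\inctx {\rctxzero'} \varx \rel \inctx {\rctxone'} \varx$, which carries no structural information about $\rctxone'$, and even in the favourable subcases you would still be trying to place a pair of $\rel$ into $\rawpctxrst(\rel)$, which is not an extensive function. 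The same defect affects your appeal to a ``standing reflexivity of $\rel$ on fresh variables'': the lemma quantifies over an arbitrary $\rel$, and nothing in the definition of a normal-form bisimulation forces $\varx \rel \varx$, which you need to derive $\reset\ctxzero \nfc\rel \reset\ctxone$ from the second $\nfc$ rule with empty prefix. The intended reading---consistent with the way the paper only ever uses its techniques in the forms $\fid f = f \cup \rawid$ and ${\fid\setF}^\omega$---is that the conclusion is taken up to $\rawid$ and $\rawrefl$, i.e., with $\fid{\rawpctxrst}$ in place of $\rawpctxrst$ and with $\rawrefl$ available for the $\varx \rel \varx$ premise. Under that reading both side conditions are immediate from the inclusions $\rel \subseteq \fid{\rawpctxrst}(\rel)$ and $\varx \utrefl\rel \varx$, and no case analysis on $\rctxzero'$ is needed; without it, the step you flag as the principal obstacle simply fails.
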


We can also derive compatibility {\wrt}application from $\rawrefl$
and $\rawpctx$.

\begin{lem}%
  \label{l:app}
  If $\tmzero \rel \tmzero'$ and $\tmone \rel \tmone'$, then $\app
  \tmzero \tmone \mathrel{(\rawpctx \compo (\rawid \cup \rawpctx
    \compo (\rawid \cup \rawrefl)))(\rel)} \app {\tmzero'}{\tmone'}$.
\end{lem}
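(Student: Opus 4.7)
The plan is to instantiate the outer $\rawpctx$ with the pure context $\apctx \mtctx \tmone$ on the left and $\apctx \mtctx \tmone'$ on the right, and then to handle the induced context-equivalence obligation via a second application of $\rawpctx$, this time with the pure context $\vctx \varx \mtctx$ (which is pure since $\varx$ is a value).

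More precisely, I would write $\app \tmzero \tmone = \inctx{(\apctx \mtctx \tmone)}{\tmzero}$ and $\app {\tmzero'} \tmone' = \inctx{(\apctx \mtctx \tmone')}{\tmzero'}$. To obtain $\app \tmzero \tmone \mathrel{\rawpctx(X(\rel))} \app {\tmzero'} \tmone'$, with $X \is \rawid \cup \rawpctx \compo (\rawid \cup \rawrefl)$, I need two ingredients: first, $\tmzero \mathrel{X(\rel)} \tmzero'$, which holds from the hypothesis $\tmzero \rel \tmzero'$ via the $\rawid$ branch of $X$; second, the context relation $\apctx \mtctx \tmone \nfc{X(\rel)} \apctx \mtctx \tmone'$. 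By definition of $\nfc$ on pure contexts, this amounts to showing $\app \varx \tmone \mathrel{X(\rel)} \app \varx \tmone'$ for a fresh $\varx$.

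For this inner obligation, I would take the $\rawpctx \compo (\rawid \cup \rawrefl)$ branch of $X$. Viewing $\app \varx \tmone = \inctx{(\vctx \varx \mtctx)}{\tmone}$ and likewise on the right, I apply $\rawpctx$ to the relation $(\rawid \cup \rawrefl)(\rel)$: the term hypothesis $\tmone \rel \tmone' \in \rawid(\rel)$ holds, and the context obligation $\vctx \varx \mtctx \nfc{(\rawid \cup \rawrefl)(\rel)} \vctx \varx \mtctx$ unfolds (by $\nfc$) to $\app \varx z \mathrel{(\rawid \cup \rawrefl)(\rel)} \app \varx z$ for a fresh $z$, which is immediate from the $\rawrefl$ branch. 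This yields $\app \varx \tmone \mathrel{(\rawpctx \compo (\rawid \cup \rawrefl))(\rel)} \app \varx \tmone' \subseteq X(\rel)$, as required.

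There is no real mathematical obstacle here; the statement is essentially a syntactic exercise in unfolding the composition $\rawpctx \compo (\rawid \cup \rawpctx \compo (\rawid \cup \rawrefl))$ and choosing the right pure contexts at each layer. The only point to be careful about is that both $\apctx \mtctx \tmone$ and $\vctx \varx \mtctx$ belong to the grammar of pure contexts $\ctx$ (the latter because $\varx$ is a value), so that $\rawpctx$ is applicable, and that $\nfc$ for pure contexts reduces precisely to the term tests $\app \varx \tmone \rel \app \varx \tmone'$ and $\app \varx z \rel \app \varx z$ used above.
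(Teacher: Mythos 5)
Your proof is correct and follows essentially the same route as the paper's: the paper builds the same derivation bottom-up (starting from $\app \varx \mtctx \nfc{\utrefl\rel} \app \varx \mtctx$, then applying $\rawpctx$ with $\tmone \rel \tmone'$, then the outer $\rawpctx$ with $\tmzero \rel \tmzero'$), whereas you unfold it top-down, choosing the same two pure contexts $\apctx \mtctx \tmone$ and $\vctx \varx \mtctx$ at the two layers. Nothing is missing.
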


\begin{proof}
  Let $\varx$ be a fresh variable; then $\app \varx \mtctx
  \nfc{\utrefl \rel} \app \varx \mtctx$. Combined with $\tmone \rel
  \tmone'$, it implies $\app x \tmone \utpctx {(\rawid \cup
    \rawrefl)(\rel)} \app x {\tmone'}$, i.e., $\app \mtctx \tmone
  \nfc{\utpctx {(\rawid \cup \rawrefl)(\rel)}} \app \mtctx
      {\tmone'}$. This combined with $\tmzero \rel \tmzero'$ using
      $\rawpctx$ gives the required result.
\end{proof}

Finally, compatibility {\wrt}\textreset can be deduced from
$\rawpctxrst$ by taking the empty context. (Defining a dedicated up-to
technique for \textreset would have some merit since it could be
proved strong, unlike $\rawpctxrst$~\cite{Biernacki-al:MFPS17}.)

\begin{thm}%
  \label{t:compatible-nf}
  The set $\setF \is \{ \rawrefl, \rawlam, \rawutshift, \rawutsubst,
  \rawpctx, \rawpctxrst, \rawectxpure, \rawid, \rawutred\}$ is
  diacritically compatible, with $\strong \setF = \setF \setminus \{
  \rawpctx, \rawpctxrst, \rawectxpure \}$.
\end{thm}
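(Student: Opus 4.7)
The plan is to verify the two conditions of Definition~\ref{d:dia-comp}: each of the nine functions in $\setF$ must be continuous and admit an evolution into an appropriate combination of $\strong \setF$ and $\setF$, and each $f \in \strong \setF$ must additionally satisfy the strong evolution condition. Continuity is immediate for every $f$, since each is defined by a single inference rule with finitely many premises, each referring to one pair of the argument relation. The core of the proof is then, for each $f$, a case analysis on the transitions permitted by Definition~\ref{d:nf-bisim} starting from a pair in $f(\rel)$, checking that the resulting witnesses land in ${\fid{\strong\setF}}^\omega$ after passive transitions when $f \in \strong\setF$, and in ${\fid{\strong\setF}}^\omega \compo \fid\setF \compo {\fid{\strong\setF}}^\omega$ after passive transitions in general.

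For the strong techniques the arguments are direct. The function $\rawid$ is trivial. For $\rawrefl$, a pair $(\tm, \tm)$ progresses to $(\tm', \tm')$ under reduction, and any value or normal-form test produces pairs of the same shape obtained by plugging fresh variables, still in $\rawrefl$ possibly composed with $\rawutred$. For $\rawutred$, the key ingredient is determinism of $\redcbv$ (Proposition~\ref{p:unique-decomp}): any reduction from $\tmzero$ lies on the path to the already-chosen witness $\tmzero'$. For $\rawlam$, applying the passive value test to $\lam \varx \tmzero \utlam\rel \lam \varx \tmone$ yields $(\lam \varx \tmzero) \iapp y$ paired with $(\lam \varx \tmone) \iapp y$ for fresh $y$; after absorbing the $\beta$-step with $\rawutred$, the residual $\subst \tmzero \varx y$ and $\subst \tmone \varx y$ are witnessed by $\rawutsubst$ applied to $\tmzero \rel \tmone$ and $y \nfv\rel y$---all strong. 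The function $\rawutshift$ is similar, but its normal-form test is active, so it may use $\rawpctxrst$ together with $\rawrefl$ to handle the $\reset{\tmzero} \rel \reset{\tmone}$ premise of $\nf\cdot$. For $\rawutsubst$, the substitution lemma (Proposition~\ref{l:redcbv-subst}) ensures that reductions and normal-form tests of $\subst \tmzero \varx \valzero$ match those of $\subst \tmone \varx \valone$, with witnesses obtained by reapplying $\rawutsubst$ together with $\rawutred$.

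For the three non-strong techniques $\rawpctx$, $\rawpctxrst$, $\rawectxpure$, the analysis follows a common pattern. Given $\inctx \ctxzero \tmzero \utpctx\rel \inctx \ctxone \tmone$ with $\tmzero \rel \tmone$ and $\ctxzero \nfc\rel \ctxone$, I case-split on what happens at the top of $\inctx \ctxzero \tmzero$. If $\tmzero$ reduces internally, the reduction lifts through $\ctxzero$ and I reapply $\rawpctx$ to the reduct. If $\tmzero$ is a value, then either $\ctxzero = \mtctx$ (the pair collapses to $(\tmzero, \tmone) \in \rel$, tested directly through the $\nfv$ or $\nf$ clauses) or $\ctxzero$ begins with an outer $\beta$-redex, handled by $\rawutred$ composed with a reapplication of $\rawpctx$ to the smaller context. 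If $\tmzero$ is a control-stuck term $\inctx {\ctxzero'}{\shift \vark \tm'}$, then $\inctx \ctxzero \tmzero$ itself is control-stuck, and $\nf\cdot$ is witnessed by concatenating $\ctxzero \nfc\rel \ctxone$ with $\ctxzero' \nfc\rel \ctxone'$ into $\inctx \ctxzero {\ctxzero'} \nfc\rel \inctx \ctxone {\ctxone'}$, and by pairing $\reset{\tm'}$ with its counterpart via a single use of $\rel$. In every branch, the witness lies in ${\fid{\strong \setF}}^\omega \compo \fid\setF \compo {\fid{\strong\setF}}^\omega$ applied to $\rel$, with $\rawpctx$ invoked at most once. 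The arguments for $\rawpctxrst$ and $\rawectxpure$ follow the same template; the enclosing $\rawreset$ in $\rawpctxrst$ mediates the capture case (and rules out the value-case altogether), while the purity restriction in $\rawectxpure$ prevents the non-pure outer $\rctx$ from interacting with a capture triggered inside the hole.

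The main obstacle will be the control-capture case in $\rawpctx$ and $\rawpctxrst$. When a $\shift\vark$ nested inside $\tmzero$ captures the factored-out $\ctxzero$, the substituted continuation on the left becomes $\lam y{\reset{\inctx \ctxzero {\inctx {\ctxzero'} y}}}$ while on the right it is $\lam y{\reset{\inctx \ctxone {\inctx {\ctxone'} y}}}$; relating these requires combining $\ctxzero \nfc\rel \ctxone$ with $\ctxzero' \nfc\rel \ctxone'$ under an outer $\rawreset$, using $\rawlam$, $\rawutred$, $\rawutsubst$ and $\rawpctxrst$ in concert, with the $\reset\ctxzero \nfc\rel \reset\ctxone$ premise of $\rawpctxrst$ supplying exactly the bit of information needed to commute the outer $\rawreset$ past the factored-out context. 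Verifying that this combination always respects the ``one non-strong function sandwiched by strong ones'' constraint after passive transitions is the delicate bookkeeping that will absorb most of the proof effort.
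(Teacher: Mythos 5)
There is a genuine gap: your case analysis for the non-strong techniques omits precisely the case the paper identifies as the crux, and your treatment of the ``value in the hole'' situation is wrong. When $\tmzero$ is a value, the two alternatives you list ($\ctxzero = \mtctx$, or an outer $\beta$-redex absorbed by $\rawutred$ plus a reapplication of $\rawpctx$) do not cover the case $\tmzero = \varx$ with $\ctxzero = \app \mtctx \val$: then $\inctx \ctxzero \tmzero = \app \varx \val$ is an open-stuck term, not a redex, and there is no smaller context to recurse on. Worse, even in the $\beta$-redex subcase your recipe fails, because $\tmone$ need not be a value syntactically (it only \emph{evaluates} to some $\valone$ with $\tmzero \nfv\rel \valone$) and $\ctxzero$, $\ctxone$ are only related by $\nfc\rel$, not equal, so you cannot simply ``reapply $\rawpctx$''. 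The paper's sketch resolves this by observing that $\inctx \ctxzero \varx \utsubst\rel \inctx \ctxone \valone$ --- viewing both sides as substitution instances of $\inctx \ctxzero y \rel \inctx \ctxone y$ --- and then invoking the already-established evolution of $\rawutsubst$. This step is also the entire reason $\rawpctx$, $\rawpctxrst$, $\rawectxpure$ are \emph{not} strong: it needs $\varx \nfv\rel \valone$ coming from the hypothesis $\rel \pprogress \rel, \rels$ of regular evolution; under strong evolution one would only get $\varx \nfv\rels \valone$ with no progress hypothesis on $\rels$, and the argument breaks. Your proposal asserts the partition $\strong\setF = \setF \setminus \{\rawpctx, \rawpctxrst, \rawectxpure\}$ but never explains what forces these three out of $\strong\setF$.

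Conversely, the place where you locate ``the main obstacle'' --- the capture of the factored-out context by a \shiftId{} inside the hole --- is not where the difficulty lies. For $\rawpctx$ the context $\ctxzero$ is pure and undelimited, so a control-stuck $\tmzero$ just yields a larger control-stuck term, and the $\nf\cdot$ clause only asks you to concatenate $\ctxzero \nfc\rel \ctxone$ with $\ctxzero' \nfc\rel \ctxone'$ and relate $\reset{\tmzero'}$ with $\reset{\tmone'}$; no capture reduction is performed there. The paper classifies these cases as straightforward. So the skeleton of your proof (continuity, per-technique evolution, case analysis on the plugged term) matches the paper's, but the one case that carries the mathematical content of the theorem is missing.
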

\noindent We explain what sets apart $\rawpctx$, $\rawpctxrst$, and
$\rawectxpure$ from the other techniques by sketching the progress proof for
$\rawpctx$.

\proof[Sketch] Let $\rel \pprogress \rel, \rels$, and $\inctx \ctxzero
\tmzero \utpctx\rel \inctx \ctxone \tmone$ such that $\ctxzero
\nfc\rel \ctxone$ and $\tmzero \rel \tmone$.  We proceed by case
analysis on $\ctxzero$ and $\tmzero$. Most cases are straightforward;
the problematic case is when~$\tmzero$ is a variable $x$ and $\ctxzero
= \app \mtctx \val$. Because $\tmzero \rel \tmone$, there exists
$\valone$ such that $\tmone \evalcbv \valone$ and $x \nfv\rel
\valone$. Because $\ctxzero \nfc\rel \ctxone$, we have $\inctx
\ctxzero y \rel \inctx \ctxone y$ for a fresh $y$, and therefore
$\inctx \ctxzero x \utsubst\rel \inctx \ctxone {\valone}$. We can
conclude with $\rawutsubst$, assuming it has been proved before: there
exists~$\tmone'$ such that $\inctx{\ctxone}{\tmone} \clocbv
\inctx{\ctxone}{\valone} \clocbv \tmone'$ and $\app x \val \nf{\of
  {{\fid\setF}^\omega} \rels} \tmone'$.

\noindent If we try to prove $\rawpctx$ strong, we would have $\rel
\pprogress \rels, \relt$ as a hypothesis. In the subcase sketched
above, we would have $x \nfv\rels \valone$ and $\inctx \ctxzero x
\utsubst\rels \inctx \ctxone {\valone}$ instead of $\rel$, and since
there is no progress hypothesis on $\rels$, we could not conclude. The
techniques $\rawectxpure$ and $\rawpctxrst$ have the same problematic
subcase.

Since compatibility {\wrt}the operators of the language can be
deduced from the techniques of Figure~\ref{fig:nf-upto}, we can
conclude that $\nfbisim$ is compatible using
Lemma~\ref{p:properties-compatibility-better}. We can then show that
$\nfbisim$ is sound {\wrt}$\ctxequiv$.

\begin{thm}
  We have $\nfbisim \mathop{\subseteq} \ctxequiv$.
\end{thm}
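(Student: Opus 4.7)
The plan is to deduce soundness from the congruence properties granted by the up-to technique machinery of the previous section, combined with the observational clauses built into Definition~\ref{d:nf-bisim}. First I would verify reflexivity of $\nfbisim$ by checking that the identity relation $\{(\tm,\tm) \mid \tm \in \terms\}$ is a normal-form bisimulation: every reduction is matched by itself, and each kind of normal form decomposes trivially via $\nfv\cdot$ and $\nf\cdot$ into pairs of identical sub-components.

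Next I would extract compatibility of $\nfbisim$ from the results already stated. Theorem~\ref{t:compatible-nf} asserts that the set $\setF$ containing $\rawrefl, \rawlam, \rawutshift, \rawutsubst, \rawpctx, \rawpctxrst, \rawectxpure$ is diacritically respectful, so the last clause of Proposition~\ref{p:properties-compatibility-better} (which, per Remark~\ref{r:no-lts}, applies to our LTS-free notion of progress) gives $f(\nfbisim) \subseteq \nfbisim$ for every $f \in \setF$. This yields closure of $\nfbisim$ under $\lambda$-abstraction, \shiftId{}, and the pure-context builders $\rawpctx$ and $\rawpctxrst$; compatibility with application follows from Lemma~\ref{l:app} together with reflexivity, and compatibility with \resetId{} is a specialization of $\rawpctxrst$ to the empty pure context. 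Variables pose no issue since they sit in the identity part of $\nfbisim$.

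With reflexivity and compatibility in hand, a routine induction on the structure of an arbitrary context $\cctx$ shows that $\tmzero \nfbisim \tmone$ implies $\inctx \cctx \tmzero \nfbisim \inctx \cctx \tmone$: at each constructor one invokes the corresponding compatibility lemma, using reflexivity on sub-terms of $\cctx$ that do not contain the hole. Closure under closing substitutions is obtained analogously from $\rawutsubst$, so for any $\sigma$ closing $\tmzero$ and $\tmone$ we still have $\tmzero\sigma \nfbisim \tmone\sigma$, and thus $\inctx \cctx {\tmzero\sigma} \nfbisim \inctx \cctx {\tmone\sigma}$ for every closed $\cctx$.

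Finally I would transport this to $\ctxequiv$ by reading off observable matches directly from Definition~\ref{d:nf-bisim}. If $\inctx \cctx {\tmzero\sigma} \evalcbv \valzero$ then the value clause forces $\inctx \cctx {\tmone\sigma} \evalcbv \valone$ for some value $\valone$, and symmetrically; if $\inctx \cctx {\tmzero\sigma} \evalcbv \tmzero'$ with $\tmzero'$ control-stuck, the normal-form clause and the definition of $\nf\cdot$ (which pairs control-stuck terms only with control-stuck terms) produce a control-stuck $\tmone'$ with $\inctx \cctx {\tmone\sigma} \evalcbv \tmone'$. Both requirements in Definition~\ref{d:ctxequivtwo} are thereby satisfied, hence $\tmzero \open\ctxequiv \tmone$. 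The only delicate point in the whole argument is the first one: we cannot invoke a full congruence theorem (transitivity of $\nfbisim$ is not asserted above), and instead we must reason purely via compatibility plus reflexivity to propagate $\nfbisim$ through the surrounding context.
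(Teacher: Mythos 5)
Your proposal is correct and follows essentially the same route as the paper: compatibility of $\nfbisim$ is extracted from the respectfulness of the up-to techniques of Theorem~\ref{t:compatible-nf} via Proposition~\ref{p:properties-compatibility-better}, then closure under contexts and closing substitutions transfers the observational clauses of Definition~\ref{d:nf-bisim} to the requirements of $\ctxequiv$. The only step you gloss over---iterating the small-step reduction clause to lift the matching of normal forms to full evaluations---is routine and does not affect the argument.
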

The following counter-example, inspired by Lassen~\cite{Lassen:LICS05}, shows
that the inclusion is in fact strict; normal-form bisimilarity is not complete.

\begin{prop}%
  \label{p:cex-dupl}
  We have $\reset {\app \varx i} \open\ctxequiv \lamp y {\reset{\app
      \varx i}} \iapp \reset {\app \varx i}$, but these terms are not
  normal-form bisimilar.
\end{prop}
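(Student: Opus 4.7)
The first half of the statement, $\reset{\app \varx i} \open\ctxequiv \lamp y {\reset{\app \varx i}} \iapp \reset {\app \varx i}$, is exactly Proposition~\ref{p:cex-nf-completeness}, so the plan is simply to cite it; no new work is required for the equivalence.

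For the second half I would argue directly from Definition~\ref{d:nf-bisim} that no normal-form bisimulation $\rel$ can contain the pair. The key observation is that both terms are already in normal form and are in fact open-stuck with the same ``head'' application $\app \varx i$: the left-hand side decomposes as $\inctx{\reset \mtctx}{\app \varx i}$, and the right-hand side decomposes as $\inctx{\vctx{\lamp y{\reset{\app \varx i}}}{\reset \mtctx}}{\app \varx i}$. Consequently, if $\rel$ related them, the non-value normal form clause of the definition combined with the rules for $\nf\rel$ and $\nfc\rel$ would require the two evaluation contexts $\reset \mtctx$ and $\vctx{\lamp y{\reset{\app \varx i}}}{\reset \mtctx}$ to be related by $\nfc\rel$.

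Both contexts have the form $\inctx{\rctx}{\reset{\ctx}}$ with $\ctx = \mtctx$, so only the second rule defining $\nfc\rel$ applies. Its first premise reduces to $\reset z \rel \reset z$ for a fresh $z$, which is harmless. Its second premise, however, forces $z \rel \lamp y{\reset{\app \varx i}} \iapp z$ to hold. Now $z$ is a value, so the value clause of Definition~\ref{d:nf-bisim} would require $\lamp y{\reset{\app \varx i}} \iapp z$ to evaluate to a value. But $\lamp y{\reset{\app \varx i}} \iapp z \redcbv \reset{\app \varx i}$ (since $y \notin \fv{\reset{\app \varx i}}$), and $\reset{\app \varx i}$ is an open-stuck term, not a value, yielding the desired contradiction.

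The only subtlety I anticipate is checking that the evaluation-context decomposition really is forced to apply the reset clause of $\nfc\rel$ on both sides; this is immediate once one notes that the innermost hole of each context sits directly under a $\reset$, so the pure-context clause cannot be used. Everything else is routine unfolding of the definitions.
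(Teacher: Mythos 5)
Your proof is correct and follows essentially the same route as the paper: both halves match (the equivalence is cited from Proposition~\ref{p:cex-nf-completeness}, and the non-bisimilarity is derived by decomposing the two open-stuck terms and showing that the $\nfc\rel$ test on the contexts $\reset\mtctx$ and $\vctx{\lamp y{\reset{\app \varx i}}}{\reset\mtctx}$ fails because one side yields a value while the other evaluates to an open-stuck term). The only cosmetic difference is that the paper exhibits the failing pair as $\reset z$ versus $\lamp y{\reset{\app \varx i}}\iapp\reset z$ whereas you isolate the premise $z \rel \lamp y{\reset{\app \varx i}}\iapp z$; the contradiction is the same.
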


\begin{proof}
  We prove that $\reset {\app \varx i} \open\ctxequiv \lamp y
  {\reset{\app \varx i}} \iapp \reset {\app \varx i}$ holds using
  applicative bisimilarity in
  Proposition~\ref{p:cex-nf-completeness}. They are not normal-form
  bisimilar, because the terms $\reset{z}$ and $\lamp y {\reset{\app
      \varx i}} \iapp \reset z$ (where $z$ is fresh) are not
  bisimilar: the former evaluates to $z$ while the latter evaluates to
  an open-stuck term.
\end{proof}

\noindent Lassen's other counter-example can also be adapted to $\lamshift$: we
can show that $\reset{\app x y} \iapp \Omega$ and $\Omega$ are contextually
equivalent but not normal-form bisimilar.

\begin{rem}
  Following Filinski's simulation of \textshift and \textreset in terms of
  \callccId{} and a single reference cell~\cite{Filinski:POPL94}, one can
  express the terms of the $\lamshift$-calculus in the
  $\lambda\mu\rho$-calculus~\cite{Stoevring-Lassen:POPL07}, a calculus with
  store and a construct similar to \callccId{}. Yet, St{\o}vring and Lassen's
  normal-form bisimilarity is sound and complete with respect to the contextual
  equivalence of $\lambda\mu\rho$~\cite{Stoevring-Lassen:POPL07}, while our
  relation is only sound. It shows that $\lambda\mu\rho$ is more expressive and
  can distinguish more terms than $\lamshift$, mostly because of the state
  construct. For example, the encodings of the two terms of
  Proposition~\ref{p:cex-dupl} in $\lambda\mu\rho$ would not be contextually
  equivalent in $\lambda\mu\rho$, since substituting for $x$ a value that, e.g.,
  increments a value of some reference cell, would lead to two different states
  that can be easily distinguished observationally.
\end{rem}

We show how up-to techniques can simplify the definitions of candidate relations
on several examples, starting with the double \textreset one.

\begin{exa}[double \textreset] To relate $\reset \tm$ and $\reset{\reset \tm}$,
  we can avoid the case analysis of Example~\ref{ex:reset-reset-nf} by noticing
  that
  $\reset \tm \mathrel{\of {\rawectxpure \compo (\rawid \cup \rawrefl)} \rel}
  \reset {\reset \tm}$ holds with
  $\rel {\mathord{\is}} \{ (\varx, \reset \varx) \}$, since $\reset \tm$ is
  pure. We then conclude by showing that $\rel$ is a bisimulation (eventually up
  to something), which is straightforward.
\end{exa}

 Bisimulation up to related contexts is easier to use with normal-form
bisimilarity compared to environmental bisimilarity, as we can plug
any related terms into any contexts (thanks to $\rawlam$ and
$\rawutshift$), without the restriction of Section~\ref{ss:env-upto}
that non-value terms are limited to evaluation contexts. As a result,
the equivalence proof for Turing's combinator and its variant can be
greatly simplified, as we can see below.

\begin{exa}[Turing's combinator]%
  \label{ex:fixed-point-nf-utc}
  Let
  \[
    \valzero  \is \lam y {\app y {\lamp z {\app {\app \Turing y} z}}} \mbox{ and }
    \valone  \is \lam y {\app y {\lamp z {\app {\app {\app {\lamp \varx {\reset
                  {\app \theta \varx}}}{\lamp \varx {\reset {\app \theta \varx}}}}
            y} z}}}.
  \]
  The relation
  \[ \rel \is \{(\Turing, \Turingshift), (\valzero \iapp y, \valone \iapp y),
    (\Turing, \app{\lamp \varx {\reset{\app \theta \varx}}}{\lamp \varx
      {\reset{\app \theta \varx}}}) \mmid y \mbox{ fresh} \}
  \] is a bisimulation up to context up to reduction. Indeed, we remind that
  $\Turing \evalcbv \valzero$ and $\Turingshift \evalcbv \valone$.  Reducing
  $\valzero \iapp y$ and $\valone \iapp y$ for a fresh $y$, we get terms sharing
  the common context $\app y {\lamp z {\app {\app \mtctx y} z}}$, and the two
  terms filling the holes (respectively $\Turing$ and
  $\app{\lamp \varx {\reset{\app \theta \varx}}}{\lamp \varx {\reset{\app \theta
        \varx}}}$) are in $\rel$. We can conclude using bisimulation up to
  related contexts, as we use it after an active reduction step. The terms
  $\Turing$ and
  $\app{\lamp \varx {\reset{\app \theta \varx}}}{\lamp \varx {\reset{\app \theta
        \varx}}}$ also reduce respectively to $\valzero$ and $\valone$, so we
  can conclude in the same way.
\end{exa}

\noindent As an extra example, we prove a variant of the \AXbetaomega axiom; the
axiom itself is proved in Section~\ref{ss:axioms-nf}.

\begin{exa}%
  \label{ex:small-step}
  If $\varx \notin \fv \ctx$, then
  $\reset{\app {\lamp \varx {\reset {\inctx \ctx \varx}}} \tm} \nfbisim \reset
  {\inctx \ctx \tm}$. Indeed, if
  \[ \rel {\mathord{\is}} \{(\reset{\app {\lamp \varx {\reset {\inctx \ctx
            \varx}}} y}, \reset {\inctx \ctx y}) \mid y \mbox{ fresh}\}.\] then
  $\reset{\app {\lamp \varx {\reset {\inctx \ctx \varx}}} \tm} \mathrel{\of
    {\rawpctxrst \compo (\rawid \cup \rawrefl)} \rel} \reset {\inctx \ctx
    \tm}$. Furthermore,
  $\reset{\app {\lamp \varx {\reset {\inctx \ctx \varx}}} y} \redcbv
  \reset{\reset {\inctx \ctx y}}$, and we prove in
  Example~\ref{ex:reset-reset-nf} that
  $\reset{\reset {\inctx \ctx y}} \nfbisim \reset{\inctx \ctx y}$; therefore
  $\rel {\mathord\subseteq} \utred\nfbisim$, and we can conclude from here.
\end{exa}

\subsection{Refined Normal-Form Bisimilarity}%
\label{ss:refined}

The normal-form bisimulation of Definition~\ref{d:nf-bisim} is too
discriminating with control-stuck terms, as we can see with the following
example.

\begin{prop}%
  \label{p:cex-stuck}
  We have $\shift \vark i \ctxequiv \app{(\shift \vark i)} \Omega$, but these
  terms are not normal-form bisimilar.
\end{prop}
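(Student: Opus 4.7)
There are two things to prove. For the contextual equivalence, the plan is to lift the problem through the characterization $\appbisim \mathop= \ctxequiv$ of Corollary~\ref{c:context-lemma} and exhibit a small applicative bisimulation. Concretely, I would take
\[
\rel \mathrel{\is} \{(\shift k i,\, (\shift k i) \iapp \Omega)\} \mathop\cup \{(\tm,\tm) \mid \tm \in \cterms\}
\]
and check the bisimulation game. Both terms are control-stuck, so they admit no $\lts\tau$ and no $\lts\val$ transition, and it suffices to match their $\lts\ctx$ transitions. For any $\ctx$, rule $\LTSshift$ gives $\shift k i \lts\ctx \reset{\subst i k {\lam x {\reset{\inctx \ctx x}}}} = \reset i$, where the substitution is vacuous since $k \notin \fv(i) = \emptyset$. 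On the other side, combining $\LTScaptl$ with $\LTSshift$ yields $(\shift k i) \iapp \Omega \lts\ctx \reset i$ as well, because the extended captured context $\inctx \ctx {\apctx \mtctx \Omega}$ is also discarded by the same vacuous substitution. The resulting pairs $(\reset i, \reset i)$ lie in the identity part of $\rel$, which is trivially a bisimulation; soundness (Corollary~\ref{t:soundness-main}) then yields $\shift k i \ctxequiv (\shift k i) \iapp \Omega$.

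For the second part, I would argue by contradiction: assume some normal-form bisimulation $\rel$ satisfies $\shift k i \rel (\shift k i) \iapp \Omega$. Both terms are already normal forms (control-stuck), so by the third clause of Definition~\ref{d:nf-bisim} we must have $\shift k i \nf\rel (\shift k i) \iapp \Omega$. Decomposing the two terms as $\inctx \mtctx {\shift k i}$ and $\inctx {(\mtctx \iapp \Omega)} {\shift k i}$, the only applicable rule for $\nf\rel$ on control-stuck terms forces both $\reset i \rel \reset i$ (harmless) and $\mtctx \nfc\rel (\mtctx \iapp \Omega)$. Since both contexts are pure, the definition of $\nfc\rel$ on pure contexts then requires $y \rel (y \iapp \Omega)$ for a fresh variable $y$. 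But $y$ is a value, whereas $y \iapp \Omega \divcbv$ because evaluation of the argument $\Omega$ loops in CBV; hence the value clause of Definition~\ref{d:nf-bisim} is violated, contradicting the assumption.

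The only genuinely delicate point is the derivation of $(\shift k i) \iapp \Omega \lts\ctx \reset i$: one has to see that even though $\Omega$ never gets evaluated, it is absorbed into the captured context and then eliminated because $k$ does not occur in the shift body. This ``vacuous absorption of a diverging argument'' is precisely the phenomenon that makes the two terms contextually equivalent, while the structural comparison performed by $\nfc\cdot$ on pure contexts is unable to cancel a $\mtctx \iapp \Omega$ frame against an empty one—the source of the incompleteness witnessed here.
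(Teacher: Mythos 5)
Your proof is correct and follows essentially the same route as the paper: the paper establishes the contextual equivalence "easily \ldots with applicative bisimilarity" (which you carry out explicitly via the relation $\{(\shift k i, (\shift k i)\iapp\Omega)\}\cup\mathrm{Id}$ and the $\lts\ctx$ transitions), and it refutes normal-form bisimilarity by exactly your observation that $\mtctx$ and $\apctx\mtctx\Omega$ cannot be related by $\nfc\nfbisim$ since $y$ converges while $\app y \Omega$ diverges. The only difference is level of detail, not substance.
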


\begin{proof}
  We can easily prove that
  $\shift \vark i \ctxequiv \app{(\shift \vark i)} \Omega$ holds with
  applicative bisimilarity or Definition~\ref{d:rnf-bisim}. They are not
  normal-form bisimilar, since the contexts $\mtctx$ and $\apctx \mtctx \Omega$
  are not related by $\nf\nfbisim$ ($\varx$ converges while $\app \varx \Omega$
  diverges).
\end{proof}

When comparing two control-stuck terms $\inctx \ctxzero {\shift \vark
  \tmzero}$ and $\inctx \ctxone {\shift \vark \tmone}$, normal-form
bisimilarity considers the contexts $\ctxzero$, $\ctxone$ and the
\textshift bodies $\tmzero$, $\tmone$ separately, while they are
combined if the control-stuck terms are put under a \textreset and the
capture goes through. We propose a more refined definition of
normal-form bisimulation which tests stuck terms by simulating the
capture, while taking into account the fact that a context bigger
than~$\ctxzero$ and~$\ctxone$ can be captured. We do so by introducing
a \emph{context variable} to represent the context beyond $\ctxzero$
and $\ctxone$. We let $\cvar$ range over a set of context
variables. We introduce such a variable when simulating a capture,
where the context is always captured with its \textreset. To simulate
this, we suppose that $\cvar$ stands for a pure context surrounded by
a delimiter. As a result, the definition of $\nf\rel$ on control-stuck
terms becomes as in Figure~\ref{f:refined-upto}.

\begin{rem}
We could try to use a regular variable $\vark'$ to play the role of a
context variable, and define the extension~$\nf\rel$ on control-stuck
terms as follows:

\begin{mathpar}
  \inferrule{\reset {\subst {\tmzero} \vark {\lam \varx
        {\reset {\app {\vark'}{\inctx \ctxzero \varx}}}}} \rel \reset
    {\subst{\tmone} \vark {\lam \varx {\reset{\app {\vark'}{\inctx \ctxone
              \varx}}}}} \\ \vark', \varx \textrm{
      fresh}}
  {\inctx \ctxzero {\shift \vark \tmzero} \nf\rel \inctx \ctxone {\shift \vark
      \tmone}}
\end{mathpar}

\noindent
However, such variables are substituted with contexts and not with
values, and so they have to be treated separately from regular
variables.
\end{rem}

\begin{figure}
  \begin{mathpar}
  \inferrule{\inctx {\dctx_0} \varx \rel \inctx {\dctx_1}
      \varx
    \\ \inctx {\rctx_0} \varx \rel
    \inctx {\rctx_1} \varx \\ \varx \mbox{ fresh}}
  {\inctx {\rctx_0} {\dctx_0}  \nfc\rel \inctx{\rctx_1}{\dctx_1}}
  \and
  \inferrule{\reset {\subst {\tmzero} \vark {\lam \varx
        {\cvctx \cvar {\inctx \ctxzero \varx}}}} \rel \reset
    {\subst{\tmone} \vark {\lam \varx {\cvctx \cvar {\inctx \ctxone
              \varx}}}} \\ \cvar, \varx \textrm{
      fresh}}
  {\inctx \ctxzero {\shift \vark \tmzero} \nf\rel \inctx \ctxone {\shift \vark
      \tmone}}
\end{mathpar}

  \vspace{1em}
\textbf{Up-to techniques specific to refined bisimilarity}

   \begin{mathpar}
    \inferrule{\tmzero \rel \tmone}{\cvctx \cvar \tmzero \utcvar\rel \cvctx \cvar
      \tmone}
    \and
    \inferrule{\tmzero \rel \tmone \\  \dctxzero \nfc\rel \dctxone}
    {\subst \tmzero \cvar \dctxzero \utcsubst\rel \subst \tmone \cvar \dctxone}
\end{mathpar}
\caption{Extension to normal forms and up-to techniques for the
  refined bisimilarity}%
\label{f:refined-upto}
\end{figure}

Formally we extend the syntax of terms and evaluation contexts
$\rctx$, and we introduce a new kind of \emph{delimited contexts}
ranged over by $\dctx$.
\begin{grammar}
 & \textrm{Terms:}& \tm & \bnfdef \ldots \bnfor \cvctx \cvar \tm\\
 & \textrm{Evaluation contexts:} \quad & \rctx & \bnfdef \ldots \bnfor \cvctx \cvar \rctx \\
 & \textrm{Delimited contexts:} \quad & \dctx & \bnfdef \reset \ctx \bnfor \cvctx \cvar \ctx
\end{grammar}
We write $\subst \tm \cvar \dctx$ for the context substitution of $\cvar$ by
$\dctx$ in $\tm$, defined so that
$\subst {(\cvctx \cvar \tm)} \cvar \dctx \is \inctx \dctx {\subst \tm \cvar
  \dctx}$,
$\subst {(\cvctx {\cvar'} \tm)} \cvar \dctx \is \cvctx {\cvar'}{\subst
  \tm \cvar \dctx}$ if $\cvar' \neq \cvar$, and the substitution is propagated
recursively on subterms in the other cases. The capture reduction rule is
changed to take delimited contexts into account.
\[
\inctx \rctx {\inctx \dctx {\shift \vark \tm}}
  \redcbv  \inctx \rctx{\reset{\subst \tm \vark
      {\lam \varx {\inctx \dctx \varx}}}} \mbox{ with } \varx \notin \fv{\dctx}
 \quad \RRshift
\]

Given a relation $\rel$ on extended open terms, we keep the
definitions of $\nfv\rel$, $\nf\rel$ on open-stuck terms,
and~$\nfc\rel$ on pure contexts as in Figure~\ref{f:ext-rnf}, and we
change $\nf\rel$ on control-stuck terms and $\nfc\rel$ on any contexts
as in Figure~\ref{f:refined-upto}. The latter change is to account for
delimited contexts. The extended calculus also features a new kind of
normal forms, of the shape $\inctx \rctx {\cvctx \cvar \val}$, called
\emph{context-stuck terms}. They are similar to open-stuck terms but
are tested differently, as we can see in the definition of progress.

\begin{defi}%
  \label{d:rnf-bisim}
  A relation $\rel$ on extended open terms diacritically progresses to $\rels$,
  $\relt$ written $\rel \pprogressr \rels, \relt$, if
  $\rel \mathop\subseteq \rels$, $\rels \mathop\subseteq \relt$, and
  $\tmzero \rel \tmone$ implies:
  \begin{itemize}
  \item if $\tmzero \redcbv \tmzero'$, then there exists $\tmone'$ such that
    $\tmone \clocbv \tmone'$ and $\tmzero' \relt \tmone'$;
  \item if $\tmzero$ is a value, then there exists $\valone$ such that
    $\tmone \evalcbv \valone$, and $\tmzero \nfv\rels \valone$;
  \item if $\tmzero = \inctx \rctxzero {\cvctx \cvar \valzero}$, then there
    exists $\rctxone$, $\valone$ such that $\tmone \evalcbv \inctx \rctxone
    {\cvctx \cvar \valone}$, $\inctx \rctxzero {\reset \mtctx} \nfc\relt \inctx \rctxone
    {\reset \mtctx}$, and $\valzero \nfv\rels \valone$;
  \item if $\tmzero$ is an open-stuck or control-stuck term, then
    there exists $\tmone'$ such that $\tmone \evalcbv \tmone'$ and
    $\tmzero \nf\relt \tmone'$;
  \item the converse of the above conditions on $\tmone$.
  \end{itemize}
  A refined normal-form bisimulation is a relation $\rel$ such that
  $\rel \pprogressr \rel, \rel$. Refined normal-form
  bisimilarity~$\rbisim$ is the largest refined normal-form
  bisimulation.
\end{defi}

\noindent The clause for context-stuck terms relates the terms $\inctx
\rctxzero {\cvctx \cvar \valzero}$ and $\inctx \rctxone {\cvctx \cvar
  \valone}$ by comparing the contexts $\inctx \rctxzero {\reset
  \mtctx}$ and $\inctx \rctxone {\reset \mtctx}$ because $\cvar$
implicitly includes a \textreset. This essentially amounts to equate
$\inctx \rctxzero \varx$ and $\inctx \rctxone \varx$ for a fresh
$\varx$. In contrast with open-stuck terms, we relate the contexts
with $\relt$ but the values with $\rels$, thus forbidding the use of
regular up-to techniques when comparing values. Our goal is to prevent
the application of the new $\rawutcsubst$ technique in that case; we
explain why after Theorem~\ref{t:compatible-rnf}.

To compare refined bisimilarity to the other relations on \lamshift,
we translate the terms of the extended calculus back to
\lamshift. Given an injective mapping $f$ from context variables to
regular variables, we define the translation $\tr f \cdot$ on extended
terms so that $\tr f {\cvctx \cvar \tm} \is \reset{\of f \cvar \iapp
  \tr f \tm}$ and so that it is recursively applied to subterms in the
other cases. The translation is defined on contexts in a similar
way. It is easy to see that if $\tm$ is a plain \lamshift-term, then
$\tr f \tm = \tm$ for all~$f$, and that reduction is preserved by the
translation.

\begin{prop}
  For all $\tm$, $\tm'$, and $f$, $\tm \clocbv \tm'$ iff
  $\tr f \tm \clocbv \tr f {\tm'}$.
\end{prop}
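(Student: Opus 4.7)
The plan is to establish a tight one-step-to-one-step correspondence between reductions in the extended calculus and reductions of the translated terms, from which both directions of the iff follow by induction on the length of the reduction. First I would prove two auxiliary lemmas by straightforward structural induction: (i) translation commutes with substitution, $\tr f {\subst \tm \varx \val} = \subst {\tr f \tm} \varx {\tr f \val}$, which holds because substitution targets regular variables, which $\tr f \cdot$ leaves untouched; and (ii) translation commutes with context plugging, $\tr f {\inctx \rctx \tm} = \inctx {\tr f \rctx}{\tr f \tm}$, once $\tr f \cdot$ is extended to evaluation contexts by $\tr f {\cvctx \cvar \rctx} = \reset{\of f \cvar \iapp \tr f \rctx}$. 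One must also verify that this last clause yields a well-formed \lamshift{} evaluation context, which holds because $\of f \cvar$ is a variable and hence a value.

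For the ($\Rightarrow$) direction, it suffices to show that a single step $\tm \redcbv \tm'$ implies $\tr f \tm \redcbv \tr f {\tm'}$. The $\RRbeta$ and $\RRreset$ cases follow immediately from the two lemmas. The $\RRshift$ case splits on the shape of $\dctx$: when $\dctx = \reset \ctx$, the translated redex is an ordinary shift-reset redex in \lamshift; when $\dctx = \cvctx \cvar \ctx$, the translation produces the redex $\reset{\of f \cvar \iapp \inctx {\tr f \ctx}{\shift \vark {\tr f \tm}}}$, which is still a shift-reset redex in \lamshift{} with captured pure context $\vctx {\of f \cvar}{\tr f \ctx}$, and a short calculation confirms that its reduct matches the translation of the extended reduct.

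For ($\Leftarrow$), the main obstacle is the reflection lemma: every single step $\tr f \tm \redcbv u$ lifts to a step $\tm \redcbv \tm_1$ with $u = \tr f {\tm_1}$. I would prove it via the unique decomposition of $\tr f \tm$ (Proposition~\ref{p:unique-decomp}), arguing that every redex in the translation must originate from a redex in $\tm$. The key observations are: the applications $\of f \cvar \iapp \cdots$ introduced by the translation are never themselves $\beta$-redexes, because $\of f \cvar$ is a variable; the newly introduced $\reset{\of f \cvar \iapp \cdots}$ subterms never fire via $\RRreset$, because their bodies are variable applications, not values; and the only way such a subterm can participate in a reduction is via $\RRshift$ on an interior \shiftId{}, which corresponds precisely to the extended $\RRshift$ rule with $\dctx = \cvctx \cvar \ctx$. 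Chaining the reflection lemma yields $\tm \clocbv \tm''$ with $\tr f {\tm''} = \tr f {\tm'}$, and injectivity of $\tr f \cdot$ (under the standard freshness convention that the image of $f$ is disjoint from the regular variables occurring in the terms at hand) gives $\tm'' = \tm'$, completing the proof.
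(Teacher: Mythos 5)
Your proposal is correct, and it is worth noting that the paper offers no proof at all here: the proposition is asserted with a \qed{} on the grounds that "reduction is preserved by the translation", so your argument is a legitimate filling-in of exactly the details the authors skip. The structure you choose --- a substitution lemma and a context-plugging lemma, a one-step simulation for ($\Rightarrow$), and a one-step reflection via unique decomposition for ($\Leftarrow$) --- is the natural one, and your key observations are the right ones: the inserted frames $\reset{\of f \cvar \iapp \mtctx}$ place their hole in evaluation position, $\of f \cvar$ is a variable and hence never the operator of a $\beta$-redex nor the body of an $\RRreset$-redex, and a capture that stops at an inserted $\reset{}$ corresponds precisely to the extended $\RRshift$ rule with $\dctx = \cvctx \cvar \ctx$. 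One genuinely useful point you add is the freshness convention (image of $f$ disjoint from the regular variables of the terms): it is needed both for the substitution lemma (so that $\of f \cvar$ is not itself substituted for) and for the injectivity used at the end of the ($\Leftarrow$) direction, and the paper's unqualified "for all $f$" silently assumes it --- without it one can produce $\tm \neq \tm'$ with $\tr f \tm = \tr f {\tm'}$, making the literal statement fail. The same convention is made explicit in the paper only later, in the hypotheses of Proposition~\ref{p:bisim-subset-rbisim}.
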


\noindent We can relate $\nfbisim$ and $\rbisim$ thanks to the translation.
\begin{prop}%
  \label{p:bisim-subset-rbisim}
  For all $\tmzero$, $\tmone$, and $f$ such that the image of $f$ does not
  intersect $\fv \tmzero$ and $\fv \tmone$, if
  $\tr f \tmzero \nfbisim \tr f \tmone$, then $\tmzero \rbisim \tmone$.
\end{prop}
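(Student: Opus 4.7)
The plan is to build a refined normal-form bisimulation out of the given normal-form bisimulation on translated terms, and conclude by coinduction. Concretely, I would define the candidate relation
\[
  \rel \is \{(\tmzero, \tmone) \mid \exists f \text{ with image} \cap (\fv{\tmzero} \cup \fv{\tmone}) = \emptyset,\ \tr{f}{\tmzero} \nfbisim \tr{f}{\tmone}\}
\]
and show that $\rel \pprogressr \rel, \rel$, whence $\rel \subseteq \rbisim$ and the proposition follows.

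Before the case analysis, I would establish three auxiliary facts about the translation $\tr{f}{\cdot}$. First, a substitution lemma $\tr{f}{\subst{\tm}{\varx}{\val}} = \subst{\tr{f}{\tm}}{\varx}{\tr{f}{\val}}$, which follows by a routine induction (the freshness of $f$'s image prevents accidental capture). Second, the reduction correspondence that strengthens the proposition preceding the claim: if $\tr{f}{\tm} \redcbv \tmp$ in $\lamshift$, then there exists $\tm'$ with $\tm \redcbv \tm'$ and $\tr{f}{\tm'} = \tmp$. The forward direction is the stated proposition; the reverse direction is by induction on $\tm$, noting that the only new reduction rule in the extended calculus (shift capture across $\dctx = \cvctx{\cvar}{\ctx}$) exactly mirrors the $\RRshift$ rule applied to $\reset{\app{f(\cvar)}{\inctx{\tr{f}{\ctx}}{\shift{\vark}{\tr{f}{\tms}}}}}$ produced by the translation, and that no spurious reduction can occur because $f(\cvar)$ is free and cannot form a $\beta$-redex. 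Third, a normal-form classification: $\tr{f}{\tm}$ is a value, control-stuck term, or open-stuck term $\inctx{\rctx}{\app{\varx}{\val}}$ iff $\tm$ is respectively a value, a control-stuck term (with translated $\ctx$), an open-stuck term with the same variable when $\varx \notin \mathrm{image}(f)$, or a context-stuck term $\inctx{\rctx}{\cvctx{\cvar}{\val}}$ when $\varx = f(\cvar)$. This uses the freshness of $f$'s image to exclude any other origin for $f(\cvar)$.

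With these in hand, the bisimulation check is by cases on the progress of $\tmzero$. The reduction, value, open-stuck, and control-stuck cases are routine: in each, we transport the behaviour of $\tmzero$ to $\tr{f}{\tmzero}$ via the correspondence lemmas, invoke $\tr{f}{\tmzero} \nfbisim \tr{f}{\tmone}$ to obtain matching behaviour for $\tr{f}{\tmone}$, and then use the reverse reduction correspondence and normal-form classification to recover a matching reduct $\tmone'$ in the extended calculus. The required relations on components (i.e.\ $\nfv{\rel}$ on values, $\nfc{\rel}$ on pure contexts, and the combined test on control-stuck terms) reduce directly to instances of $\nfv{\nfbisim}$ and $\nfc{\nfbisim}$ on the translated components, which yield membership in $\rel$ by choosing the same $f$ and testing with fresh witnesses.

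The main obstacle is the context-stuck case, where $\tmzero = \inctx{\rctxzero}{\cvctx{\cvar}{\valzero}}$: its translation is an open-stuck term with head variable $f(\cvar)$, so the $\nfbisim$ clause yields $\tr{f}{\tmone} \evalcbv \inctx{\rctxp}{\app{f(\cvar)}{\valp}}$ with $\inctx{\tr{f}{\rctxzero}}{\reset{\mtctx}} \nfc{\nfbisim} \rctxp$ and $\tr{f}{\valzero} \nfv{\nfbisim} \valp$. I then have to show that $\rctxp$ and $\valp$ come from a context-stuck $\tmone' = \inctx{\rctxone}{\cvctx{\cvar}{\valone}}$ via the normal-form classification, and that the \lamshift{} decomposition $\inctx{\tr{f}{\rctxzero}}{\reset{\mtctx}} \nfc{\nfbisim} \inctx{\tr{f}{\rctxone}}{\reset{\mtctx}}$ transfers to $\inctx{\rctxzero}{\reset{\mtctx}} \nfc{\rel} \inctx{\rctxone}{\reset{\mtctx}}$. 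This requires a careful structural analysis of $\nfc{\nfbisim}$ on translated contexts—distinguishing the pure and delimited cases and recognising that a translated $\cvctx{\cvar'}{\ctx}$ appearing inside $\rctxzero$ becomes a $\reset{\app{f(\cvar')}{\cdot}}$ block whose $\nfc{}$-decomposition in \lamshift{} precisely matches the decomposition of $\dctx$ in the refined calculus, again thanks to the freshness of $\mathrm{image}(f)$. Once this structural transfer is accomplished for contexts, the rest of the case closes as above.
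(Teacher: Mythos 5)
Your overall strategy---take $\rel \is \{(\tmzero,\tmone) \mid \tr f \tmzero \nfbisim \tr f \tmone\}$, show it is a refined bisimulation via a reduction correspondence and a normal-form classification for the translation, and conclude by coinduction---is exactly the paper's, and your treatment of the context-stuck case (tracing the head variable $\of f \cvar$ back to a $\cvctx \cvar \cdot$ construct by injectivity and freshness of $f$, then transferring the decomposition of the surrounding evaluation contexts) is a more explicit version of what the paper does there; that case is fine.

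The one place your proposal goes astray is in classifying the control-stuck case as routine: it is where the paper's proof does its real work. If $\tmzero = \inctx \ctxzero {\shift \vark {\tmzero'}}$, the refined clause demands a \emph{single combined} term, $\reset{\subst {\tmzero'} \vark {\lam \varx {\cvctx \cvar {\inctx \ctxzero \varx}}}}$, be related to its counterpart, and for membership in $\rel$ its translation $\reset{\subst {\tr f {\tmzero'}} \vark {\lam \varx {\reset{\app {\of f \cvar}{\inctx {\tr f \ctxzero} \varx}}}}}$ must be $\nfbisim$-related to the corresponding term for $\tmone$. But the $\nf\nfbisim$-clause on the translated control-stuck terms only delivers two \emph{separate} facts, $\tr f \ctxzero \nfc\nfbisim \tr f \ctxone$ and $\reset{\tr f {\tmzero'}} \nfbisim \reset{\tr f {\tmone'}}$, and these are not ``directly'' an instance of the combined statement---the shapes do not match. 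To glue them you must first use compatibility of $\nfbisim$ to turn $\inctx {\tr f \ctxzero} \varx \nfbisim \inctx {\tr f \ctxone} \varx$ into relatedness of the values $\lam \varx {\reset{\app {\of f \cvar}{\inctx {\tr f \ctxzero} \varx}}}$ and $\lam \varx {\reset{\app {\of f \cvar}{\inctx {\tr f \ctxone} \varx}}}$, and then substitutivity of $\nfbisim$ to push these values under the substitution for $\vark$ in the related bodies. Both properties are available from Theorem~\ref{t:compatible-nf}, so nothing fails, but this recombination is a genuine step that your phrase ``reduce directly to instances of $\nfv\nfbisim$ and $\nfc\nfbisim$'' elides; it needs to be spelled out.
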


\noindent The condition on $f$ allows for the distinction between the
evaluations to context-stuck and open-stuck terms. If
$\tr f \tmzero \evalcbv \inctx \rctx {\app \varx \val}$ for some $\rctx$,
$\varx$ and $\val$, then either $\varx \in \fv \tmzero$,
$\tmzero \evalcbv \inctx {\rctx'}{\app \varx {\val'}}$,
$\tr f {\rctx'} = \rctx$, and $\tr f {\val'} = \val$, or $\varx = \of f \cvar$
for some $\cvar$, $\tmzero \evalcbv \inctx {\rctx'}{\cvctx \cvar {\val'}}$,
$\tr f {\inctx {\rctx'}{\reset \mtctx}} = \rctx$, and $\tr f {\val'} = \val$.

\begin{proof}
  We prove that
  $\rel \mathord{\is}\{ (\tmzero, \tmone) \mid \tr f \tmzero \nfbisim \tr f
  \tmone \}$ is a refined bisimulation. What needs to be checked are
  context-stuck terms and control-stuck terms. If
  $\tmzero = \inctx \rctxzero {\cvctx \cvar \valzero}$, then
  $\tr f \tmzero = \inctx {\tr f \rctxzero}{\reset{\app {\of f \cvar}{\tr f
        \valzero}}}$, and there exists~$\rctxone$,~$\valone$ such that
  $\tr f \tmone \evalcbv \inctx {\tr f \rctxone}{\reset{\app {\of f \cvar}{\tr f
        \valone}}}$ with
  $\tr f \tmzero \nf\nfbisim \inctx {\tr f \rctxone}{\reset{\app {\of f
        \cvar}{\tr f \valone}}}$, i.e.,
  $\tr f {\inctx \rctxzero{\reset \mtctx}} \nfc\nfbisim \tr f {\inctx \rctxone{\reset
      \mtctx}}$ and $\tr f \valzero \nfv\nfbisim \tr f \valone$. Therefore we have
  $\tmone \evalcbv \inctx \rctxone {\cvctx \cvar \valone}$, and the clause for
  context-stuck terms is verified.

  If $\tmzero = \inctx \ctxzero {\shift \vark {\tmzero'}}$, then $\tr f \tmzero =
  \inctx {\tr f \ctxzero}{\shift \vark {\tr f {\tmzero'}}}$, and there exists
  $\ctxone$, $\tmone'$ such that $\tr f \tmone \evalcbv \inctx {\tr f \ctxone}{\shift
    \vark {\tr f {\tmone'}}}$, $\tr f \ctxzero \nfc\nfbisim \tr f \ctxone$, and $\reset
  {\tr f {\tmzero'}} \nfbisim \reset {\tr f {\tmone'}}$. But $\nfbisim$ is
  compatible and substitutive, therefore we have
  \[\reset
    {\subst {\tr f {\tmzero'}} \vark {\lam \varx {\reset {\app {\of f \cvar}{\inctx
              {\tr f \ctxzero} \varx}}}}} \nfbisim \reset {\subst{\tr f {\tmone'}} \vark
      {\lam \varx {\reset{\app {\of f \cvar}{\inctx {\tr f \ctxone} \varx}}}}}
  \]
  for some fresh $\varx$ and $\cvar$. Consequently, we have
  $\reset {\subst {\tmzero'} \vark {\lam \varx {\cvctx
            \cvar {\inctx \ctxzero \varx}}}} \rel \reset {\subst{\tmone'}
    \vark {\lam \varx {\cvctx \cvar {\inctx \ctxone \varx}}}}$, as
  wished.
\end{proof}

A direct consequence of Proposition~\ref{p:bisim-subset-rbisim} is that
$\nfbisim \mathop\subset \rbisim$. The inclusion is strict, because~$\rbisim$
relates the terms of Proposition~\ref{p:cex-stuck}, while $\nfbisim$ does not.

\subsubsection*{Up-to techniques and soundness} The up-to techniques for refined
bisimilarity are the same as for normal-form bisimilarity
(Figure~\ref{fig:nf-upto}), except that we add techniques specific to context
variables $\rawutcsubst$ and $\rawutcvar$ (defined in
Figure~\ref{f:refined-upto}), and we remove $\rawpctxrst$, as it can be directly
expressed in terms of the two new techniques. In fact, $\rawutcsubst$ is a bit
more powerful than $\rawpctxrst$, as several copies of the same context can be
abstracted away with $\rawutcsubst$ against only one for $\rawpctxrst$. For the
$\rawectxpure$ technique, pure terms now include terms of the form
$\cvctx \cvar \tm$ in addition to values and delimited terms $\reset \tm$.

\begin{thm}%
  \label{t:compatible-rnf}
  The set
  $\setF \is \{ \rawrefl, \rawlam, \rawutshift, \rawutcvar, \rawutsubst,
  \rawutcsubst, \rawpctx, \rawectxpure, \rawid, \rawutred\}$ is diacritically
  compatible, with
  $\strong \setF = \setF \setminus \{ \rawutcsubst, \rawpctx, \rawectxpure \}$.
\end{thm}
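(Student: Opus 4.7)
The plan is to extend the proof of Theorem~\ref{t:compatible-nf}, whose structure is sketched in the excerpt, to the refined setting. For each $f \in \setF$ I verify the appropriate evolution: $f \sevolve {\fid{\strong\setF}}^\omega, {\fid\setF}^\omega$ when $f \in \strong\setF$, and $f \fevolve {\fid{\strong\setF}}^\omega \compo \fid\setF \compo {\fid{\strong\setF}}^\omega, {\fid\setF}^\omega$ for $f \in \{\rawutcsubst, \rawpctx, \rawectxpure\}$. The techniques $\rawrefl$, $\rawlam$, $\rawutshift$, $\rawutsubst$, $\rawid$, $\rawutred$, $\rawpctx$, and $\rawectxpure$ inherit their arguments from Theorem~\ref{t:compatible-nf} with the case analysis on $\tmzero$ enriched by the new context-stuck normal form $\inctx \rctx {\cvctx \cvar \val}$; in the subcase where $\tmzero$ evaluates to such a term, the outer operator is absorbed into the enlarged evaluation context and the clause for context-stuck terms is established using $\rawrefl$ to produce $\reset \mtctx \nfc\relt \reset \mtctx$ together with the hypothesis on the surrounding contexts.

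The two genuinely new cases are $\rawutcvar$ (strong) and $\rawutcsubst$ (non-strong). For $\rawutcvar$, starting from $\cvctx \cvar \tmzero \utcvar\rel \cvctx \cvar \tmone$ with $\tmzero \rel \tmone$, I case-split on the behavior of $\tmzero$: internal reductions lift through $\cvctx \cvar \cdot$; if $\tmzero$ evaluates to a value $\valzero$, then $\cvctx \cvar \tmzero$ evaluates to the context-stuck term $\cvctx \cvar \valzero$, the other side matches symmetrically, and $\rawrefl$ discharges $\reset \mtctx \nfc\relt \reset \mtctx$; if $\tmzero$ is itself open-stuck, control-stuck, or context-stuck, the $\cvctx \cvar \cdot$ wrapper is absorbed into the surrounding evaluation context, and the compositional structure of $\nfc$ yields the required relation. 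For $\rawutcsubst$, given $\subst \tmzero \cvar \dctxzero \utcsubst\rel \subst \tmone \cvar \dctxone$ with $\tmzero \rel \tmone$ and $\dctxzero \nfc\rel \dctxone$, the analysis again proceeds by case on the evaluation of $\tmzero$, but the updated $\RRshift$ rule now permits substituted delimited contexts to participate in captures, so the residuals after reduction must be carefully tracked across the two substitution operations.

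The main obstacle is the non-strongness of $\rawutcsubst$, which arises from the same phenomenon sketched for $\rawpctx$ in the excerpt: if $\tmzero$ evaluates to an open-stuck term $\inctx \rctx {\app \varx \val}$ where $\varx$ is matched on the right side by a value $\valone$ with $\varx \nfv\rel \valone$, then closing the case after the substitution fires requires invoking $\rawutsubst$ once, which prevents $\rawutcsubst$ from being applicable immediately after a passive transition. The most delicate bookkeeping concerns the interaction of a $\RRshift$ capture with $\dctxzero$ or $\dctxone$ substituted for $\cvar$: one must verify that the captured context, decomposed according to the $\nfc$ rules of Figures~\ref{f:ext-rnf} and~\ref{f:refined-upto}, lies in $\fid\setF^\omega$ applied to the appropriate relation, and that freshness of context variables is preserved so that context-variable holes occurring inside $\dctxzero$ and $\dctxone$ do not collide with those freshly generated by the simulated capture. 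Once this invariant is maintained, the remaining verifications are routine case analyses along the lines of the proof sketched for $\rawpctx$.
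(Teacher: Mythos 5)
Your overall strategy --- rechecking evolution for each technique on top of the proof of Theorem~\ref{t:compatible-nf}, with $\rawutcvar$ strong and $\rawutcsubst$ not --- is also the paper's, and your observation that the non-strongness of $\rawutcsubst$ is resolved by one application of the strong technique $\rawutsubst$, enabled by the passivity of a value test, is the right mechanism. However, you attach that mechanism to the wrong case. The problematic configuration is \emph{not} when $\tmzero$ evaluates to an open-stuck term $\inctx \rctx {\app \varx \val}$: the open-stuck clause is active, the head variable must be identical on both sides (it is never ``matched by a value $\valone$ with $\varx \nfv\rel \valone$''), and context substitution leaves such a term open-stuck at $\varx$ with componentwise related pieces, so that case closes routinely in $\relt$. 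The case that actually forces $\rawutcsubst$ out of $\strong\setF$ is the \emph{context-stuck} one: take $\tmzero \is \cvctx \cvar \varx$ and $\dctxzero = \reset{\app \mtctx \val}$, so that $\subst \tmzero \cvar \dctxzero = \reset{\app \varx \val}$ becomes open-stuck only \emph{after} the substitution, while the matching side only yields $\varx \nfv\rel \valone$ from the passive value test of the context-stuck clause; it is from there that $\rawutsubst$ rescues the proof, and this is exactly why Definition~\ref{d:rnf-bisim} tests values with $\rels$ in that clause.

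The more serious omission is your classification of $\rawutshift$ and especially $\rawpctx$ as cases that ``inherit'' their arguments from Theorem~\ref{t:compatible-nf}. The clause of $\nf\cdot$ that changed between Definitions~\ref{d:nf-bisim} and~\ref{d:rnf-bisim} is precisely the control-stuck one, so these are the two techniques whose verification is genuinely new. For $\rawpctx$, when $\tmzero = \inctx {\ctxzero'}{\shift \vark {\tmzero'}}$ is control-stuck, the term $\inctx \ctxzero \tmzero$ is control-stuck with captured context $\inctx \ctxzero {\ctxzero'}$, and the new clause demands relating $\reset {\subst {\tmzero'} \vark {\lam y {\cvctx {\cvar}{\inctx \ctxzero {\inctx {\ctxzero'} y}}}}}$ with its counterpart; the hypothesis on $\tmzero \rel \tmone$ only provides this with a fresh $\cvar'$ in place of $\cvctx \cvar \ctxzero$, and one must substitute $\cvctx \cvar \ctxzero$ for $\cvar'$ --- an essential use of $\rawutcsubst$ (together with $\rawutcvar$ to relate $\cvctx \cvar {\inctx \ctxzero \varx}$ and $\cvctx \cvar {\inctx \ctxone \varx}$) inside the evolution proof of $\rawpctx$. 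Your proposal never performs this composition of the outer and inner captured contexts. Relatedly, your claim that for control-stuck $\tmzero$ the wrapper $\cvctx \cvar \cdot$ is ``absorbed into the surrounding evaluation context'' in the $\rawutcvar$ case is off: by the extended rule $\RRshift$, $\cvctx \cvar {\inctx {\ctxzero'}{\shift \vark {\tmzero'}}}$ is a redex, and the case closes because its reduct coincides with what the control-stuck clause already relates, not because the term stays stuck.
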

Unsurprisingly, the technique $\rawutcsubst$ is not strong as it behaves like
$\rawpctxrst$. In particular, it exhibits the same problematic subcase as the
one presented after Theorem~\ref{t:compatible-nf}, by taking
$\tmzero \is \cvctx \cvar \varx$ and $\dctxzero = \reset{\app \mtctx
  \val}$. Then from $\tmzero \rel \tmone$ and $\rel \pprogress \rel, \rels$, we
know there exist~$\rctxone$ and~$\valone$ such that
$\tmone \evalcbv \inctx \rctxone {\cvctx \cvar \valone}$,
$\mtctx \nfc\rels \rctxone$, and $\varx \nfv \rel \valone$. From there, we can
conclude as in Section~\ref{ss:upto-nf}, using $\rawutsubst$; more details are
given in the appendix. To conclude in that case, it is important to have
$\varx \nfv\rel \valone$ and not $\varx \nfv\rels \valone$, justifying why the
test for values is passive for context-stuck terms.

From Theorem~\ref{t:compatible-rnf} and
Proposition~\ref{p:properties-compatibility-better}, we deduce that $\rbisim$ is
compatible, which we then use to show that $\rbisim$ is sound {\wrt}$\ctxequiv$
in the following sense.

\begin{thm}
  For all $(\tmzero, \tmone) \in \terms^2$, if $\tmzero \rbisim \tmone$, then
  $\tmzero \ctxequiv \tmone$.
\end{thm}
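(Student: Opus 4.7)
The plan is to derive soundness from the compatibility of $\rbisim$ together with the fact that the extended reduction acts on plain $\lamshift$ terms exactly as $\redcbv$ does. Theorem~\ref{t:compatible-rnf} combined with Proposition~\ref{p:properties-compatibility-better} guarantees $f(\rbisim) \mathop\subseteq \rbisim$ for every $f \in \setF$. In particular, $\rawrefl$ gives reflexivity, $\rawlam$, $\rawutshift$, and $\rawpctx$ (applied inductively, with reflexivity on the remaining holes) make $\rbisim$ preserved by every constructor of the calculus, and $\rawutsubst$ combined with reflexivity $\val \nfv\rbisim \val$ makes $\rbisim$ substitutive: $\tmzero \rbisim \tmone$ implies $\subst \tmzero \varx \val \rbisim \subst \tmone \varx \val$ for every closed value $\val$.

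Fix any closing substitution $\sigma$ for $\tmzero$ and $\tmone$ and any closed plain $\lamshift$ context $\cctx$. Iterated substitutivity and compatibility yield $\inctx \cctx {\tmzero \sigma} \rbisim \inctx \cctx {\tmone \sigma}$. Both sides are closed plain terms, and since plain terms contain no context variable, the only delimited context $\dctx$ that can surround a \shiftId{} during reduction is of the form $\reset \ctx$; the extended $\RRshift$ rule therefore collapses to the $\lamshift$ one, plainness is preserved by reduction, and the only normal forms reachable are values and control-stuck terms (open-stuck forms are excluded by closure, context-stuck forms by plainness).

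Applying Definition~\ref{d:rnf-bisim} and iterating its reduction clause along any finite reduction sequence produces a big-step matching on these terms: $\inctx \cctx {\tmzero \sigma}$ evaluates to a value iff $\inctx \cctx {\tmone \sigma}$ does, and analogously for evaluation to a control-stuck term. This is exactly the condition of Definition~\ref{d:ctxequivtwo}, so $\tmzero \sigma \ctxequiv \tmone \sigma$ for every closing $\sigma$, i.e., $\tmzero \open\ctxequiv \tmone$, as required.

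The main obstacle is the small-step to big-step conversion used in the last step: from $\tm \rbisim \tm'$ and $\tm \clocbv \tm''$ with $\tm''$ a normal form, we must extract a matching evaluation on the other side. This is a routine induction on the length of the reduction, using determinism of $\redcbv$ so that the witnesses produced at each step of the bisimulation game can be chained, together with the transitivity of $\rbisim$ (a standard consequence of the respectful-set machinery, since the transitive closure of a bisimulation is a bisimulation). The fresh context variable $\cvar$ that appears in the $\nf\rbisim$ matching of control-stuck terms is harmless here, since $\ctxequiv$ observes only the kind of normal form reached and not its internal shape.
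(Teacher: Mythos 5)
Your proof is correct and takes essentially the same route as the paper: congruence of $\rbisim$, deduced from the respectful set of Theorem~\ref{t:compatible-rnf} via Proposition~\ref{p:properties-compatibility-better}, followed by the standard adequacy argument that two bisimilar closed plain terms reach the same kind of normal form (value or control-stuck), which is all that $\ctxequiv$ observes. Two minor nits: closure under $\reset \mtctx$ does not come from $\rawpctx$ (which covers pure contexts only) but from $\rawutcvar$ followed by $\rawutcsubst$ with the delimited context $\reset \mtctx$, and the transitivity you invoke in the small-step-to-big-step argument is unnecessary, since the reduction clause of Definition~\ref{d:rnf-bisim} already returns a pair in the bisimulation itself and can simply be iterated.
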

\noindent As we restrict $\tmzero$ and $\tmone$ to plain $\lamshift$ terms, we
do not need to work up to the translation. The relation $\rbisim$ is not
complete because it still does not relate the terms of
Proposition~\ref{p:cex-dupl}. We would like to stress that even though $\rbisim$
equates more contextually equivalent terms than $\nfbisim$, the latter is still
useful, since it leads to very simple proofs of equivalence, as we can see with
the examples of Sections~\ref{ss:upto-nf} and~\ref{ss:axioms-nf}. Therefore,
$\rbisim$ does not disqualify $\nfbisim$ as a proof technique. In fact, they can
be used together, as in the next example.

\begin{exa}%
  \label{ex:refined}
  If $\vark' \notin \fv \ctx \cup \fv \tm$ and $\varx \notin \fv \ctx$, then
  $\inctx \ctx {\shift \vark \tm} \rbisim \shift {\vark'}{\subst \tm \vark {\lam
      \varx {\reset {\app {\vark'}{\inctx \ctx \varx}}}}}$.  The two terms are
  control stuck, therefore we have to prove that
  $\reset{\subst \tm \vark {\lam \varx {\cvctx \cvar {\inctx \ctx
            \varx}}}} \rbisim \reset{\subst \tm \vark {\lam \varx {\reset {\app
          {\lamp y {\cvctx \cvar y}}{\inctx \ctx \varx}}}}}$ holds for
  a fresh $\cvar$. Let $f$ be an injective mapping verifying the conditions of
  Proposition~\ref{p:bisim-subset-rbisim}. We know that
  $\reset {\app {\of f \cvar}{\inctx \ctx \varx}} \nfbisim \reset{\app {\lamp y
      {\reset {\app {\of f \cvar} y}} }{\inctx \ctx \varx}}$ holds by Example~\ref{ex:small-step}, so we have
  $\cvctx \cvar {\inctx \ctx \varx} \rbisim \reset{\app {\lamp y
      {\cvctx \cvar y} }{\inctx \ctx \varx}}$ by Proposition~\ref{p:bisim-subset-rbisim}. We can then conclude using $\rawrefl$, $\rawlam$,
  and $\rawutsubst$.

  Proving this result using only the regular normal-form bisimulation would
  require us to equate $\inctx \ctx y$ and $y$ (where $y$ is fresh), which is
  not true in general (take $\ctx= \vctx {\lamp z \Omega} \mtctx$).
\end{exa}

\begin{figure}

\begin{mathpar}
  \inferrule{\inctx {\dctx_0} \varx \rel \inctx {\dctx_1}
      \varx
    \\ \inctx {\rctx_0} \varx \rel
    \inctx {\rctx_1} \varx \\ \varx \mbox{ fresh}}
  {\inctx {\rctx_0} {\dctx_0}  \nfc\rel \inctx{\rctx_1}{\dctx_1}}
  \and
  \inferrule{\cvctx \cvar {\valzero \iapp \varx} \rel \cvctx
      \cvar {\valone \iapp \varx} \\ \varx, \cvar \textrm{
      fresh}}
  {\valzero \nfv\rel \valone}
  \and
  \inferrule{\rctxzero \nfc\rel \rctxone \\ \valzero \nfv\rel \valone}
  {\inctx \rctxzero {\app \varx \valzero} \nf\rel \inctx \rctxone {\app \varx
      \valone}}
\end{mathpar}

\vspace{1em}
\textbf{Up-to techniques}

  \begin{mathpar}
    \inferrule{ }{\prg \utrefl\rel \prg}
    \and
    \inferrule{\prgzero \rel \prgone}{\cvctx \cvar {\lam \varx \prgzero}
      \utlam\rel \cvctx \cvar {\lam \varx \prgone}}
    \and
    \inferrule{\prgzero \rel \prgone}{\cvctx \cvar {\shift \vark
          \prgzero} \utshift\rel \cvctx \cvar {\shift \vark \prgone}}
    \and
    \inferrule{\prgzero \rel \prgone}{\cvctx \cvar \prgzero \utcvar\rel
      \cvctx \cvar \prgone}
    \and
    \inferrule{\prgzero \rel \prgone \\  \valzero \nfv\rel \valone}
    {\subst \prgzero \varx \valzero \utsubst\rel \subst \prgone \varx \valone}
    \and
    \inferrule{\prgzero \rel \prgone \\  \dctxzero \nfc\rel \dctxone}
    {\subst \prgzero \cvar \dctxzero \utcsubst\rel \subst \prgone \cvar \dctxone}
    \and
    \inferrule{\prgzero \clocbv \prgzero' \\  \prgone \clocbv \prgone' \\ \prgzero' \rel \prgone'}
    {\prgzero \utred\rel \prgone}
    \and
    \inferrule{\prgzero \rel \prgone  \\ \inctx
      \rctxzero \varx \rel \inctx \rctxone \varx \\ \inctx
      \rctxzero \varx, \inctx \rctxone \varx \mbox{ pure} \\  \varx \mbox{ fresh}}
    {\inctx \rctxzero \prgzero \utectxpure\rel \inctx \rctxone
        \prgone}
\end{mathpar}
  \caption{Extension to normal forms and up-to techniques for the
    original semantics}%
  \label{f:nf-original}
\end{figure}

\subsection{Normal-Form Bisimulation for the Original Semantics}%
\label{ss:nf-original}

Any sound bisimilarity for the relaxed semantics, such as $\nfbisim$ or
$\rbisim$, is also sound for the original semantics. We define in this section a
bisimilarity which, while being not complete {\wrt}$\ctxequivp$, still relates
more terms in the original semantics than~$\nfbisim$ or $\rbisim$. We follow the
same principle as in Section~\ref{ss:env-LTS-original}, and define a
bisimilarity which primarily compares pure terms. We then extend it to any terms
by introducing a context variable which stands for a potential evaluation
context, as with refined bisimilarity.

Formally, we work on the extended calculus of Section~\ref{ss:refined}, and we
let $\prg$ range over pure terms, which are now of three possible shapes.
\begin{grammar}
  & \textrm{Pure terms:} \quad & \prg & \bnfdef \val \bnfor \reset \tm \bnfor  \cvctx
  \cvar \tm
\end{grammar}
We update the definition of $\nfv\cdot$, $\nfc\cdot$, and $\nf\cdot$ in
Figure~\ref{f:nf-original}. Because we work on pure terms, the control-stuck
terms case has been removed; similarly, the evaluation contexts in the
context-stuck and open-stuck terms cases are delimited, so the pure context case
of $\nfc\cdot$ is no longer useful. The definition of $\nfv\cdot$ has been
changed so that we compare pure terms in its premise.

\begin{defi}%
  \label{d:onf-bisim}
  A relation $\rel$ on extended pure open terms diacritically progresses to $\rels$,
  $\relt$ written $\rel \pprogresso \rels, \relt$, if
  $\rel \mathop\subseteq \rels$, $\rels \mathop\subseteq \relt$, and
  $\prgzero \rel \prgone$ implies:
  \begin{itemize}
  \item if $\prgzero \redcbv \prgzero'$, then there exists $\prgone'$ such that
    $\prgone \clocbv \prgone'$ and $\prgzero' \relt \prgone'$;
  \item if $\prgzero$ is a value, then there exists $\valone$ such that
    $\prgone \evalcbv \valone$, and $\prgzero \nfv\rels \valone$;
  \item if $\prgzero = \inctx \rctxzero {\cvctx \cvar \valzero}$, then there
    exists $\rctxone$, $\valone$ such that $\prgone \evalcbv \inctx \rctxone
    {\cvctx \cvar \valone}$, $\inctx \rctxzero {\reset \mtctx} \nfc\relt \inctx \rctxone
    {\reset \mtctx}$, and $\valzero \nfv\rels \valone$;
  \item if $\prgzero$ is an open-stuck term, then there exist $\prgone'$ such
    that $\prgone \evalcbv \prgone'$ and $\prgzero \nf\relt \prgone'$;
  \item the converse of the above conditions on $\prgone$.
  \end{itemize}
  A pure normal-form bisimulation is a relation $\rel$ such that
  $\rel \pprogresso \rel, \rel$. Pure normal-form bisimilarity~$\onfbisim$ is
  the largest pure normal-form bisimulation.
\end{defi}
\noindent Again, testing values in the context-stuck terms case is passive, to
prevent $\rawutcsubst$ to be used here; otherwise, from
$\cvctx \cvar \valzero \rel \cvctx \cvar \valone$, we could relate
$\cvctx {\cvar'}{\valzero \iapp \varx}$ and
$\cvctx {\cvar'}{\valone \iapp \varx}$ directly for any $\valzero$
and~$\valone$.

We extend $\onfbisim$ to all terms as follows: $\tmzero \onfbisim \tmone$ if
$\cvctx \cvar \tmzero \onfbisim \cvctx \cvar \tmone$ for a fresh $\cvar$. Pure
bisimilarity relates more terms in the original semantics than the normal-form
bisimilarities of the relaxed semantics.

\begin{prop}%
  \label{p:nfb-in-onfb}
  We have $\nfbisim \mathop\subsetneq \onfbisim$ and $\rbisim
  \mathop\subsetneq \onfbisim$.
\end{prop}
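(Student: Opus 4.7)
The plan is to establish the stronger inclusion $\rbisim \mathop\subseteq \onfbisim$; the first inclusion $\nfbisim \mathop\subseteq \onfbisim$ then follows because $\nfbisim \mathop\subseteq \rbisim$ is a direct consequence of Proposition~\ref{p:bisim-subset-rbisim} (the translation $\tr{f}{\cdot}$ is the identity on plain $\lamshift$-terms). By the extension of $\onfbisim$ to arbitrary terms, it suffices to show that $\cvctx{\cvar}{\tmzero} \onfbisim \cvctx{\cvar}{\tmone}$ for fresh $\cvar$ whenever $\tmzero \rbisim \tmone$. Compatibility of $\rbisim$ under $\cvctx{\cvar}{\cdot}$, which follows from Theorem~\ref{t:compatible-rnf} (with $\rawutcvar \in \setF$) together with the last item of Proposition~\ref{p:properties-compatibility-better}, turns this into $\cvctx{\cvar}{\tmzero} \rbisim \cvctx{\cvar}{\tmone}$, where both sides are now pure.

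I therefore take as candidate the relation
\[
\rel \is \{(\prgzero, \prgone) \mid \prgzero \rbisim \prgone \text{ and } \prgzero, \prgone \text{ are pure}\}
\]
and verify that $\rel \pprogresso \rel, \rel$. The first structural observation is that a reduct of a pure term is again pure: by case analysis on the three reduction rules combined with the grammar of evaluation contexts, only the reductions whose outer evaluation context begins with $\rawreset$ or $\cvctx{\cvar}{\cdot}$ (or that strip an outermost delimiter from a value) are possible, and all such reducts are pure. The reduction clause of $\pprogresso$ is therefore inherited verbatim from the reduction clause of $\pprogressr$ supplied by $\rbisim$.

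For the remaining clauses I unpack $\rbisim$-evidence and repackage it in the form required by Definition~\ref{d:onf-bisim}. The value clause of $\pprogresso$ demands $\cvctx{\cvar'}{\valzero \iapp \varx} \rel \cvctx{\cvar'}{\valone \iapp \varx}$; the $\rbisim$-side only provides $\valzero \iapp \varx \rbisim \valone \iapp \varx$, so I close under $\rawutcvar$ to insert the wrapping, after which the pair is a pair of pure terms related by $\rbisim$, hence in $\rel$. The control-stuck case of $\pprogressr$ cannot arise because pure terms are never control-stuck. For the context-stuck and open-stuck clauses, the purity of $\inctx{\rctxzero}{\cvctx{\cvar}{\valzero}}$ or $\inctx{\rctxzero}{\app{\varx}{\valzero}}$ forces $\rctxzero$ to admit a decomposition $\inctx{\rctx'}{\dctx'}$ through a delimited context; this is exactly the shape on which the single pure-setting rule for $\nfc\rel$ of Figure~\ref{f:nf-original} operates, and the plugged terms $\inctx{\rctxzero}{y}$ and $\inctx{\dctxzero}{y}$ remain pure, so the $\nfc\rbisim$- and $\nf\rbisim$-evidence provided by $\rbisim$ translates directly into $\nfc\rel$- and $\nf\rel$-evidence.

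The main obstacle is the mild mismatch between the two definitions of $\nfv\cdot$ (the pure version inserts a $\cvctx{\cvar'}{\cdot}$ wrapping that the refined version does not) and of $\nfc\cdot$ (the pure setting admits no rule for comparing pure contexts directly), rather than a deep issue. The first is resolved uniformly by compatibility of $\rbisim$ via $\rawutcvar$, and the second by the purity constraints on normal forms automatically supplying the delimited decomposition the pure $\nfc\rel$ rule demands; once these two points are in place, the inclusion reduces to a direct, clause-by-clause verification against Definition~\ref{d:onf-bisim}.
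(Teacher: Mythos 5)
Your proposal is correct and follows essentially the same route as the paper: reduce to showing $\rbisim \mathop\subseteq \onfbisim$, use compatibility of $\rbisim$ (via $\rawutcvar$ and Proposition~\ref{p:properties-compatibility-better}) to wrap both terms in $\cvctx \cvar \cdot$, and then check that $\rbisim$ restricted to pure terms is a pure bisimulation, with the only real mismatch being the extra $\cvctx{\cvar'}{\cdot}$ wrapping in the pure value test, again discharged by compatibility. The additional detail you give on purity being preserved by reduction and on the context-stuck/open-stuck clauses is a sound elaboration of what the paper compresses into ``the tests differ only on values.''
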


\begin{proof}
  Because $\nfbisim \mathop\subseteq \rbisim$, it is enough to show that
  $\rbisim \mathop\subseteq \onfbisim$. Let $\tmzero \rbisim \tmone$; because
  $\rbisim$ is compatible, we have
  $\cvctx \cvar \tmzero \rbisim \cvctx \cvar \tmone$. We then
  prove that $\rbisim$ is a pure bisimulation. On pure terms, the tests of the
  two notions of bisimulation differ only on values: we have
  $\valzero \iapp \varx \rbisim \valone \iapp \varx$, and we need $\cvctx \cvar
  {\valzero \iapp \varx} \rbisim \cvctx \cvar {\valone \iapp \varx}$. We can
  easily conclude using again the fact that $\rbisim$ is compatible.
\end{proof}
\noindent The inclusions are strict, as we show in Proposition~\ref{p:skkt-nf} that
$\onfbisim$ verifies the \AXshiftelim axiom while the others two do not.

The up-to techniques for $\onfbisim$ are defined in Figure~\ref{f:nf-original};
they are essentially the same as for refined bisimilarity with some minor
adjustments to ensures that we relate pure terms in the premises as well as in
the conclusion. We also remove the now useless $\rawpctx$ technique.

\begin{thm}%
  \label{t:compatible-onf}
  The set
  $\setF \is \{ \rawrefl, \rawlam, \rawutshift, \rawutreset,
  \rawutsubst, \rawutcsubst, \rawectxpure, \rawid, \rawutred\}$ is diacritically
  compatible, with
  $\strong \setF = \setF \setminus \{ \rawutcsubst, \rawectxpure \}$.
\end{thm}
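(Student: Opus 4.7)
The plan is to mimic the proofs of Theorem~\ref{t:compatible-nf} and Theorem~\ref{t:compatible-rnf} by exhibiting, for each $f \in \setF$, an evolution of the form $f \fevolve g, h$ (resp.\ $f \sevolve g, h$ when $f \in \strong \setF$) where $g$ and $h$ are combinations of functions drawn from $\strong \setF$ and $\setF$ respectively. Concretely, I would fix an arbitrary progress $\rel \pprogresso \rels, \relt$ (or $\rel \pprogresso \rel, \relt$ in the non-strong case), take a pair in $f(\rel)$ and verify both reduction/normal-form clauses of Definition~\ref{d:onf-bisim}, reassembling the resulting pair using (hats of) functions in $\setF$. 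The transitive closures $\rtclo{(\cdot)}$ in Definition~\ref{d:dia-comp} absorb any bookkeeping.

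For the strong techniques, the arguments are routine: $\rawrefl$ evolves to $\rawrefl$; $\rawutred$ uses determinism of $\redcbv$ and the fact that a reduction prefix commutes with the observations; $\rawlam$, $\rawutshift$, and $\rawutreset$ are immediate since the outermost operator is not reducible at the top (for $\rawutreset$ one uses Proposition~\ref{p:eval-reset} to observe that $\reset{\prgzero}$ is already a pure term whose evaluation either stays under a \textreset or yields a value); $\rawutsubst$ is the standard $\eta$-respecting substitution argument adapted to pure terms---the key point is that substituting $\varx$ by $\valzero \nfv\rels \valone$ preserves progress because the $\nfv\cdot$ clause gives $\valzero \iapp z \rels \valone \iapp z$ as pure terms under a fresh context variable, and one appeals to Lemma~\ref{l:app}-style compatibility derived from $\rawrefl$ and the other strong techniques. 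Finally $\rawutcvar$ is a strong technique but, since it is definitionally $\rawutreset$ after renaming (both prefix a pure term with a single delimiter-like binder), it appears here under the $\rawutreset$ name.

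The main obstacle, exactly as in the proof sketched after Theorem~\ref{t:compatible-nf}, is the two non-strong techniques $\rawutcsubst$ and $\rawectxpure$. For $\rawutcsubst$, given $\subst{\prgzero}{\cvar}{\dctxzero} \utcsubst\rel \subst{\prgone}{\cvar}{\dctxone}$ with $\prgzero \rel \prgone$ and $\dctxzero \nfc\rel \dctxone$, I would case-split on the first reduction or normal form of $\prgzero$: whenever $\prgzero$ reduces or reaches a normal form whose decomposition does not interact with $\cvar$, one closes under ${\fid\setF}^\omega$ straightforwardly; the delicate subcase is when $\prgzero$ is a context-stuck term $\inctx{\rctxzero'}{\cvctx{\cvar}{\valzero}}$, so that plugging turns it into $\inctx{\subst{\rctxzero'}{\cvar}{\dctxzero}}{\inctx{\dctxzero}{\valzero}}$, which may now be reducible. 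There one uses $\prgone \evalcbv \inctx{\rctxone'}{\cvctx{\cvar}{\valone}}$ with $\inctx{\rctxzero'}{\reset\mtctx} \nfc\relt \inctx{\rctxone'}{\reset\mtctx}$ and $\valzero \nfv\rel \valone$ (passive!), combines $\nfc\rel$ on $\dctxzero,\dctxone$ with the passive $\nfv\rel$ on values, and closes using $\rawutsubst \in \strong\setF$ together with $\rawutcsubst$ itself, exploiting the fact that the progress definition lets the non-strong $\rawutcsubst$ be reused once. This is precisely why the value-test clause of Definition~\ref{d:onf-bisim} for context-stuck terms was made passive.

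The treatment of $\rawectxpure$ is analogous: after plugging $\prgzero \rel \prgone$ into $\rctxzero \mathrel{\approx} \rctxone$ with $\inctx{\rctxzero}{\prgzero}$ pure, one considers the first redex fired by $\inctx{\rctxzero}{\prgzero}$; if it lies inside $\prgzero$, one invokes the progress of $\rel$ on $\prgzero$ and closes under $\rawpctx$-free combinations (recall $\rawpctx$ is absent in this set); if it is triggered by the interaction of $\prgzero$ with $\rctxzero$, one uses the normal-form clauses together with $\inctx{\rctxzero}{\varx} \rel \inctx{\rctxone}{\varx}$ to reassemble a pair in ${\fid\setF}^\omega(\relt)$. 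Once each clause is verified, Proposition~\ref{p:properties-compatibility-better} yields the claimed respectfulness and the identification of $\strong\setF$, and in particular gives that $\onfbisim$ is preserved by every $f \in \setF$, which will be used for the axioms in Section~\ref{ss:axioms-nf}.
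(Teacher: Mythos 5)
Your proposal follows essentially the same route as the paper's own proof sketch: a per-technique evolution argument in the sense of Definition~\ref{d:dia-comp}, with the same delicate subcases singled out ($\rawutsubst$ on open-stuck terms headed by the substituted variable, $\rawutcsubst$ on context-stuck terms where the passivity of the value test is what saves the day, and $\rawectxpure$ on values, closed by rewriting the plugged value as a substitution instance and reusing the strong $\rawutsubst$ evolution), and the same final appeal to Proposition~\ref{p:properties-compatibility-better}. The only blemishes are cosmetic --- the premise of $\rawutsubst$ applied to $\rel$ gives $\valzero \nfv\rel \valone$ rather than $\nfv\rels$, and the technique you fold into $\rawutreset$ is the paper's $\rawutcvar$ --- neither of which affects the argument.
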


\noindent We deduce that $\onfbisim$ is compatible on pure terms. For
compatibility {\wrt}any terms, let $\tmzero \onfbisim \tmone$; then by
definition, $\cvctx \cvar \tmzero \onfbisim \cvctx \cvar \tmone$ for a fresh
$\cvar$. We can then deduce compatibility {\wrt}evaluation contexts
(application and \textreset) with $\rawutcsubst$. For the remaining constructs,
we need separate proofs that $\lam \varx \tmzero \onfbisim \lam \varx \tmone$
and $\shift \vark \tmzero \onfbisim \shift \vark \tmone$, but these are
straightforward. Consequently, $\onfbisim$ is compatible on all terms, and we
can show it is sound {\wrt}$\ctxequivp$.

\begin{thm}
  For all $(\tmzero, \tmone) \in \terms^2$, if $\tmzero \onfbisim \tmone$, then
  $\tmzero \ctxequivp \tmone$.
\end{thm}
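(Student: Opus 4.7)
The proof plan is to leverage the compatibility of $\onfbisim$ (just deduced from Theorem~\ref{t:compatible-onf} and Proposition~\ref{p:properties-compatibility-better}) as the main workhorse, in the usual fashion of coinductive soundness proofs, combined with Proposition~\ref{p:eval-reset} which ensures that closed delimited terms can only reduce to closed values.

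First I would establish substitutivity of $\onfbisim$ as a consequence of compatibility: if $\tmzero \onfbisim \tmone$ and $\valzero \onfbisim \valone$ for closed values, then by compatibility $\app{\lamp \varx \tmzero} \valzero \onfbisim \app{\lamp \varx \tmone} \valone$; since $\onfbisim$ contains $\redcbv$ (this follows directly from the bisimulation clauses: the singleton relation $\{(\tm, \tm')\}$ closed under identity is a bisimulation whenever $\tm \redcbv \tm'$) and is transitive, we obtain $\subst \tmzero \varx \valzero \onfbisim \subst \tmone \varx \valone$. Iterating, $\tmzero \sigma \onfbisim \tmone \sigma$ for any closing substitution $\sigma$. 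Second, from $\tmzero \onfbisim \tmone$ and compatibility, for every closed context $\cctx$ we get $\reset{\inctx \cctx {\tmzero \sigma}} \onfbisim \reset{\inctx \cctx {\tmone \sigma}}$, and these terms are already pure and closed, so the pure-bisimulation game of Definition~\ref{d:onf-bisim} applies to them directly (no wrapping by $\cvctx \cvar \cdot$ is needed for reduction tracking).

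Third I would match evaluations. Suppose $\reset{\inctx \cctx {\tmzero \sigma}} \evalcbv \valzero$, so there is a finite reduction chain $\reset{\inctx \cctx {\tmzero \sigma}} = s_0 \redcbv s_1 \redcbv \cdots \redcbv s_n = \valzero$. A straightforward induction on $n$ using the internal-transition clause produces a sequence $\reset{\inctx \cctx {\tmone \sigma}} \clocbv u_n$ with $s_n \onfbisim u_n$. Since $s_n = \valzero$ is a value, the value clause forces $u_n \evalcbv \valone$ for some $\valone$; composing gives $\reset{\inctx \cctx {\tmone \sigma}} \evalcbv \valone$. The converse direction is symmetric. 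Note that by Proposition~\ref{p:eval-reset} the only normal forms reachable from a closed delimited term are closed values, so there is no need to analyse control-stuck or context-stuck cases here; this is what makes the original semantics significantly smoother than the relaxed one.

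The main potential obstacle is the bookkeeping around the extension of $\onfbisim$ from pure to arbitrary terms via the $\cvctx \cvar \cdot$ wrapper, and verifying that substitutivity and compatibility transfer back from the extended calculus to plain $\lamshift$ terms. However, this is precisely the content of the paragraph preceding the theorem, which already justifies full compatibility on all terms, so there is no genuine new difficulty; the remaining argument is the routine ``compatibility plus matching-evaluation'' pattern.
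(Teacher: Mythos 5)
Your proposal is correct and follows essentially the same route as the paper: soundness is obtained from the compatibility of $\onfbisim$ on all terms (derived from Theorem~\ref{t:compatible-onf} and Proposition~\ref{p:properties-compatibility-better} together with the $\cvctx \cvar \cdot$ extension), combined with the adequacy argument that closes terms under substitutions and contexts of the form $\reset\cctx$ and matches evaluations via the bisimulation game, with Proposition~\ref{p:eval-reset} confining the reachable normal forms to values. The only minor stylistic difference is that you derive substitutivity from compatibility plus $\beta$-reduction and transitivity rather than invoking the up-to-substitution technique directly, which is equally valid here.
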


\begin{exa}%
  \label{ex:s-tail-nf}
  We prove Example~\ref{ex:stail-env} again with pure normal-form bisimulation:
  if $\vark \notin \fv\tmone$, then
  $\app{\lamp \varx{\shift \vark \tmzero}} \tmone \onfbisim \shift\vark {\appp
    {\lamp \varx \tmzero} \tmone}$. We want to relate
  $\cvctx \cvar {(\app{\lamp \varx{\shift \vark \tmzero}} \tmone)}$ with
  $\cvctx \cvar {\shift\vark {\appp {\lamp \varx \tmzero} \tmone}}$ for
  a fresh $\cvar$. But
  $\cvctx \cvar {\shift\vark {\appp {\lamp \varx \tmzero} \tmone}}
  \redcbv \reset{\app {\lamp \varx {\subst \tmzero \vark {\lam y {\cvctx
              \cvar y}}}} \tmone}$. Let
  \[ \rel {\mathord{\is}} \{(\cvctx {\cvar}{(\app{\lamp \varx{\shift
            \vark \tmzero}} z)}, \reset{\app {\lamp \varx {\subst \tmzero \vark
          {\lam y {\cvctx \cvar y}}}} z}) \mid z \mbox{ fresh} \}; \]
  then
  $\cvctx \cvar {(\app{\lamp \varx{\shift \vark \tmzero}} \tmone)}
  \mathrel{\of {\rawutcsubst \compo (\rawid \cup \rawrefl)} \rel} \reset{\app
    {\lamp \varx {\subst \tmzero \vark {\lam y {\cvctx \cvar y}}}}
    \tmone}$. The relation $\rel$ is a bisimulation up to $\rawutred$ and
  $\rawrefl$, since we have
  $\cvctx \cvar {(\app{\lamp \varx{\shift \vark \tmzero}} z)} \redcbv^2
  \reset{\subst{\subst \tmzero \vark {\lam y {\cvctx \cvar y}}} \varx
    z}$ and
  $\reset{\app {\lamp \varx {\subst \tmzero \vark {\lam y {\cvctx \cvar
              y}}}} z} \redcbv \reset{\subst{\subst \tmzero \vark {\lam y
        {\cvctx \cvar y}}} \varx z}$; we obtain two identical terms.
\end{exa}

\subsection{Proving the Axioms}%
\label{ss:axioms-nf}

We provide further examples by proving the axioms with normal-form
bisimilarities. As usual, the \AXbeta, \AXresetshift, \AXresetval, and \AXbeta
axioms are consequences of the fact that reduction is included in the
bisimilarity.

\begin{prop}%
  \label{l:redcbv-in-nf}
  If $\tm \redcbv \tm'$, then $\tm \nfbisim \tm'$.
\end{prop}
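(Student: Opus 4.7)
The plan is to exhibit a normal-form bisimulation containing the pair $(\tm,\tm')$. A natural candidate is
\[
\rel \is \{(\tmzero,\tmone) \mmid \tmzero \clocbv \tmone\} \cup \{(\tmzero,\tmone) \mmid \tmone \clocbv \tmzero\},
\]
that is, the symmetric closure of $\clocbv$. Since $\tm \redcbv \tm'$ gives $\tm \clocbv \tm'$, we have $(\tm,\tm') \in \rel$, so the goal reduces to showing $\rel \pprogress \rel,\rel$.

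Because $\rel$ is symmetric by construction, only the clauses of Definition~\ref{d:nf-bisim} on the left component need checking; the converse on $\tmone$ follows by swapping roles. I would fix $(\tmzero,\tmone) \in \rel$ with $\tmzero \clocbv \tmone$ and split cases on the shape of $\tmzero$. If $\tmzero \redcbv \tmzero'$, then the sequence $\tmzero \clocbv \tmone$ cannot have length zero (as $\tmzero$ is not a normal form), so it factors as $\tmzero \redcbv \tmzero'' \clocbv \tmone$; by determinism (Proposition~\ref{p:unique-decomp}) $\tmzero'' = \tmzero'$, hence $\tmzero' \clocbv \tmone$, so taking $\tmone' = \tmone$ satisfies $\tmone \clocbv \tmone'$ and $(\tmzero',\tmone') \in \rel$. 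If $\tmzero$ is a normal form, then the sequence has length zero and $\tmzero = \tmone$; the evaluation step is matched by the zero-step one and the reflexive instances $\tmzero \nfv\rel \tmzero$ and $\tmzero \nf\rel \tmzero$ both hold, because every term is $\rel$-related to itself (via a zero-step reduction) and the auxiliary relations $\nfv\cdot$ and $\nf\cdot$ in Figure~\ref{f:ext-rnf} only require such reflexive pairs when the two sides are syntactically identical values or stuck terms.

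The only real content of the argument is determinism: it guarantees that any reduction step from $\tmzero$ stays on the path already committed to by $\tmzero \clocbv \tmone$. Nothing specific to \shiftId/\resetId enters beyond the classification of normal forms from Proposition~\ref{l:stuck}, and no up-to technique is needed. I do not anticipate any real obstacle; the argument is essentially the standard observation that, in a deterministic calculus, reduction is contained in any reasonable bisimilarity.
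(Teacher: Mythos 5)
Your proof is correct in substance, but it takes a somewhat different route from the paper. The paper's proof is a one-liner: the relation $\{(\tm,\tm') \mmid \tm \redcbv \tm'\}$ is a normal-form bisimulation \emph{up to} $\rawrefl$, leaning on the already-established respectfulness of the reflexivity technique (Theorem~\ref{t:compatible-nf}) to discharge the diagonal pairs that appear after one matching step. You instead inflate the candidate relation to the symmetric closure of $\clocbv$ and show it is a plain bisimulation, absorbing reflexivity into the relation itself (every term is related to itself via a zero-step reduction). Your version is more elementary and self-contained—it needs no up-to machinery, only determinism—at the cost of a slightly larger case analysis. Two small slips in that analysis are worth flagging, though neither breaks the argument. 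First, your claim that the sequence $\tmzero \clocbv \tmone$ "cannot have length zero" when $\tmzero$ is reducible is false: $\clocbv$ is reflexive, so the diagonal pair $(\tmzero,\tmzero)$ is in $\rel$ for reducible $\tmzero$ as well; that subcase just needs the trivial observation that $\tmzero \redcbv \tmzero'$ is matched by $\tmzero \clocbv \tmzero'$ with $(\tmzero',\tmzero')\in\rel$. Second, restricting attention to pairs with $\tmzero \clocbv \tmone$ and dismissing the rest "by swapping roles" is slightly circular as phrased: the direct clauses for a pair with $\tmone \clocbv \tmzero$ are exactly the converse clauses you are waving away, and they require a (different but even easier) check—extend the reduction sequence rather than factor it, no determinism needed. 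With those two lines added, your relation is indeed a normal-form bisimulation containing $(\tm,\tm')$.
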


\begin{proof}
  The relation $\{(\tm, \tm') \mmid \tm \redcbv \tm'\}$ is a normal-form
  bisimulation up to $\rawrefl$.
\end{proof}

\begin{prop}[\AXshiftreset axiom]
  We have $\shift \vark {\reset \tm} \nfbisim \shift \vark \tm$.
\end{prop}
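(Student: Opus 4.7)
My plan is to exhibit a small explicit candidate relation and check directly that it is a normal-form bisimulation, leveraging the double-\textreset result of Example~\ref{ex:reset-reset-nf} so that essentially nothing new has to be proved from scratch.

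Concretely, I would take
\[
\rel \mathrel{\is} \{(\shift \vark {\reset \tm},\, \shift \vark \tm) \mid \tm \in \terms\} \,\cup\, \nfbisim.
\]
To verify $\rel \pprogress \rel, \rel$ on the new pairs, observe that both $\shift \vark {\reset \tm}$ and $\shift \vark \tm$ are control-stuck terms of the form $\inctx \mtctx {\shift \vark {\cdot}}$ and cannot reduce, so each evaluates only to itself. Hence the only obligation is to show that $\shift \vark {\reset \tm} \nf\rel \shift \vark \tm$, which by the control-stuck clause of Figure~\ref{f:ext-rnf} amounts to checking two conditions: (i) $\mtctx \nfc\rel \mtctx$, and (ii) $\reset{\reset \tm} \rel \reset \tm$.

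Condition (i) is immediate: we need $\varx \rel \varx$ for a fresh $\varx$, which holds because $\varx \nfbisim \varx$ and $\nfbisim \subseteq \rel$. Condition (ii) is exactly the content of Example~\ref{ex:reset-reset-nf}, which established $\reset \tm \nfbisim \reset{\reset \tm}$; symmetry of $\nfbisim$ and the inclusion $\nfbisim \subseteq \rel$ give $\reset{\reset \tm} \rel \reset \tm$. The pairs already in $\nfbisim$ progress trivially to $\nfbisim \subseteq \rel$. Therefore $\rel$ is a normal-form bisimulation and the result follows.

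The only conceivable obstacle is verifying (ii) without circular reasoning, but since Example~\ref{ex:reset-reset-nf} has already been discharged earlier in the paper, there is nothing to grind through here; alternatively, one could rephrase the argument as a bisimulation up to $\rawutshift$ together with the relation $\{(\reset{\reset \tm}, \reset \tm)\}$, but the direct union with $\nfbisim$ is the most economical presentation.
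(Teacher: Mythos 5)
Your proof is correct and follows essentially the same route as the paper: both terms are control-stuck, so the only obligations are $\mtctx \nfc{}\! \mtctx$ (reflexivity) and $\reset{\reset \tm} \mathrel{\cdot} \reset \tm$, discharged by Example~\ref{ex:reset-reset-nf}. The paper merely states these two conditions directly for $\nfbisim$, whereas you package the same argument as an explicit candidate relation $\rel$; the content is identical.
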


\begin{proof}
  These terms are stuck, so we have to show that
  $\reset{\reset \tm} \nfbisim \reset \tm$ (proved in
  Example~\ref{ex:reset-reset-nf}) and $\mtctx \nfc\nfbisim \mtctx$ (but
  $\nfbisim$ is reflexive).
\end{proof}

\begin{prop}[\AXresetlift axiom]%
  \label{p:resetlift-nf}
  We have $\reset {\app {\lamp \varx \tmzero}{\reset \tmone}} \nfbisim \app
  {\lamp \varx {\reset \tmzero}}{\reset \tmone}$.
\end{prop}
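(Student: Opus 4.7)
The plan is to adapt the strategy of Example~\ref{ex:small-step} and abstract away the common pure subterm $\reset{\tmone}$ using bisimulation up to pure evaluation contexts ($\rawectxpure$). Since $\reset{\tmone}$ is a delimited term, and hence pure, $\rawectxpure$ applies directly, which spares us any case analysis on how $\reset{\tmone}$ evaluates (to a value, divergingly, to a control-stuck, or to an open-stuck term) --- the case split that complicates the corresponding applicative and environmental proofs (Propositions~\ref{p:resetlift-app} and~\ref{p:reset-beta-env}).

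Concretely, I would take as candidate relation
\[
\rel \is \{(\reset{\app{\lamp \varx \tmzero} y},\; \app{\lamp \varx {\reset \tmzero}} y) \mid y \text{ fresh}\}.
\]
Setting $\rctxzero \is \reset{\app{\lamp \varx \tmzero} \mtctx}$ and $\rctxone \is \app{\lamp \varx {\reset \tmzero}} \mtctx$, the two terms of the statement factor as $\inctx{\rctxzero}{\reset{\tmone}}$ and $\inctx{\rctxone}{\reset{\tmone}}$. Since $\reset{\tmone}$ is pure, the reflexive pair $(\reset{\tmone}, \reset{\tmone})$ lies in $\rawrefl(\rel)$, while $(\inctx{\rctxzero} y, \inctx{\rctxone} y)$ belongs to $\rel$ by construction for $y$ fresh; therefore $\rawectxpure$ gives
\[
\reset{\app{\lamp \varx \tmzero}{\reset \tmone}} \mathrel{\of{\rawectxpure \compo (\rawid \cup \rawrefl)}{\rel}} \app{\lamp \varx {\reset \tmzero}}{\reset \tmone}.
\]

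It then remains to check that $\rel$ is a bisimulation up to $\rawutred$ and $\rawrefl$. Each pair in $\rel$ consists of two terms that are not normal forms, and both reduce deterministically to the identical term $\reset{\subst{\tmzero}{\varx}{y}}$, which lies in $\rawrefl(\rel)$; there are no other bisimulation clauses to discharge. By Theorem~\ref{t:compatible-nf} and Proposition~\ref{p:properties-compatibility-better}, this yields $\reset{\app{\lamp \varx \tmzero}{\reset \tmone}} \nfbisim \app{\lamp \varx {\reset \tmzero}}{\reset \tmone}$.

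I anticipate no real obstacle: the key observation is that $\reset{\tmone}$ is syntactically pure, which allows us to treat it as an opaque common subterm via $\rawectxpure$ and sidestep the case analysis on its behavior that encumbers the proofs with other styles of bisimulation. The only minor point to be careful about is choosing $y$ genuinely fresh (distinct from $\varx$ and not free in $\tmzero$), so that the substitution $\subst{\tmzero}{\varx}{y}$ does not inadvertently capture variables.
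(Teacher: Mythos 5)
Your proof is correct and follows essentially the same route as the paper's: the same candidate relation $\rel$ on fresh-variable instances, the same use of $\rawectxpure \compo (\rawid \cup \rawrefl)$ to abstract the pure subterm $\reset{\tmone}$, and the same closing observation that both members of each pair in $\rel$ reduce to $\reset{\subst{\tmzero}{\varx}{y}}$.
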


\begin{proof}
  If
  $\rel {\mathord{\is}} \{(\reset {\app {\lamp \varx \tmzero} y}, \app {\lamp
    \varx {\reset \tmzero}} y ) \mid y \mbox{ fresh}\}$, then
  $\reset {\app {\lamp \varx \tmzero}{\reset \tmone}} \mathrel{\of
    {\rawectxpure \compo (\rawid \cup \rawrefl)} \rel} \app {\lamp \varx
    {\reset \tmzero}}{\reset \tmone}$, and $\rel$ is a bisimulation up to
  $\rawutred$ and $\rawrefl$, since the two terms in $\rel$ reduces to $\reset
  {\subst \tmzero \varx y}$.
\end{proof}

\begin{prop}[\AXbetaomega axiom]%
  \label{p:omega-nf}
  If $\varx \notin \fv \ctx$, then $\app {\lamp \varx {\inctx \ctx
      \varx}} \tm \nfbisim \inctx \ctx \tm$.
\end{prop}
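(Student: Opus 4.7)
I would take as the candidate relation
\[
\rel \is \{(\app {\lamp \varx {\inctx \ctx \varx}} \tm, \inctx \ctx \tm) \mid \tm \in \terms,\ \varx \notin \fv \ctx\}
\]
and show it is a normal-form bisimulation up to $\rawrefl$. Writing $\val_0 \is \lamp \varx {\inctx \ctx \varx}$ for readability, the argument proceeds by dispatching on the transitions available from a pair $(\app {\val_0} \tm, \inctx \ctx \tm) \in \rel$, which are driven by the shape of $\tm$. If $\tm \redcbv \tm'$, both sides reduce (the right one because pure contexts are evaluation contexts and $\redcbv$ is compatible with them), yielding a pair still in $\rel$. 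If $\tm$ is a value $\val$, the left side $\beta$-reduces in one step to $\inctx \ctx \val$, which is syntactically the right side, so all residual transitions or normal-form decompositions on either side can be matched via $\rawrefl$.

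The more delicate cases are when $\tm$ is already a stuck term. If $\tm = \inctx {\rctx'} {\app y \val'}$ is open stuck, both sides are open stuck with common inner $\app y \val'$ and outer evaluation contexts $\vctx {\val_0} {\rctx'}$ and $\inctx \ctx {\rctx'}$, and I would verify $\vctx {\val_0} {\rctx'} \nfc\rel \inctx \ctx {\rctx'}$ by sub-case analysis on $\rctx'$. If $\rctx'$ is itself pure, the first clause of $\nfc$ asks for $\app {\val_0} {\inctx{\rctx'} z} \rel \inctx \ctx {\inctx{\rctx'} z}$ for a fresh $z$, which is an instance of $\rel$. If $\rctx' = \inctx {\rctx''} {\reset{\ctx'''}}$, the second clause of $\nfc$ applies, splitting the test into an inner one on $\reset{\inctx{\ctx'''} z}$ (identical on both sides, discharged by $\rawrefl$) and an outer one $\app {\val_0} {\inctx{\rctx''} z} \rel \inctx \ctx {\inctx{\rctx''} z}$, again an instance of $\rel$. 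The control-stuck case $\tm = \inctx {\ctx''} {\shift \vark \tm_0}$ is handled symmetrically: $\reset{\tm_0}$ agrees on both sides, and the outer pure contexts $\vctx {\val_0} {\ctx''}$ and $\inctx \ctx {\ctx''}$ match in $\nfc\rel$ by the same pattern.

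The main obstacle is bookkeeping in the open-stuck subcase with a nested $\reset$: one must verify that the outermost $\reset$ in each of the composed evaluation contexts $\vctx {\val_0} {\rctx'}$ and $\inctx \ctx {\rctx'}$ sits at the same place, which follows from the fact that both $\ctx$ and $\val_0$ are $\reset$-free—since $\ctx$ is a pure context—so that the outermost delimiter in each composite is precisely the one contained in $\rctx'$. Once this alignment is secured, the recursive structure of $\nfc\rel$ matches the shape of $\rel$ by construction and the bisimulation condition closes.
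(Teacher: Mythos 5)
Your proof is correct, but it takes a genuinely different route from the paper. The paper's proof is a two-line application of the up-to machinery: it takes the minimal relation $\rel \is \{ (\app {\lamp \varx {\inctx \ctx \varx}} y, \inctx \ctx y) \mid y \mbox{ fresh} \}$, observes that the general pair $(\app {\lamp \varx {\inctx \ctx \varx}} \tm, \inctx \ctx \tm)$ lies in $\utpctx{((\rawid \cup \rawrefl)(\rel))}$ because the two contexts $\app {\lamp \varx {\inctx \ctx \varx}} \mtctx$ and $\ctx$ are $\nfc\rel$-related, and then checks that $\rel$ is a bisimulation up to $\rawutred$ and $\rawrefl$ (both members reduce to $\inctx \ctx y$). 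You instead take the relation over arbitrary $\tm$ and discharge the bisimulation conditions directly, by case analysis on whether $\tm$ reduces, is a value, is open-stuck, or is control-stuck, verifying by hand that the composed evaluation contexts fall into the same clause of $\nfc\rel$ (which they do, since neither $\ctx$ nor $\app {\lamp \varx {\inctx \ctx \varx}} \mtctx$ places a \resetId{} on the path to the hole). What you have effectively done is re-prove, for this particular pair of contexts, the specialized instance of the respectfulness argument for $\rawpctx$ that the paper sketches after Theorem~\ref{t:compatible-nf}; your case analysis on the shape of $\tm$ and the alignment of the innermost delimiter is exactly what that general proof encapsulates. The trade-off is clear: your argument is self-contained and needs only the strong technique $\rawrefl$ (plus $\rel$ itself in the active clauses, which is always permitted), whereas the paper's argument is much shorter but leans on the soundness of the non-strong technique $\rawpctx$, and is intended precisely to illustrate how that machinery collapses such proofs. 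Both are valid; only minor bookkeeping remains in yours (e.g., the value case is really handled through the reduction clause plus its converse, and "up to $\rawrefl$" should formally be read as progressing to $(\rawid \cup \rawrefl)(\rel)$).
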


\begin{proof}
  If
  $\rel {\mathord{\is}} \{ (\app {\lamp \varx {\inctx \ctx \varx}} y, \inctx
  \ctx y) \mid y \mbox{ fresh} \}$, then
  $\app {\lamp \varx {\inctx \ctx \varx}} \tm \mathrel{\of {\rawpctx \compo
      (\rawid \cup \rawrefl)} \rel} \inctx \ctx \tm$, and~$\rel$ is a
  bisimulation up to $\rawutred$ and $\rawrefl$, since the two terms in $\rel$
  reduces to $\inctx \ctx y$.
\end{proof}

\begin{prop}[\AXshiftelim axiom]%
  \label{p:skkt-nf}
  If $\vark \notin \fv \tm$, then $\tm \onfbisim \shift \vark
  {\app \vark \tm}$.
\end{prop}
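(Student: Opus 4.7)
The plan is to unfold the extension of $\onfbisim$ to arbitrary terms, reducing the goal to showing $\cvctx \cvar \tm \onfbisim \cvctx \cvar {\shift \vark {\app \vark \tm}}$ for a fresh context variable $\cvar$. Since $\cvctx \cvar \mtctx$ is a delimited context and $\vark \notin \fv \tm$, rule $\RRshift$ applies to the right-hand side in one step, yielding
\[
\cvctx \cvar {\shift \vark {\app \vark \tm}} \redcbv \reset{\app {\lamp \varx {\cvctx \cvar \varx}} \tm}.
\]
The task then reduces to establishing a bisimulation between $\cvctx \cvar \tm$ and $\reset{\app {\lamp \varx {\cvctx \cvar \varx}} \tm}$, up to techniques.

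I take as candidate
\[
\rel \is \{(\cvctx \cvar \tm, \cvctx \cvar {\shift \vark {\app \vark \tm}}), (\cvctx \cvar \tm, \reset{\app {\lamp \varx {\cvctx \cvar \varx}} \tm}) \mid \tm \in \terms, \vark, \varx \notin \fv \tm\} \cup \{(\reset \tm, \reset{\reset \tm}) \mid \tm \in \terms\},
\]
and claim $\rel$ is a pure normal-form bisimulation up to $\rawutred$, $\rawrefl$, $\rawlam$, $\rawutcvar$, $\rawutsubst$, $\rawutcsubst$, and $\rawectxpure$. The first family collapses to the second by the $\RRshift$-reduction above, matched by zero reduction steps on the left. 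The third family is handled as in Example~\ref{ex:reset-reset-nf}, transported to $\onfbisim$ via Proposition~\ref{p:nfb-in-onfb}.

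The core of the argument is the second family, verified by case analysis on how $\tm$ evolves. If $\tm \redcbv \tm'$, both sides reduce in lockstep and remain in the second family. If $\tm \clocbv \val$, both sides become context-stuck terms $\cvctx \cvar \val$ and $\reset{\cvctx \cvar \val}$; the values match by $\rawrefl$ and the outer contexts $\reset\mtctx$ vs.\ $\reset{\reset\mtctx}$ are related via the third family. If $\tm \clocbv \inctx{\rctx'}{\app y \val}$, both sides become open-stuck; values match reflexively and decomposing the two surrounding evaluation contexts via the $\nfc$-rule yields smaller pairs of the second family, with inner delimited subcontexts of $\rctx'$ preserved identically on both sides. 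The subtle case is $\tm = \inctx \ctx{\shift{\vark'}\tm'}$: both sides reduce by context capture, but the right-hand side captures the larger pure context $\app{\lamp\varx{\cvctx\cvar\varx}}{\inctx\ctx\mtctx}$ while the left captures $\cvctx\cvar\ctx$. After capture, the two terms have matching outer structure and differ only within the continuation's body, where $\cvctx\cvar{\inctx\ctx y}$ on the left corresponds to $\reset{\app{\lamp\varx{\cvctx\cvar\varx}}{\inctx\ctx y}}$ on the right---another instance of the second family of $\rel$---and composing $\rawlam$, $\rawutsubst$, and $\rawutcsubst \compo \rawutcvar$ (to wrap back in an outer $\reset$) closes this step.

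The main obstacle is this last subcase: one must recognize that after the context capture the two normal forms differ only by a second-family-of-$\rel$ instance embedded under a $\lambda$ inside a continuation substitution and an outer $\reset$. Beyond that, one must check that the non-strong techniques $\rawutcsubst$ and $\rawectxpure$ are composed only after an active $\lts\tau$-step (never after a passive value test on context-stuck terms), as required by Definition~\ref{d:dia-comp} and Theorem~\ref{t:compatible-onf}, so that the bisimulation-up-to argument is legal.
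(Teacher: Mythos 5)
Your proof is correct but takes a genuinely different route from the paper's after the common first step (firing the capture in $\cvctx \cvar {\shift \vark {\app \vark \tm}}$ to obtain $\reset{\app {\lamp \varx {\cvctx \cvar \varx}} \tm}$). The paper runs no fresh coinduction at all: it observes that the resulting pair is an instance of Example~\ref{ex:small-step} (with $\ctx \is \app {\of f \cvar} \mtctx$), already established for the plain normal-form bisimilarity $\nfbisim$, and transports it to $\onfbisim$ through the translation $\tr f \cdot$ and Proposition~\ref{p:nfb-in-onfb}. You instead build an explicit candidate relation in the extended calculus and verify it directly, which forces you to redo the case analysis on $\tm$ --- in particular the control-stuck case, where the two sides capture different delimited contexts and you must recognize the residual pair $(\cvctx\cvar{\inctx\ctx y}, \reset{\app{\lamp\varx{\cvctx\cvar\varx}}{\inctx\ctx y}})$ sitting inside the substituted continuation; the paper's route hides exactly this bookkeeping inside the soundness proof of $\rawpctxrst$ used in Example~\ref{ex:small-step}. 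Your version is more self-contained (no detour through $\nfbisim$, $\tr f \cdot$, or Proposition~\ref{p:bisim-subset-rbisim}) but longer; the paper's is a two-line corollary of earlier work. One small inaccuracy: in the value case the comparison of $\reset\mtctx$ with $\reset{\reset\mtctx}$ unfolds, via the $\nfc\cdot$ rule, into the pairs $(\reset\varx,\reset\varx)$ and $(\varx,\reset\varx)$, not into an instance of your third family $(\reset\tm,\reset{\reset\tm})$; this is harmless since $\reset\varx\redcbv\varx$ closes it up to $\rawutred$ and $\rawrefl$, but the double-reset family as stated is not the pair you actually need there. Your care in applying $\rawutcsubst$ and $\rawectxpure$ only after active steps, and only strong techniques on the value side of context-stuck terms, is exactly the discipline required by Theorem~\ref{t:compatible-onf}.
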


\begin{proof}
  We must relate $\reset{\cvctx \cvar {\shift \vark {\app \vark \tm}}}$ and
  $\reset {\cvctx \cvar \tm}$ for a fresh $\cvar$, but
  $\reset{\cvctx \cvar {\shift \vark {\app \vark \tm}}} \redcbv \reset{\app
    {\lamp \varx {\reset{\cvctx \cvar \varx}}} \tm}$, but we know that
  $\reset {\app y \tm} \nfbisim \reset{\app {\lamp \varx {\reset {\app y \varx}}}
    \tm}$ holds for all $y$ (Example~\ref{ex:small-step}), so we can conclude
  with $\tr f \cdot$ and Proposition~\ref{p:nfb-in-onfb}.
\end{proof}
\noindent Consequently, $\onfbisim$ is complete {\wrt}$\cpsequiv$, and
$\cpsequiv$ can be used as a proof technique for~$\onfbisim$.

\subsection{Conclusion}

We propose several normal-form bisimilarities for the two semantics of
$\lamshift$. For the relaxed semantics, we define normal-form and refined
bisimilarities which differ in how they handle control-stuck terms; the former
is easier to use but relates less contextually equivalent terms than the
latter. Refined bisimilarity is defined on an extended calculus, where context
variables represent unknown delimited contexts, the same way regular variables
stand for unknown values. We follow the same idea for the original semantics,
where the bisimilarity is defined on pure terms, and extended to any terms
thanks to context variables.

Normal-form bisimulation is already tractable enough that we can prove complex
equivalences with its plain definition (see
Example~\ref{ex:fixed-point}). Proofs can be further simplified thanks to up-to
techniques. Bisimulation up to related contexts is simpler to use than with
environmental bisimilarity, as any term can be plugged in any context. As a
result, the equivalence proof for Turing's combinator is simpler with
normal-form than with environmental bisimilarity (compare
Example~\ref{ex:fixed-point-nf-utc} and Example~\ref{ex:fixed-point-env-utc}).
The downside of normal-form bisimilarity is that it is not complete
{\wrt}contextual equivalence, and fails to relate terms that can be trivially
related with applicative bisimilarity, as witnessed by the terms of
Proposition~\ref{p:cex-dupl}.


\section{Extensions}%
\label{s:extensions}
In this section, we discuss how our results are affected if we
consider other semantics for~$\lamshift$, or if we study other
delimited-control operators, giving directions for future work in the process.

\subsection{Local Reduction Rules}%
\label{ss:local-rules}

In the semantics of Section~\ref{s:calc}, contexts are captured in one
reduction step. Another usual way of computing capture is to use local
reduction rules, where the context is consumed piece by
piece~\cite{Felleisen:POPL88}. Formally, we introduce \emph{elementary
  contexts}, defined as follows:
\begin{grammar}
  & \textrm{Elementary contexts:} \quad & \ectx & \bnfdef \vctx \val \mtctx \bnfor \apctx
  \mtctx \tm
\end{grammar}

\noindent
The reduction rule \RRshift is then replaced with the next two rules.
\[
\begin{array}{rlll}
   \inctx \rctx {\inctx \ectx {\shift \vark \tm}} &
  \hspace{-0.5em}\redcbv\hspace{-0.5em} & \inctx \rctx{\shift
    {\vark'}{\subst \tm \vark {\lam \varx {\reset {\app
            {\vark'}{\inctx \ectx \varx}}}}}} \mbox{ with }
  \varx,\vark' \notin \fv \ectx \cup \fv{\tm} & \RRshifte
  \\[2mm]
  \inctx
  \rctx {\reset {\shift \vark \tm}} &
  \hspace{-0.5em}\redcbv\hspace{-0.5em} & \inctx \rctx {\reset{\subst
      \tm \vark {\lam \varx \varx}}} & \RRshiftid
\end{array}
\]

As we can see in rule \RRshifte, the capture of an elementary context
does not require a \resetId{}, and it leaves the operator \shiftId{}
in place to continue the capture process. The process stops when a
\resetId{} is encountered, in which case the rule~\RRshiftid applies:
the \shiftId{} operator is removed, and its variable $\vark$ is
replaced with the function representing the delimited empty context.

With local reduction rules, control stuck terms are of the form
$\shift \vark \tm$ (without any surrounding context). This has major
consequences on the definition of normal-form bisimulations, as it
brings the regular definition (Definition~\ref{d:nf-bisim}) and the
refined one (Definition~\ref{d:rnf-bisim}) closer together, e.g., the
terms of Proposition~\ref{p:cex-stuck} can be proved contextually
equivalent with the regular definition (when phrased in terms of the
local reduction rules).

The resulting bisimulation proofs are arguably more difficult than
with the semantics of Section~\ref{s:calc}, as we can see with the
next example.
\begin{exa}[\AXbetaomega axiom] Assume we want to prove that
  $\app {\lamp \varx {\inctx \ctx \varx}} \tm \nfbisim \inctx \ctx \tm$
  ($\varx \notin \fv \ctx$) with local rules. If $\tm$ is a control stuck term
  $\shift \vark {\tm'}$, we have to relate
  $\reset{\subst {\tm'} \vark {\lam y {\reset {\app {\app {\vark'}{\lamp \varx
              {\inctx \ctx \varx}}} y}}}}$ ($y$, $\vark'$ fresh) with
  $\reset{\tm' \vect{\subs}}$, where $\vect \subs$ are the substitutions we
  obtain as a result of the progressive capture of $\ctx$ by
  $\shift \vark {\tm'}$. We do not need sequences of substitutions with the
  semantics of Section~\ref{s:calc}.
\end{exa}

The theory for applicative and environmental bisimulations is not affected by
using local rules; in particular, we still have to compare control-stuck terms
by putting them in a pure (multi-hole) context. However, a proof using a
small-step bisimulation of any kind becomes tedious with local rules, as they
introduce a lot of redexes (first to capture a whole pure context, and then to
reduce all the produced $\beta$-redexes), and a reduction of each redex has to
be matched in a small-step relation. We, therefore, believe that the reduction
rules of Section~\ref{s:calc} are better suited to proving the equivalence of
two $\lamshift$ terms.

\subsection{Call-by-Name Reduction Semantics}%
\label{ss:cbn}

In call-by-name, arguments are not reduced to values before
$\beta$-reduction takes place. Such a semantics can be achieved by
changing the syntax of (pure) evaluation contexts as follows:
\begin{grammar}
  &\textrm{CBN pure contexts:}& \ctx & \bnfdef  \mtctx \bnfor \apctx \ctx
  \tm \\
  &\textrm{CBN evaluation contexts:}\quad & \rctx & \bnfdef \mtctx \bnfor \apctx \rctx
  \tm \bnfor \resetctx \rctx
\end{grammar}
and by turning the $\beta$-reduction rule into
\[
\begin{array}{rlll}
  \quad \inctx \rctx {\app {\lamp \varx \tmzero} \tmone} & \redcbn & \inctx
  \rctx {\subst \tmzero \varx \tmone} & \quad \RRbetan
\end{array}
\]
The rules $\RRshift$ and $\RRreset$ are the same as in call-by-value,
but their meanings change because of the new syntax for call-by-name
contexts. We still distinguish the relaxed semantics (without
outermost enclosing \resetId{}) from the original semantics.

The results of this paper can be adapted to call-by-name by
transforming values used as arguments into arbitrary terms, for
example when comparing $\lambda$-abstractions with applicative
bisimilarity, or when building testing terms from the environment in
environmental bisimilarity. We can also relate the bisimilarities to
the call-by-name CPS equivalence, which has been axiomatized by
Kameyama and Tanaka~\cite{Kameyama-Tanaka:PPDP10}. The axioms for
call-by-name are the same or simpler than in call-by-value: the axioms
\AXresetshift, \AXresetval, and \AXshiftreset can be proved in
call-by-name using bisimulations with the same proofs as in
call-by-value. The call-by-value axioms \AXbeta, \AXbetaomega,
and~\AXresetlift are replaced by a single axiom for call-by-name
$\beta$-reduction
\[
\app{\lamp \varx \tmzero} \tmone =_{\textrm{\tiny KT}} \subst \tmzero \varx
\tmone,
\]
which is straightforward to prove since the three bisimilarities
contain reduction. Finally, the axiom \AXshiftelim still holds only
for the original semantics.

\subsection{CPS-based Equivalences}%
\label{ss:cps-bisim}

It is possible to go beyond CPS equivalence and use the CPS definition
of \shiftId{} and \resetId{} to define behavioral equivalences in
terms of it: $\tmzero$ and $\tmone$ are bisimilar in $\lamshift$ if
their translations $\cps \tmzero$ and $\cps \tmone$ are bisimilar in
the plain $\lambda$-calculus. As an example, we can define CPS
applicative bisimilarity~$\appbisimcps$ as follows: given two closed
terms~$\tmzero$ and~$\tmone$ of $\lamshift$, we have $\tmzero
\appbisimcps \tmone$ if $\cps\tmzero$ and $\cps\tmone$ are applicative
bisimilar in the call-by-value
$\lambda$-calculus~\cite{Abramsky-Ong:IaC93}. We compare here this
equivalence to the contextual equivalence~$\ctxequivp$ for the
original semantics, since the CPS of Figure~\ref{f:cps-translation} is
valid for that semantics only.

Even if $\appbisimcps$ is sound {\wrt}$\ctxequivp$, we show it is not
complete. A CPS translated term is of the form $\lam {\vark_1\vark_2} \tm$,
where $\vark_1$ and $\vark_2$ stand for, respectively, the continuation and the
metacontinuation of the term, which are $\lambda$-abstractions of a special
shape. But applicative bisimilarity in $\lambda$-calculus compares terms with
any $\lambda$-abstraction, not just a continuation or metacontinuation, making
$\appbisimcps$ over-discriminating compared to $\ctxequivp$. Indeed, let
$\valzero \is \lam \varx \Omega$ and
$\valone \is \lam \varx {\reset {\varx \iapp i} \iapp \Omega}$. We have
$\valzero \ctxequivp \valone$, roughly because~$\valzero$ diverges as soon as it
is applied to a value $\val$, and so does $\valone$, either because
$\reset {\app \val i}$ diverges, or because $\reset {\app \val i}$ converges and
$\Omega$ then diverges (more formally, the relation
$\{(\lam{x}{\Omega},\lam{x}{\app{\reset{\app{x}{i}}}{\Omega}})\} \cup \{
(\Omega, \app{\reset{t}}{\Omega}) \mid t \in \cterms\} \cup \{(\Omega,
\app{\val}{\Omega}) \mid \val \in \cvalues\}$ is an applicative bisimulation,
included in $\ctxequiv$ and, therefore, in~$\ctxequivp$). The CPS translation of
these terms, after some administrative reductions, yields
\begin{align*}
 \cps \valzero &= \lam{\vark_1\vark_2}{\vark_1 \iapp {\lamp x {\cps \Omega}}
                 \iapp \vark_2}, \mbox{ and} \\
  \cps\valone &= \lam{\vark_1\vark_2}{\vark_1 \iapp {\lamp {x\vark'_1\vark'_2}{x \iapp {\lamp y
        {\cps y}} \iapp \gamma \iapp \lamp z {\cps \Omega \iapp \val' \iapp \vark_2}}} \iapp
  \vark_2}
\end{align*}
where $\gamma$ is defined in Figure~\ref{f:cps-translation} and
$\val'$ is some value, the precise definition of which is not
important. If $\val \is \lam {z\vark_2}{z \iapp
  \lamp{x'\vark''_1\vark''_2} i \iapp i \iapp i}$, then $\cps \valzero
\iapp \val \iapp i \rtclo\redcbv \cps\Omega$ and $\cps \valone \iapp
\val \iapp i \rtclo\redcbv i$; the diverging part in $\cps \valone$,
namely $\lam z {\cps \Omega \iapp \val' \iapp \vark_2}$, is thrown
away by $\val$ instead of being eventually applied, as it should be if
$\val$ was a continuation (the term $\lam{x'\vark''_1\vark''_2}{i}$ is
not in the 2-layer CPS).

A possible way to get completeness for $\appbisimcps$ could be to
restrict the target language of the CPS translation to a CPS calculus,
\ie a subcalculus where the grammar of terms enforces the correct
shape of arguments passed as values, continuations, or
metacontinuations (as in,
e.g.,~\cite{Kameyama-Hasegawa:ICFP03}). However, even with
completeness, we believe it is more tractable to work in direct style
with the relations we define in this paper, than on CPS translations
of terms: as we can see with $\valzero$ and $\valone$ above,
translating even relatively simple source terms leads to voluminous
terms in CPS\@. Besides, $\appbisimcps$ compares all translated terms
with a continuation (which corresponds to a context $\ctx$) and a
metacontinuation (which corresponds to a metacontext $\rctx$), while
bisimilarities in direct style need at most a context~$\ctx$ to
compare stuck terms.

Nonetheless, we believe that studying fully the relationship between CPS-based
behavioral equivalences and direct-style equivalences is an interesting future
work. We would like to consider other CPS translations, including a CPS
translation for the relaxed semantics~\cite{Materzok-Biernacki:ICFP11}, or the
1-layer CPS translation for the original
semantics~\cite{Danvy-Filinski:LFP90}. We would also like to know if it is
possible to obtain a CPS-based soundness proof for normal-form bisimilarity, as
in $\lambda$-calculus~\cite{Lassen:LICS05}, to have a complete picture of the
interactions between CPS and behavioral equivalences.

\subsection{The \texorpdfstring{$\lammutphat$}{lambda-mu-tp}-Calculus}%
\label{ss:lammutphat}

The $\lambda\mu$-calculus~\cite{Parigot:LPAR92} contains a $\mu$-construct
that can be seen as an abortive control operator. In this calculus, we
evaluate named terms of the form $[\alpha] \tm$, and the names
$\alpha$ are used as placeholders for evaluation contexts. Roughly, a
$\mu$ term $\mu\alpha.[\beta]\tm$ is able to capture its whole (named)
evaluation context $[\gamma] \ctx$, and substitutes $\alpha$ with
$[\gamma] \ctx$ in $[\beta]\tm$. Context substitution is the same as
the one presented in Section~\ref{ss:refined}. In particular, it is
capture-free, \eg in $[\beta]\subst \tm \alpha {[\gamma] \ctx}$, the
free names of $[\gamma] \ctx$ (such as $\gamma$) cannot be bound by
the $\mu$ constructs in $\tm$.

The $\lammutphat$-calculus~\cite{Herbelin-Ghilezan:POPL08} extends the
$\lambda\mu$-calculus by adding a special name \tphat which can be
dynamically bound during a context substitution. Besides, the
$\mu$-operator no longer captures the whole context, but only up to
the nearest enclosing $\mu$-binding of $\tphat$. As a result, a
$\mu$-binding of $\tphat$ can be seen as a delimiter, and in fact, the
$\lammutphat$-calculus simulates
$\lamshift$~\cite{Herbelin-Ghilezan:POPL08}. In particular, their CPS
equivalences coincide. However, defining bisimilarities in \lammutphat
may lead to relations similar to the $\lambda\mu$-calculus
ones~\cite{Biernacki-Lenglet:MFPS14} because of names. Indeed, we have
to compare named values~$[\alpha] \val$ in~\lammutphat, which requires
substituting $\alpha$ with some named context, as in the
$\lambda\mu$-calculus~\cite{Biernacki-Lenglet:MFPS14}. Similarly,
control stuck terms are of the form $[\alpha]\inctx \rctx {\mu
  \tphat.[\beta] \val}$, and a way to relate them would be by
replacing $\beta$ with a context~$[\gamma] \ctx$ or $[\tphat]
\ctx$. It would be interesting to compare the behavioral theories of
$\lamshift$ and $\lammutphat$ to see if the encoding of the former
into the latter is fully abstract (\ie preserves contextual
equivalence).

\subsection{Typed Setting}%
\label{ss:typed}

A type system affects the semantics of a language by ruling out ill-typed terms,
and thus restricts the possible behaviors compared to the untyped
calculus. Applicative~\cite{Gordon:TCS99,Gordon-Rees:POPL96},
normal-form~\cite{Lassen-Levy:CSL07,Lassen-Levy:LICS08}, and
environmental~\cite{Sumii-Pierce:JACM07,Sumii:CSL09} bisimilarities have been
defined for various calculi and type systems. The type systems for \shiftId{}
and \resetId{}~\cite{Danvy-Filinski:DIKU89, Asai-Kameyama:APLAS07} assign types
not only to terms, but also to contexts. Pure contexts $\ctx$ are given types of
the form $\tpctxDF \tpa \tpb$, where $\tpa$ is the type of the hole and $\tpb$
is the answer type, and evaluation contexts (also called metacontexts) $\rctx$
are assigned types of the form $\tpmetactxDF \tpa$, where $\tpa$ is the type of
the hole. A typing judgment $\typejDF \tpenv \tpb \tm \tpa \tpc$ roughly means
that under the typing context $\tpenv$, the term $\tm$ can be plugged into a
pure context~$\ctx$ of type $\tpctxDF \tpa \tpb$ and a metacontext~$\rctx$ of
type $\tpmetactxDF \tpc$, producing a well-typed term
$\inctx \rctx{\reset{\inctx \ctx \tm}}$. In general, the evaluation of $\tm$ may
capture the surrounding context of type $\tpctxDF \tpa \tpb$ to produce a value
of type $\tpc$, with $\tpb \neq \tpc$. Function types also contain extra
information about the contexts the terms are plugged into: a term of type
$\arrowtpDF \tpa \tpb \tpc \tpd$ can be applied to an argument of type $\tpa$
within a pure context of type $\tpctxDF \tpb \tpc$ and a metacontext of type
$\tpmetactxDF \tpd$.

The complexity of the type systems for \shiftId{} and \resetId{}
(compared to, \eg plain $\lambda$-calculus) may have some consequences
on the definition of a typed bisimilarity for the language. In
particular, we wonder how the extra type annotations for pure contexts
and metacontexts should be factored in the bisimilarities. It seems
natural to include types for the pure contexts for control stuck
terms, since pure contexts already occur in the definitions of
applicative and environmental bisimilarities in that case; it is not
clear if and how the types for the metacontexts should be
mentioned. The study of a typed $\lamshift$ can be interesting also to
see how the types modify the equivalences between terms.
We leave this as a future work.

A related and unexplored topic is defining logical relations to
characterize contextual equivalence for typed calculi with delimited
continuations. So far, Asai introduced logical relation to prove the
correctness of a partial evaluator for \shiftId{} and
\resetId{}~\cite{Asai:TFP05}, whereas Biernacka et al.\ in a series of
articles proposed logical predicates for proving termination of
evaluation in several calculi of delimited
control~\cite{Biernacka-Biernacki:PPDP09,Biernacka-al:PPDP11,Biernacka-al:COS13}. We
expect such logical relations to exploit the notion of context and
metacontext and, therefore, to be
biorthogonal~\cite{Krivine:APAL94,Pitts-Stark:HOOTS98}. Biorthogonal
and step-indexed Kripke logical relations have been proposed for an
ML-like language with \callccId{} by Dreyer et
al.~\cite{Dreyer-al:JFP12} and adapting this approach to a similar
language based on Asai and Kameyama's polymorphic type system for
\shiftId{} and \resetId{}~\cite{Asai-Kameyama:APLAS07} presents itself
as an interesting topic of future research. An alternative to
step-indexed Kripke logical relations that also have been shown to
account for abortive continuations are parametric
bisimulations~\cite{Hur-al:TR14}, built on relation transition systems
of Hur et al.~\cite{Hur-al:POPL12}. Whether such hybrids of logical
relations and bisimulations can effectively support reasoning about
delimited continuations is an open question.

\newpage
\subsection{Other Delimited-Control Operators}%
\label{ss:others}

\subsubsection*{CPS hierarchy}
The operators \shiftId{} and \resetId{} are just an instance of a more
general construct called the \emph{CPS
  hierarchy}~\cite{Danvy-Filinski:LFP90}. As explained in
Section~\ref{ss:cps-equivalence}, \shiftId{} and \resetId{} have been
originally defined by a translation into CPS\@. When iterated, the CPS
translation leads to a hierarchy of continuations, in which it is
possible to define a hierarchy of control operators \shiftnId{i} and
\resetnId{i} ($i \geq 1$) that generalizes \shiftId{} and \resetId{},
and that makes possible to separate computational effects that
should exist independently in a program. For example, in order to
collect the solutions found by a backtracking algorithm implemented
with \shiftnId{1} and \resetnId{1}, one has to employ \shiftnId{2} and
\resetnId{2}, so that there is no interference between searching and
emitting the results of the search. The CPS hierarchy was also
envisaged to account for nested computations in hierarchical
structures~\cite{Biernacka-al:LMCS05}.

In the hierarchy, a \shiftId{} operator of level $i$ captures the
context up to the first enclosing \resetnId{i} or \resetnId{j} with $j
> i$. So for example in $\reset{\inctx \ctxone {\reset{\inctx \ctxzero
      {\shifti 1 \vark \tm}}_2}}_1$, the $\rawshift_1$ captures only
$\ctxzero$, not $\ctxone$. We believe the results of this paper
generalize to the CPS hierarchy without issues. The notions of pure
context and control stuck term now depend on the hierarchy level: a
pure context of level $i$ does not contain a \resetnId{j} (for $j \geq
i$) encompassing its hole, and can be captured by an operator
\shiftnId{i}. A control stuck term of level $i$ is an operator
\shiftnId{i} in a pure context of level $i$. The definitions of
bisimulations have to be generalized to deal with control stuck terms
of level~$i$ the same way we treat stuck terms of level~1. For
example, two control stuck terms of level $i$ are applicative
bisimilar if they are bisimilar when put in an arbitrary level $i$
pure context surrounded by a \resetnId{i}. The proofs for $i=1$ should
carry through to any $i$.

\subsubsection*{Operator \shiftzId{}}
The operator \shiftzId{} ($\rawshift_0$) allows a term to capture a
pure context with its enclosing
delimiter~\cite{Danvy-Filinski:LFP90}. The capture reduction rule for
this operator is thus as follows:
\[
\inctx \rctx{\reset{\inctx \ctx { \shiftz \vark \tm}}} \redcbv \inctx \rctx {\subst
  \tm \vark {\lam \varx {\reset {\inctx \ctx \varx}}}}, \mbox{ with } \varx \notin \fv
\ctx
\]
Note that there is no \resetId{} around $\tm$ in $\inctx \rctx {\subst
  \tm \vark {\lam \varx {\reset {\inctx \ctx \varx}}}}$. Consequently,
a term is able to directly decompose an evaluation context $\rctx$
into pure contexts through successive captures with $\rawshift_0$;
this is not possible in $\lamshift$.

The definitions of bisimilarities of this paper should extend to a
calculus with \shiftzId{} as far as the relaxed semantics is
concerned. Since a term is able to access the context beyond the first
enclosing \resetId{}, contextual equivalence is more discriminating
with \shiftzId{} than in~$\lamshift$. For example, $\reset{\reset
  \tm}$ is no longer equivalent to $\reset \tm$, as we can see by
taking $\tm = \shiftz \vark {\shiftz \vark \Omega}$.

For the original semantics (that in the case of \shiftzId{} assumes a
persistent top-level \resetId{}), the definitions have to take into
account the fact that a delimited term $\reset \tm$ may evaluate to a
control stuck term (like, \eg $\reset{\shiftz \vark {\shiftz \vark
    \tm}}$ for any $\tm$) and that, therefore, it is not sufficient to
compare values with values only. For instance, in order to validate
the following equation taken from the axiomatization of
\shiftzId{}~\cite{Materzok:CSL13}:
\[
\shiftz \vark {\reset {\app {\lamp \varx {\shiftz {\vark'}{\app \vark \varx}}}
    \tm}} =_{\textrm{\tiny M}} \tm, \mbox{ with } \vark \notin \fv \tm
\]
we would have to be able to compare normal forms of different
kinds, which can be achieved
by putting the normal forms in a context $\reset \ctx$ for any $\ctx$.



\subsubsection*{Operators \controlId{} and \promptId{}}
The \controlId{} operator ($\rawcontrol$) captures a pure context up
to the first enclosing \promptId{} ($\rawprompt$), but the captured
context does not include the
delimiter~\cite{Felleisen:POPL88}. Formally, the capture reduction
rule is as follows:
\[
\inctx \rctx {\prompt {\inctx \ctx {\control \vark \tm}}} \redcbv \inctx \rctx
{\prompt {\subst \tm \vark {\lam \varx {\inctx \ctx \varx}}}}, \mbox{ with }
\varx \notin \fv \ctx
\]
Unlike with \shiftId{} and \resetId{} where continuation composition
is static, with \controlId{} and \promptId{} it is dynamic, in the
sense that the extent of control operations in the captured context
comprises the context of the resumption of the captured
context~\cite{Biernacki-al:SCP06}.
A \controlzId{} variant also exists~\cite{Shan:HOSC07}, where the
delimiter is captured with the context but not kept: as a result, no
delimiter is present in the right-hand side of the capture reduction
rule.

The theory of this paper should extend to \controlId{} and \promptId{}
with minor changes. However, studying this calculus would still be
interesting to pinpoint the differences between the equivalences of
\shiftId{}/\resetId{} and \controlId{}/\promptId{}. For example,
$\prompt{\prompt \tm}$ is equivalent to $\prompt \tm$, the same way
$\reset{\reset \tm}$ is equivalent to $\reset \tm$. In fact, we
conjecture the axioms can still be proved equivalent if we replace
\shiftId{} and \resetId{} with \controlId{} and \promptId{} (with the
same restriction for \AXshiftelim). In contrast, $\tmzero \is
\app{(\shift {\vark_1}{\app{\app {\vark_1}{\lamp \varx {\shift{\vark_2
          } \tm}}} \Omega})} \val$ (where $\vark_1,\vark_2, \varx
\notin \fv \tm$) is equivalent to $\shift \vark \Omega$ (because
$\tmzero \lts\ctx \clocbv \reset{\app {\reset{\inctx \ctx {\shift
        {\vark_2} \tm}}} \Omega}$, and this term always diverges), but
the term $\tmzero' \is \app{(\control {\vark_1}{\app{\app
      {\vark_1}{\lamp \varx {\control{\vark_2 } \tm}}} \Omega})} \val$
is equivalent to $\prompt{\tm}$ (because $\tmzero' \lts\ctx \clocbv
\prompt{\app {\inctx \ctx {\shift {\vark_2} \tm}} \Omega} \redcbv
\prompt \tm$). Maybe we can find (general enough) laws which hold with
\controlId{} and \promptId{} but not with \shiftId{} and \resetId{},
and conversely.

\subsubsection*{Multiple prompts}
In languages with (named) multiple
prompts~\cite{Gunter-al:FPCA95,Dybvig-al:JFP06,Downen-Ariola:ESOP12}
control delimiters (prompts) as well as control operators are tagged
with names, so that the control operator captures the evaluation
context up to the dynamically nearest delimiter with the matching
name. In a calculus with tagged \shiftId{} ($\rawshiftt{a}$) and \resetId{}
($\rawresett{a}$) the operational semantics of \shiftId{} is given by the
following rule:
\[
\inctx {\rctx'}{\resett{a}{\inctx \rctx { \shiftt{a}{\vark}{\tm}}}}
\redcbv \inctx {\rctx'} {\resett{a}{\subst \tm \vark {\lam \varx
      {\resett{a}{\inctx \rctx \varx}}}}}, \mbox{ with } a \notin
\prompts{\rctx} \mbox{ and }\varx \notin \fv \ctx
\]
where $\prompts{\rctx}$ is the set of the prompts guarding the hole of
$\rctx$. Such calculi resemble the CPS hierarchy, already considered
in this section, however there are differences in their semantics. In
contrast to the CPS hierarchy, where evaluation contexts form a
hierarchy\footnote{In the original semantics, the evaluation context
  of level $i+1$ is a list (a stack, really) of evaluation contexts of
  level $i$ separated by control delimiters of level $i$ (contexts of
  level $1$ are just the standard CBV evaluation contexts.)  and the
  number of context layers is fixed~\cite{Biernacka-al:LMCS05}.} and
the extent of control operations of level $i$ is limited by control
delimiters of any level $j \geq i$, in the calculus with multiple
prompts the evaluation context is a list of the standard CBV
evaluation contexts separated by named prompts and the control
operations reach across any prompts up to a matching one. Moreover,
the salient and unique feature of such calculi is dynamic name
generation that allows one, e.g., to eliminate unwanted interactions
between the control operations used to implement some control
structure (e.g., coroutines) and the control operations of the code
that uses the control structure.

Even without dynamic name generation, which gives an additional
expressive power to such calculi, calculi with multiple prompts
generalize, e.g., simple exceptions~\cite{Gunter-al:FPCA95} and the
\catchId{}/\throwId{} constructs~\cite{Crolard:JFP99}. The results of
this article can be seamlessly adapted to these calculi and most, if
not all, of the presented techniques should carry over without
surprises.

However, when dynamic name generation is included in the calculus, comparing two
terms becomes more difficult, as prompts with the same purpose can be generated
with different names. With environmental bisimilarity, we can use environments
to remember the relationships between generated prompts. We do so
in~\cite{Aristizabal-al:LMCS17} and define sound and complete environmental
bisimilarities and their up-to techniques for a calculus with dynamically
generated prompt names. Resource generation makes the definition of a sound
applicative bisimilarity difficult for such a calculus, as argued
in~\cite{Koutavas-al:MFPS11}.

\subsection{Other Constructs}%
\label{ss:constructs}

Here, we briefly discuss what happens when $\lamshift$ is extended
with constructs that can be found in usual programming languages.

\subsubsection*{Constants} While adding constants (such as numerals, booleans,
\ldots) to the language does not raise any issue for
applicative~\cite{Gordon:TCS99} and environmental bisimilarities,
defining a satisfactory normal-form bisimilarity in the presence of
constants raises some difficulties~\cite{Stoevring-Lassen:POPL07}:
e.g., it is not clear how to define a normal-form bisimulation which
equates $x+y$ and $y+x$. Relying on encodings of constants into plain
$\lambda$-calculus is not enough, as these encodings usually do not
respect the properties of the constants, like, for example,
commutativity of~$+$. These problems are orthogonal to the presence of
control operators though.

\subsubsection*{Store} Bisimilarities for languages with store are usually of
the environmental
kind~\cite{Sangiorgi-al:TOPLAS11,Koutavas-Wand:POPL06,Sumii:CSL09},
and~\cite{Koutavas-al:MFPS11} argues that the usual form of
applicative bisimilarity is not sound in the presence of store.
St{\o}vring and Lassen define a sound and complete normal-form
bisimilarity for~$\lambda\mu\rho$~\cite{Stoevring-Lassen:POPL07}, a
calculus with store and an abortive control construct inspired by
Parigot's $\lambda\mu$~\cite{Parigot:LPAR92}. Their work largely
relies on the fact that in $\lambda\mu$-calculus (and in
$\lambda\mu\rho$ as well), terms are of the form~$[a]\tm$, where the
name $a$ acts as a placeholder for an evaluation context. These names
are also essential to be able to define a sound and complete
applicative bisimilarity for
$\lambda\mu$~\cite{Biernacki-Lenglet:MFPS14}. Developing a sound
behavioral theory of $\lamshift$ extended with higher-order store,
potentially taking advantage of context variables, is of interest as a
future work.



\subsubsection*{Exceptions} Like for store, Koutavas et
al.~\cite{Koutavas-al:MFPS11} give examples showing that applicative
bisimilarity is not sound for a calculus with exceptions, and
environmental bisimilarity should instead be used. Studying an
extension of $\lamshift$ with exceptions would be interesting to
compare the encoding of exceptions using \shiftId{} and
\resetId{}~\cite{Filinski:POPL94} to the native constructs. We leave
this as a future work.


\section{Conclusion}%
\label{s:conclusion}

In our study of the behavioral theory of a calculus with \shiftId{} and
\resetId{}, we consider two semantics: the original one, where terms are
executed within an outermost \resetId{}, and the relaxed one, where this
requirement is lifted. For each, we define a contextual equivalence
(respectively $\ctxequivp$ and $\ctxequiv$), that we try to characterize with
different kinds of bisimilarities (normal-form $\nfbisim$, $\rbisim$,
$\onfbisim$, applicative~$\appbisim$, and environmental $\envbisim$,
$\envbisimp$). We also compare our relations to CPS equivalence~$\cpsequiv$, a
relation which equates terms with $\beta\eta$-equivalent CPS translations. We
summarize in Figure~\ref{f:conclusion} the relationships between these
relations.

\begin{figure}
  \[
  \begin{array}{rrcccl}
    \mbox{relaxed semantics: } & \nfbisim \mathop\subsetneq & \hspace{-0.75em} \rbisim &
                                                                                         \hspace{-0.75em} \mathop\subsetneq
    & \hspace{-0.75em} \ctxequiv &  \hspace{-0.75em} \mathop= \appbisim  \mathop=
                                   \envbisim \\
                               && \hspace{-0.75em} \rotatebox[origin=c]{-90}{$\subsetneq$} &
                                                                                       &
                                                                                         \hspace{-0.75em} \rotatebox[origin=c]{-90}{$\subsetneq$} \\
    \mbox{original semantics: } & \cpsequiv \mathop\subsetneq & \hspace{-0.75em}
                                                                \onfbisim & \hspace{-0.75em} \mathop\subsetneq
    & \hspace{-0.75em} \ctxequivp & \hspace{-0.75em} \mathop= \envbisimp \\
  \end{array}
  \]

  \caption{Relationships between the equivalences of $\lamshift$}%
\label{f:conclusion}
\end{figure}

Normal-form bisimulation arguably leads to the simplest equivalence proofs in
most cases; by essence, the lack of quantifications on testing entities in its
definition leads to simpler proof obligations. More importantly, it benefits
from up-to techniques which manipulate contexts, which is of prime importance in
a calculus where context capture is part of the semantics. Environmental
bisimilarity also allows for such context manipulating up-to techniques, albeit
in a less general form. As a result, proving that Turing's fixed-point
combinator is bisimilar to its \shiftId{}/\resetId{} variant can be done using
up-to techniques for normal-form bisimilarity with only three pairs
(Example~\ref{ex:fixed-point-nf-utc}), while it requires an inductively defined
candidate relation with environmental bisimulation
(Example~\ref{ex:fixed-point-env-utc}). The lack of such a powerful up-to
technique for applicative bisimulation makes it clearly less tractable than the
other styles, as witnessed by the proofs for Turing's combinator
(Example~\ref{ex:fixed-point-app}), or for the~\AXbetaomega axiom
(Proposition~\ref{p:omega-app} vs Propositions~\ref{p:omega-nf}
and~\ref{p:omega-env}).

However, normal-form bisimulation cannot be used to prove all equivalences,
since its corresponding bisimilarity is not complete. It can be too
discriminating to relate very simple terms, like those in
Propositions~\ref{p:cex-stuck} and~\ref{p:cex-dupl}, even though refined
normal-form bisimulation (Section~\ref{ss:refined}) can help. In contrast,
applicative and environmental bisimilarities are complete, and can be used as
alternatives when normal-form bisimulation fails.

To summarize, to prove that two given terms are equivalent, we would suggest to
first try normal-form bisimulation, and if it fails, try next environmental
bisimulation. Applicative bisimulation should be used only in the simplest
cases, such as terms similar to those of
Proposition~\ref{p:cex-nf-completeness}. The relations for the relaxed semantics
can also be used as proof techniques for the original semantics, except in cases
similar to the \AXshiftelim axiom, where only the equivalences dedicated to the
original semantics can be used.


\bibliographystyle{abbrv}
\bibliography{journal}

\iftoggle{withoutApp}{}{
\appendix
\renewcommand*{\thesection}{\Alph{section}}

\section{Proofs Sketches for Normal-Form Bisimilarity}%
\label{a:soundness-nf}

We only sketch the progress proofs for the refined bisimilarity and for the
original semantics as they are very similar to the proofs for the relaxed
semantics~\cite{Biernacki-al:MFPS17}. We start with the proof sketches for the
original semantics, which exhibits the most differences.

\begin{lem}%
  \label{l:progress-strong}
  For all $f \in \strong \setF$, $f \sevolve {\widehat{\mathsf{strong}(\setF)}}^\omega,
    {\fid\setF}^\omega$.
\end{lem}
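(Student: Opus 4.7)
The plan is to proceed by case analysis on $f \in \strong\setF = \{\rawrefl, \rawlam, \rawutshift, \rawutreset, \rawutsubst, \rawid, \rawutred\}$. For each such $f$, I fix $\rel \pprogresso \rels, \relt$ and a pair $\prg_0 \mathrel{f(\rel)} \prg_1$, then verify each of the four progress clauses from Definition~\ref{d:onf-bisim} (reduction, value, context-stuck, open-stuck). In each clause, the target state on the active side must lie in ${\fid\setF}^\omega(\relt)$, while for passive steps (the value clause and the value component of the context-stuck clause) it must lie in $\widehat{\strong\setF}^\omega(\rels)$.

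\paragraph*{Straightforward cases.} The cases $\rawid$ and $\rawrefl$ are immediate: $\rawid(\rel) = \rel$ progresses to $\rels, \relt$ by hypothesis, and pairs in $\rawrefl(\rel)$ are identical so every transition is matched by itself, with resulting pairs related by $\rawrefl \subseteq \widehat{\strong\setF}^\omega$. The case $\rawutred$ follows because $\rel$ and $\rel$-reductions are included in $\relt$: a transition from $\prg_0$ is either part of the reduction path to $\prg_0'$ and handled by $\widehat{\rel}$, or a normal-form transition that we can match through $\prg_1'$ and the corresponding reflexive decomposition of $\nfv\cdot$, $\nfc\cdot$, $\nf\cdot$ (obtained from $\rawrefl$).

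\paragraph*{Compatibility functions.} For $f \in \{\rawlam, \rawutshift, \rawutreset\}$, pairs are already normal forms built with a constructor at the head, so only the value clause or the open-stuck clause applies. For $\rawlam$, consider $\lam x \prg_0 \mathrel{\utlam\rel} \lam x \prg_1$ with $\prg_0 \rel \prg_1$; for fresh $\cvar$ and $y$, $\cvctx\cvar{(\lam x \prg_i) \iapp y} \redcbv \cvctx\cvar{\subst{\prg_i}{x}{y}}$, and from $\prg_0 \rel \prg_1 \subseteq \rels$ we deduce
\[
\cvctx\cvar{\subst{\prg_0}{x}{y}} \mathrel{(\rawutcvar \comp \rawutsubst)(\rels)} \cvctx\cvar{\subst{\prg_1}{x}{y}},
\]
using $y \nfv\rels y$ via $\rawrefl$. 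Wrapping this in $\rawutred$ (all three being in $\strong\setF$) gives the required target in $\widehat{\strong\setF}^\omega(\rels)$. The cases $\rawutshift$ and $\rawutreset$ are analogous, producing open-stuck or context-stuck decompositions; again every ingredient used to witness $\nf\cdot$ lies in $\widehat{\strong\setF}^\omega$.

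\paragraph*{The main obstacle: $\rawutsubst$.} The delicate case is $\rawutsubst$, where $\subst{\prg_0}{x}{v_0} \mathrel{\utsubst\rel} \subst{\prg_1}{x}{v_1}$ with $\prg_0 \rel \prg_1$ and $v_0 \nfv\rel v_1$. A transition from $\subst{\prg_0}{x}{v_0}$ may originate from a transition of $\prg_0$ (immediate by hypothesis and closure under substitution, concluded with $\rawutsubst \in \strong\setF$) or from $v_0$ being placed in evaluation position because $\prg_0$ was itself open-stuck on $x$. In the latter case, writing $\prg_0 = \inctx{R_0}{x \iapp w_0}$, the hypothesis $v_0 \nfv\rel v_1$ unfolds (after substituting for the fresh $x$ and $\cvar$ used in the definition of $\nfv\cdot$) into information about $\cvctx\cvar{v_0 \iapp y} \rel \cvctx\cvar{v_1 \iapp y}$; we use $\rel \pprogresso \rels, \relt$ on this pair to extract matching reductions and normal forms, then recompose with $\rawutcsubst$\footnote{Note that this is the one place where a non-strong function of $\setF$ is needed; in the strong evolution the active target is ${\fid\setF}^\omega$, which contains $\rawutcsubst$, so it is allowed.} to reinstall the surrounding context $R_1$. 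This subcase — plugging an arbitrary value tested by $\nfv\cdot$ into an applicative frame — is the only part of the proof that is not routine and mirrors the corresponding obstacle in the soundness of refined normal-form bisimulation.
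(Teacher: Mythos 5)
Your proposal is correct and follows essentially the same route as the paper's proof: a case analysis over each $f \in \strong\setF$, where the only delicate case is $\rawutsubst$ with the substituted term open-stuck on the substituted variable, handled exactly as in the paper by invoking the progress hypothesis on the pair provided by $\nfv\rel$ and recomposing the surrounding context with the non-strong techniques ($\rawutcsubst$, $\rawectxpure$), which is legitimate because that step is active and the active target is ${\fid\setF}^\omega(\relt)$. The remaining cases are dispatched as in the paper, so no further comparison is needed.
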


\begin{proof}
  Let $\rel \pprogresso \rels, \relt$. The $\rawrefl$ case is
  straightforward.

  Let $\prgzero \utred\rel \prgone$ with $\prgzero \clocbv \prgzero'$,
  $\prgone \clocbv \prgone'$, and $\prgzero' \rel \prgone'$. If
  $\prgzero = \prgzero'$ (in particular, if $\prgzero$ is a normal form), then
  any test on $\prgzero$ is matched by $\prgone$: $\prgone \clocbv \prgone'$
  and $\prgzero \rel \prgone'$, so we can conclude with the progress hypothesis
  on $\rel$. If $\prgzero \neq \prgzero'$, then $\prgzero \redcbv \prgzero''$
  for some $\prgzero''$. Then $\prgzero'' \clocbv \prgzero'$, therefore we have
  $\prgzero'' \utred \rel \prgone$, which implies $\prgzero'' \utred\relt
  \prgone$ (because $\rel \mathop\subseteq \relt$ by definition of progress).

  Let
  $\cvctx \cvar {\lam \varx \prgzero} \utlam \rel \cvctx \cvar {\lam \varx
    \prgone}$ with $\prgzero \rel \prgone$. The terms are context-stuck, and we
  have $\reset \mtctx \utrefl\relt \reset \mtctx$, and
  $\cvctx {\cvar'}{\lamp x \prgzero \iapp y} \utred{\utcvar{\utsubst\relt}}
  \cvctx {\cvar'}{\lamp x \prgone \iapp y}$, for any fresh $\cvar'$ and $y$. The
  case
  $\cvctx \cvar {\shift \vark \prgzero} \utlam \rel \cvctx \cvar {\shift \vark
    \prgone}$ with $\prgzero \rel \prgone$ is similar.

  Let $\cvctx \cvar \prgzero \utcvar \rel \cvctx \cvar \prgone$ with
  $\prgzero \rel \prgone$. Either $\prgzero \redcbv \prgzero'$ and we progress to
  $\utcvar\relt$, or $\prgzero$ is a normal form. Then $\cvctx \cvar \prgzero$
  is also a normal form, and the result is easy to verify for each of them.

  Let $\subst \prgzero \varx \valzero \utsubst\rel \subst \prgone \varx \valone$
  with $\prgzero \rel \prgone$ and $\valzero \nfv\rel \valone$. The interesting
  case is when $\prgzero = \inctx \rctxzero {\app \varx {w_0}}$. Because
  $\prgzero$ is pure, in fact $\rctxzero = \inctx {\rctxzero'}{\reset\ctxzero}$
  for some $\rctxzero'$ and $\ctxzero$. Therefore,
  $\prgone \evalcbv \inctx {\rctxone'}{\reset {\inctx \ctxone {\app \varx
            {w_1}}}}$, with
  $\inctx{\rctxzero'}{\reset{\ctxzero}} \nfc\relt \inctx{\rctxone'}{\reset
      \ctxone}$ and $w_0 \nfv\relt w_1$. We have
  $\cvctx \cvar {\valzero \iapp y} \rel \cvctx \cvar {\valone \iapp y}$ for
  fresh $\cvar$ and $y$; we distinguish two cases. If
  $\cvctx \cvar {\valzero \iapp y} \redcbv \prgzero''$, then there exists
  $\prgone''$ such that $\cvctx \cvar {\valone \iapp y} \clocbv \prgone''$ and
  $\prgzero'' \relt \prgone''$. Then
  $\subst \prgzero \varx \valzero \redcbv \inctx {\subst {\rctxzero'} \varx
    \valzero}{\subst {\subst
      {\prgzero''} y {\subst {w_0} \varx \valzero}} \cvar {\reset {\subst \ctxzero \varx \valzero}}}$ and
  $\subst \prgone \varx \valone \clocbv \inctx {\subst {\rctxone'} \varx \valone}{\subst {\subst
      {\prgone''} y {\subst {w_1} \varx \valone}} \cvar {\reset {\subst \ctxone
        \varx \valone}}}$; the resulting terms are in
  $\relt$ up to $\rawutcsubst$, $\rawutsubst$, and $\rawectxpure$. Otherwise,
  $\cvctx \cvar {\valzero \iapp y}$ is an open-stuck term; then there exists
  $\prgone''$ such that $\cvctx \cvar {\valone \iapp y} \evalcbv \prgone''$ and
  $\cvctx \cvar {\valzero \iapp y} \nf\relt \prgone''$. We can conclude as in
  the first case.
\end{proof}

\begin{lem}
  $\rawectxpure \evolve {\widehat{\mathsf{strong}(\setF)}}^\omega \compo
  {\fid\setF} \compo
  {\widehat{\mathsf{strong}(\setF)}}^\omega,{\fid\setF}^\omega$
\end{lem}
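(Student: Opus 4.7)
The plan is to assume $\rel \pprogresso \rel, \relt$ and verify that for any pair $\inctx \rctxzero \prgzero \utectxpure\rel \inctx \rctxone \prgone$, witnessed by $\prgzero \rel \prgone$ together with $\inctx \rctxzero \varx \rel \inctx \rctxone \varx$ and both instantiations pure for a fresh $\varx$, the transitions from $\inctx \rctxzero \prgzero$ are matched by $\inctx \rctxone \prgone$ up to the required closures. A preliminary observation drives the case analysis: since $\varx$ sits in an evaluation position and $\inctx \rctxzero \varx$ must be pure, the outermost shape of $\rctxzero$ is forced to be $\mtctx$, $\reset{\rctx'}$, or $\cvctx \cvar {\rctx'}$; symmetrically for $\rctxone$. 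The argument is symmetric, so I only analyze the transitions of the left component.

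I would split the case analysis by whether $\inctx \rctxzero \prgzero$ reduces or is already a normal form. For a reduction step, the redex either lies strictly inside $\prgzero$ or at the boundary with $\rctxzero$. The inner case is the easy one: from $\prgzero \redcbv \prgzero'$ the progress hypothesis on $\rel$ yields $\prgone \clocbv \prgone'$ with $\prgzero' \relt \prgone'$ (and $\prgzero'$ is still pure), and $\rawectxpure$ re-assembles the rewrapped pair in $\fid\setF(\relt) \subseteq {\fid\setF}^\omega(\relt)$. The boundary case happens when the shape of $\rctxzero$ and the normal-form nature of $\prgzero$ trigger a reduction rule such as $\RRreset$; here I plan to combine the transition of $\inctx \rctxzero \varx$ (matched by $\inctx \rctxone \varx$ via the progress of $\rel$) with the value or normal-form test on $\prgzero \rel \prgone$, and close using $\rawutsubst$, $\rawutred$, and $\rawrefl$, all of which are strong.

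When $\inctx \rctxzero \prgzero$ is a normal form, purity restricts it to a value (forcing $\rctxzero = \mtctx$, whence the result is immediate from progress of $\rel$ on $\prgzero$), a context-stuck term $\inctx{\rctxzero''}{\cvctx \cvar \valzero}$, or an open-stuck term. For the open-stuck subcase, the matching normal form of $\inctx \rctxone \prgone$ is extracted from the progress of $\prgzero$, and the surrounding evaluation contexts are compared through $\nfc$ with the pure $\rctxzero$ slice reconstructed via $\rawutsubst$. The delicate subcase is the context-stuck one: the value test $\valzero \nfv\rel \valone$ arising in the decomposition is passive, so only strong up-to techniques can follow; the single application of $\rawectxpure$ needed to relate the pure contexts surrounding the $\cvar$-binder is precisely what fits in the $\fid\setF$ slot of the evolution target, sandwiched between strong closures obtained from $\rawutsubst$ and $\rawrefl$.

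The main obstacle is to place the non-strong applications of $\rawectxpure$ (and potentially $\rawutcsubst$ for substituting under context variables) exactly where the evolution shape permits them, namely within ${\widehat{\mathsf{strong}(\setF)}}^\omega \compo {\fid\setF} \compo {\widehat{\mathsf{strong}(\setF)}}^\omega$. This placement is ultimately the reason $\rawectxpure$ cannot itself be strong: re-using it across a passive test would break respectfulness, and the sandwiched form in the statement captures the single admissible use after such a passive step.
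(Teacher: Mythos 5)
Your proposal is correct and follows essentially the same route as the paper's proof: the only genuinely delicate situation is when $\prgzero$ is a value, and you resolve it exactly as the paper does, by rewriting $\inctx \rctxzero \valzero$ as $\subst {\inctx \rctxzero \varx} \varx \valzero$ so that $\rawutsubst$ (together with $\rawutred$ and $\rawrefl$, all strong) takes over after the passive value test. The only cosmetic difference is that you case-split on the behaviour of $\inctx \rctxzero \prgzero$ rather than of $\prgzero$, which scatters the value case across your ``boundary'' and ``normal form'' branches and leads you to locate the delicacy in the context-stuck subcase, whereas it really lies in the passive value test; this does not affect correctness.
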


\begin{proof}[Sketch]
  Let $\rel \pprogresso \rel, \relt$. Let
  $\inctx \rctxzero \prgzero \utectxpure\rel \inctx \rctxone \prgone$ with
  $\prgzero \rel \prgone$ and
  $\inctx \rctxzero \varx \rel \inctx \rctxone \varx$ for a fresh $\varx$. The
  cases $\prgzero \redcbv \prgzero'$, and $\prgzero$ is a context-stuck or an
  open-stuck term are easy to check. If $\prgzero = \valzero$, then there exists
  $\valone$ such that $\prgone \evalcbv \valone$ and
  $\valzero \nfv\rel \valone$. Then
  $\inctx \rctxone \prgone \clocbv \inctx \rctxone \valone$, and
  $\inctx \rctxzero \valzero = \subst {\inctx \rctxzero \varx} \varx \valzero
  \utsubst\rel \subst {\inctx \rctxone \varx} \varx \valone = \inctx \rctxone
  \valone$, so we can conclude with  Lemma~\ref{l:progress-strong}.
\end{proof}

\begin{lem}
  $\rawutcsubst \evolve {\widehat{\mathsf{strong}(\setF)}}^\omega \compo
  {\fid\setF} \compo
  {\widehat{\mathsf{strong}(\setF)}}^\omega,{\fid\setF}^\omega$
\end{lem}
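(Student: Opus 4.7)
The plan is to mirror the evolution proof of $\rawpctxrst$ sketched after Theorem~\ref{t:compatible-nf} and the proof of $\rawectxpure$ just above this lemma, since $\rawutcsubst$ is structurally a ``bisimulation up to a common outer delimited context built from related pieces''. Fix $\rel \pprogresso \rel, \relt$ and a pair $\subst\prgzero\cvar\dctxzero \utcsubst\rel \subst\prgone\cvar\dctxone$ derived from premises $\prgzero \rel \prgone$ and $\dctxzero \nfc\rel \dctxone$. I proceed by case analysis on the behavior of $\subst\prgzero\cvar\dctxzero$ and transport the matching behavior of $\prgone$ across the substitution using $\rawutcsubst$ itself combined with the strong techniques $\rawrefl$, $\rawutsubst$, and $\rawutred$.

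The inherited cases are straightforward. When a reduction or normal-form shape of $\subst\prgzero\cvar\dctxzero$ comes directly from $\prgzero$, invoke the progress of $\rel$ on $\prgzero \rel \prgone$ and close by re-applying $\rawutcsubst$ to the resulting pair, landing in $\utcsubst\relt$ on active clauses. For the top-level value test, which is passive, $\valzero \nfv\rel \valone$ unfolds to $\cvctx{\cvar'}{\valzero \iapp \varx} \rel \cvctx{\cvar'}{\valone \iapp \varx}$ for fresh $\cvar', \varx$; one use of $\rawutcsubst$ then yields $\subst\valzero\cvar\dctxzero \nfv{\utcsubst\rel} \subst\valone\cvar\dctxone$, which sits in $\fid\setF(\rel) \subseteq \widehat{\mathsf{strong}(\setF)}^\omega \compo \fid\setF \compo \widehat{\mathsf{strong}(\setF)}^\omega(\rel)$, as required. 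The open-stuck subcase unfolds $\nfc$ on the outer delimited evaluation context and applies $\rawutcsubst$ under the fresh hole variable, exactly as in the corresponding case for $\rawectxpure$.

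The main obstacle is the context-stuck subcase where $\prgzero = \inctx{\rctxzero'}{\cvctx\cvar\valzero}$ uses the very context variable being substituted. Then $\subst\prgzero\cvar\dctxzero = \inctx{\subst{\rctxzero'}\cvar\dctxzero}{\inctx\dctxzero{\subst\valzero\cvar\dctxzero}}$ is no longer a normal form: the outer delimiter of $\dctxzero$ meets the value $\subst\valzero\cvar\dctxzero$ and triggers further reductions whose shape depends on whether $\dctxzero = \reset E$ or $\dctxzero = \cvctx{\cvar'}E$. The progress hypothesis delivers $\prgone \evalcbv \inctx{\rctxone'}{\cvctx\cvar\valone}$ together with $\inctx{\rctxzero'}{\reset\mtctx} \nfc\relt \inctx{\rctxone'}{\reset\mtctx}$ (so in particular $\inctx{\rctxzero'}\varx \relt \inctx{\rctxone'}\varx$ for a fresh $\varx$) and $\valzero \nfv\rel \valone$. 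The inner part is reconstructed by instantiating $\dctxzero \nfc\rel \dctxone$ at the fresh hole variable and then substituting $\subst\valzero\cvar\dctxzero$ and $\subst\valone\cvar\dctxone$ via $\rawutsubst$ (the value pair living in $\utcsubst\rel$ exactly as in the passive case above); the outer context is carried by one further application of $\rawutcsubst$ to $\inctx{\rctxzero'}\varx \relt \inctx{\rctxone'}\varx$; and the finitely many extra reduction steps triggered on either side by the newly-met delimiter-and-value are absorbed by $\rawutred$. The resulting pair lives in $\widehat{\mathsf{strong}(\setF)}^\omega \compo \fid\setF \compo \widehat{\mathsf{strong}(\setF)}^\omega(\relt)$---the $\fid\setF$ slot being exactly the mandatory use of the non-strong $\rawutcsubst$---which is why this technique cannot be strong, matching the classification $\rawutcsubst \in \setF \setminus \strong\setF$ of Theorem~\ref{t:compatible-onf}.
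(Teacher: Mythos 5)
Your proof is correct and follows essentially the same route as the paper's: the crux is the case $\prgzero = \inctx \rctxzero {\cvctx \cvar \valzero}$, handled by instantiating the $\nfc\rel$ premise at a fresh variable, closing under $\rawutsubst$ with the related values, and recovering the outer context with $\rawectxpure$ and $\rawutcsubst$, which suffices because the triggered reduction makes the clause active with target ${\fid\setF}^\omega(\relt)$. The only places where the paper is tighter are that the matching reduction of $\inctx \dctxone \valone$ is obtained from the already-proved strong evolution of $\rawutsubst$ (Lemma~\ref{l:progress-strong}) rather than being ``absorbed by $\rawutred$'' (you need that lemma to know the right-hand side actually performs a matching step), and that your double use of the non-strong $\rawutcsubst$ only lands in ${\fid\setF}^\omega(\relt)$ rather than the single-$\fid\setF$-slot form you claim at the end---which is all that is required for an active clause.
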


\begin{proof}[Sketch]
  Let $\rel \pprogresso \rel, \relt$. Let
  $\subst \prgzero \cvar {\dctxzero} \utcsubst\rel \subst \prgone \cvar
  {\dctxone}$ with $\prgzero \rel \prgone$ and
  $\inctx \dctxzero \varx \rel \inctx \dctxone \varx$ for a fresh $\varx$. The
  interesting case is when
  $\prgzero = \inctx \rctxzero {\cvctx \cvar \valzero}$. There exists
  $\rctxone$, $\valone$ such that
  $\prgone \evalcbv \inctx \rctxone {\cvctx \cvar \valone}$,
  $\inctx \rctxzero y \relt \inctx \rctxone y$ for a fresh $y$, and
  $\valzero \nfv\rel \valone$. The possible reductions of
  $\subst \prgzero \cvar \dctxzero$ comes from $\inctx \dctxzero \valzero$, but
  we have $\inctx \dctxzero \valzero \utsubst\rel \inctx \dctxone
  \valone$. Suppose $\inctx \dctxzero \valzero \redcbv \prgzero'$. By
  Lemma~\ref{l:progress-strong}, there exists $\prgone'$ such that
  $\inctx \dctxone \valone \redcbv \prgone'$ and $\prgzero' \mathrel {\of
    {{\fid\setF}^\omega} \relt} \prgone'$. Then $\subst \prgzero \cvar \dctxzero
  \redcbv \subst {\inctx \rctxzero {\prgzero'}} \cvar \dctxzero$, $\subst \prgone \cvar \dctxone
  \clocbv \subst {\inctx \rctxone {\prgone'}} \cvar \dctxone$, and the resulting
  terms are in $\of{{\fid\setF}^\omega} \relt$ up to $\rawectxpure$ and $\rawutcsubst$,
  i.e., in $\of {{\fid\setF}^\omega} \relt$, as wished.
\end{proof}

For the relaxed semantics, we discuss only the interesting cases where
control-stuck terms can be produced, which are $\rawutshift$ and $\rawpctx$.

\begin{lem}
  $\rawutshift \sevolve {\widehat{\mathsf{strong}(\setF)}}^\omega,
  {\fid\setF}^\omega$.
\end{lem}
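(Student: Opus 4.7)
The plan is to verify the strong evolution condition directly by inspecting the only progress clause that applies. Fix $\rel \pprogress \rels, \relt$ and a pair $\shift \vark \tmzero \utshift\rel \shift \vark \tmone$ obtained from $\tmzero \rel \tmone$. First I would observe that both $\shift \vark \tmzero$ and $\shift \vark \tmone$ are control-stuck terms, so neither is a value and neither reduces; the only clause of Definition~\ref{d:nf-bisim} that fires is the ``normal form but not a value'' clause, which is active. Since $\shift \vark \tmone$ is itself already a normal form, it suffices to take $\tmone' \is \shift \vark \tmone$ and establish $\shift \vark \tmzero \nf{{\fid\setF}^\omega(\relt)} \shift \vark \tmone$.

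Unfolding the definition of $\nf{\cdot}$ on control-stuck terms (Figure~\ref{f:ext-rnf}), this amounts to checking two conditions: $\mtctx \nfc{{\fid\setF}^\omega(\relt)} \mtctx$ and $\reset \tmzero \mathrel{{\fid\setF}^\omega(\relt)} \reset \tmone$. The first follows immediately from $\rawrefl$, since for a fresh $\varx$ we have $\varx \rawrefl(\relt) \varx$ and hence $\mtctx \nfc{\fid\rawrefl(\relt)} \mtctx$, which is absorbed into ${\fid\setF}^\omega(\relt)$ because $\rawrefl \in \setF$.

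For the second condition, I would use $\rel \subseteq \relt$ (a consequence of $\rel \pprogress \rels, \relt$) to conclude $\tmzero \relt \tmone$, then apply $\rawpctxrst$ with empty inner pure context. The side condition $\reset \mtctx \nfc{\fid\setF(\relt)} \reset \mtctx$ holds by reflexivity since both $\varx$ and $\reset \varx$ are related to themselves by $\rawrefl(\relt)$. Therefore $\reset \tmzero \mathrel{\rawpctxrst(\fid\setF(\relt))} \reset \tmone$, which lies in $(\fid\setF)^{2}(\relt) \subseteq {\fid\setF}^\omega(\relt)$.

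Since no passive transition is available, there is nothing further to check and the strong evolution follows. The verification is essentially routine; the only subtle point is remembering that the second argument of strong evolution is the active-side closure ${\fid\setF}^\omega$ (and not $\widehat{\mathsf{strong}(\setF)}^\omega$), which is what allows the non-strong technique $\rawpctxrst$ to be invoked here.
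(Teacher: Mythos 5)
Your argument is pitched at the wrong instance of $\setF$. The lemma, as it appears in the appendix, belongs to the block of proofs for the \emph{refined} normal-form bisimilarity of Section~\ref{ss:refined} (the paper's own proof starts from $\rel \pprogressr \rels, \relt$ and immediately introduces a context variable $\cvar$), i.e., to Theorem~\ref{t:compatible-rnf}, not to Theorem~\ref{t:compatible-nf}. In that setting two things change that break your proof. First, the clause of $\nf\rel$ for control-stuck terms is the one of Figure~\ref{f:refined-upto}, not Figure~\ref{f:ext-rnf}: relating $\shift \vark \tmzero$ and $\shift \vark \tmone$ does not decompose into ``$\mtctx \nfc\rel \mtctx$ and $\reset\tmzero \rel \reset\tmone$'' but requires relating the single terms $\reset{\subst \tmzero \vark {\lam \varx {\cvctx \cvar \varx}}}$ and $\reset{\subst \tmone \vark {\lam \varx {\cvctx \cvar \varx}}}$ for fresh $\cvar$ and $\varx$. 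Second, the technique $\rawpctxrst$ on which your key step rests is \emph{not} a member of the refined set $\setF$ --- it was explicitly removed and replaced by $\rawutcvar$ and $\rawutcsubst$ --- so $\rawpctxrst(\fid\setF(\relt)) \subseteq (\fid\setF)^2(\relt)$ is not available. The paper instead closes the gap with $\rawutsubst$: from $\tmzero \relt \tmone$ and $\lam \varx {\cvctx \cvar \varx} \nfv{\utrefl\relt} \lam \varx {\cvctx \cvar \varx}$ (by $\rawrefl$), the two substituted bodies are related, and the surrounding delimiter is absorbed by the context-variable techniques.

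That said, the skeleton of your argument is sound: you correctly identify that both terms are control-stuck normal forms, that only the active clause fires, that $\tmone' \is \shift\vark\tmone$ suffices, and that the active target ${\fid\setF}^\omega(\relt)$ (rather than ${\widehat{\mathsf{strong}(\setF)}}^\omega(\rels)$) is what licenses non-strong techniques. Indeed, for the plain normal-form bisimilarity of Theorem~\ref{t:compatible-nf} --- whose proofs the paper defers to the cited MFPS~2017 report rather than the appendix --- your proof would go through essentially as written. But for the statement the appendix actually proves, you need the refined clause of Definition~\ref{d:rnf-bisim} and the combination $\rawrefl$/$\rawutsubst$ (plus $\rawutcvar$/$\rawutcsubst$ for the outer $\reset{\cdot}$), not $\rawpctxrst$.
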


\begin{proof}
  Let $\rel \pprogressr \rels, \relt$ and $\tmzero \rel \tmone$. For fresh
  $\cvar$ and $\varx$, we have to relate
  $\subst \tmzero \vark {\lam \varx {\cvctx \cvar \varx}}$ and
  $\subst \tmone \vark {\lam \varx {\cvctx \cvar \varx}}$, which is direct with
  $\rawrefl$ and $\rawutsubst$.
\end{proof}

\begin{lem}
  $\rawpctx \evolve {\widehat{\mathsf{strong}(\setF)}}^\omega \compo {\fid\setF}
  \compo {\widehat{\mathsf{strong}(\setF)}}^\omega,{\fid\setF}^\omega$
\end{lem}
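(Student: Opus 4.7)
The plan is to adapt the argument sketched after Theorem~\ref{t:compatible-nf} for the relaxed bisimilarity to the refined setting. Fix $\rel$ with $\rel \pprogressr \rel, \relt$, and suppose $\inctx \ctxzero \tmzero \utpctx\rel \inctx \ctxone \tmone$ with $\tmzero \rel \tmone$ and $\ctxzero \nfc\rel \ctxone$; I would then do a case analysis on the behaviour of $\inctx \ctxzero \tmzero$, driven by the shape of $\tmzero$.

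When $\tmzero \redcbv \tmzero'$, progress on $\rel$ yields $\tmone \clocbv \tmone'$ with $\tmzero' \relt \tmone'$, so $\inctx \ctxzero \tmzero \redcbv \inctx \ctxzero {\tmzero'}$ is matched by $\inctx \ctxone \tmone \clocbv \inctx \ctxone {\tmone'}$, with the resulting terms related by $\utpctx\relt$, which lives in $\of{{\fid\setF}^\omega}\relt$. The open-stuck and control-stuck subcases follow by reading off a matching decomposition for $\tmone$ from the progress hypothesis and composing the pure-context data from $\ctxzero \nfc\rel \ctxone$ with the inner context data provided by $\nf\relt$; for a control-stuck $\tmzero$ the refined definition in Figure~\ref{f:refined-upto} plugs the captured context through a fresh context variable $\cvar$, and concatenation with $\ctxzero$, $\ctxone$ preserves the required relation via $\rawutsubst$ and $\rawrefl$. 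The new context-stuck case $\tmzero = \inctx \rctxzero {\cvctx \cvar \valzero}$ is analogous: $\tmone \evalcbv \inctx \rctxone {\cvctx \cvar \valone}$ with $\inctx \rctxzero {\reset \mtctx} \nfc\relt \inctx \rctxone {\reset \mtctx}$ and $\valzero \nfv\rel \valone$, and $\inctx \ctxzero \tmzero$ remains context-stuck with outer context $\inctx \ctxzero \rctxzero$ whose required relation transfers from $\ctxzero \nfc\rel \ctxone$ composed with the inner context data.

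The genuinely delicate subcase, already flagged in the relaxed-semantics sketch, is when $\tmzero$ is a variable $\varx$ and $\ctxzero = \apctx \mtctx \val$, so $\inctx \ctxzero \varx = \app \varx \val$ is open-stuck while $\inctx \ctxone \tmone$ keeps reducing to $\inctx \ctxone \valone$ (where $\tmone \evalcbv \valone$ and $\varx \nfv\rel \valone$). Here I would use $\ctxzero \nfc\rel \ctxone$ to get $\inctx \ctxzero y \rel \inctx \ctxone y$ for fresh $y$, yielding $\inctx \ctxzero \varx \utsubst\rel \inctx \ctxone \valone$, and then invoke the already-established evolution of $\rawutsubst$ to extract a $\tmone'$ with $\inctx \ctxone \valone \clocbv \tmone'$ and $\app \varx \val \nf{\of{{\fid\setF}^\omega}\relt} \tmone'$.

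The main obstacle is bookkeeping around the strength of up-to techniques: the passive target $\of{{\widehat{\mathsf{strong}(\setF)}}^\omega \compo \fid\setF \compo {\widehat{\mathsf{strong}(\setF)}}^\omega}\rel$ tolerates at most one non-strong function in the closure, and the three candidates $\rawpctx$, $\rawutsubst$, and $\rawutcsubst$ are all non-strong by Theorem~\ref{t:compatible-rnf}. Since $\rawpctx$ itself already occupies that single non-strong slot, the analysis must ensure that any further use of $\rawutsubst$ or $\rawutcsubst$ happens only after a genuine reduction has been performed by both sides---at which point we are in the active target $\of{{\fid\setF}^\omega}\relt$ and compositions of non-strong techniques are permitted. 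Verifying that each subcase can be arranged this way, particularly the variable-plus-applied-context subcase where the non-strong $\rawutsubst$ is invoked, is the careful part of the argument.
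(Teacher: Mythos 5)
Your overall case decomposition follows the proof's structure, and the cases you elaborate (internal reduction, the open-stuck and context-stuck clauses, and the variable-in-applied-context subcase resolved through $\rawutsubst$) are treated the way the paper treats them, here or in the earlier work it defers to. The gap is in the control-stuck case, which is the one case the paper's sketch actually writes out for this lemma, precisely because it is the only step specific to the refined setting. From the progress hypothesis on $\tmzero = \inctx{\ctxzero'}{\shift\vark{\tmzero'}}$ you obtain, for a fresh $\cvar'$,
\[
\reset {\subst {\tmzero'} \vark {\lam y {\cvctx {\cvar'}{\inctx {\ctxzero'} y}}}} \relt \reset {\subst {\tmone'} \vark {\lam y {\cvctx {\cvar'}{\inctx {\ctxone'} y}}}},
\]
whereas the clause for $\inctx\ctxzero{\inctx{\ctxzero'}{\shift\vark{\tmzero'}}}$ versus $\inctx\ctxone{\inctx{\ctxone'}{\shift\vark{\tmone'}}}$ demands that
\[
\reset {\subst {\tmzero'} \vark {\lam y {\cvctx {\cvar}{\inctx \ctxzero{\inctx {\ctxzero'} y}}}}} \quad\mbox{and}\quad \reset {\subst {\tmone'} \vark {\lam y {\cvctx {\cvar}{\inctx \ctxone {\inctx {\ctxone'} y}}}}}
\]
be related in $\of {{\fid\setF}^\omega}\relt$. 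Passing from the former pair to the latter is a substitution of the \emph{context variable} $\cvar'$ by the delimited contexts $\cvctx\cvar\ctxzero$ and $\cvctx\cvar\ctxone$ (related through $\ctxzero \nfc\rel \ctxone$ and $\rawutcvar$); the technique that performs this is $\rawutcsubst$, not ``$\rawutsubst$ and $\rawrefl$'' as you write. Since $\rawutsubst$ substitutes values for term variables, it cannot splice a pure context into the body of the captured continuation, so as stated the central recombination step of the lemma is unjustified. (Applying the non-strong $\rawutcsubst$ here is legitimate because the control-stuck clause is active, so the target is $\of {{\fid\setF}^\omega}\relt$.)

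A secondary slip: you classify $\rawutsubst$ as non-strong, but Theorem~\ref{t:compatible-rnf} places it in $\strong\setF$; the non-strong techniques in the refined setting are $\rawutcsubst$, $\rawpctx$, and $\rawectxpure$. The error is in the safe direction---it only makes your bookkeeping more restrictive than necessary---but it means the variable-plus-applied-context subcase is less delicate than you suggest, since $\rawutsubst$ may be iterated even inside the passive target.
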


\begin{proof}[Sketch]
  Let $\rel \pprogressr \rel, \relt$. Let
  $\inctx \ctxzero \tmzero \utectxpure\rel \inctx \ctxone \tmone$ with
  $\tmzero \rel \tmone$,
  $\tmzero = \inctx {\ctxzero'}{\shift \vark {\tmzero'}}$, and
  $\inctx \ctxzero \varx \rel \inctx \ctxone \varx$ for a fresh $\varx$. Let
  $\cvar$, $\cvar'$ be fresh context variables. There exist $\ctxone'$ and
  $\tmone'$ such that
  $\tmone \evalcbv \inctx {\ctxone'}{\shift \vark {\tmone'}}$ and
  $\reset {\subst {\tmzero'} \vark {\lam y {\cvctx {\cvar'}{\inctx {\ctxzero'}
          y}}}} \relt \reset {\subst {\tmone'} \vark {\lam y {\cvctx
        {\cvar'}{\inctx {\ctxone'} y}}}}$. Then
  $\subst {\reset {\subst {\tmzero'} \vark {\lam y {\cvctx {\cvar'}{\inctx
            {\ctxzero'} y}}}}} {\cvar'} {\cvctx \cvar \ctxzero} \utcsubst\relt
  \subst {\reset {\subst {\tmone'} \vark {\lam y {\cvctx {\cvar'}{\inctx
          {\ctxone'} y}}}}} {\cvar'} {\cvctx \cvar \ctxone}$, from which we get
  $\reset {\subst {\tmzero'} \vark {\lam y {\cvctx {\cvar}{\inctx \ctxzero
          {\inctx {\ctxzero'} y}}}}} \relt \reset {\subst {\tmone'} \vark {\lam
      y {\cvctx {\cvar}{\inctx \ctxone {\inctx {\ctxone'} y}}}}}$, as wished.
\end{proof}


}

\end{document}